\crefname{equation}{}{} 
\colorlet{refkey}{orange!20}
\colorlet{labelkey}{blue!30}
\numberwithin{equation}{section}
\newtheorem{theorem}{Theorem}[section]
\newtheorem{lemma}[theorem]{Lemma}
\newtheorem{claim}[theorem]{Claim}
\crefname{claim}{Claim}{Claims}
\newtheorem{corollary}[theorem]{Corollary}
\newtheorem*{question*}{Question}
\theoremstyle{definition}
\newtheorem{definition}[theorem]{Definition}
\newtheorem*{definition*}{Definition}
\theoremstyle{remark}
\newcommand{\poly}{\mathrm{poly}}
\newcommand{\fC}{\mathcal{C}}
\newcommand{\rad}{\mathrm{rad}}
\newcommand{\eps}{\varepsilon}
\newcommand{\E}{\mathbb{E}}
\newcommand{\N}{\mathbb{N}}
\newcommand{\diam}{\textrm{diam}}
\newcommand{\diff}{\textrm{diff}}
\renewcommand{\d}{d}
\newcommand{\Dpad}{D_\textrm{pad}}
\newcommand{\fO}{\mathcal{O}}
\newcommand{\tO}{\widetilde{O}}
\newcommand{\Vrem}{S}
\newcommand{\fR}{\mathcal{R}}
\newcommand{\fB}{\mathcal{B}}
\newcommand{\fA}{\mathcal{A}}
\newcommand{\fZ}{\mathcal{Z}}
\newcommand{\polylog}{\poly\log}
\newcommand{\congest}{$\mathsf{CONGEST}\,$}
\newcommand{\pram}{$\mathsf{PRAM}\,$}
\newcommand{\Lref}[1]{{\text{L\ref{#1}}}}
\newcommand{\Tref}[1]{{\text{T\ref{#1}}}}
\newcommand{\dist}{\operatorname{dist}}
\newcommand{\oDist}{\fO^{Dist}}
\newcommand{\oWeak}{\fO^{Dist\textrm{-}Weak}}
\newcommand{\oPot}{\fO^{Pot}}
\newcommand{\oForestAgg}{\fO^{Forest\textrm{-}Agg}}
\newcommand{\oGlobalAgg}{\fO^{Global\textrm{-}Agg}}
\newcommand{\Valive}{V^\textrm{alive}}
\newcommand{\Vgood}{V^\textrm{good}}
\newcommand{\Vbad}{V^\textrm{bad}}
\newcommand{\Vactive}{V^\textrm{active}}
\newcommand{\parent}{\textrm{par}}
\renewcommand{\P}{\textrm{P}}
\renewcommand{\root}{\textrm{root}}
\newcommand{\del}{\textrm{del}}
\newcommand{\Vmiddle}{V^{\textrm{middle}}}
\newcommand{\Vunsafe}{V^{\text{unsafe}}}
\newcommand{\Vpmiddle}{V'^{\text{middle}}}
\newcommand{\Sbig}{S^{\text{big}}}
\newcommand{\Blur}{\text{Blur}}
\newcommand{\shortcutQuality}[1]{\mathrm{Shortcut}\-\mathrm{Quality}(#1)}
\newcommand{\hopDiameter}[1]{\mathrm{Hop}\-\mathrm{Diam}(#1)}
\title{Deterministic Low-Diameter Decompositions for Weighted Graphs and Distributed and Parallel Applications}
\author{
Václav Rozhoň \thanks{\scriptsize{Supported by the European Research Council (ERC) under the European Unions Horizon 2020 research and innovation programme (grant agreement No.~853109).}}\\
\small ETH Zurich \\
\small rozhonv@inf.ethz.ch\\
\and
\textcircled{r}
\and
Michael Elkin\thanks{\scriptsize{This research was supported by the ISF grant No. (2344/19) }}\\
\small Ben-Gurion University of the Negev\\
\small elkinm@cs.bgu.ac.il\\
\and
\textcircled{r}
\and
Christoph Grunau \footnotemark[4]   \\
\small ETH Zurich \\
\small cgrunau@inf.ethz.ch\\
\and
\textcircled{r}\footnote{\scriptsize{The author ordering was randomized using \url{https://www.aeaweb.org/journals/policies/random-author-order/generator}. 
It is requested that citations of this work list the authors separated by \texttt{\textbackslash textcircled\{r\}} instead of commas: Elkin \textcircled{r} Haeupler \textcircled{r} Rozhoň \textcircled{r} Grunau.}}
\and
Bernhard Haeupler\thanks{\scriptsize{Supported in part by NSF grants CCF-1814603, CCF-1910588, NSF CAREER award CCF-1750808, a Sloan Research Fellowship, funding from the European Research Council (ERC) under the European Union's Horizon 2020 research and innovation program (ERC grant agreement 949272), and the Swiss National Foundation (project grant 200021-184735).}} \\
  \small ETH Zurich \& Carnegie Mellon University\\
  \small bernhard.haeupler@inf.ethz.ch
}
\date{}
\begin{document}

\maketitle
\thispagestyle{empty}

\begin{abstract}
This paper presents new deterministic and distributed low-diameter decomposition algorithms for weighted graphs. 
In particular, we show that if one can efficiently compute approximate distances in a parallel or a distributed setting, one can also efficiently compute low-diameter decompositions. 
This consequently implies solutions to many fundamental distance based problems using a polylogarithmic number of approximate distance computations. 

Our low-diameter decomposition generalizes and extends the line of work starting from \cite{rozhon_ghaffari2019decomposition} to weighted graphs in a very model-independent manner. 
Moreover, our clustering results have additional useful properties, including strong-diameter guarantees, separation properties, restricting cluster centers to specified terminals, and more. Applications include:

-- The first near-linear work and polylogarithmic depth randomized and deterministic parallel algorithm for low-stretch spanning trees (LSST) with polylogarithmic stretch. Previously, the best parallel LSST algorithm required $m \cdot n^{o(1)}$ work and $n^{o(1)}$ depth and was inherently randomized. No deterministic LSST algorithm with truly sub-quadratic work and sub-linear depth was known. 
        
-- The first near-linear work and polylogarithmic depth deterministic algorithm for computing an  $\ell_1$-embedding into polylogarithmic dimensional space with polylogarithmic distortion. The best prior deterministic algorithms for $\ell_1$-embeddings either require large polynomial work or are inherently sequential. 
        
Even when we apply our techniques to the classical problem of computing a ball-carving with strong-diameter $O(\log^2 n)$ in an unweighted graph, our new clustering algorithm still leads to an improvement in round complexity from $O(\log^{10} n)$ rounds \cite{chang_ghaffari2021strong_diameter} to $O(\log^{4} n)$.
 \end{abstract}

\tableofcontents
\newpage

\section{Introduction}
\label{sec:intro}

This paper gives deterministic parallel \& distributed algorithms for low-diameter clusterings in weighted graphs. 
The main message of this paper is that once you can deterministically and efficiently compute $(1+1/\poly(\log n))$-approximate distances in undirected graphs in your favorite parallel/distributed model, you can also deterministically and efficiently solve various clustering problems with $\poly\log(n)$ approximate distance computations. 
Since low-diameter clusterings are very basic objects and approximate distances can efficiently and deterministically be computed in various parallel and distributed models, our clustering results directly imply efficient deterministic algorithms for various problems.

In the literature, a multitude of different clustering problems are defined -- you may have encountered buzzwords like low-diameter clusterings, sparse covers, network decompositions, etc. -- most of which are tightly related in one way or another. 
To give an example of a problem that we consider in this paper, suppose you are given a parameter $D$ and you want to partition the vertex set of an input graph $G$ into clusters of diameter $\tO(D)$\footnote{The $\tO$-notation hides polylogarithmic factors in the number of vertices.} such that every edge $e$ is cut, that is, connecting different clusters, with probability at most $\ell(e)/D$. 
In a deterministic variant of the problem, we instead want the number of edges cut to be at most $\sum_{e \in E(G)} \ell(e) / D$. 
This clustering problem is usually known as a low-diameter clustering problem. 
Another problem we consider is that of computing a $D$-separated clustering: there, we are supposed to cluster each node with probability at least $1/2$ (or at least half of the nodes if the algorithm is deterministic) in clusters such that each cluster has diameter $\tO(D)$ and any pair of clusters has distance at least $D$.


Our main clustering result solves a very general clustering problem that essentially generalizes both examples above. The algorithm deterministically reduces the clustering problem to $\poly(\log n)$ approximate distance computations in a parallel/distributed manner. 
The clustering comes with several additional useful properties. 
We produce strong-diameter clusters; on the other hand, some results in the literature only give a so-called weak-diameter guarantee where every two nodes of the cluster are close in the original graph but the cluster itself may be even disconnected.  
Moreover, it can handle several generalizations which are crucial for some applications such as the low-stretch spanning tree problem. Most notably, our clustering result generalizes to the case when a set of terminals is given as part of the input and each final cluster should contain at least one terminal.

\subsection{Main Results}
\label{sec:intro_our_results}
While we think of our general clustering result as the main result of this paper, it is not necessary to state it in this introductory section in full generality. 
Instead, we start by discussing its following corollary (see \cref{fig:terminals}). The following type of a clustering result is needed in known approaches to compute low-stretch spanning trees. 

\begin{figure}
    \centering
    \includegraphics[width = \textwidth]{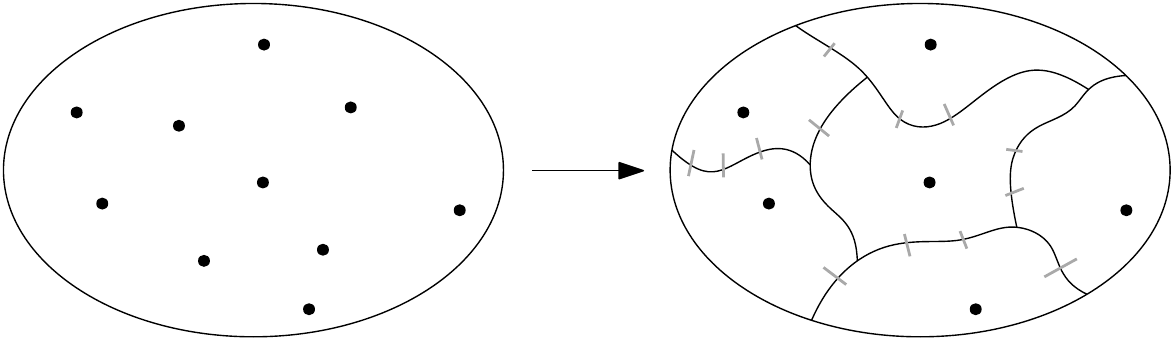}
    \caption{Clustering from \cref{thm:steroids_simple}: we are given a set of terminals $Q$. We should construct a partition of the input graph into small-diameter clusters such that each cluster contains at least one terminal. Moreover, only a small number of edges should be cut by the clustering (grey edges on the right). 
    }
    \label{fig:terminals}
\end{figure}

\begin{restatable}{theorem}{steroidsSimple}[A corollary of \cref{thm:steroids}]
\label{thm:steroids_simple}
Let $G$ be a weighted graph. We are given a set of \emph{terminals} $Q \subseteq V(G)$ and a parameter $R > 0$ such that for every $v \in V(G)$ we have $d(Q, v) \le R$. Also, a precision parameter $0  < \eps < 1$ is given.  There is a deterministic distributed and parallel algorithm outputting a partition $\fC$ of the vertices into clusters and a subset of terminals $Q' \subseteq Q$ with the following properties: 

\begin{enumerate}
    \item Each cluster $C \in \fC$ contains exactly one terminal $q \in Q'$. Moreover, for any $v \in C$ we have $d_{G[C]}(q, v) \le (1+\eps)R$. 
    \item For the set $E^{bad}$ of edges connecting different clusters of $\fC$ we have 
    \[
    |E^{bad}| = \tO\left( \frac{1}{\eps R}   \right)\cdot \sum_{e \in E(G)} \ell(e). 
    \]
\end{enumerate}

The \pram variant of the algorithm has work $\tO(m)$ and depth $\tO(1)$.
The \congest variant of the algorithm runs in $\tO(\sqrt{n} + \hopDiameter{G})$ rounds.
\end{restatable}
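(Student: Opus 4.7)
The plan is to follow the Rozhoň--Ghaffari iterative ball-growing framework, adapted to the weighted setting with a terminal constraint. The algorithm has two stages: first build an initial terminal-rooted Voronoi partition, then apply $T = O(\log n)$ rounds of approximate ball-growing that reduce cut edges while keeping each cluster's strong radius bounded by $(1+\eps)R$.

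For the initial partition, I would use the deterministic approximate-distance oracle available in our model to compute, for every $v \in V(G)$, a $(1+\eps')$-approximation of $d_G(Q,v)$ together with an identified nearest terminal $q(v) \in Q$, where $\eps' = \Theta(\eps/\log n)$; ties are broken by terminal identifier. Each vertex joins the cluster rooted at $q(v)$. By the hypothesis $d_G(Q,v) \le R$, every resulting cluster has strong radius at most $(1+\eps')R$, since the $(1+\eps')$-approximate shortest path from $v$ to $q(v)$ lies inside the cluster (every intermediate vertex is itself Voronoi-assigned to $q(v)$). The cost is one invocation of the oracle, which fits in $\tO(m)$ work / $\tO(1)$ depth in PRAM and $\tO(\sqrt{n} + \hopDiameter{G})$ rounds in CONGEST.

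Next I would run $T = O(\log n)$ ball-growing rounds, each with a radius budget of $\Delta = \eps R / (2T)$. In round $t$, the ``active'' clusters are those whose terminal identifier has $t$-th bit equal to $1$; activating by color prevents conflicts between clusters simultaneously trying to absorb overlapping regions. For each active cluster $C$, invoke the approximate-distance oracle to compute $d_G(C,\cdot)$ on the $\Delta$-neighborhood of $C$, and select a shell radius $r_C \in [0, \Delta]$ whose boundary cuts few edges. Such an $r_C$ is guaranteed by averaging over $\Theta(\log n)$ candidate shells of width $\Delta/\Theta(\log n)$: if every candidate were bad, the total edge weight absorbed into $C$ during this round would exceed $\sum_e \ell(e)$, a contradiction. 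All vertices in $B_G(C, r_C)$ are reassigned to $C$ and attached via a shortest-path tree rooted at $q_C$, so each newly absorbed vertex's cluster-internal distance to $q_C$ grows by at most $r_C$. Any other terminal absorbed in the process simply loses its role as a cluster root and is excluded from $Q'$; no surviving cluster ever loses its own root terminal, so the ``exactly one terminal per cluster'' property is maintained.

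After $T$ rounds the strong radius is at most $(1+\eps')R + T \cdot \Delta \le (1+\eps)R$, giving property~1. For property~2, any edge in $E^{bad}$ must have been cut by the chosen shell of some active cluster in some round; the per-round averaging inequality bounds the cut by $O(\frac{1}{\Delta})\sum_e \ell(e)$, and summing over $T$ rounds yields $|E^{bad}| \le \tO(1/(\eps R)) \sum_e \ell(e)$. The main obstacle I anticipate is the strong-diameter bookkeeping in the presence of approximate oracles: the oracle returns distances in $G$, not in the evolving subgraph $G[C]$, so when $C$ absorbs vertices formerly in $C'$ their path to $q_C$ must remain inside the updated vertex set of $C$. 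This is resolved by always attaching absorbed vertices along the shortest-path tree computed inside the growth ball (which only uses vertices currently being added to $C$ together with $C$'s existing tree), and by choosing $\eps'$ small enough that the accumulated $(1+\eps')^{O(\log n)}$ approximation slack is absorbed into the $(1+\eps)$ factor. The total cost is $T$ oracle invocations, yielding the claimed $\tO(m)/\tO(1)$ PRAM bounds and $\tO(\sqrt{n} + \hopDiameter{G})$ CONGEST bounds.
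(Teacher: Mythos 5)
Your plan founders at exactly the point where the technical difficulty of this theorem lives: what happens to the \emph{donor} clusters when an active cluster absorbs a shell of nearby vertices. Reassigning all of $B_G(C,r_C)$ to $C$ can remove vertices that lie on the tree paths of a neighboring cluster $C'$, disconnecting part of $C'$ from its terminal $q_{C'}$ inside $G[C']$ (or stealing $q_{C'}$ itself, since terminals of different clusters can be within distance $\Delta$ of each other); your assertion that ``no surviving cluster ever loses its own root terminal'' and your bookkeeping for the absorbing cluster's tree do nothing for the cluster that loses vertices, so both the ``exactly one terminal'' property and the $(1+\eps)R$ strong-radius bound break. This is precisely the failure mode the paper's \cref{lem:clustering_RBsplit} is built to repair: it grows \emph{one} region from the union of a whole color class via blurry ball growing (\cref{thm:blurry_growing}) and then recurses on the leftover graph with new ``delayed'' terminals (the nodes whose forest parent was absorbed), which restores the ruling property for the damaged side; nothing in your proposal plays that role. (A smaller instance of the same carelessness is the initial step: strong diameter of the Voronoi clusters holds only if vertices are assigned by the approximate SSSP \emph{forest}, not by independently computed approximate distances with tie-breaking --- see the right part of \cref{fig:mpx}.)

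The cut-bound argument also has a gap. Your bit-based activation does not prevent same-round conflicts: all clusters whose terminal has $t$-th bit $1$ grow simultaneously, they may be adjacent, and their $\Delta$-neighborhoods can overlap heavily, so a single edge can appear in the candidate annuli of many active clusters in one round. The averaging step ``if every candidate were bad, the total edge weight absorbed into $C$ would exceed $\sum_e \ell(e)$'' is therefore not a valid contradiction: per cluster it only certifies a radius whose cut is at most $\tO(1/\Delta)$ times the edge length in \emph{that cluster's} $\Delta$-neighborhood, and summing this over overlapping neighborhoods (and over rounds, where the same edge can be absorbed and re-exposed repeatedly) can exceed $\tO(1/(\eps R))\sum_e \ell(e)$ by a factor as large as the number of nearby active clusters. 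You would need either a sequential-carving/disjoint-charging structure or, as the paper does, a single global grow-or-not decision per scale (the derandomized binary choices of the blurry ball growing procedure, which charge each edge $O(1)$ times across the $O(\log D)$ levels) combined with the ``choose the heavier color class and recurse'' scheme to keep the recursion depth, and hence the total charge, polylogarithmic.
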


Our result above is in fact quite model-independent as we essentially reduce the problem to $\poly(\log n)$ $(1+1/\poly(\log n))$-approximate distance computations.
The final complexities then follow from the recent work of \cite{RGHZL2022sssp}: the authors give efficient deterministic parallel and distributed approximate shortest path algorithms in \pram and \congest.

\paragraph{Low-Stretch Spanning Trees}
As a straightforward corollary of the clustering result in \cref{thm:steroids_simple}, we obtain an efficient deterministic parallel and distributed algorithm for computing low-stretch spanning trees. 
Low-stretch spanning trees were introduced in a seminal paper by Alon et al. \cite{alon_karp_peleg_west1995low_stretch_spanning_tree}, where they were shown useful for the online $k$-server problem. 
The algorithm of \cite{alon_karp_peleg_west1995low_stretch_spanning_tree} constructed spanning trees with average stretch $\textrm{exp}(\sqrt{\log n \log\log n)}$. 
In a subsequent work Bartal \cite{bartal1996probabilistic,Bartal98} and Fakchraenphol et al. \cite{FRT04} showed that one can get logarithmic stretch if one allows the trees to use edges that are not present in the original graph. 
In \cite{elkin2008lower} it was shown that the original problem of low-stretch spanning trees admits a solution with polylogarithmic stretch. That bound was later improved to a nearly-logarithmic bound in \cite{ABN07}. These constructions have important applications to the framework of spectral sparsification \cite{ST04}. 

In the distributed setting the problem was studied in \cite{becker_emek_ghaffari_lenzen2019low_stretch_spanning_trees}.
However, the latter algorithm relies on the computation of exact distances. 
Our approach, on the other hand, only relies on approximate distance computations that, unlike exact distances, can be computed with near-optimal parallel and distributed complexity \cite{RGHZL2022sssp}. 
Hence, we are able to present the first distributed and parallel algorithm for this problem that provides polylogarithmic stretch, polylogarithmic depth and near-linear work.

\begin{theorem}[Deterministic Low-Stretch Spanning Tree]
Let $G$ be a weighted graph. Each edge $e$ has moreover a nonnegative importance $\mu(e)$. 
There exists a deterministic parallel and distributed algorithm which outputs a spanning tree $T$ of $G$ such that 

\begin{align}
\label{eq:deadline_is_soon}
\sum_{e = \{u,v\} \in E(G)} \mu(e) d_T(u,v) = \tO\left( \sum_{e = \{u,v\} \in E(G)} \mu(e) d_G(u,v) \right). 
\end{align}

The \pram variant of the algorithm has work $\tO(m)$ and depth $\tO(1)$. 
The \congest variant of the algorithm runs in $\tO(\sqrt{n} + \hopDiameter{G})$ rounds.
\end{theorem}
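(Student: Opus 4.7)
The plan is to plug \cref{thm:steroids_simple} into the standard Alon--Karp--Peleg--West / Abraham--Bartal--Neiman-style hierarchical construction for low-stretch spanning trees, with each layer of the hierarchy produced by a single call to the terminal clustering above rather than by exact ball-growing. After rescaling we may assume edge lengths lie in $[1, \poly(n)]$. Set $L = O(\log n) = \tO(1)$ geometrically growing radii $R_i = 2^i \cdot \polylog n$ with $R_L \ge \diam(G)$.

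First I would construct a laminar hierarchy $\fC_0 \preceq \fC_1 \preceq \cdots \preceq \fC_L$ together with a representative $q_C \in C$ per cluster. Level $0$ is the singleton partition. Given $\fC_i$ and its representative set $Q_i = \{q_C : C \in \fC_i\}$, invoke \cref{thm:steroids_simple} with terminals $Q_i$, radius $R_{i+1}$, and precision $\eps = 1/2$, producing a partition $\fC_{i+1}$ and a surviving terminal subset $Q'_{i+1} \subseteq Q_i$. I would enforce laminarity by defining each vertex's level-$(i{+}1)$ cluster to be the one containing its level-$i$ representative; by the triangle inequality together with the $\tO(R_i)$ intra-cluster diameter from level $i$ and the $1.5 R_{i+1}$ bound from \cref{thm:steroids_simple}, this keeps the new interior distance to the representative bounded by $\tO(R_{i+1})$. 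The spanning tree $T$ is then assembled bottom-up: for each $C' \in \fC_{i+1}$, I union the previously built sub-trees of its level-$i$ constituents with a shortest-path tree of $G[C']$ rooted at $q_{C'}$ and extract any spanning tree of the result; telescoping over levels yields $d_T(q_{C'}, v) = \tO(R_{i+1})$ for every $v \in C'$.

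For the stretch bound, fix an edge $e = \{u,v\}$ and let $i^\star(e)$ be the smallest level at which $u$ and $v$ share a cluster. The depth bound gives $d_T(u,v) \le 2 d_T(u, q_{C^\star}) = \tO(R_{i^\star(e)})$, and since $i^\star(e) = j$ forces $e \in E^{bad}_{j-1}$ (the edges cut by $\fC_{j-1}$), regrouping by level gives
\[
\sum_{e=\{u,v\}} \mu(e)\, d_T(u,v) \;\le\; \sum_{j=1}^L \tO(R_j)\cdot \mu\bigl(E^{bad}_{j-1}\bigr).
\]
Combining this with the $\mu$-weighted, shortcut-distance version of the cut guarantee of \cref{thm:steroids_simple}, namely $\mu(E^{bad}_j) = \tO(1/R_j)\sum_e \mu(e)\, d_G(u,v)$, and the geometric ratio $R_j / R_{j-1} = 2$, the right-hand side collapses to $\tO(L)\sum_e \mu(e) d_G(u,v) = \tO(1)\sum_e \mu(e) d_G(u,v)$, which is precisely \eqref{eq:deadline_is_soon}.

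The main technical obstacle is this strengthened cut guarantee: \cref{thm:steroids_simple} as stated only bounds the unweighted count $|E^{bad}|$ and does so via $\sum_e \ell(e)$ rather than $\sum_e d_G(u,v)$. Both upgrades --- $\mu$-weighting and replacing $\ell(e)$ by the shortcut distance --- are standard features of the underlying ball-growing scheme and are presumably supported by the more general \cref{thm:steroids}; for integral $\mu$ one can alternatively simulate the $\mu$-weighting by duplicating each edge $\mu(e)$ times in the input. Complexity-wise, there are $L = \tO(1)$ levels, each costing one call to \cref{thm:steroids_simple} plus one parallel/distributed shortest-path forest computation from the surviving terminals; both steps have $\tO(m)$ work and $\tO(1)$ depth in \pram and $\tO(\sqrt{n} + \hopDiameter{G})$ rounds in \congest, so these bounds propagate to the entire construction up to polylogarithmic factors.
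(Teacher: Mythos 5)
Your high-level plan (terminal clustering at geometric scales, charge each edge to the level where it first becomes internal) is a bottom-up AKPW/Bartal-style hierarchy, which is genuinely different from the paper's route, but as written it has a gap at exactly the step that makes spanning (as opposed to Steiner/embedding) trees hard: the laminarization. After you reassign every vertex to the cluster of its level-$i$ representative, a level-$(i+1)$ cluster $\hat C$ is a union of old clusters whose representatives lie in a common output cluster $C'$ of \cref{thm:steroids_simple}. The bound $d_{G[C']}(q',q_C)\le 1.5R_{i+1}$ is witnessed by a path inside $C'$, and that path may pass through vertices whose own level-$i$ representatives lie in other output clusters; those vertices are \emph{not} in $\hat C$. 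So the triangle-inequality argument only bounds $d_G(q',v)$, not the interior distance $d_{G[\hat C]}(q',v)$ — indeed $G[\hat C]$ can have badly distorted internal distances or even be disconnected — and hence the shortest-path tree of $G[\hat C]$ rooted at $q_{\hat C}$ need not have depth $\tO(R_{i+1})$ nor reach all of $\hat C$. If instead you build the SP trees inside the original (non-laminarized) clusters $C'$, the pieces you union are no longer vertex-disjoint across different level-$(i+1)$ clusters, and ``extract any spanning tree'' can delete exactly the paths your telescoping $d_T(q_{C'},v)=\tO(R_{i+1})$ relies on. This is the classical obstruction for bottom-up constructions of low-stretch \emph{spanning} trees, and it is why the paper instead runs a top-down recursion on star decompositions (\cref{thm:generalized_star}, built from blurry ball growing plus the delayed-terminal clustering of \cref{cor:edge_cutting}): there the recursive subinstances are vertex-disjoint, the distance of every cone to the root is certified through an explicit bridge edge (property 3 of \cref{def:star_decomposition}), and the tree assembly and per-edge stretch telescoping go through cleanly in \cref{thm:low_stretch_spanning_tree_recursion}.

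A secondary issue is the strengthened cut guarantee you invoke. The $\mu$-weighted aggregate bound is indeed available in the paper (\cref{cor:edge_cutting} gives $\mu(E^{bad})=O(\log^3 n/(\eps R))\sum_e\mu(e)\ell(e)$), so no edge duplication is needed, but the further replacement of $\sum_e\mu(e)\ell(e)$ by $\sum_e\mu(e)d_G(u,v)$ is not provided by \cref{thm:steroids} and is not a ``standard'' feature of the deterministic aggregate guarantee. The paper's own appendix proof (\cref{thm:low_stretch_spanning_tree}) establishes the $\ell$-based guarantee of \cref{def:lsst_deterministic}; passing to the $d_G$-based statement needs a separate reduction (e.g., work on the distance-preserving subgraph of edges with $\ell(e)=d_G(u,v)$ and transfer the importance of each remaining edge onto a shortest path between its endpoints), which your proposal would have to spell out rather than assume.
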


Note that plugging in $\mu(e) := \mu'(e) / \ell(e)$ into \cref{eq:deadline_is_soon} and using $d_G(u,v) \le \ell(e)$, we also get the following similar guarantee of
\begin{align*}
    \sum_{e = \{u,v\} \in E(G)} \mu'(e) \cdot  \frac{d_T(u,v)}{\ell(e)} 
    = \tO\left( 
    \sum_{e = \{u,v\} \in E(G)} \mu'(e)
    \right)
\end{align*}

The stretch is optimal up to polylogarithmic factors.

\paragraph{$\ell_1$ Embedding}
Embeddings of networks in low dimensional spaces like $\ell_1$-space are a basic tool with a number of applications. For example, the parallel randomized approximate shortest path algorithm of  \cite{li19parallel_shortest_path} uses $\ell_1$-embeddings as a crucial subroutine. 
By using our clustering results, we can use an approach similar to the one from \cite{bartal2021advances} to obtain an efficient deterministic parallel and distributed algorithm for $\ell_1$-embedding.

\begin{theorem}[$\ell_1$-Embedding]
Let $G$ be a weighted graph.
There exists a deterministic parallel and distributed algorithm which computes an embedding in $\tO(1)$-dimensional $\ell_1$-space with distortion $\tO(1)$.
The \pram variant of the algorithm has work $\tO(m)$ and depth $\tO(1)$. 
The \congest variant of the algorithm runs in $\tO(\sqrt{n} + \hopDiameter{G})$ rounds.
\end{theorem}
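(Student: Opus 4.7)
The plan is to follow the padded-decomposition approach to $\ell_1$-embedding used in \cite{bartal2021advances}, substituting the deterministic low-diameter clustering developed in this paper in place of the usual randomized padded decompositions. Let $W$ denote the aspect ratio of $G$. For each scale $D_i = 2^i$ with $i \in \{0, \ldots, \lceil \log W \rceil\}$, and for each of $\tO(1)$ parallel invocations of the clustering at that scale (with appropriately varied parameters so that every pair of vertices is covered at some invocation near its scale), I would apply \cref{thm:steroids_simple} to obtain a partition $\fC_i$ of $V(G)$ into clusters of strong diameter $\tO(D_i)$, cutting only $\tO(1/D_i)\cdot\sum_{e}\ell(e)$ edges. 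From each such clustering I extract one coordinate of the embedding defined for $v \in V(G)$ by $f(v) := \min(d_G(v, V\setminus C_v),\, D_i)$, where $C_v$ is the cluster of $v$. This coordinate is $1$-Lipschitz so it never expands pairwise distances; since I use $\tO(1)$ scales and $\tO(1)$ clusterings per scale, the total embedding dimension is $\tO(1)$.

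For the distortion lower bound I fix a pair $u, v$ with $d_G(u,v) = d$ and focus on the scale $D_i \approx d$. A standard padded-decomposition argument shows that across the $\tO(1)$ parallel clusterings at scale $D_i$ there must be one in which at least one of $u$ or $v$ is ``well padded'' inside its cluster, i.e.\ at distance $\Omega(d/\polylog(n))$ from the cluster boundary. The truncation at $D_i$ in the definition of $f$ then ensures that this padding translates into a coordinate gap $|f(u)-f(v)| = \Omega(d/\polylog(n))$. Summing this contribution and comparing against the total upper bound (at most $\tO(d)$, thanks to $1$-Lipschitzness combined with only $\tO(1)$ relevant scales) yields overall distortion $\tO(1)$.

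The main obstacle will be the derandomization of the padded-decomposition step: Bartal's classical proof relies on probabilistic padding with random cluster radii or centers, a feature our deterministic partition does not directly provide. I expect to bridge this via two ingredients: (a) the general clustering theorem of this paper (of which \cref{thm:steroids_simple} is a corollary) also supports separation-style guarantees that can be used to deterministically realize a padded decomposition, and (b) combining $\tO(1)$ carefully-shifted parallel invocations at each scale with a counting/averaging argument over the cut-edge budget $\tO(\sum_e \ell(e)/D)$ forces every pair at distance $\approx D_i$ to be well padded in at least one invocation. Once this padding guarantee is in place, the complexity claims are immediate: by \cref{thm:steroids_simple} each clustering call reduces to $\polylog(n)$ approximate-distance computations, which by \cite{RGHZL2022sssp} execute in $\tO(m)$ work and $\tO(1)$ depth on \pram, and in $\tO(\sqrt{n} + \hopDiameter{G})$ rounds on \congest; since only $\tO(1)$ clusterings are performed in total, these bounds carry over to the final embedding algorithm.
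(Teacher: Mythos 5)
There are two genuine gaps. First, the padding step: the guarantee of \cref{thm:steroids_simple} is an \emph{aggregate} bound on the total length of cut edges, and no amount of ``shifted'' re-invocations plus counting against the cut budget forces a \emph{fixed} vertex (or pair) to be padded in at least one invocation --- a deterministic algorithm may leave the very same vertex within distance $o(D_i)$ of a cluster boundary in every run, since only a small measure of edges, not a small set of specified vertices, is protected. The paper avoids this by using the node-radius (good/bad vertex) guarantee of the general theorem (\cref{cor:clustering_for_l1}, i.e.\ $B(v,D)\subseteq C$ for all but a $0.1$-weighted fraction of vertices) and then \emph{boosting with multiplicative weights} (\cref{thm:sparse_covers_good_parameters}): bad vertices get their weight doubled, so after $t=O(\log n)$ adaptively reweighted partitions every vertex is padded in at least $2t/3$ of them. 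Your ingredient (a) gestures at the right tool, but ingredient (b) as stated is not a valid derandomization; the adaptive reweighting is the missing idea.

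Second, even granting padding, your coordinate $f(v)=\min\bigl(d_G(v,V\setminus C_v),D_i\bigr)$ does not give the claimed gap: if $u$ and $v$ lie in \emph{different} clusters but are both well padded (which is exactly the situation the padding argument produces), then $f(u)$ and $f(v)$ can both equal $\Theta(D_i)$ and their difference can be $0$. One needs Rao-style coordinates --- distance to a \emph{union of clusters} selected per coordinate --- so that one endpoint lies in the selected set ($f=0$) while the other is at distance $\geq D_i$ from it; deterministically the paper realizes this by assigning each cluster an error-correcting codeword over its ID, guaranteeing $\Omega(\log n)$ separating coordinates per partition, and it evaluates these distances with the $1$-Lipschitz potential oracle $\oPot$ (\cref{def:oracle_potential}) rather than the forest-based distance oracle, which matters because the non-expansion side of your argument needs the coordinates to be (approximately) Lipschitz, a property raw approximate-distance outputs need not have. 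With these two repairs your scale-by-scale plan essentially becomes the paper's proof of \cref{thm:l1_embedding} (with dimension $O(\log^3 n)$ and distortion $O(\log^4 n)$); as written, however, both the derandomized padding and the per-partition coordinate are unsound.
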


\paragraph{Other Applications}
Since low-diameter clusterings are an important subroutine for numerous problems, there are many other more standard applications for problems like ($h$-hop) Steiner trees or Steiner forests, deterministic variants of tree embeddings, problems in network design, etc. \cite{becker_emek_lenzen2020blurry_ball_growing,haeupler2021tree}
We do not discuss these applications here due to space constraints. 
We also note that the distributed round complexities of our algorithms are almost-universally-optimal. We refer the interested reader to \cite{ghaffar2020hop_constrained_oblivious_routing,haeupler2021universally,cool_stuff} for more details regarding the notion of universal optimality.

\subsection{Previous Work and Barriers}

We will now discuss two different lines of research that study low-diameter clusterings and mention some limits of known techniques that we need to overcome. 

\paragraph{Building Network Decompositions}

One line of research 
\cite{awerbuch89,panconesi-srinivasan,linial_saks93,ghaffari_harris_kuhn2018derandomizing,ghaffari_kuhn_maus2017slocal,rozhon_ghaffari2019decomposition,ghaffari_grunau_rozhon2020improved_network_decomposition,chang_ghaffari2021strong_diameter} is motivated by the desire to understand the deterministic distributed complexity of various fundamental symmetry breaking problems such as maximal independent set and $(\Delta+1)$-coloring. In the randomized world, there are classical and efficient distributed algorithms solving these problems, the first and most prominent one being Luby's algorithm \cite{alon86lubys_algorithm,luby86_lubys_alg} from the 1980s running in $O(\log n)$ rounds. 
Since then, the question whether these problems also admit an efficient deterministic algorithm running in $\poly(\log n)$ rounds was open until recently \cite{rozhon_ghaffari2019decomposition}. 

A general way to solve problems like maximal independent set and $(\Delta+1)$-coloring is by first constructing a certain type of clustering of an unweighted graph known as network decomposition \cite{ghaffari_harris_kuhn2018derandomizing,ghaffari_kuhn_maus2017slocal}.  
A $(C,D)$-network decomposition is a decomposition of an unweighted graph into $C$ clusterings: each clustering is a collection of non-adjacent clusters of diameter $D$. 
A network decomposition with parameters $C, D = O(\log n)$ exist and can efficiently be computed in the \congest model if one allows randomization \cite{linial_saks93}. 
However, until recently the best known deterministic algorithms for network decomposition \cite{awerbuch89,panconesi-srinivasan} needed $n^{o(1)}$ rounds and provided a decomposition with parameters $C,D = n^{o(1)}$. 
Only in a recent breakthrough, \cite{rozhon_ghaffari2019decomposition} gave a deterministic algorithm running in $O(\log^7 n)$ \congest rounds and outputting a network decomposition with parameters $C = O(\log n), D = O(\log^3 n)$.

This result was subsequently improved by \cite{ghaffari_grunau_rozhon2020improved_network_decomposition}: their algorithm runs in $O(\log^5 n)$ rounds with parameters $C = O(\log n), D = O(\log^2 n)$. 
However, both of these discussed results offer only a so-called weak-diameter guarantee. 
Recall that this means that every cluster has the property that any two nodes of it have distance at most $D$ in the original graph. However, the cluster can even be disconnected. 

The more appealing strong-diameter guarantee, matching the state-of-the-art weak-diameter gurantee of \cite{ghaffari_grunau_rozhon2020improved_network_decomposition},  was later achieved by \cite{chang_ghaffari2021strong_diameter}. 
However, their algorithm needs $O(\log^{11} n)$ \congest rounds. 

Despite the exciting recent progress, many questions are still open: Can we get faster algorithms with better guarantees? 
Can the algorithms output $D$-separated strong-diameter clusters for $D>2$? 
Can we get algorithms that handle terminals (cf. \cref{thm:steroids_simple})? 
In this work we introduce techniques that help us make some progress on these questions. 


\paragraph{Tree Embeddings and Low-Stretch Spanning Trees}

A very fruitful line of research started with the seminal papers of \cite{alon_karp_peleg_west1995low_stretch_spanning_tree,bartal1996probabilistic} and others.
The authors were interested in approximating metric spaces by simpler metric spaces. In particular, Bartal \cite{bartal1996probabilistic} showed that distances in any metric space can be probabilistically approximated with polylogarithmic distortion by a carefully chosen distribution over trees. 
The proof is constructive and based on low-diameter decompositions. Results of this type are known as probabilistic tree embeddings. 
In \cite{alon_karp_peleg_west1995low_stretch_spanning_tree} showed that the shortest path metric of a weighted graph $G$ can even be approximated by the shortest path metric on a \emph{spanning tree} of $G$ sampled from a carefully chosen distribution. 
A tree sampled from such a distribution is known as a \emph{low-stretch spanning tree}.

Probabilistic tree embeddings and low-stretch spanning trees  are an especially useful tool and have found numerous applications in areas such as approximation algorithms, online algorithms, and network design problems \cite{borodin2005online,haeupler2021tree}.
Importantly, most of the constructions of these objects are based on low-diameter clusterings. 

Many of the \emph{randomized} low-diameter clustering type problems can elegantly be solved in a very parallel/distributed manner using an algorithmic idea introduced in \cite{miller2013parallel}.
We will now sketch their algorithm and then explain why new ideas are needed for our results. 
Consider as an example the randomized version of the low-diameter clustering problem with terminals. 
That is, consider the problem from \cref{thm:steroids_simple}, but instead of the deterministic guarantee (2) on the total number of edges cut, we require that a given edge is cut with probability $\tO\left(\frac{\ell(e)}{\eps R}\right)$.

One way to solve the problem is as follows:
every terminal samples a value from an exponential distribution with mean $\frac{\eps R}{\Theta(\log n)}$. This value is the \emph{head start} of the respective terminal. 
Next, we compute a shortest path forest from all the terminals taking the head starts into account. 
Note that with high probability, the head start of each terminal is at most $\eps R$ and therefore each node $v$ gets clustered to a terminal of distance at most $d(Q,v) + \eps R$. 

To analyze the probability of an edge $e$ being cut, let $u$ be one of the endpoints of $e$. If, taking the head starts into account, the closest terminal is more than $2\ell(e)$ closer to $u$ compared to the second closest terminal, then a simple calculation shows that $e$ is not cut. Therefore, using the memoryless property of the exponential distribution, one can show that $e$ gets cut with probability at most $\frac{\ell(e)}{\eps R / \Theta(\log n)} = \tO \left( \frac{\ell(e)}{\eps R} \right)$.

Unfortunately, this simple and elegant algorithm critically relies on exact distances: if one replaces the exact distance computation with an approximate distance computation with additive error $d_{error}$, then a given edge of length $\ell(e)$ can be cut with probability $\frac{\ell(e) + d_{error}}{\eps R / \Theta(\log n)}$, which is insufficient for short edges.
The left part of \cref{fig:mpx} illustrates this problem.

\begin{figure}
    \centering
    \includegraphics[width=\textwidth]{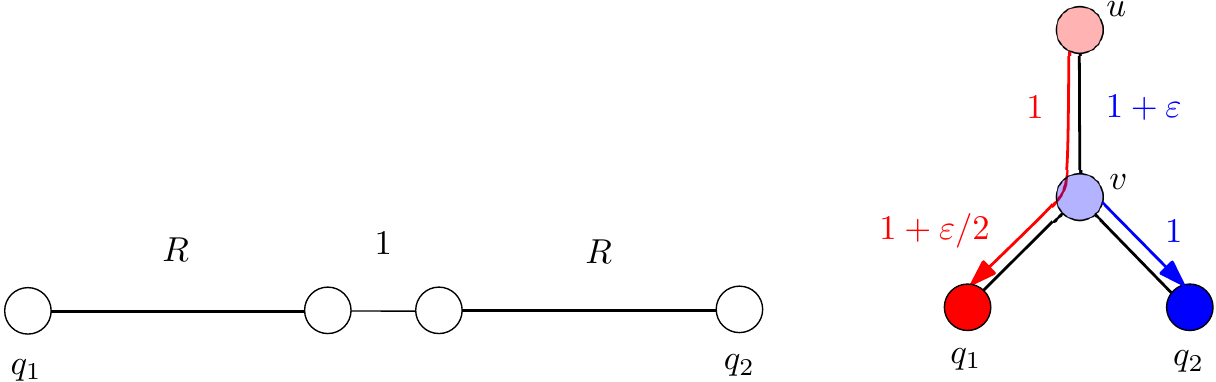}
    \caption{Left: The set of terminals is $Q = \{q_1, q_2\}$. Note that $Q$ is $R$-ruling. Moreover, the head starts of both $q_1$ and $q_2$ are both at most $\eps R$, with high probability. Hence, the probability that $e$ is cut can be equal to one, while we want it to be of order $\widetilde{\Theta}(1/(\eps R))$.\\ 
    Right: All three edge in the graph have length one. Assume that both $q_1, q_2$ have no head starts. The red and blue numbers indicate the computed approximate distances of $q_1, q_2$ to the other two nodes $u,v$ in the graph. Clustering each node to the closest terminal with respect to the computed distances (red and blue arrows) results only in a weak-diameter guarantee.  
   }
    \label{fig:mpx}

\end{figure}

The high-level reason why the algorithm fails with approximate distances is that first the randomness is fixed and only then the approximate distances are computed. 
One way to solve this issue could be to first compute approximate distances $\tilde{d}(q,.)$ from each terminal $q$ separately, then sampling a random head start $hs_q$ for each terminal $q$, followed by clustering each node $v$ to the terminal $q$ minimizing $\tilde{d}(q,v) - hs_q$. 
Even with this approach, an edge might be cut with a too large probability. Moreover, it is no longer possible to obtain a strong-diameter guarantee, as illustrated in the right part of \cref{fig:mpx}.
Also, note that it is not clear how to efficiently compute weak-diameter clusterings with this approach as one has to perform one separate distance computation from each terminal.

In a recent work, \cite{becker_emek_lenzen2020blurry_ball_growing} managed to obtain an efficient low-diameter clustering algorithm using $\poly(\log n)$ approximate distance computations. However, their algorithm has three disadvantages compared to our result: (1) it is randomized, (2) it only gives a weak-diameter guarantee and (3) their result is less general; for example it is not obvious how to extend their algorithm to the setting with terminals.

\subsection{Our Techniques and Contributions}

We give a clean interface for various distributed clustering routines in weighted graphs that allows to give results in different models (distributed and parallel).

\paragraph{Simple Deterministic Strong-Diameter Network Decomposition in \congest}

In the previous section, we mentioned that the state-of-the-art strong-diameter network decomposition algorithm of \cite{chang_ghaffari2021strong_diameter} runs in $O(\log^{11})$ \congest rounds and produces clusters with diameter $D = O(\log^2 n)$.

Our first result improves upon their algorithm by giving an algorithm with the same guarantees running in $O(\log^5 n)$ \congest rounds.

\begin{theorem}
\label{thm:decomposition_informal}
There is a deterministic \congest algorithm computing a network decomposition with $C = O(\log n)$ clusterings such that each cluster has strong-diameter $O(\log^2 n)$. 
The algorithm runs in $O(\log^5 n)$ \congest rounds. 
\end{theorem}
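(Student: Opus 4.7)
The plan is to derive the theorem as a direct specialization of the paper's general clustering framework to unweighted graphs. I would construct the $C = O(\log n)$ color classes of the network decomposition one stage at a time. Writing $V_i$ for the set of vertices not yet placed in any earlier class, in the $i$-th stage I would invoke the $D$-separated variant of the clustering from the general framework (the one already advertised in the introduction alongside \cref{thm:steroids_simple}) on the subgraph induced by $V_i$, with $D = 2$ and diameter scale $R = \Theta(\log^2 n)$. This returns a collection of clusters of strong-diameter $O(\log^2 n)$ that are pairwise at graph distance at least $2$ in $G$, and that jointly cover at least half of $V_i$. These clusters constitute the $i$-th color class; since $|V_i|$ halves each stage, $O(\log n)$ stages exhaust all vertices, yielding a legal $(O(\log n), O(\log^2 n))$ strong-diameter network decomposition.

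The key to the claimed round complexity is that in an unweighted graph every ``approximate distance computation'' requested by the framework is confined to balls of radius $O(\log^2 n)$ and can therefore be realized in CONGEST by bounded-radius BFS in $O(\log^2 n)$ rounds with only $O(1)$ bandwidth overhead per edge: each vertex needs to maintain only the best (cluster-ID, distance) pair at each step, and BFS waves originating at different tentative cluster centers never travel more than $O(\log^2 n)$ hops before being truncated. Since the general clustering framework internally reduces its task to $\poly\log(n)$ such distance computations, each stage of the outer loop runs in $O(\log^4 n)$ rounds. Composing the $O(\log n)$ stages yields the advertised $O(\log^5 n)$ total, matching the weak-diameter bound of \cite{ghaffari_grunau_rozhon2020improved_network_decomposition} but with the stronger strong-diameter guarantee.

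The main obstacle I expect is to verify that the generic framework, whose CONGEST cost in the weighted case carries the unavoidable $\tO(\sqrt{n} + \hopDiameter{G})$ term from global shortest-path computations, genuinely localizes to radius $O(\log^2 n)$ in the unweighted setting. Concretely, one has to audit every distance-oracle call inside the framework and confirm that it is always issued from an active cluster of current radius $O(\log^2 n)$, so that it can be answered by a bounded-radius BFS instead of a global $(1+\eps)$-approximate SSSP computation; one also has to confirm that parallel BFSes from the different active clusters of a given stage never induce more than $O(1)$-bandwidth congestion on any edge, which follows from the fact that each vertex belongs to at most one active cluster at any moment and its outgoing BFS messages carry $O(1)$ items. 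This audit is the only non-black-box step; once it is completed, plugging the $O(\log^2 n)$-round local distance subroutine into the general clustering framework of \cref{thm:steroids_simple} immediately produces the strong-diameter network decomposition claimed by the theorem.
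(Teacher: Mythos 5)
There is a genuine gap here: the content of \cref{thm:decomposition_informal} is the specific quantitative pair of bounds --- strong-diameter $O(\log^2 n)$ with $O(\log^5 n)$ rounds, matching the weak-diameter guarantees of \cite{ghaffari_grunau_rozhon2020improved_network_decomposition} and improving the round complexity of \cite{chang_ghaffari2021strong_diameter} --- and the black-box specialization you propose cannot deliver them. First, the general clustering result (\cref{thm:steroids}, and its corollary \cref{thm:steroids_simple}) outputs a partition with a bound on the cut edges but with \emph{no} separation guarantee at all: adjacent clusters are allowed, so its output cannot serve directly as a color class of a network decomposition. The ``$D$-separated variant'' you invoke is \cref{thm:separated_clustering_simple}, which in the paper is obtained by a different technique (\cref{sec:sparse-cover}, via \cref{lem:strongdiam}), not from the general framework; and its parameters are too lossy for your choice $D=2$, $R=\Theta(\log^2 n)$: \cref{lem:strongdiam} only gives $\eps D$-separation with $\eps=\Theta(1/\log^2 n)$ and strong-radius $O(D\log^4 n)$. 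To force non-adjacent clusters in an unweighted graph you must therefore take $D=\Omega(\log^2 n)$, which pushes the cluster diameter to $O(\log^6 n)$ rather than $O(\log^2 n)$, and the per-stage round count correspondingly above the $O(\log^4 n)$ you budget. Your localization/congestion audit is the unproblematic part (in unweighted graphs the distance oracle at polylogarithmic scales is just truncated BFS; this is exactly item 5 of \cref{thm:congestpa-simulation}), but even granting it, the reduction proves a qualitatively similar yet quantitatively weaker statement than the theorem.

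The paper's actual proof is a dedicated construction (\cref{thm:strong_diameter_clustering,thm:congest_clustering_oracle}): a bit-by-bit derandomization in the spirit of the head-start ball-growing of \cite{miller2013parallel} that maintains a ruling set of potential cluster centers, in each of the $b=O(\log n)$ phases computes $\diff(v)=d(Q^\fR,v)-d(Q^\fB,v)$ by two truncated BFSs, and by pigeonhole selects a difference-level pair $K_{j^*}\cup K_{j^*+1}$ containing at most $n/(3b)$ vertices to delete, thereby separating red from blue centers while keeping the centers $O(b^2)$-ruling. The global selection of $j^*$ needs coordination, which is obtained by first computing the weak-diameter clustering of \cite{ghaffari_grunau_rozhon2020improved_network_decomposition} and pipelining $O(\log n)$ sums over its Steiner trees with one bit of bandwidth per tree (\cref{lem:pipelining}). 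It is this design that yields strong-diameter $O(b^2)=O(\log^2 n)$ and $O(\log^4 n)$ rounds per clustering, hence $O(\log^5 n)$ overall after the $O(\log n)$ repetitions; none of these ideas appear in, or are recoverable from, the reduction you sketch.
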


Note that the round complexity of our algorithm matches the complexity of the weak-diameter algorithm of \cite{ghaffari_grunau_rozhon2020improved_network_decomposition}. 
This is because both our result and the result of \cite{chang_ghaffari2021strong_diameter} use the weak-diameter algorithm of \cite{ghaffari_grunau_rozhon2020improved_network_decomposition} as a subroutine. 

We prove \cref{thm:decomposition_informal} in \cref{sec:unweighted_strong_clustering} and the technical overview of our approach is deferred to \cref{sec:strong_intuition}. Here, we only note that on a high-level our algorithm can be seen as a derandomization of the randomized algorithm of \cite{miller2013parallel}. That is, instead of clusters, the algorithm operates with nodes and assigns ``head starts'' to them in a careful manner. 

\paragraph{Simple Blurry Ball Growing Procedure}

The blurry ball growing problem is defined as follows: given a set $S$ and distance parameter $D$, we want to find a superset $S^{sup} \supseteq S$ such that the following holds. First, for any $v \in S^{sup}$ we have $d_{G[S^{sup}]}(S, v) \le D$, that is, the set $S$ ``does not grow too much''. 
On the other hand, in the randomized variant of the problem we ask for each edge $e$ to be cut by $S^{sup}$ with probability $O(\ell(e) / D)$, while in the deterministic variant of the problem we ask for the total number of edges cut to be at most $O(\sum_{e \in E(G)} \ell(e)/D)$. 

Here is a simple application of this problem: suppose that we want to solve the low-diameter clustering problem where each edge needs to be cut with probability $\ell(e)/D$ and clusters should have diameter $\tO(D)$. Assume we can solve the separated clustering problem, that is, we can construct a clustering $\fC$ such that the clusters are $D$-separated and their diameter is $\tO(D)$. To solve the former problem, we can simply solve the blurry ball growing problem with $S = \bigcup_{C \in \fC} C$ and $D_{blurry} = D/3$. This way, we ``enlarge'' the clusters of $\fC$ only by a nonsignificant amount, while achieving the edge cutting guarantee. 

The blurry ball growing problem was defined and its randomized variant was solved in \cite[Theorem 3.1]{becker_emek_lenzen2020blurry_ball_growing}. 
Since blurry ball growing is a useful subroutine in our main clustering result, we generalize their result by giving an efficient algorithm solving the deterministic variant. 
Furthermore, we believe that our approach to solving that problem is simpler: we require the approximate distance oracle to be $(1+1/\log n)$-approximate instead of $\left(1+\left( \frac{\log\log n}{\log n} \right)^2\right)$-approximate. 

\begin{restatable}{theorem}{blurryInformal}
\label{thm:blurry_informal}
Given a weighted graph $G$, a subset of its nodes $S$ and a parameter $D > 0$, there is a deterministic algorithm computing a superset $S^{sup} \supseteq S$ such that $\max_{v \in S^{sup}} d_{G[S^{sup}]}(S,v) \leq D$, and moreover, 
\[
\sum_{e \in E(G) \cap (S^{sup} \times (V(G) \setminus S^{sup}))} \ell(e) = O\left(\sum_{e \in E(G)} \ell(e)/D\right).
\]
The algorithm uses $O(\log D)$ calls to an $(1+1/\log D)$-approximate distance oracle. 
\end{restatable}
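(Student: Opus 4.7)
The plan is to construct $S^{\mathrm{sup}}$ incrementally over $k = \Theta(\log D)$ iterations, using exactly one oracle call per iteration, re-rooted at the current frontier. I would initialize $S_0 := S$, and at iteration $i \in \{1, \dots, k\}$ invoke the $(1+1/\log D)$-approximate distance oracle with source set $S_{i-1}$ to obtain values $\tilde{d}_i(v) \in [d_G(S_{i-1}, v),\, (1+1/\log D)\, d_G(S_{i-1}, v)]$, together with the approximate shortest-path tree $T_i$ rooted at $S_{i-1}$ that the oracle naturally produces. Then I would pick a radius $r_i$ from an interval of width $\Theta(D/k)$ centered around $D/k$ and define $S_i := S_{i-1} \cup \{v : \tilde{d}_i(v) \le r_i\}$. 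The final output is $S^{\mathrm{sup}} := S_k$.

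The strong-diameter guarantee would follow by induction on $i$. Since $S_i$ is a sublevel set of $T_i$, every $v \in S_i \setminus S_{i-1}$ is connected to $S_{i-1}$ by a tree path of length at most $r_i$ that lies entirely inside $G[S_i]$. Chaining these paths gives $d_{G[S_k]}(S, v) \le r_1 + \dots + r_k$, which stays $\le D$ so long as each $r_i \le D/(k(1+1/\log D))$.

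To bound the cut weight I would exploit the residual freedom in choosing $r_i$ inside its interval by an averaging argument: evaluate the cut weight of $S_i$ at a small set of candidate radii (all computable from the single $\tilde{d}_i$ already in hand) and keep the best. Classical region-growing reasoning, transplanted to the approximate setting, then shows that the best candidate cuts total edge weight $O(W/(kD))$, where $W = \sum_e \ell(e)$, and summing across the $k = \Theta(\log D)$ iterations delivers the desired $O(W/D)$ bound.

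The hard part will be controlling the $(1+1/\log D)$-approximation slack. If the oracle were invoked only once from $S$, the absolute error $\varepsilon \cdot \tilde{d}(v)$ could reach $\Theta(D/\log D)$, larger than the length of many edges, and short edges would be ``spuriously'' cut no matter which $r$ is selected. The point of the iterative re-rooting is to localize the error: at iteration $i$ the relevant oracle distances never exceed $r_i = O(D/\log D)$, so the per-round absolute error drops to $O(D/\log^2 D)$, which is smaller than the averaging window $D/k$ by a factor of $\Theta(\log D)$. Verifying that this gap is enough to push the per-round averaging argument through, and that the errors telescope cleanly across the $k$ rounds without blowing up either the strong-diameter bound or the cut-weight bound beyond constant factors, is where the main technical work will lie.
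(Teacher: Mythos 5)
There is a genuine quantitative gap in the cut-weight bound. Write $W=\sum_{e}\ell(e)$, $k=\Theta(\log D)$, and let $w=\Theta(D/k)$ be the width of the window from which you select $r_i$. The averaging (region-growing) argument over a window of width $w$ only yields, per round, a cut of order $(\text{at-risk edge mass in round } i)/w$; even in the best case where every edge is at risk in a single round over the whole process, summing over rounds gives $\sum_e \ell(e)/w = Wk/D = \Theta(W\log D/D)$, not the claimed $O(W/D)$. Your stated per-round bound $O(W/(kD))$ would require the at-risk mass per round to be $O(W/k^2)$, which nothing in the construction guarantees. The loss is not an artifact of the analysis: an edge whose endpoints lie at distance $\approx D$ from $S$ is only reached by the frontier in the final round, and at that point the only randomness (or candidate-radius freedom) protecting it is the width-$w$ window, so its cut probability is $\Theta(\ell(e)/w)=\Theta(\log D\cdot \ell(e)/D)$; concentrating the edge mass near distance $D$ makes the total cut $\Theta(W\log D/D)$. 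In short, splitting the radius into $k$ equal additive increments leaves only $D/\log D$ of ``effective randomness'' per edge, and this loses a $\Theta(\log D)$ factor against the target. (A secondary point you would also have to argue, since distances to the growing set need not decrease between rounds, is that an edge cannot be at risk in many rounds; but even granting that, the $\log D$ loss above remains.)

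The paper avoids exactly this by choosing the radius geometrically rather than additively: its procedure makes a single binary decision per scale --- either keep the current set or grow it to the (approximate) $D/2$-ball --- and recurses with the scale halved, over $O(\log D)$ levels. Each binary choice is worth $\Theta(D)$ of radius at its own scale, so the effective randomness seen by an edge is always proportional to the current scale; a charging argument (the ``middle'' indicator) shows an edge can be in the straddling situation only once before its length is comparable to the scale, which absorbs the factor-$2$ per level and yields $O(\ell(e)/D)$ per edge with no logarithmic loss. The deterministic version then derandomizes the single coin per level by comparing two pessimistic estimators ($\Phi_1$ versus $\Phi_2$), which is also where your ``keep the best candidate'' idea belongs --- but applied to a binary, scale-halving choice rather than to a radius drawn from a narrow additive window. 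Your handling of the approximation slack (keeping the per-call error a $\log$ factor below the decision granularity, with $\eps\approx 1/\log D$) is in the right spirit and mirrors the paper's $(1-\eps)D/2$ bookkeeping, but to reach the stated $O(W/D)$ bound you would essentially have to replace the equal-increment growth by the geometric binary-search structure.
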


Our deterministic algorithm is a standard derandomization of the following simple randomized algorithm solving the randomized variant of the problem. The randomized algorithm is based on a simple binary search idea: in each step we flip a fair coin and decide whether or not we ``enlarge'' the current set $S_i$ by adding to it all nodes of distance at most roughly $D/2^i$. We start with $S_0 = S$ and the final set $S_{\log_2 D} = S^{sup}$. Hence, we need $O(\log D)$ invocations of the approximate distance oracle. 
We prove a more general version of \cref{thm:blurry_informal} in \cref{sec:blurry} and give more intuition about our approach in \cref{sec:blurry_intuition}. 

\paragraph{Main Contribution: A General Clustering Result}

We will now state a special case of our main clustering result. The clustering problem that we solve generalizes the already introduced low-diameter clustering problem that asks for a partition of the vertex set into clusters such that only a small amount of edges is cut. 
In our more general clustering problem we are also given a set of \emph{terminals} $Q \subseteq V(G)$ as input. Moreover, we are given a parameter $R$ such that $Q$ is $R$-ruling. 
Each cluster of the final output clustering has to contain at least one terminal. Moreover, one of these terminals should $(1+\eps)R$-rule its cluster. 

We note that in order to get the classical low-diameter clustering with parameter $D$ as an output of our general result, it suffices to set $Q = V(G), R = D$ and $\eps = 1/2$. 

A more general version of \cref{thm:steroids_simple} is proven in \cref{sec:main_clustering}. The intuition behind the algorithm is explained in \cref{sec:main_clustering_intro}. 
Here, we note that the algorithm combines the clustering idea of the algorithm from \cref{thm:decomposition_informal} and uses as a subroutine the blurry ball growing algorithm from \cref{thm:blurry_informal}. 

Another corollary of our general clustering result is the following theorem.

\begin{restatable}{theorem}{steroidsPadded}[A corollary of \cref{thm:steroids}]
\label{thm:steroids_padded}
We are given an input weighted graph $G$, a distance parameter $D$ and each node $v \in V(G)$ has a \emph{preferred radius} $r(v) > 0$. 

There is a deterministic distributed algorithm constructing a partition $\fC$ of $G$ that splits $V(G)$ into two sets $V^{good} \sqcup V^{bad}$ such that

\begin{enumerate}
    \item Each cluster $C \in \fC$ has diameter $\tO(D)$.
    \item For every node $v \in V^{good}$ such that $v$ is in a cluster $C$ we have $B_G(v, r(v)) \subseteq C$.
    \item For the set $V^{bad}$ of nodes we have 
    \[
    \sum_{v \in V^{bad}} r(v) = \frac{1}{2D} \cdot \sum_{v \in V(G)} r(v). 
    \]
\end{enumerate}

The algorithm needs $\tO(1)$ calls to an $(1+1/\poly\log n)$-approximate distance oracle. 

\end{restatable}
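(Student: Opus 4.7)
The plan is to deduce \cref{thm:steroids_padded} as a direct corollary of the general clustering result \cref{thm:steroids} via a padded-decomposition style reduction. The conceptual goal is to encode each preferred radius $r(v)$ as a piece of auxiliary edge mass so that the deterministic cut-edge guarantee of \cref{thm:steroids} automatically bounds $\sum_{v\in V^{bad}} r(v)$ up to the correct $\tO(1/D)$ factor, exactly as in the reduction that gives \cref{thm:steroids_simple}.

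First, I would form an auxiliary graph $G^{+}$ by attaching to each vertex $v\in V(G)$ with $r(v)>0$ a pendant edge to a fresh virtual vertex $v^{\star}$ of length $r(v)$, inflating the total edge length from $\sum_{e\in E(G)}\ell(e)$ to $\sum_{e\in E(G)}\ell(e)+\sum_{v}r(v)$. I then invoke \cref{thm:steroids} on $G^{+}$ with terminal set $Q=V(G)$, ruling radius $R=\Theta(D)$, and precision $\eps=\Theta(1/\log n)$. This yields a partition $\fC^{+}$ whose clusters have strong diameter $\tO(D)$, each containing exactly one original vertex, and whose cut-edge length is at most $\tO(1/D)\cdot\sum_{e\in E(G^{+})}\ell(e)$. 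Restricting $\fC^{+}$ to $V(G)$ produces the candidate partition $\fC$; because pendants are leaves that only attach to a single original vertex, the strong-diameter and one-terminal-per-cluster properties survive restriction. A first attempt at the bad set is to flag $v$ as bad if the pendant edge $(v,v^{\star})$ is cut by $\fC^{+}$, in which case $\sum_{v\in V^{bad}}r(v)$ is exactly the total length of cut pendants, immediately giving a bound of the required form.

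The substantive difficulty is property (2): for a good $v$ the entire ball $B_{G}(v,r(v))$ must lie inside $C(v)$, whereas the pendant construction alone only forces the single virtual point $v^{\star}$ (at distance $r(v)$) into $v$'s cluster. I would bridge this gap by using a separated strengthening of \cref{thm:steroids}, combined with scale bucketing: group the vertices into $O(\log n)$ classes by $\lceil\log_{2} r(v)\rceil$ and run a clustering at each scale $i$ whose clusters are $\Theta(2^{i+1})$-separated. Within a single scale, separation then implies that whenever $v$'s pendant is not cut, every vertex of $B_{G}(v,r(v))$ is automatically in the same cluster as $v$, so the separation converts the single-point pendant guarantee into the full ball-containment guarantee we need.

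The $O(\log n)$ scale-level clusterings can be combined into a single partition of $V(G)$ by iteratively applying blurry ball growing (\cref{thm:blurry_informal}) to fill the thin inter-cluster regions without inflating cluster diameter beyond the allowed $\tO(D)$. The main calculation to execute carefully is then the bookkeeping verifying that the aggregate pendant mass cut across all scales is still $\tO(1/D)\cdot\sum_{v}r(v)$; this is where I expect the main obstacle to lie, because one needs the scale-level separation radii and the blurry-ball-growing slack to line up so that neither the union bound over scales nor the gluing step inflates the bad mass by more than polylogarithmic factors, which are absorbed in $\tO(\cdot)$. Since both \cref{thm:steroids} and \cref{thm:blurry_informal} invoke the distance oracle only $\tO(1)$ times, the $\tO(1)$-call complexity claimed in the statement is preserved.
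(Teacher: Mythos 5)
Your reduction goes in the wrong direction and, as a result, has to re-prove the hardest part of the statement from scratch. \cref{thm:steroids} already accepts arbitrary per-node preferred radii $r(v)$ as input and its third output property directly guarantees that every good node has $B_H(v,r(v))$ contained in a single cluster; the intended proof of \cref{thm:steroids_padded} is a one-line instantiation of \cref{thm:steroids} with $H=G$, $Q=V(G)$, $\del\equiv 0$, $R=0$, the given radii $r(v)$, suitable node weights $\mu(v)$, $D_{\Tref{thm:steroids}}=\Theta(D\,\polylog n)$ and $\eps = 1/\log^2 n$, exactly as in \cref{cor:clustering_for_l1} (which is the uniform case $r\equiv D$): the diameter bound comes from properties 1--2, the padding from property 3, and the bad-mass bound from property 4. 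Your pendant-edge encoding throws this away: attaching $v^{\star}$ at distance $r(v)$ and keeping the pendant uncut only forces one virtual point into $v$'s cluster, which (as you yourself note) says nothing about $B_G(v,r(v))$. Note also that the paper's subdivision trick is used in the opposite direction -- the edge-cutting corollaries are deduced from the vertex-radius theorem, not vice versa -- precisely because the ball-padding guarantee cannot be recovered from an edge-cut guarantee. As a smaller issue, your cut bound is $\tO(1/D)\sum_{e\in E(G^{+})}\ell(e)$, which includes the original edge mass $\sum_{e\in E(G)}\ell(e)$; property 3 of the statement allows only $\sum_v r(v)$ on the right-hand side (this could be patched with edge weights $\mu$, but you do not do so).

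The repair you propose does not close the gap. There is no ``separated strengthening of \cref{thm:steroids}'' that produces a \emph{partition} whose clusters are $\Theta(2^{i+1})$-separated: in a connected graph a separated clustering cannot be a partition, and the paper's separated result (\cref{thm:separated_clustering_simple}) only clusters half the nodes, with no control over the unclustered part weighted by $r(v)$. So the inference ``pendant uncut plus separation implies the whole ball lies in $v$'s cluster'' fails exactly on the nodes of $B_G(v,r(v))$ that are left unclustered. Moreover, merging the $O(\log n)$ per-scale clusterings into a single partition of diameter $\tO(D)$ while preserving each node's scale-specific padding and keeping the total bad $r$-mass at $\tO(1/D)\sum_v r(v)$ is essentially the original problem; a common refinement destroys padding at larger scales, a union of scales yields a cover rather than a partition (cf.\ \cref{thm:sparse_covers_good_parameters}), and you leave this gluing step, which you correctly identify as the main obstacle, entirely unspecified. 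The fix is simply to drop the auxiliary graph and invoke \cref{thm:steroids} directly with the given radii.
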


One reason why we consider each node to have a preferred radius is that it allows us to deduce \cref{thm:steroids_simple} from our general theorem by considering the subdivided graph where each edge is split by adding a node ``in the middle of it'', with a preferred radius of $\ell(e)$. 

Let us now compare the clustering of \cref{thm:steroids_padded} with the $D$-separated clustering that we already introduced. Recall that in the $D$-separated clustering problem, we ask for clusters with radius $\tO(D)$ and require the clusters to be $D$-separated. Moreover, only half of the nodes should be unclustered. 

In our clustering, we can choose $r(v) = D$ for all nodes $v \in V(G)$, we again get clusters of diameter $\tO(D)$ and only half of the nodes are bad. 
The difference with the $D$-separated clustering is that we cluster all the nodes, but we require the good nodes to be ``$D$-padded''. 

This is a slightly weaker guarantee then requiring the clusters to be $D$-separated: we can take any solution of the $D$-separated problem, and enlarge each cluster by adding all nodes that are at most $D/3$ away from it. 
We mark all original nodes of the clusters as good and all the new nodes as bad. Moreover, each remaining unclustered node forms its own cluster and is marked as bad. This way, we solve the special case of \cref{thm:steroids_padded} with the padding parameter $D/3$. 
We do not know of an application of $D$-separated clustering where the slightly weaker $D$-padded clustering of \cref{thm:steroids_padded} does not suffice. 
However, we also use a different technique to solve the $D$-separated problem. 

\begin{theorem}
\label{thm:separated_clustering_simple}
We are given a weighted graph $G$ and a separation parameter $D>0$. 
There is a deterministic algorithm that outputs a clustering $\fC$ of $D$-separated clusters of diameter $\tO(D)$ such that at least $n/2$ nodes are clustered. 

The algorithm needs $\tO(1)$ calls to an $(1+1/\poly\log(n))$-approximate distance oracle computing approximate shortest paths from a given set up to distance $\tO(D)$. 
\end{theorem}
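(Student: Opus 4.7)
My plan is to derive \cref{thm:separated_clustering_simple} as a direct corollary of the padded clustering result \cref{thm:steroids_padded}. I would invoke \cref{thm:steroids_padded} on $G$ with the uniform preferred radius $r(v) \defeq D$ for every $v \in V(G)$, using the same distance parameter $D$. This produces a partition $\fC$ of $V(G)$ into clusters of diameter $\tO(D)$, a split $V(G) = V^{good} \sqcup V^{bad}$ with $B_G(v, D) \subseteq C(v)$ for every good vertex $v$, and
\[
D \cdot |V^{bad}| \;=\; \sum_{v \in V^{bad}} r(v) \;\le\; \frac{1}{2D}\sum_{v \in V(G)} r(v) \;=\; \frac{nD}{2D} \;=\; \frac{n}{2},
\]
so in particular $|V^{bad}| \le n/(2D) \le n/2$.

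Next I would output the family
\[
\fC' \;\defeq\; \set{\, C \cap V^{good} \;:\; C \in \fC,\; C \cap V^{good} \neq \emptyset \,}
\]
and declare the vertices in $V^{bad}$ unclustered. The $D$-separation is immediate from the padding property: for distinct $C_1' = C_1 \cap V^{good}$ and $C_2' = C_2 \cap V^{good}$ with $C_1 \ne C_2$, any good $u \in C_1'$ satisfies $B_G(u, D) \subseteq C_1$, so no vertex of $C_2' \subseteq C_2$ lies within distance less than $D$ of $u$. The diameter bound $\tO(D)$ for each $C_i'$ is inherited as a subset from the ambient cluster $C_i$, and at least $n - |V^{bad}| \ge n/2$ vertices are clustered. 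Because \cref{thm:steroids_padded} uses only $\tO(1)$ queries to an $(1+1/\polylog n)$-approximate distance oracle, so does the resulting algorithm.

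The main obstacle I anticipate is matching the exact diameter notion intended by the statement. The argument above yields a \emph{weak} diameter bound in $G$, but if strong diameter inside the induced subgraph $G[C \cap V^{good}]$ is required, then simply deleting bad vertices could disconnect a cluster or blow up its intra-cluster distances. I would address this either by keeping each bad vertex of $C$ as an internal, un-clustered bridge vertex that still participates in the strong-diameter spanning subgraph of $C$ (the separation argument uses only good vertices as endpoints, so it is unaffected), or by invoking \cref{thm:steroids_padded} with the slightly enlarged preferred radius $(1+\eps)D$ and then shrinking the padding back by $\eps D$, so that a BFS tree of depth $\tO(D)$ inside $C$ can be restricted to the good vertices without loss. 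Either variant preserves the reduction and the $\tO(1)$ oracle-call budget, making the padded-clustering theorem essentially the only nontrivial ingredient.
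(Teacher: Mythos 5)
Your reduction from \cref{thm:steroids_padded} does not establish the theorem as the paper intends it, and the obstacle you flag at the end is exactly where it breaks. By the paper's conventions the ``diameter'' of a cluster $C$ is the diameter of the induced subgraph $G[C]$ (strong diameter), and the surrounding text makes clear this is the point of \cref{thm:separated_clustering_simple} (it is billed, via \cref{thm:separated_decomposition}, as a \emph{separated strong-diameter} decomposition). Taking $\fC' = \{C \cap V^{good}\}$ gives $D$-separation and a weak-diameter bound, but deleting $V^{bad}$ can disconnect $G[C \cap V^{good}]$ or blow up its induced distances: a single bad vertex sitting on every path in $G[C]$ between two good vertices already ruins the strong-diameter bound, and treating the resulting pieces as separate clusters ruins $D$-separation (they can be at distance $2$ in $G$ through the bad vertex). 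Neither of your fixes repairs this. The ``bridge vertex'' fix either keeps bad vertices inside the clusters, in which case adjacent clusters of the partition (which exist, since $\fC$ partitions all of $V(G)$ and only good vertices are padded) violate $D$-separation, or it uses them only as Steiner vertices, which is precisely the weak-diameter guarantee and not what is claimed. The ``$(1+\eps)D$ padding, then shrink'' fix does not help either: \cref{thm:steroids_padded} only bounds \emph{how many} vertices are bad, not \emph{where} they sit, so there is no justification for restricting a BFS tree of $G[C]$ to the good vertices. Indeed, the paper itself states that the padded guarantee is strictly weaker than $D$-separation (the implication goes the other way) and that ``we also use a different technique to solve the $D$-separated problem.''

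The paper's actual proof is genuinely different: it adapts the weak-diameter clustering of \cite{rozhon_ghaffari2019decomposition} to weighted graphs with approximate distance oracles (\cref{lem:weak_diam}), and then runs the strong-diameter extraction of \cite{chang_ghaffari2021strong_diameter} on top of it (\cref{lem:strongdiam}): recursively re-cluster with geometrically shrinking separation, keep a cluster whenever it retains more than half of its parent, and output the vertex sets of its Steiner trees, which are then shown to be $\Omega(D/\polylog n)$-separated with strong radius $\tO(D)$. If you want a correct proof along your lines you would have to either settle for the weak-diameter statement or import this extraction machinery; the padded clustering alone does not suffice.
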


The algorithm is based on the ideas of the weak-diameter network decomposition result of \cite{rozhon_ghaffari2019decomposition} and the strong-diameter network decomposition of \cite{chang_ghaffari2021strong_diameter}. 
Since shortest paths up to distance $D$ can be computed in unweighted graphs by breadth first search, we get as a corollary that we can compute a separated strong-diameter network decomposition in unweighted graphs. No $\tO(D)$-round deterministic \congest algorithm for separated strong-diameter network decomposition was known. 

\begin{restatable}{corollary}{separatedDecomposition}[$D$-separated strong-diameter network decomposition]
\label{thm:separated_decomposition}
We are given an unweighted graph $G$ and a separation parameter $D>0$. 
There is a deterministic algorithm that outputs $O(\log n)$ clusterings $\{\fC_1, \dots, \fC_{O(\log n)}\}$ such that
\begin{enumerate}
    \item Each node $u \in V(G)$ is contained in at least one clustering $\fC_i$.
    \item Each clustering $\fC_i$ consists of $D$-separated clusters of diameter $\tO(D)$. 
\end{enumerate}
The algorithm needs $\tO(D)$ \congest rounds. 
\end{restatable}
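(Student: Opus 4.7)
The plan is a peeling argument in the style of \cite{rozhon_ghaffari2019decomposition}: at each step invoke \cref{thm:separated_clustering_simple} to cover at least half of the nodes that are still uncovered. Concretely, set $V_1 := V(G)$ and, for $i = 1, 2, \ldots, k = O(\log n)$, invoke the separated-clustering algorithm to obtain a clustering $\fC_i$ of $D$-separated clusters of diameter $\tO(D)$ that clusters at least $|V_i|/2$ of the nodes currently in $V_i$. Let $V_{i+1}$ be the nodes of $V_i$ not covered by $\fC_i$. Since $|V_{i+1}| \le |V_i|/2$, after $k = O(\log n)$ iterations we have $V_{k+1} = \emptyset$, so every node is contained in at least one of $\fC_1, \ldots, \fC_k$, giving the required network decomposition.

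For the \congest round complexity, the key observation is that in an unweighted graph a $(1+1/\polylog n)$-approximate distance oracle that computes shortest paths from a given source set up to depth $\tO(D)$ can be realized by a truncated multi-source BFS: all source nodes launch their BFS simultaneously and the process terminates once the frontier has reached depth $\tO(D)$. This costs $\tO(D)$ \congest rounds irrespective of the size of the source set, and the distances it returns are exact (hence certainly $(1+1/\polylog n)$-approximate). Since each invocation of \cref{thm:separated_clustering_simple} uses $\tO(1)$ oracle calls, a single iteration costs $\tO(D)$ rounds, and the total over $O(\log n)$ iterations remains $\tO(D)$.

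The one subtle point I would flag is that the clusters produced in $\fC_i$ must be $D$-separated in the original graph $G$, not merely in the subgraph $G[V_i]$ on the still-uncovered nodes: a naive application of \cref{thm:separated_clustering_simple} to $G[V_i]$ would only give the latter, because two clusters of $\fC_i$ could potentially be joined by a short path passing through already-clustered vertices. I would handle this by invoking a mild strengthening of \cref{thm:separated_clustering_simple} in which the approximate distance oracle is still executed on the full graph $G$ with source set $V_i$ (so all separation is measured with respect to $G$-distances), while cluster membership is restricted to $V_i$ and the covering guarantee is replaced by ``at least $|V_i|/2$ of the nodes of $V_i$''. Because the algorithm underlying \cref{thm:separated_clustering_simple} is essentially a black-box sequence of approximate distance oracle calls together with local decisions, enforcing this restriction is immediate, and the diameter, separation, and round-complexity bounds transfer verbatim.
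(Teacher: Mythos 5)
Your proposal is correct and matches the paper's own proof: the paper also iterates a separated clustering routine $O(\log n)$ times on the set of yet-unclustered nodes and implements the distance oracle by truncated BFS in unweighted graphs, giving $\tO(D)$ rounds overall. The ``mild strengthening'' you flag is exactly what the paper's formal statement already provides --- \cref{lem:strongdiam} takes a subset $\Vrem \subseteq V(G)$, clusters at least a constant fraction of $\Vrem$, and measures separation with respect to distances in the full graph $G$ --- so no additional argument is needed there.
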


\subsection{Roadmap}

The paper is structured as follows. 
In \cref{sec:preliminaries} we define some basic notions and models that we work with in the paper. 
\cref{sec:unweighted_strong_clustering} contains the proof of \cref{thm:decomposition_informal}. We believe that an interested reader should understand \cref{sec:unweighted_strong_clustering} even after she skips \cref{sec:preliminaries}. 
In \cref{sec:blurry}, we prove a general version of \cref{thm:blurry_informal} and the main clustering result that generalizes \cref{thm:steroids_simple} is proven in \cref{sec:main_clustering}.

\section{Preliminaries}
\label{sec:preliminaries}

In this preliminary section, we first explain the terminology used in the paper. Then, we review the notation we use to talk about clusterings and the distributed models we work with. 
Finally, we explain the language of distance oracles that we use throughout the paper to make our result as independent on a particular choice of a distributed/parallel computational model as possible. 

\paragraph{Basic Notation} 

The subgraph of a graph $G$ induced by a subset of its nodes $U \subseteq V(G)$ is denoted by $G[U]$. 
A weighted graph $G$ is an unweighted graph together with a weight (or length) function $\ell$. This function assigns each edge $e \in E(G)$ a polynomially bounded nonnegative weight $\ell(e) \ge 0$. 
We will assume that all lengths are polynomially bounded, i.e., $\ell(e) \le n^C$ for some absolute constant $C$. This implies that each weight can be encoded by $O(\log n)$ bits. 
We denote by $d_{G}(u, v)$ the weight (i.e., length) of the shortest path between two nodes $u, v \in V(G)$. 
We sometimes drop the subscript when the graph is clear from context and write just $d(u, v)$. 
The distance function naturally extends to sets by $d(A, B) = \min_{a \in A, b \in B} d(a,b)$ and we also write $d(u , S)$ instead of $d(\{u\}, S)$. 

We say that $H$ is a subgraph of a weighted graph $G$ and write $H \subseteq G$ if $V(H) \subseteq V(G)$, $E(H) \subseteq E(G)$ and for every $e \in E(H)$, $\ell_H(e) = \ell_G(e)$. 
Given a weighted graph $G$, a weighted rooted (sub)forest in $G$ is a forest $F$ which is a subgraph of $G$. 
Moreover, each component of $F$ contains a special node -- a root -- that defines a natural orientation of edges of $F$ towards a unique root. By $d_F(u, v)$ we mean the distance in the unoriented graph $F$, i.e., $d_F$ is a metric. 
For any $v \in V(F)$ we denote by $\root_F(v)$ the unique root node in $V(F)$ that lies in the same component of $F$ as $v$. 
We also use the shorthand $d_F(v) = d_F(\root_F(v), v)$. 
A ball $B_G(u, r) \subseteq V(G)$ is a set of nodes consisting of those nodes $v \in V(G)$ with $d_G(u,v) \le r$.

\paragraph{Weight, Radius and Delay Functions}

Sometimes we need nonnegative and polynomially bounded functions that assign each vertex or edge of a given graph such that their domain is the set $V(G)$, $E(G)$ or a subset. 
One should think of these functions as parameters of the nodes (edges) of the input graph in the sense that during the algorithms, each node $u$ starts with an access to the value of these functions at $u$. 

There are three functions that we need:

\begin{enumerate}
    \item A function $\mu$ assigning each vertex $v$ (edge $e$) of a given graph a \emph{weight} $\mu(v)$ ($\mu(e)$); we use $\mu(U) := \sum_{u \in U} \mu(u)$. 
    \item A function $r$ assigning each vertex $v$ of a given graph a \emph{preferred radius} $r(v)$.
    \item A function $\del$ assigning a subset of nodes $Q$ a \emph{delay}; For a subset $Q' \subseteq Q$, we define $d_{\del}(Q', v) := \min_{q \in Q'} \del(q) + d(q,v)$. 
\end{enumerate}

\paragraph{Clustering Notation}

Next, we define the notation that is necessary for stating our clustering results. 

\begin{description}

\item[Cluster] A cluster $C$ is simply a subset of nodes of $V(G)$. 
We use $\diam(C)$ to denote the diameter of a cluster $C$, i.e., the diameter of the graph $G[C]$. 
When we construct a cluster $C$, we are also often constructing a (small diameter) tree $T_C$ with $V(T_C) = C$. 
In \cref{sec:unweighted_strong_clustering} we use a result from \cite{ghaffari_grunau_rozhon2020improved_network_decomposition} that constructs so-called weak-diameter clusters. 
A weak-diameter cluster is a cluster $C$ together with a (small diameter) tree $T_C$ such that $V(T_C) \supseteq C$. 

\item[Padding]
A node $v \in C$ is $r$-padded in the cluster $C$ of a graph $G$ if it is the case that $B_G(v, r) \subseteq C$. 

\item[Separation]
Suppose we have two disjoint clusters $C_1, C_2$. 
We say that they are $D$-separated in $G$ if $d_G(C_1, C_2) \ge D$. 

\item[Clustering and Partition] A \emph{clustering} $\fC$ is a family of disjoint clusters.
If the clustering covers all nodes of $G$, that is, if $\bigcup_{C \in \fC} C = V(G)$, we refer to the clustering as a \emph{partition}. 

The \emph{diameter} $\diam(\fC)$ of a clustering $\fC$ is defined as $\diam(\fC) = \max_{C \in \fC} \diam(C)$. 
A clustering $\fC$ is $D$-separated if every two clusters $C_1 \not= C_2 \in \fC$ are $D$-separated. 

\item[Cover] A \emph{cover} $\{\fC_1, \fC_2, \dots, \fC_q\}$ is a collection of clusterings or partitions. 

\end{description}

\subsection{Computational Models}

\paragraph{\congest Model \cite{peleg00book}}

We are given an undirected graph $G$ also called the ``communication network''. 
Its vertices are also called nodes and they are individual computational units, i.e., they have their own processor and private memory. 
Communication between the nodes occurs in synchronous rounds. 
In each round, each pair of nodes adjacent in $G$ exchange an $b = O(\log n)$-bit message. 
Nodes perform arbitrary computation between rounds. 
Initially, nodes only know their unique $O(\log n)$-bit ID and the IDs of adjacent nodes in case of deterministic algorithms. 
In case of randomized algorithms, every node starts with a long enough random string containing independently sampled random bits. 
Each node also starts with a polynomial upper bound on the number of nodes, $n$. 

Unless stated otherwise, we always think of $G$ as a weighted graph, where the weights are provided in a distributed manner. 

In all our results, in each round, each node $v$ can run an algorithm whose \pram work is at most $\tO(\deg(v))$ and depth at most $\tO(1)$. 
Note that this allows the model to compute e.g. simple aggregation operations of the messages received by the neighbors such as computing the minimum or the sum.

\paragraph{Oracle Definition}

Except of simple local communication and computation captured by the above \congest model, our algorithms can be stated in terms of simple primitives such as computing approximate shortest paths or aggregating some global information.  
To make our results more model-independent and more broadly applicable, we abstract these primitives away as calls to an oracle. 
We next define the oracles used in the paper.

\begin{restatable}[Approximate Distance Oracle $\oDist_{\eps,D}$]{definition}{distanceoracle}
\label{def:oracle_dist}
This oracle is parameterized by a  \emph{distance} parameter $D > 0$ and a \emph{precision} parameter $\eps \geq 0$. 

The input to the oracle consists of three parts. First, a weighted graph $H \subseteq G$. Second, a subset $S \subseteq V(H)$. Third, for each node $s \in S$ a delay $\del(s)$. If the third input is not specified, set $\del(s) = 0$ for every $s \in S$.

The output is a weighted forest $F \subseteq H$ rooted at some subset $S' \subseteq S$. The output has to satisfy the following:
\begin{enumerate}
    \item For every $v \in V(F)$, $\del(\root_F(v)) + d_F(v) \leq (1+\eps)d_{H,\del}(S,v) \leq (1+\eps)D$.
    \item For every $v \in V(H)$, if $d_{H,\del}(S,v) \leq D$, then $v \in V(F)$.
\end{enumerate}

\end{restatable}

\begin{restatable}[Forest Aggregation Oracle $\oForestAgg_D$]{definition}{forestoracle}

\label{def:oracle_forest_agg}
The input consists of two parts. First, a weighted and rooted forest $F \subseteq G$ with $d_F(v) \leq D$ for every $v \in V(F)$.
Second, an integer value $x_v \in \{0,1,\ldots,\poly(n)\}$ for every node $v \in V(F)$.
The oracle can be used to compute a sum or a minimum. If we compute a sum, the oracle outputs for each node $v \in V(F)$ the two values $\sum_{v \in A(v)} x_v$ and $\sum_{v \in D(v)} x_v$, where $A(v)$ and $D(v)$ denote the set of ancestors and descendants of $v$ in $F$, respectively. Computing the minimum is analogous. 
\end{restatable}

\begin{restatable}[Global Aggregation Oracle $\oGlobalAgg$]{definition}{aggregationoracle}

\label{def:oracle_global_agg}
The input consists of an integer value $x_v \in \{0,1,\ldots,\poly(n)\}$ for every node $v \in V(G)$.
The output of the oracle is $\sum_{v \in V(G)} x_v$.
\end{restatable}

Whenever we say e.g. that ``the algorithm runs in $T$ steps, with each oracle call having distance parameter at most $D$ and precision parameter $\eps$'', we mean that the algorithm runs in $T$ \congest rounds, and in each \congest round the algorithm performs at most one oracle call. Moreover, when the oracle is parameterized by a distance parameter or/and a precision parameter, then the distance parameter is at most $D$ and the precision parameter is at most $\eps$ .

\paragraph{Compilation to Distributed and Parallel Models}

The theorem below is a direct consequence of the deterministic approximate shortest path paper of \cite{RGHZL2022sssp}. They show that the approximate distance oracle with precision parameter $\eps = 1/\poly\log(n)$ can be implemented in the bounds claimed in bullet points 1 to 4. We note that the results 2 to 4 follow from the theory of universal-optimality in the \congest model \cite{goranci2022universally,ghaffari2021universally}. The bullet point 5 follows from the fact that the distance oracle in unweighted graphs can be implemented by breadth first search. 

\begin{theorem}\label{thm:congestpa-simulation}
  Suppose that for a given problem there is an algorithm that runs in  $T = \poly\log(n)$ steps, with each oracle call having precision parameter $\eps = 1/\poly\log(n)$. 
  Then, the problem can be solved in the following settings with the following bounds on the complexity. 
  
  \begin{enumerate}
  \item In \pram, there is a deterministic algorithm with $\tO(m+n)$ work and $\poly\log n$ depth. 
    \item In \congest, there is a deterministic algorithm with $\tO(\hopDiameter{G} + \sqrt{n})$ rounds.~\cite{ghaffari_haeupler2016shortcuts_planar}
  \item In \congest, there is a deterministic algorithm for any minor-free graph family with $\tilde{O}(\hopDiameter{G})$ rounds (the hidden constants depend on the family).~\cite{ghaffari_haeupler2021shortcuts_in_minor_closed}
  \item If $\shortcutQuality{G} \le n^{o(1)}$, there is a randomized algorithm in the \congest model with $n^{o(1)}$ rounds. See ~\cite{GHR21} for the definition of $\shortcutQuality{G}$ and the proof. 
  \item If only the distance oracle $\oDist_{\eps, D}$ is used and the graph is unweighted, there is a deterministic \congest algorithm with $\tO(D)$ rounds, even for $\eps = 0$. 
\end{enumerate}
  
\end{theorem}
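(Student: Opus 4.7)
The strategy is to implement each of the three oracles ($\oDist_{\eps,D}$, $\oForestAgg_D$, $\oGlobalAgg$) separately in each target model and then compose the per-oracle complexities with the $T = \poly\log(n)$-step outer algorithm (plus the $\tO(1)$ local work per step that the \congest definition allows). Since $T$ is polylogarithmic, as long as each oracle call costs at most the claimed per-bullet complexity up to $\poly\log(n)$ factors, summing (or taking max for depth) over $T$ steps preserves the stated bound. The overall plan is therefore a collection of five ``plug in the right subroutine'' arguments, with most of the technical content imported from \cite{RGHZL2022sssp} and the low-congestion shortcuts literature.

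First I would handle the distance oracle $\oDist_{\eps,D}$. The definition with delays reduces to a plain multi-source approximate SSSP from $S$ on $H$ by attaching a virtual super-source $s^\star$ joined to each $s \in S$ by a virtual edge of length $\del(s)$, and then truncating the shortest path forest at weighted depth $(1+\eps)D$. Hence any algorithm that deterministically computes a $(1+\eps)$-approximate shortest-path forest on a weighted subgraph yields an implementation of $\oDist_{\eps,D}$. For bullet 1, \cite{RGHZL2022sssp} gives a deterministic \pram algorithm with $\tO(m)$ work and $\poly\log(n)$ depth for $\eps = 1/\poly\log(n)$. For bullet 2, the same paper gives a deterministic \congest algorithm in $\tO(\hopDiameter{G} + \sqrt{n})$ rounds via the low-congestion shortcuts of \cite{ghaffari_haeupler2016shortcuts_planar}. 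For bullets 3 and 4 I would cite the corresponding instantiations in minor-free families \cite{ghaffari_haeupler2021shortcuts_in_minor_closed} and in graphs of small shortcut quality \cite{GHR21} respectively. For bullet 5, when the graph is unweighted and the oracle precision is allowed to be $\eps = 0$, BFS from $S$ (with delays encoded as virtual prefix paths of the appropriate integer length) computes the forest exactly in $O(D)$ \congest rounds.

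Next I would implement $\oForestAgg_D$ and $\oGlobalAgg$. In \pram the global sum is a standard $\tO(n)$-work, $O(\log n)$-depth reduction; forest aggregation over ancestor/descendant subsets in a forest on $n$ vertices can be done by computing an Euler tour and doing parallel prefix sums / minima, again in $\tO(n)$ work and $O(\log n)$ depth, which settles bullet 1. For bullets 2--4 in \congest, the key observation is that the forest $F$ output by the distance oracle is a subgraph of $G$ whose weighted root-depth is bounded by $D$, so it embeds into $G$ in the same way an SSSP tree does. Aggregation up and down $F$ can therefore be pipelined using exactly the same low-congestion shortcuts that make the SSSP computation of \cite{RGHZL2022sssp} fast; this is precisely the universal-optimality framework of \cite{ghaffari_haeupler2016shortcuts_planar,ghaffari_haeupler2021shortcuts_in_minor_closed,GHR21,goranci2022universally,ghaffari2021universally}, giving round complexities matching those of the distance oracle in each setting. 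Global aggregation is the special case where $F$ is (taken to be) a spanning tree of $G$ of low shortcut quality, and is dominated by the distance-oracle cost in every bullet. In bullet 5 neither $\oForestAgg$ nor $\oGlobalAgg$ is used by assumption, so no further work is needed.

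Finally I would compose the pieces: in each of bullets 1--5 the per-call cost is at most the claimed per-bullet complexity (up to $\poly\log(n)$ factors), and the outer algorithm performs $T = \poly\log(n)$ oracle calls interleaved with $\tO(1)$-depth, $\tO(\deg(v))$-work local computation per node per round, which the target models (by definition of \pram for bullet 1 and by the local-computation clause of \congest for bullets 2--5) can simulate without asymptotic overhead. The main obstacle in the whole argument is the \congest implementation of $\oForestAgg_D$: one must be careful that the forest one wishes to aggregate on is itself a subgraph of $G$ and has low-congestion shortcuts in $G$; fortunately, because $F$ arises as the output of the distance oracle on $H \subseteq G$, the shortcuts constructed for the SSSP call apply verbatim, so no separate shortcut construction is needed and the bullets 2--4 follow cleanly.
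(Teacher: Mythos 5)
Your proposal is correct and follows essentially the same route as the paper: the paper also treats this as a compilation statement, citing \cite{RGHZL2022sssp} for implementing the approximate distance oracle within the bounds of bullets 1--4, the universal-optimality/low-congestion-shortcut framework for the \congest bullets (which also handles the part-wise forest and global aggregations), and plain BFS for bullet 5. Your write-up merely fills in standard details (virtual super-source for delays, Euler-tour prefix sums in \pram, shortcut-based pipelining for $\oForestAgg$) that the paper leaves to the cited works.
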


\section{Strong-Diameter Clustering in $O(\log^4 n)$ \congest Rounds}
\label{sec:unweighted_strong_clustering}

In this section, we present an algorithm clustering a constant fraction of the vertices into non-adjacent clusters of diameter $O(\log^2 n)$ in $O(\log^4 n)$ \congest rounds.

\begin{theorem}
\label{thm:strong_diameter_clustering}
Consider an unweighted $n$-node graph $G$ where each node has a unique $b = O(\log n)$-bit identifier. There is a deterministic algorithm computing a  $2$-separated $O(\log^2 n)$-diameter clustering $\fC$  with $|\bigcup_{C \in \fC} C| \ge n/3$ in $O(\log^4 n)$ \congest rounds.
\end{theorem}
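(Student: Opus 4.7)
The plan is to derandomize the Miller--Peng--Xu (MPX) clustering algorithm \cite{miller2013parallel} by replacing random exponential head starts with integer head starts constructed deterministically from the bits of node IDs, in the spirit of \cite{rozhon_ghaffari2019decomposition, ghaffari_grunau_rozhon2020improved_network_decomposition}. Concretely, I will assign each node $v$ an integer head start $hs(v) \in \{0, 1, \ldots, K\}$ with $K = \Theta(\log^2 n)$, and then obtain the final clustering by a single parallel BFS that assigns each node $u$ to the center $c$ maximizing $hs(c) - d_G(u, c)$, breaking ties lexicographically by $(hs(c), \mathrm{ID}(c))$. Since head starts are bounded by $K$, every resulting cluster has \emph{strong} diameter $O(\log^2 n)$; and since head starts are integers in an unweighted graph, removing the unique ``boundary'' vertex between any two neighboring clusters gives $2$-separation at the cost of only a constant fraction of nodes, still leaving at least $n/3$ clustered.

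The head-start construction itself proceeds in $b = O(\log n)$ phases, one per bit of the $b$-bit node IDs, processed from the most significant bit downwards. At the start of phase $i$, every node that is still a candidate center is split by bit $i$ of its ID into a $0$-group and a $1$-group. Each candidate $v$ then runs a BFS and a forest aggregation up to a distance equal to its current head start plus $\Theta(\log n)$, in order to decide which of the two groups locally dominates the ongoing MPX competition around it. Candidates in the locally-losing group are killed and handed over to the MPX ``argmax'' step; candidates in the locally-winning group survive and have their head start incremented by $\Theta(\log n)$. By analogy with the survival argument of \cite{rozhon_ghaffari2019decomposition}, I will show that a $1 - O(1/\log n)$ fraction of candidates survive each phase, so at least $n/3$ remain after all $b$ phases, and each survivor has a head start large enough to capture its own local ball in the final MPX BFS.

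The main obstacle will be proving that a constant fraction of nodes survive the process while simultaneously accumulating enough head start to win their local MPX competition at the end. The subtlety is that the win/lose decisions at different candidates are correlated through overlapping BFS neighborhoods, so an elementary independent counting argument does not suffice. I plan to handle this by an amortized charging scheme that pairs each killed candidate with a nearby winning candidate of the opposite bit, combined with a call to $\oGlobalAgg$ at the end of each phase to certify the global survival rate, and an inductive invariant that a surviving candidate's head start grows by enough per phase to keep pace with the radius it must dominate.

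The round complexity comes out to $O(\log^4 n)$: each of the $O(\log n)$ bit phases performs a BFS and a forest aggregation of depth $O(\log^2 n)$ (the current maximum head start) combined with $O(\log n)$ rounds of auxiliary $\oGlobalAgg$/verification work, giving $O(\log^3 n)$ \congest rounds per phase; the final MPX-assignment BFS is another $O(\log^2 n)$ rounds and is absorbed into the main bound.
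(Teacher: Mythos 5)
There is a genuine gap, in two places. First, the crux of the deterministic problem is controlling how many nodes you discard in order to separate the two bit-classes, and your proposal never gives a mechanism for this. You assert that after the MPX-style argmax assignment you can ``remove the unique boundary vertex between any two neighboring clusters'' at the cost of only a constant fraction of nodes, but with deterministically chosen head starts the boundary between clusters can contain $\Omega(n)$ vertices, and the ``amortized charging scheme'' pairing killed candidates with nearby winners bounds (at best) the number of killed \emph{centers}, not the number of boundary \emph{nodes} that must be deleted to obtain $2$-separation. The paper's proof of this theorem (via \cref{thm:congest_clustering_oracle}) resolves exactly this point: in phase $i$ it computes, for every node, the difference $\diff_i(v)=d_i(Q_i^{\fR},v)-d_i(Q_i^{\fB},v)$, buckets nodes into $O(b)$ level sets $K_i^j$, and by pigeonhole deletes the lightest pair $K_i^{j^*}\cup K_i^{j^*+1}$, which simultaneously (a) disconnects red from blue surviving centers, because $\diff_i$ changes by at most $2$ along any edge, (b) costs at most $n/(3b)$ nodes per phase, and (c) degrades the ruling radius by only $O(b)$ per phase. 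Your plan contains no analogue of this sweep-cut/pigeonhole step, and without it neither the $n/3$ bound nor the separation argument goes through; your own remark that the local win/lose decisions are correlated is precisely the symptom, but the proposed fix is a hope rather than an argument.

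Second, the round-complexity accounting is off because choosing which nodes to discard (and, in your plan, ``certifying the global survival rate'') requires global coordination: a call to $\oGlobalAgg$ in \congest costs $\Theta(\hopDiameter{G})$ rounds, which can be $\Theta(n)$, not $O(\log n)$ per phase. This is why the paper first runs the weak-diameter network-decomposition algorithm of \cref{thm:ggr_clustering} (this is where the $O(\log^4 n)$ bound comes from) and then executes the bit-by-bit procedure \emph{inside each weak-diameter cluster}, using the clusters' $O(\log^2 n)$-diameter Steiner trees and the pipelining routine of \cref{lem:pipelining} to implement the counting oracle in $O(\log^2 n)$ rounds per phase. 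Without this scaffolding (or some substitute for global aggregation), your claimed $O(\log^4 n)$ bound does not follow even if the combinatorial part were repaired. At a high level your instinct is right --- the paper itself describes its algorithm as a derandomization of MPX driven by the ID bits --- but the two ingredients that make the theorem true, the difference-bucket pigeonhole deletion and the weak-diameter clustering used for low-depth global counting, are both missing from the proposal.
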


Note that in the above theorem, $2$-separated clustering is equivalent to positing that the clusters of $\fC$ are not adjacent. 

Our algorithm is quite simple and in some aspects similar to the deterministic distributed clustering algorithm of \cite{rozhon_ghaffari2019decomposition}. 
Let us explain the main difference. During their algorithm, one works with a set of clusters that expand or shrink and which progressively become more and more separated. 
In our algorithm, we instead focus on \emph{potential cluster centers}. These centers preserve a ``ruling property'' that asserts that every node is close to some potential cluster center. This in turn implies that running a breadth first search from the set of potential cluster centers always results in a set of (not necessarily separated) small diameter clusters. This way, we make sure that the final clusters have small strong-diameter, whereas in the algorithm of \cite{rozhon_ghaffari2019decomposition} the final clusters have only small weak-diameter. 

An important downside compared to their algorithm is that we rely on global coordination. To make everything work, we hence need to start by using the state-of-the-art algorithm for weak-diameter clustering. This allows us to use global coordination inside each weak-diameter cluster and run our algorithm in each such cluster in parallel. This is the reason why our algorithm needs $O(\log^4 n)$ \congest rounds. 
In fact, the main routine runs only in $O(\log^3 n)$ rounds, but first we need to run the fastest deterministic distributed algorithm for weak-diameter clustering from \cite{ghaffari_grunau_rozhon2020improved_network_decomposition} that needs $O(\log^4 n)$ rounds which dominates the round complexity.

\subsection{Intuition and Proof Sketch of \cref{thm:strong_diameter_clustering}}
\label{sec:strong_intuition}
Our algorithm runs in $b$ phases; one phase for each bit in the $b$-bit node identifiers.
During each phase, up to $\frac{n}{3b}$ of the nodes are removed and declared as unclustered.
Hence, at most $\frac{n}{3}$ nodes are declared as unclustered throughout the $b$ phases, with all the remaining nodes
being clustered.

We set $G_0 = G$ and define $G_{i+1}$ as the graph one obtains from $G_{i}$ by deleting all the nodes from $G_i$ which are declared as unclustered during phase $i$.
Besides removing nodes in each phase $i$, the algorithm works with a set of  \emph{potential cluster centers} $Q_i$.
Initially, all the nodes are potential cluster centers, that is, $Q_0 = V(G)$. 
During each phase, some of the potential cluster centers stop being potential cluster centers. 
At the end, each potential cluster center will in fact be a cluster center. More precisely, each connected component of $G_b$ contains exactly one potential cluster center and the diameter of each connected component of $G_b$ is $O(\log^2 n)$.

For each $i \in \{0,1,\ldots, b\}$, the algorithm maintains two invariants. The \emph{ruling invariant} states that each node in $G_i$ has a distance of at most $6ib$ to the closest potential cluster center in $Q_i$. 
The \emph{separation invariant} states that two potential cluster centers $u$ and $v$ can only be in the same connected component of $G_i$ if the first $i$ bits of
their identifiers coincide. 
Note that this condition is trivially satisfied at the beginning for $i = 0$ and for $i = b$ it implies that each connected component contains at most one 
potential cluster center.

The goal of the $i$-th phase is to preserve the two invariants.
To that end, we partition the potential cluster centers in $Q_i$ based on the $(i+1)$-th bit of their identifiers into two sets $Q^R_{i}$ and $Q^B_{i}$.
In order to preserve the separation invariant, it suffices to separate the nodes in $Q^R_{i}$ from the nodes in $Q^B_{i}$. 
One way to do so is as follows: Each node in $G_{i}$ clusters itself to the closest potential cluster center in $Q_{i}$. 
In that way, each node is either part of a red cluster with a cluster center in $Q^R_{i}$ or a blue cluster with a cluster center in $Q^B_{i}$. 
Now, removing all the nodes in blue clusters neighboring a red cluster would preserve both the separation invariant as well as the ruling invariant. However, the number of removed nodes might be too large.

In order to ensure that at most $\frac{n}{3b}$ nodes are deleted, we do the following: for each node $v$, let $\diff(v) = d_i(Q_i^R,v) - d_i(Q_i^B,v)$. We now define $K_j = \{v \in V_i \colon \diff(v) = j\}$ and let $j^* = \arg \min_{j \in \{0,2,4,\ldots, 6b - 2\}} |K_j \cup K_{j+1}|$. Then, we declare all the nodes in $K_{j^*} \cup K_{j^*+1}$ as unclustered. Moreover, each blue potential cluster center $v$ with $\diff(v) < j^*-1$ stops being a potential cluster center. 

\begin{figure}
    \centering
    \includegraphics[width = \textwidth]{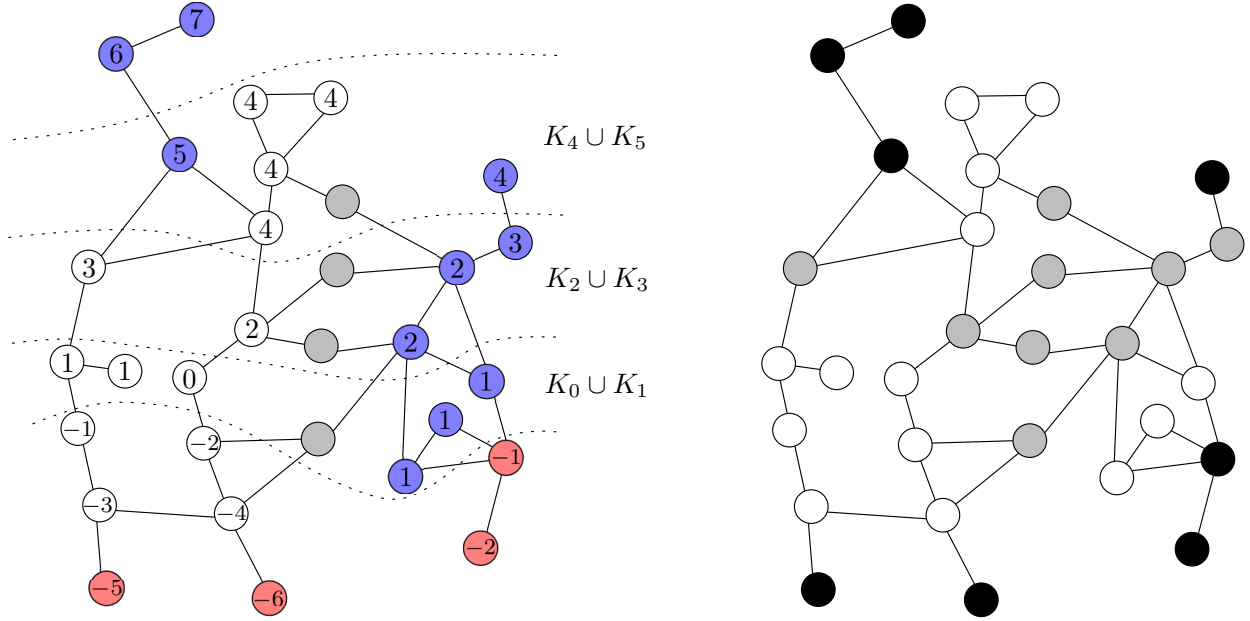}
    \caption{Left: The left picture shows a possible situation after one phase of the algorithm and what happens during the second phase. Note that by deleting several nodes (grey) in the first phase we have already separated the potential cluster centers in two disconnected components. 
    It can be checked that the set $Q$ of potential cluster centers (red and blue nodes) is $3$-ruling. We split it into two sets $Q^\fR \sqcup Q^\fB$ (red and blue, respectively) based on the corresponding bit in their unique identifier. We compute the difference $\diff(u) = d(Q^\fR, u) - d(Q^\fB, u)$ for every node $u \in V(G)$ for which $\diff(u)$ is not too large or too small (in the picture, the differences of all nodes are computed and written in the node). 
    We define $K_j$ as the set of nodes whose difference is $j$. We choose the smallest of the sets $K_0 \cup K_1, K_2 \cup K_3, K_4 \cup K_5$ -- it is the set $K_2 \cup K_3$ containing five nodes. \\
    Right: We delete all nodes in $K_2 \cup K_3$ and blue nodes in $K_0 \cup K_1$ are not potential cluster centers anymore. This way we successfully separate the blue potential cluster centers from red ones. Note that the new set $Q' \subseteq Q$ of potential cluster centers (black) is only $4$-ruling. 
    }
    \label{fig:log4}
\end{figure}

We remark that the described algorithm cannot efficiently be implemented in the distributed \congest model. 
The reason is that deciding whether a given index $j$ is good or not requires global coordination.
In particular, the communication primitive we need can be described as follows: each node $v$ is assigned $O(b)$ numbers $\{x_{v,j} \in \{0,1\} \colon j \in \{0,2,4,\ldots,6b-2\} \}$
with $x_{v,j} = 1$ if $v$ would be removed if $j = j^*$ and $0$ otherwise.
Now, each node has to learn $\sum_{v \in V(G)} x_{v,j}$ for each $j \in \{0,2,4,\ldots,6b-2\}$.
We denote by $\fO^{count}$ the oracle for this communication primitive. By formalizing the high-level overview, we then obtain the following theorem.

\begin{theorem}
\label{thm:congest_clustering_oracle}
    Consider an unweighted $n$-node graph $G$ where each node has a unique $b$-bit identifier. There is a deterministic algorithm computing a $2$-separated $O(b^2)$-diameter clustering $\fC$  with $|\bigcup_{C \in \fC} C| \ge (2/3)n$ in $O(b^3)$ \congest rounds and performing $O(b)$ oracle calls to $\fO^{count}$.
\end{theorem}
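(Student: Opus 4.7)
The plan is to follow the sketch in Section 3.1 and turn the phase-by-phase description into a clean inductive argument. The algorithm proceeds in $b$ phases. At the start of phase $i$, I maintain a subgraph $G_i\subseteq G$ together with a set of \emph{potential cluster centers} $Q_i\subseteq V(G_i)$, initialized as $G_0=G$ and $Q_0=V(G)$. Throughout the execution I preserve two invariants: the \emph{ruling invariant} $d_{G_i}(Q_i,v)\le 6ib$ for every $v\in V(G_i)$, and the \emph{separation invariant} that any two centers in the same connected component of $G_i$ share their first $i$ identifier bits. The output clusters will be the connected components of $G_b$; both invariants at $i=b$ immediately give a $2$-separated clustering of diameter $O(b^2)$.

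In phase $i$, I split $Q_i$ into $Q_i^R$ and $Q_i^B$ according to the $(i+1)$-th identifier bit and run a BFS in $G_i$ from each of these two sets up to depth $6ib+6b$. Since the ruling invariant gives $d_{G_i}(Q_i,v)\le 6ib$, every $v\in V(G_i)$ learns at least one of $d_{G_i}(Q_i^R,v)$ and $d_{G_i}(Q_i^B,v)$ exactly, and it learns both whenever $\diff(v):=d_{G_i}(Q_i^R,v)-d_{G_i}(Q_i^B,v)\in\{0,1,\dots,6b-1\}$. Each node $v$ then sets $x_{v,j}=1$ iff $\diff(v)\in\{j,j+1\}$, for $j\in\{0,2,\dots,6b-2\}$, and a single $\fO^{count}$ call reveals to every node the values $|K_j\cup K_{j+1}|=\sum_v x_{v,j}$. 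Since the $3b$ sets $K_j\cup K_{j+1}$ are pairwise disjoint, a pigeon-hole choice of $j^\ast$ yields $|K_{j^\ast}\cup K_{j^\ast+1}|\le n/(3b)$; these nodes are declared unclustered, yielding $G_{i+1}$, and every blue center $v$ with $\diff(v)<j^\ast-1$ is demoted, yielding $Q_{i+1}$.

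The main obstacle, which I would treat with most care, is showing that the ruling invariant survives a phase. Fix $v\in V(G_{i+1})$; then $\diff(v)\le j^\ast-1$ or $\diff(v)\ge j^\ast+2$. In the first case let $q\in Q_i^R$ be a nearest red center to $v$ in $G_i$. For any $u$ on a shortest $v$-to-$q$ path one has $d_{G_i}(Q_i^R,u)\le d(q,u)$ and $d_{G_i}(Q_i^B,u)\ge d_{G_i}(Q_i^B,v)-d(u,v)$, which combined with $d(q,u)+d(u,v)=d_{G_i}(Q_i^R,v)$ gives $\diff(u)\le\diff(v)\le j^\ast-1$; hence the whole path lies in $V(G_{i+1})$, and since no red center is ever deleted or demoted, $d_{G_{i+1}}(Q_{i+1},v)\le 6ib$. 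In the second case a symmetric argument with a nearest blue center $q$ shows both that the $v$-to-$q$ path survives and that $\diff(q)\ge\diff(v)\ge j^\ast+2>j^\ast-1$, so $q$ itself is not demoted and again $d_{G_{i+1}}(Q_{i+1},v)\le 6ib$. The separation invariant is free: adjacent nodes $u,w$ in $G_{i+1}$ satisfy $|\diff(u)-\diff(w)|\le 2$, so they cannot straddle the gap $\{j^\ast,j^\ast+1\}$, and consequently no red and blue center end up in the same component of $G_{i+1}$.

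Putting it together, after $b$ phases the separation invariant forces each connected component of $G_b$ to contain at most one center, while the ruling invariant forces it to contain at least one; the components form a clustering of diameter $O(b^2)$, and it is $2$-separated in $G$ because any $G$-edge between two surviving components would also live in $G_b$ and merge them. At most $n/(3b)$ nodes are deleted per phase, so at least $(2/3)n$ nodes end up clustered. The per-phase cost is $O(b^2)$ rounds for the two bounded-depth BFS computations plus one $\fO^{count}$ call, giving $O(b^3)$ \congest rounds and $O(b)$ oracle calls overall, as claimed.
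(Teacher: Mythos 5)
Your overall structure matches the paper's proof (phases over the identifier bits, the ruling/separation invariants, the pigeonhole choice of $j^\ast$, two bounded-depth BFS computations plus one $\fO^{count}$ call per phase), but the rule you use to update the set of centers creates a genuine gap in the separation invariant. You demote only the blue centers with $\diff(v)<j^\ast-1$, so a blue center with $\diff(v)=j^\ast-1$ survives as a center and is \emph{not} deleted (it does not lie in $K_{j^\ast}\cup K_{j^\ast+1}$). Such a center sits on the same side of the deleted band $\{j^\ast,j^\ast+1\}$ as every red center (all of which have $\diff<0$), so your ``cannot straddle the gap'' argument says nothing about it: a red and a blue center can remain in the same component of $G_{i+1}$. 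Concretely, take a red center $r$ adjacent to a blue center $v$, so $\diff(r)=-1$ and $\diff(v)=1$; if the phase happens to select $j^\ast=2$ (easy to force elsewhere in the graph), neither node is deleted and $v$ is not demoted under your rule, so the separation invariant fails — and that invariant is exactly what guarantees at most one center per final component, hence the $O(b^2)$ strong-diameter bound. The paper's formal proof demotes \emph{all} blue centers with $\diff\le j^\ast+1$, i.e.\ sets $Q_{i+1}=Q_i\setminus\bigcup_{j=0}^{j^\ast+1}K_i^j$, so every surviving blue center has $\diff\ge j^\ast+2$ and lies strictly on the far side of the deleted band. Note that your ruling-invariant argument is already compatible with this stronger demotion: in your first case you reroute to a red center, which is never demoted, and in your second case the nearest blue center has $\diff\ge j^\ast+2$ and so survives under either rule.

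A smaller slip: in your first ruling case the bound you can actually conclude is $d_{G_{i+1}}(Q_{i+1},v)\le d_{G_i}(Q_i^R,v)\le d_{G_i}(Q_i,v)+(j^\ast-1)\le 6ib+6b$, not $\le 6ib$ as written; this is harmless since the invariant at level $i+1$ only requires $6(i+1)b$, but the stated claim is too strong.
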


We note that the theorem does not assume $b = O(\log n)$.

Before giving a formal proof of \cref{thm:congest_clustering_oracle} in \cref{sec:formal_oracle_version}, we first show formally how one can use it to proof \cref{thm:strong_diameter_clustering}.

\subsection{Proof of \cref{thm:strong_diameter_clustering}}

\begin{proof}[Proof of \cref{thm:strong_diameter_clustering}]

The algorithm starts by computing a clustering $\fC^{weak} = \{C^{weak}_1,C^{weak}_2,\ldots,C^{weak}_N\}$ with weak-diameter $O(\log^2 n)$ such that $\fC^{weak}$ clusters at least half of the nodes. This can be computed in $O(\log^4 n)$ \congest rounds by invoking the following theorem from \cite{ghaffari_grunau_rozhon2020improved_network_decomposition}.

\begin{theorem}[Restatement of Theorem 2.2 in \cite{ghaffari_grunau_rozhon2020improved_network_decomposition}]
\label{thm:ggr_clustering}
Consider an arbitrary $n$-node graph $G$ where each node has a unique $b = O(\log n)$-bit identifier. There is a deterministic algorithm that in $O(\log^4 n)$ rounds of the \congest model computes a $2$-separated clustering $\fC$ such that $|\bigcup_{C \in \fC} C| \ge n/2$.
For each cluster $C$, the algorithm returns a tree $T_C$ with diameter $O(\log^2 n)$ such that $C \subseteq V(T_C)$. Each vertex in $G$ is in $O(\log n)$ such trees.
\end{theorem}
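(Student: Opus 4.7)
The plan is to implement the weak-diameter clustering algorithm of Ghaffari, Grunau and Rozho\v{n}. Initialize each node as its own singleton cluster labeled by its $b$-bit identifier, and process the $b = O(\log n)$ bits in sequential phases. Maintain the invariant that at the end of phase $i$, any two surviving clusters whose identifier prefixes on bits $1, \ldots, i$ differ are at distance at least $2$ in $G$; after $b$ phases, this invariant directly yields a $2$-separated clustering.

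In phase $i$, color each surviving cluster Red if its $i$-th identifier bit is $0$ and Blue otherwise, and run $O(\log n)$ subphases aimed at eliminating all Red--Blue adjacencies. In each subphase, every Blue cluster attempts to \emph{absorb} the nodes adjacent to it that still belong to some Red cluster; an absorption extends the Blue cluster's tree $T_B$ by one hop via the connecting edge. Because absorbed nodes may remain structurally in the Red cluster's tree even though they no longer count as Red cluster members, we only obtain weak-diameter guarantees: $C \subseteq V(T_C)$ rather than $C = V(T_C)$. Before committing, each Red cluster checks the fraction of its current members that would be absorbed in this subphase; if that fraction exceeds a threshold of $1/\Theta(\log n)$, the Red cluster is instead \emph{disbanded} and its members become permanently unclustered.

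The core analytical step is a two-level amortization. First, within one phase, after $O(\log n)$ subphases every Red--Blue adjacency has been resolved, because any unresolved Blue cluster adjacent to an un-disbanded Red cluster grows multiplicatively by a factor of $1 + 1/\Theta(\log n)$ per subphase until it either absorbs all its Red neighbors or pushes some Red neighbor past the disbandment threshold. Second, the total mass of disbanded nodes in a single phase is at most $n/(2b)$; summed over the $b$ phases, at most $n/2$ nodes become unclustered, which gives $|\bigcup_{C \in \fC} C| \geq n/2$.

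The diameter and round bounds follow by accounting. Each subphase extends any surviving tree by at most one hop, so after $O(\log n)$ subphases per phase and $O(\log n)$ phases the tree diameter is $O(\log^2 n)$. Each subphase is implemented in \congest by aggregating the number of would-be-absorbed members along each Red cluster's tree and broadcasting the disband/proceed decision back along that tree, which takes $O(\log^2 n)$ rounds since tree diameters stay $O(\log^2 n)$ throughout; combining $O(\log n)$ phases, $O(\log n)$ subphases per phase, and $O(\log^2 n)$ rounds per subphase yields the $O(\log^4 n)$ bound. A vertex joins at most one new tree per phase (when absorbed), which gives the ``$O(\log n)$ trees per vertex'' property. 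The main obstacle is calibrating the absorption threshold so that simultaneously (i) separation of every Red--Blue pair is achieved within $O(\log n)$ subphases, and (ii) the total disbanded mass per phase stays within $n/(2b)$; this is the delicate balance handled by the multiplicative potential argument of Ghaffari, Grunau and Rozho\v{n}.
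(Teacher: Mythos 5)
In the paper, \cref{thm:ggr_clustering} is not proved at all: it is imported as a black box, a restatement of Theorem~2.2 of \cite{ghaffari_grunau_rozhon2020improved_network_decomposition}, so the intended ``proof'' is simply the citation. Your attempt to re-derive it from scratch has two concrete problems. First, your absorb/disband rule is not the rule of \cite{rozhon_ghaffari2019decomposition,ghaffari_grunau_rozhon2020improved_network_decomposition} and does not work as stated: you let each \emph{red} cluster disband entirely whenever the fraction of its members that would be absorbed \emph{exceeds} $1/\Theta(\log n)$. In phase $1$ all clusters are singletons, so on a star with a blue center (or a path with alternating bits) every red node adjacent to a blue node would lose a fraction $1$ of itself and be permanently unclustered, destroying the $|\bigcup_C C|\ge n/2$ guarantee already in the first phase; and since nothing is ever charged against the disbanded mass, your claimed per-phase budget of $n/(2b)$ has no justification. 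In the actual algorithm the threshold is on the \emph{growing} (blue) side: a blue cluster grows only if the requesting nodes amount to at least a $\Theta(1/\log n)$ fraction of \emph{its own} size, and when it refuses, only the requesting boundary nodes are deleted (at most a $\Theta(1/\log n)$ fraction of the blue cluster), not a whole red cluster. This is what makes both the termination argument and the deletion accounting go through; your version also loses termination, since a blue cluster can keep absorbing $O(1)$ nodes per subphase while no red cluster ever crosses its threshold.

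Second, even granting multiplicative growth by a factor $1+1/\Theta(\log n)$ per subphase, this only bounds the number of subphases per phase by $O(\log n\cdot\log n)=O(\log^2 n)$, not $O(\log n)$. Plugging that into your own accounting gives weak-diameter $O(\log^3 n)$ and round complexity $O(\log^5 n)$, i.e.\ the bounds of \cite{rozhon_ghaffari2019decomposition}, not the $O(\log^2 n)$-diameter, $O(\log^4 n)$-round statement you are asked to prove. Getting the diameter down to $O(\log^2 n)$, and ensuring the Steiner-tree overlap bound (each vertex lies in only $O(\log n)$ trees $T_C$ --- note a vertex can sit on trees of clusters it never joins, so ``one new tree per phase when absorbed'' does not bound this), is exactly the nontrivial content of Theorem~2.2 in \cite{ghaffari_grunau_rozhon2020improved_network_decomposition} and requires mechanisms absent from your sketch. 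For the purposes of this paper you should simply cite that theorem rather than re-prove it.
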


Now, for each $i \in [N]$, let $\fC^{strong}_i$ denote the clustering one obtains by invoking \cref{thm:congest_clustering_oracle} with input graph $G[C^{weak}_i]$. Then, the algorithm returns the clustering $\fC^{strong} = \bigcup_{i=1}^n \fC^{strong}_i$. \\
We first show that $\fC^{strong}$ is a $2$-separated clustering with diameter $O(\log^2 n)$ clustering at least $(1/3)n$ nodes.

First, the clustering $\fC^{strong}$ is $2$-separated: This directly follows from the fact that $\fC^{weak}$ is $2$-separated and for $i \in [N]$, $\fC^{strong}_i$ is $2$-separated. 
Moreover, the clustering has diameter $O(\log^2 n)$: This follows from the fact that each cluster in $\fC^{weak}_i$ has diameter $O(b^2)$ and $b = O(\log n)$.

It remains to show that $|\bigcup_{C \in \fC^{strong}} C| \geq n/3$.
We have
\[|\bigcup_{C \in \fC^{strong}} C| = \sum_{i=1}^N |\bigcup_{C \in \fC^{strong}_i} C| \geq \sum_{i=1}^N \frac{2|C^{weak}_i|}{3} = \frac{2}{3} |\bigcup_{C \in \fC^{weak}} C| \geq \frac{2}{3}\frac{1}{2}n = \frac{1}{3}n\]
where the first inequality follows from the guarantees of \cref{thm:congest_clustering_oracle} and the second follows from the guarantees of \cref{thm:ggr_clustering}. 

Finally, we discuss an efficient \congest implementation of the algorithm:
For every $i \in [N]$, we need to show that we can compute $\fC^{strong}_i$ in $O(\log^3 n)$ \congest rounds in each cluster $C^{weak}_i \in \fC^{weak}$. 
Moreover, the communication capacity in each round is limited: For the computation inside $C^{weak}_i$, we can use the full capacity of $O(\log n)$ bits along edges contained in $G[C^{weak}_i]$, but only a single bit for edges contained in the tree $T_{C^{weak}_i}$, and no communication for all other edges. The reason why we have the capacity of one bit per edge of $T_{C^{weak}_i}$ is that by \cref{thm:ggr_clustering}, each node of $G$ and hence each edge of $G$ is contained in $O(\log n)$ different trees $T_C$, hence by assuming without loss of generality that $b$ is large enough, each edge can allocate one bit per tree it is in. 


According to \cref{thm:congest_clustering_oracle}, for $i \in [N]$, we can compute $\fC^{strong}_i$ in $O(b^3) = O(\log^3 n)$ \congest rounds together with performing $O(b) = O(\log n)$ oracle calls to $\fO^{count}$ in $G[C^{weak}_i]$.
The \congest rounds use only edges of $G[C^{weak}_i]$, so we only need to discuss the implementation of the calls to $\fO^{count}$.  To that end, we will use the following variant of \cite[Lemma 5.1]{ghaffari_grunau_rozhon2020improved_network_decomposition}. \footnote{The Lemma 5.1 in \cite{ghaffari_grunau_rozhon2020improved_network_decomposition} proves this result only for $k=1$, but the generalization for bigger $k$ is straightforward. }

\begin{lemma}[A variant of the ``pipelining'' Lemma 5.1 from \cite{ghaffari_grunau_rozhon2020improved_network_decomposition}]
\label{lem:pipelining}
Consider the following problem. Let $T$ be a rooted tree of depth $d$. Each node $u$ has $k$ $m$-bit numbers $x_u^1, x_u^2, \dots, x_u^k$. In one round of communication, each node can send a $b$-bit message, $b \le m$, to all its neighbors in $T$. There is a protocol such that in $O(d + km/b)$ message-passing rounds on $T$ performs the following operations:
\begin{enumerate}
    \item Broadcast: the root of $T$, $r$, sends $x_r^1, \dots, x_r^k$ to all nodes in $T$. 
    \item Sum: The root $r$ computes the value of $\sum_{u \in T} x_u^i \mod 2^{O(m)}$ for every $1 \le i \le k$. 
\end{enumerate}
\end{lemma}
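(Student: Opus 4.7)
The plan is to use the standard technique of pipelining broadcasts and convergecasts on the tree $T$, adapted to the setting where each number has $m$ bits but only $b \le m$ bits can be transmitted per round. I would handle the two operations by essentially symmetric protocols, using chunks of $b$ bits as the unit of data passed along each edge in each round.

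For the broadcast direction, I would split each of the root's $k$ values $x_r^1,\ldots,x_r^k$ into $\lceil m/b\rceil$ chunks of $b$ bits each, giving a sequence of $k\lceil m/b\rceil$ chunks in a fixed order. In round $t$, the root transmits chunk $t$ to all of its children, and every other node, upon receiving a chunk from its parent in round $t$, forwards that chunk to its own children in round $t+1$. A straightforward induction on depth shows that chunk $t$ reaches every node at depth $j$ by round $t+j$, so the last chunk arrives by round $k\lceil m/b\rceil + d$, giving the claimed bound of $O(d + km/b)$ rounds.

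For the sum direction, the goal is to pipeline a convergecast that computes, at the root, each of the $k$ subtree-sums modulo $2^{O(m)}$. My plan is to compute the subtree-sum of each value block-by-block from the least significant bits upward: treat each $O(m)$-bit target sum as a sequence of $b$-bit chunks, and in the appropriate round each node $u$ accumulates (i) the current $b$-bit chunk received from each of its children, (ii) the corresponding $b$-bit chunk of its own $x_u^i$, and (iii) the carry it saved from the previous chunk. The low $b$ bits of the result are forwarded to $u$'s parent in the next round, and the carry is kept locally for the next chunk. Because chunks flow strictly upward, successive chunks and successive values $i$ can be launched from the leaves one round apart, yielding a pipeline of latency $O(d)$ and length $O(km/b)$.

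The delicate point, and what I expect to be the main obstacle, is specifying a global round schedule that synchronizes siblings with their common parent so that the arrivals needed to compute $u$'s $(i,j)$-th chunk all line up in the same round at $u$. I would fix the schedule explicitly: the $(i,j)$-th chunk (for value $i$ and chunk index $j$) is sent from any node at depth $\delta$ to its parent in round $((i-1)\lceil m/b\rceil + j) + (d-\delta)$. With this rule, all children of $u$ transmit their $(i,j)$-th chunks in the same round, $u$ forms its own $(i,j)$-th chunk one round later, and the carries are simply the additive carries from chunk $(i,j-1)$, which $u$ has already computed. Since the output sums are only required modulo $2^{O(m)}$, any carry beyond the topmost chunk may be discarded at the root. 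Summing the initial latency $O(d)$ with the pipeline length $O(km/b)$ gives the claimed round bound, and the protocol is deterministic.
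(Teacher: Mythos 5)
Your proposal is correct, and it is essentially the standard pipelining argument (chunked broadcast down the tree, and an LSB-first chunked convergecast with locally stored carries for the sums) that underlies the cited Lemma 5.1 of Ghaffari--Grunau--Rozhoň, which the paper itself does not reprove but simply invokes, noting that the extension from $k=1$ to general $k$ is straightforward. Your explicit depth-indexed round schedule and the observation that carries stay local (and that truncation is harmless since the sum is only needed modulo $2^{O(m)}$) fill in exactly the details the paper leaves implicit, so there is no gap.
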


In our case, to implement $\fO^{count}$ we first use the sum operation and afterwards the broadcast operation from the statement of \cref{lem:pipelining} on the tree $T_{C^{weak}_i}$. 
Note that we use the following parameters: $d_{\ref{lem:pipelining}} = O(\log^2 n)$ since this is the diameter of $T_{C^{weak}_i}$ by \cref{thm:ggr_clustering}; $ k_{\ref{lem:pipelining}} = O(\log n)$ since we need to aggregate $O(\log n)$ different sums; $m_{\ref{lem:pipelining}} = O(\log n)$ as this is the size of the messages we are broadcasting; $b_{\ref{lem:pipelining}} = 1$ as this is the capacity of the channel. Therefore, the oracle $\fO^{count}$ is implemented in 
$O(d_{\ref{lem:pipelining}} + k_{\ref{lem:pipelining}} m_{\ref{lem:pipelining}}/b_{\ref{lem:pipelining}}) = O(\log^2 n)$ rounds. We need to call it $O(\log n)$ times, hence the overall round complexity is $O(\log^3 n)$, as desired. 

\end{proof}

\subsection{Proof of \cref{thm:congest_clustering_oracle}}
\label{sec:formal_oracle_version}
In this section, we formalize the proof sketch given in \cref{sec:strong_intuition}. 

\begin{proof}[Proof of \cref{thm:congest_clustering_oracle}]

The algorithm computes two sequences $V_0 := V(G) \supseteq V_1 \supseteq \ldots \supseteq V_b$ and $Q_0 := V(G) \supseteq Q_1 \supseteq \ldots \supseteq Q_b$. For $i \in \{0,1,\ldots,b\}$, we define $G_i = G[V_i]$ and $d_i = d_{G_i}$. Besides $Q_i \subseteq V_i$, the following three invariants will be satisfied:

\begin{enumerate}
    \item Separation Invariant: Let $u,v \in Q_i$ be two nodes that are contained in the same connected component in $G_i$. Then, the first $i$ bits of the identifiers of $u$ and $v$ agree.
    \item Ruling Invariant: For every node $v \in V_i$, $d_i(Q_i,v)\leq 6ib$.
    \item Deletion Invariant: We have $|V_i| \ge n - \frac{in}{3b}$. 
\end{enumerate}

It is easy to verify that setting $V_0 = Q_0 = V(G)$ results in the three invariants being satisfied for $i = 0$. For $i = b$, the separation invariant implies that every connected component in $G_b$ contains at most one vertex in $Q_b$. Together with the ruling invariant, this implies that the diameter of every connected component in $G_b$ is $O(b^2)$. Moreover, the deletion invariant states that $|V_b| \geq n - \frac{bn}{3b} = (2/3)n$. Hence, the connected components of $G_b$ define a $2$-separated clustering in $G$ with diameter $O(b^2)$ that clusters at least $(2/3)n$ of the vertices, as desired.

Let $i \in \{0,1,\ldots, b-1\}$. It remains to describe how to compute $(V_{i+1},Q_{i+1})$ given $(V_{i},Q_{i})$ while preserving the three invariants.

Our algorithm makes sure that the following three properties are satisfied. First, let $u,v \in Q_i$ be two arbitrary nodes that are contained in the same connected component in $G_i$ and whose identifiers disagree on the $(i+1)$-th bit. Then, at least one of them is not contained in $Q_{i+1}$ or $u$ and $v$ end up in different connected components in $G_{i+1}$. Second, for every node $v \in V_{i+1}$, $d_{i+1}(Q_{i+1},v) \leq d_i(Q_i,v) + 6b$. Third, $|V_i \setminus V_{i+1}| \leq \frac{n}{3b}$, i.e., the algorithm `deletes' at most $\frac{n}{3b}$ many nodes.

Note that satisfying these three properties indeed suffices to preserve the invariants. Let $Q_i = Q^B_i \sqcup Q^R_i$ with $Q^B_i$ containing all the nodes in $Q_i$ whose $(i+1)$-th bit in their identifier is $0$. We compute $(V_{i+1},Q_{i+1})$ from $(V_{i},Q_{i})$ as follows:

\begin{enumerate}
    \item For $j \in \{0,1,\ldots,6b - 1\}$, let $K^j_i = \{u \in V_i \colon \diff_i(u) = j\}$ with $\diff_i(u) = d_i(Q_i^R,u) - d_i(Q_i^B,u)$
    \item $j^* = \arg \min_{j \in \{0,2,4,\ldots, 6b - 2\}} |K^{j}_i \cup K^{j+1}_i|$
    \item $V_{i+1} = V_i \setminus (K_i^{j^*} \cup K_i^{j^* + 1})$ 
    \item $Q_{i+1} = Q_i \setminus (\bigcup_{j=0}^{j^* + 1} K_i^j)$
\end{enumerate}

We first show that computing $(Q_{i+1},V_{i+1})$ in this way indeed satisfies the three properties stated above. Afterwards, we show that we can compute $(Q_{i+1},V_{i+1})$ given $(Q_i,V_i)$ in $O(b^2)$ \congest rounds and using the oracle $\fO^{count}$ once.
It directly follows from the pigeonhole principle that $|K^{j^*}_i \cup K^{j^*+1}_i| \leq \frac{n}{3b}$ and therefore $|V_i \setminus V_{i+1}| \leq \frac{n}{3b}$.
Hence, it remains to verify the other two properties.

\begin{claim}
Let $u,v \in Q_i$ be two arbitrary nodes that are contained in the same connected component in $G_i$ and whose identifiers disagree on the $(i+1)$-th bit. Then, either at least one of them is not contained in $Q_{i+1}$ or $u$ and $v$ end up in different connected components in $G_{i+1}$. 
\end{claim}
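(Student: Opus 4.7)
The plan is to split into two cases depending on whether both $u$ and $v$ survive to $Q_{i+1}$. Without loss of generality assume $u \in Q_i^R$ and $v \in Q_i^B$. If $u \notin Q_{i+1}$ or $v \notin Q_{i+1}$, the first disjunct of the claim is satisfied and we are done. So the interesting case is when both $u,v \in Q_{i+1}$, and there we must show that they sit in different connected components of $G_{i+1}$.

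In that interesting case, the plan is to exploit the intermediate-value behavior of $\diff_i$ along any hypothetical $u$-to-$v$ path in $G_{i+1}$. Two elementary facts will drive the argument. First, since $Q_i^R$ and $Q_i^B$ are disjoint subsets of $V_i$ and edges in $G_i$ have unit length, $d_i(Q_i^B, u) \ge 1$ and thus $\diff_i(u) = -d_i(Q_i^B,u) \le -1$. Second, $\diff_i(v) \ge 0$ since $v \in Q_i^B$, and because $v \in Q_{i+1} = Q_i \setminus \bigcup_{j=0}^{j^*+1} K_i^j$ we moreover have $\diff_i(v) \notin \{0, 1, \ldots, j^*+1\}$, so $\diff_i(v) \ge j^*+2$.

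Now suppose for contradiction that there is a path $P = (u = w_0, w_1, \ldots, w_k = v)$ in $G_{i+1}$. Since $G_{i+1} \subseteq G_i$ and both $d_i(Q_i^R,\cdot)$ and $d_i(Q_i^B,\cdot)$ are $1$-Lipschitz along edges of $G_i$, the function $\diff_i$ is $2$-Lipschitz along $P$, i.e.\ $|\diff_i(w_{t+1}) - \diff_i(w_t)| \le 2$ for every $t$. The sequence of values $\diff_i(w_0), \diff_i(w_1), \ldots, \diff_i(w_k)$ starts at some value $\le -1 \le j^*-1$ and ends at some value $\ge j^*+2$, so it cannot skip over the window $\{j^*, j^*+1\}$ in a single step of size $\le 2$. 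Hence some $w_t$ lies in $K_i^{j^*} \cup K_i^{j^*+1}$, contradicting the fact that $w_t \in V_{i+1} = V_i \setminus (K_i^{j^*} \cup K_i^{j^*+1})$.

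The proof is essentially a one-line intermediate-value argument once the algebraic bounds on $\diff_i(u)$ and $\diff_i(v)$ are in place, so there is no real obstacle; the only care needed is to use the disjointness of $Q_i^R$ and $Q_i^B$ to obtain the strict inequality $\diff_i(u) \le -1$ (as opposed to merely $\le 0$), which is what ensures the starting value lies strictly below the deleted window $\{j^*, j^*+1\}$ even when $j^* = 0$.
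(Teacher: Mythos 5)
Your proof is correct and follows essentially the same route as the paper: bound $\diff_i(u)$ strictly below $0$ and $\diff_i(v)$ strictly above $j^*+1$ (using $v \in Q_{i+1}$), then apply the $2$-Lipschitz/intermediate-value argument along any $u$-$v$ path to find a node in $K_i^{j^*} \cup K_i^{j^*+1}$, which was removed from $V_{i+1}$. The only cosmetic difference is that you argue by contradiction on a path in $G_{i+1}$ while the paper shows every $u$-$v$ path in $G_i$ loses a vertex, which is the same argument in contrapositive form.
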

\begin{proof}
We assume without loss of generality that $u \in Q_i^R$ and $v \in Q_i^B$. Furthermore, assume that $u,v \in Q_{i+1}$. We need to show that $u$ and $v$ are in different connected components in $G_{i+1}$. 
To that end, consider an arbitrary $u$-$v$-path $\langle u = w_1,w_2, \ldots, v = w_k \rangle$ in $G_i$. 
From the definition of $\diff_i$ and the fact that $u \in Q_i^R$ and $v \in Q_i^B \cap Q_{i+1}$, it follows that $\diff_i(u) < 0$ and $\diff_i(v) > j^* + 1$. 
Together with the fact that $|\diff_i(w_\ell) - \diff_i(w_{\ell + 1})| \leq 2$ for every $\ell \in [k-1]$, we get that there exists an $\ell \in \{2,3,\ldots, k-1\}$ with $\diff_i(w_\ell) \in \{j^*,j^* + 1\}$. 
For this $\ell$, $w_\ell \notin V_{i+1}$ and therefore the $u$-$v$-path is not fully contained in $G_{i+1}$. Since we considered an arbitrary $u$-$v$-path, this implies that $u$ and $v$ are in different connected components in $G_{i+1}$, as desired.
\end{proof}

\begin{claim}
For every $u \in V_{i+1}$, $d_{i+1}(Q_{i+1},u) \leq d_i(Q_i,u) + 6b$.
\end{claim}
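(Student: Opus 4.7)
The plan is to prove the bound by splitting into two symmetric cases depending on whether $\diff_i(u) \leq j^*-1$ or $\diff_i(u) \geq j^*+2$. Since $u \in V_{i+1}$ means $u \notin K_i^{j^*} \cup K_i^{j^*+1}$, exactly one of these two cases must hold (interpreting values outside $\{0,\ldots,6b-1\}$ as $\pm\infty$, which fall into the natural case).

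The key preliminary lemma I would establish is a \emph{monotonicity property} of $\diff_i$ along shortest paths: fix any $v \in V_i$ with finite $d_i(Q_i^R,v)$, and let $v = v_0, v_1, \ldots, v_k \in Q_i^R$ be a shortest path to $Q_i^R$ in $G_i$. Then $\diff_i(v_{t+1}) \leq \diff_i(v_t)$ for every $t$. This follows because $d_i(Q_i^R, v_{t+1}) = d_i(Q_i^R, v_t) - 1$ while the triangle inequality gives $d_i(Q_i^B, v_{t+1}) \geq d_i(Q_i^B, v_t) - 1$, so the difference can only shrink. The symmetric statement holds for shortest paths to $Q_i^B$: $\diff_i$ is non-decreasing along them.

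\textbf{Case 1: $\diff_i(u) \leq j^*-1$.} Take a shortest $u$-to-$Q_i^R$ path $u = v_0,\ldots, v_k = q$ in $G_i$, of length $d_i(Q_i^R,u)$. By monotonicity, $\diff_i(v_t) \leq j^*-1$ for every $t$, so none of these nodes lies in $K_i^{j^*} \cup K_i^{j^*+1}$, i.e., the whole path stays inside $V_{i+1}$. Moreover, $q \in Q_i^R$ implies $q \notin Q_i^B$, hence $d_i(Q_i^B, q) \geq 1$ and $\diff_i(q) \leq -1$, so $q \in Q_{i+1}$. To bound the path length by $d_i(Q_i,u)+6b$: if $d_i(Q_i,u) = d_i(Q_i^R,u)$ we are done; otherwise $\diff_i(u) > 0$, so $d_i(Q_i^R,u) = d_i(Q_i^B,u) + \diff_i(u) \leq d_i(Q_i,u) + (j^*-1) \leq d_i(Q_i,u) + 6b$.

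\textbf{Case 2: $\diff_i(u) \geq j^*+2$.} Symmetrically, take a shortest $u$-to-$Q_i^B$ path $u = v_0,\ldots,v_k = q$ in $G_i$. By the symmetric monotonicity statement, $\diff_i(v_t) \geq j^*+2$ for every $t$, so the path stays in $V_{i+1}$, and $\diff_i(q) \geq j^*+2$ gives $q \in Q_{i+1}$. Since $\diff_i(u) \geq 0$ forces $d_i(Q_i^B,u) \leq d_i(Q_i^R,u)$, the path has length $d_i(Q_i^B,u) = d_i(Q_i,u)$.

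The only genuinely delicate step is the monotonicity lemma; the remainder is case analysis and arithmetic. Edge cases where $Q_i^R$ or $Q_i^B$ is empty (or unreachable from $u$) make $\diff_i$ take infinite values uniformly on the relevant component, trivializing the corresponding case: no node gets deleted and $(Q_{i+1},V_{i+1}) = (Q_i, V_i)$ on that component.
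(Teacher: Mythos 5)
Your proof is correct and takes essentially the same route as the paper's: the same case split on whether $\diff_i(u)$ lies below or above the removed band $\{j^*,j^*+1\}$, a shortest path to $Q_i^R$ (resp.\ $Q_i^B$) that stays in $G_{i+1}$ because $\diff_i$ is monotone along such paths, with the endpoint surviving in $Q_{i+1}$, and the same arithmetic yielding the $+6b$ slack. The only difference is that you prove explicitly the monotonicity of $\diff_i$ along shortest paths (and the infinite-distance edge cases), which the paper states without proof.
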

\begin{proof}

Consider any $u \in V_{i+1}$ and recall that $\diff_i(u) = d_i(Q_i^R, u) - d_i(Q_i^B, u)$. As $u \notin K_i^{j^*} \cup K_i^{j^* + 1}$, either $\diff_i(u) < j^*$ or $\diff_i(u) > j^* + 1$.
\begin{enumerate}
    \item $\diff_i(u) < j^*$: Consider a shortest path from $Q_i^R$ to $u$. 
    Note that any node $v$ on this path also satisfies $\diff_i(v) < j^*$. 
    
    Therefore, the path is fully contained in $G_{i+1}$. Moreover, $Q_i^R \subseteq Q_{i+1}$ and therefore
    \begin{align*}
    d_{i+1}(Q_{i+1}, u)
    \leq d_{i}(Q_i^R, u)
    \le d_i(Q_i, u) + \max(0, \diff_i(u))
    \le d_{i}(Q_{i}, u) + 6b,
    \end{align*}
    as needed.
    \item $\diff_i(u) > j^* + 1$: Consider a shortest path from $Q_i^B$ to $u$. Note that any node $v$ on this path also satisfies $\diff_i(v) > j^*+1$. 
    In particular, the path is fully contained in $G_{i+1}$. Also, the start vertex $v' \in Q_i^B$ of the path satisfies $\diff_i(v') > j^*+ 1$ and thus it is contained in $Q_{i+1}$. Hence,
    \begin{align*}
    d_{i+1}(Q_{i+1}, u) 
    \leq d_{i}(Q_i^B, u) 
    = d_{i}(Q_i, u) 
    \le d_{i}(Q_i, u) + 6b, 
    \end{align*}
    as needed. 

\end{enumerate}
\end{proof}

First, each node $v$ computes the two values $\min(d_i(Q_i^{\fR},v), 6(i+1)b)$ and $\min(d_i(Q_i^{\fB},v), 6(i+1)b)$.
This can be done in $O(b^2)$ \congest rounds by computing a BFS forest from both $Q_i^\fR$ and $Q_i^\fB$ up to distance $6(i+1)b$.
As $d_i(Q_i,v) \leq 6ib$, it holds for each $j \in \{0,1,\ldots,6b-1\}$ that $\diff_i(u) := d_i(Q_i^{\fR},v) - d_i(Q_i^{\fB},v)  = j$ if and only if
$\min(d_i(Q_i^{\fR},v), 6(i+1)b) - \min(d_i(Q_i^{\fB},v), 6(i+1)b) = j$.
Thus, a node can decide with no further communication whether it is contained in $K_i^j$.
Now, one can use the oracle $\fO^{count}$ to compute $j^*$.
Given $j^*$, each node can decide whether it is contained in $V_{i+1}$ and $Q_{i+1}$, as needed.

Hence, the overall algorithm runs in $O(b^3)$ \congest rounds and invokes the oracle $\fO^{count}$ $O(b)$ times.

\end{proof}

\section{Blurry Ball Growing}
\label{sec:blurry}

The blurry ball growing problem asks for the following: in its simplest variant (randomized, edge-cutting), we are given a set $S$ and a distance parameter $D$. 
The goal is to construct a superset $S^{sup}$ of $S$ with $S^{sup} \subseteq B_{G}(S,D)$ such that every edge $e$ of length $\ell(e)$ is ``cut'' by $S^{sup}$ (that is, neither contained in $S^{sup}$, nor in $V(G) \setminus S^{sup}$) with probability $O(\ell(e)/D)$. 

This section is dedicated to prove \cref{thm:blurry_growing} that generalizes \cref{thm:blurry_informal} that we restate here for convenience.

\blurryInformal*


First, in \cref{sec:blurry_intuition}, we sketch a proof for the randomized edge-cutting variant of the problem. The main result, \cref{thm:blurry_growing}, is proven in \cref{sec:blurry_general}. Finally, in \cref{sec:blurry_corollaries} we derive simple corollaries of \cref{thm:blurry_growing} used later in the paper. 

\subsection{Intuition and Proof Sketch}
\label{sec:blurry_intuition}

We will sketch a proof of the randomized variant of \cref{thm:blurry_informal} (change the guarantee on the sum of the lengths of edges cut to the individual guarantee that each edge $e$ is cut with probability $O(\ell(e) /D)$). 
First, note that it is easy to solve the blurry ball growing problem using an exact distance oracle: 
one can simply pick a number $\overline{D} \in [0, D)$ uniformly at random and define $S^{sup} := \{u : d(S, u) \le \overline{D} \}$. 
From now on, let $e = \{u,v\}$ be an arbitrary edge with $d_u \leq d_v$ for $d_u := d(S, u)$ and $d_v := d(S,v)$. 
Choosing $\overline{D}$ as above, we indeed have 
\[
\P(\text{$e$ is cut by $S^{sup}$})
= \P(\overline{D} \in [d_u, d_v))
= \frac{| [0, D) \cap [d_u, d_v) |}{D}
\le \ell(e)/D,
\]
as needed. 

What happens if we only have access to a $(1+\eps)$-approximate distance oracle, i.e., if we define $S^{sup} = \{u \colon \tilde{d}(S,u) \leq \overline{D}\}$ with $\tilde{d}$ being $(1+\eps)$-approximate?
The calculation above would only give
\[
\P(\text{$e$ is cut by $S^{sup}$}) = \P(\overline{D} \in [\tilde{d}(S,u), \tilde{d}(S,v) ))
\leq \P(\overline{D} \in [d_u, d_v + \eps D ))
\le \ell(e)/D + \eps. 
\]
This bound is only sufficient for edges of length $\Omega(\eps D)$. 

\newcommand{\Blurexact}{\textrm{ExactBlur}}

To remedy this problem, let us consider the algorithm $\Blurexact$ given below. $\Blurexact$ only performs a \emph{binary} decision in each of the $O(\log D)$ recursion levels.
This allows us later to straightforwardly generalise it to the more complicated approximate and deterministic setting. 

\begin{algorithm}
\caption{Simple Randomized Blurry Ball Growing with Exact Distances}
\label{alg:blurry_simple}
\begin{algorithmic}
\State{{\bf Procedure} $\Blurexact(S, D)$} 
\If{$D \leq 1$} 
    \State \textbf{return}  {$S$}
\Else
    \State $\Sbig = \{ u : d(S, u) \le D/2\}$
    \If{(fair coin comes up heads)}
        \State \textbf{return} {$\Blurexact(S, D/2)$}
    \Else
        \State \textbf{return} {$\Blurexact(\Sbig, D/2)$}
    \EndIf 
\EndIf
\end{algorithmic}
\end{algorithm}

If all of the edges of $G$ had length $1$ and $D$ was a power of two, the algorithm $\Blurexact$ would actually be the same as the simple uniformly sampling algorithm discussed above. 
In that case, it would correspond to sampling the value of $\overline{D}$ bit by bit, starting with the most significant bit.
However, in general the two procedures are somewhat different. Assume that $u \in S$, $u$ is the only neighbor of some $v \not\in S$ and $\ell(\{u,v\}) = 2D/3$. 
With probability $1$, $v \not\in \Blurexact(S, D)$. That is, the probability of $\{u,v\}$ being cut is $1 > \ell(\{u,v\})/D$. 

However, we will now (informally) prove that \cref{alg:blurry_simple} nevertheless satisfies $P(\text{$e$ is cut}) = O(\ell(e)/D)$. 

\begin{proof}[Proof (informal)]
Let $p$ be the probability of $e$ being cut, $p_1$ the probability of $e$ being cut provided that the coin comes up heads (we decide not to grow) and $p_2$ the probability that $e$ is cut if the coin comes up tails; we have $p = (p_1+p_2) /2$. 

Recall that we want to prove (by induction) that $p \le C\ell(e)/D$ for some $C > 0$. 
In particular, we are going to show that
\begin{align}
\label{eq:blur_simple}
p \le 10(1 + \mathbb{1}_{middle}(e)) \ell(e)/D. 
\end{align}
Here, $\mathbb{1}_{middle}$ is the indicator of whether $0 < d_u \le d_v < D-1$, i.e., $u$ is not in $S$ and $v$ is sufficiently close to $S$. 

\begin{figure}
    \centering
    \includegraphics[width = .5\textwidth]{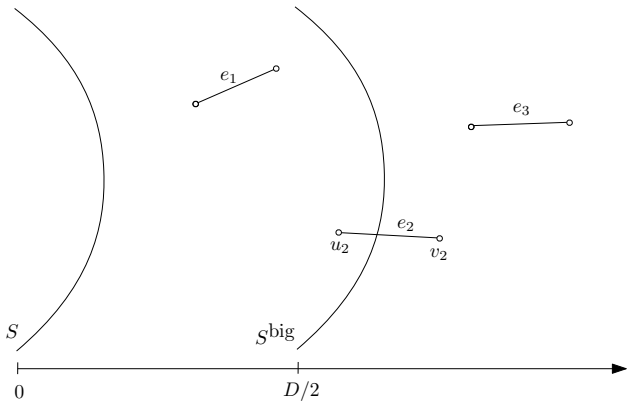}
    \caption{
    One phase of \cref{alg:blurry_simple}:
    The edge $e_1$ is fully contained in $\Sbig$. Hence, $e_1$ will not be cut if the coin comes up tails. No endpoint of the edge $e_3$ is contained in $\Sbig$. Therefore, $e_3$ is not cut if the coin comes up heads. The edge $e_2$ can potentially be cut in both cases. However, for a given edge, the situation that exactly one of the endpoints is contained in $\Sbig$ can only happen once during all recursive calls up until the point where the length of the edge is within a constant factor of $D$.}
    \label{fig:blur}
\end{figure}

To prove the bound \cref{eq:blur_simple}, first consider the case $d_u, d_v < D/2$ (cf. the edge $e_1$ in \cref{fig:blur}). 
We have, by induction, that $p_1 \le 10(1 + \mathbb{1}_{middle}(e)) \ell(e)/(D/2)$, while $p_2 = 0$. 
Here we are using the fact that if $\mathbb{1}_{middle}(e) = 1$ in the recursive call, it is also certainly equal to one now. 
We get 
\[
p = p_1/2 + p_2/2 = p_1/2 \le 10(1 + \mathbb{1}_{middle}(e)) \ell(e)/D,
\]
as needed. An analogous argument works if $d_u, d_v \geq D/2$ (cf. the edge  $e_3$ in \cref{fig:blur}). 

It remains to analyze the case $d_u < D/2 \leq d_v$ (cf. the case of $u = u_2$ and $v = v_2$ in \cref{fig:blur}). First, note that we can assume that $\ell(e) \leq D/10$ and therefore $\mathbb{1}_{middle}(e) = 1$. Moreover, in all of the subsequent recursive calls it will be the case that either $u \in S_{rec}$, or, on the other hand, $d(S_{rec}, v) \ge D_{rec}$. Thus, $\mathbb{1}_{middle}(e) = 0$ during all of the subsequent recursive calls. 

The fact that currently $\mathbb{1}_{middle}(e) = 1$ but $\mathbb{1}_{middle}(e) = 0$ in the recursive call allows us to conclude that
\[
p = \frac{p_1 + p_2}{2} \le \frac{10 \cdot 1 \cdot \ell(e)/(D/2) + 10\cdot 1 \cdot \ell(e)/(D/2)}{2} = 10(1 + 1) \ell(e)/D = 10(1 + \mathbb{1}_{middle}) \ell(e)/D,
\]
as needed.
\end{proof}

Our main result \cref{thm:blurry_growing} is a generalization of \cref{alg:blurry_simple} and the above analysis.  
First, the analysis can also be made to work with approximate distances. One difference is that we multiply $D$ by $(1-\eps)/2$ and not by $1/2$ in the recursive call, to account for the errors we make when computing the set $S^{big}$. By setting $\eps = O(1/\log(D))$, the errors accumulated over the  $O(\log D)$ iterations do not explode. 

Second, we solve a deterministic variant of the problem where the objective is to minimize the (weighted) sum of edges that are cut. 
We achieve this by derandomizing the random choices in \cref{alg:blurry_simple}. For that, it comes in handy that the algorithm samples just one bit in every iteration: in essence, the basic idea of the deterministic variant of the algorithm is that it computes in every step which choice makes the expected number of edges being cut smaller.

\subsection{General Result}
\label{sec:blurry_general}

The main result of this section is \cref{thm:blurry_growing}. It solves the general blurry ball growing problem discussed in \cref{sec:intro}.
We now define this general version of the problem. In particular, we generalize the guarantee for edges to guarantees for input balls: every node $v$ wants the ball $B(v, r(v))$ to end up fully in $S^{sup}$ or $V(G) \setminus S^{sup}$. Our algorithm outputs a set $V^{bad}$ which contains all the nodes for which this condition fails (and potentially even nodes for which the condition is satisfied).
In the randomized version, we show that each node $v$ is contained in $V^{bad}$ with probability $O(r(v)/D)$.
In the deterministic version, we show that the (weighted) number of nodes in $V^{bad}$ is sufficiently small.
We note that explicitly outputting a set of ``bad'' nodes is needed for the applications later on and we anyways have to track certain quantities (e.g. whether a node can potentially become bad) to derandomize the algorithm.

\begin{theorem}[Deterministic And Randomized Blurry Ball Growing Problem]
\label{thm:blurry_growing}
Consider the following problem on a weighted input graph $G$. The input consists of:

\begin{enumerate}
    \item A set $S \subseteq V(G)$.
    \item Each node $v \in V(G)$ has a preferred radius $r(v)$.
    \item In the deterministic version, each node $v \in V(G)$ additionally has a weight $\mu(v)$. 
    \item A distance parameter $D > 0$.
\end{enumerate}

The output is a set $S^{sup}$ with $S \subseteq S^{sup} \subseteq V(G)$ together with a set $V^{bad} \subseteq V(G)$ such that

\begin{enumerate}
    \item for every $v \in S^{sup}$, $d_{G[S^{sup}]}(S,v) \leq D$,
    \item for every $v \in V^{good} := V(G) \setminus V^{bad}$, $B(v,r(v)) \subseteq S^{sup}$ or $B(v,r(v)) \subseteq V(G) \setminus S^{sup}$,
    \item in the determinisic version, $\mu(V^{bad}) = O(\sum_{v \in V(G)} \mu(v)r(v)/D)$
    \item and in the randomized version, $Pr[v \in V^{bad}] = O(r(v)/D)$ for every $v \in V(G)$.
\end{enumerate}
There is an algorithm which returns a pair $(S^{sup}, V^{bad})$ satisfying the above properties in $O(\log(D) + 1)$ steps. The algorithm performs all oracle calls with precision parameter $\eps = \frac{1}{\log(n)}$ and distance parameter no larger than $D$.
\end{theorem}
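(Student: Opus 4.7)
I would generalize the recursive algorithm $\Blurexact$ from the intuition along three axes: handling approximate distances produced by $\oDist_{\eps,\cdot}$, handling arbitrary preferred radii $r(v)$ in place of edge endpoints, and derandomizing the coin flips. Setting $\eps = 1/\log n$ and using a slightly-smaller-than-halving factor $\alpha = (1-\eps)/2$ per level, the distance parameter still telescopes to zero in $O(\log D)$ steps, while the cumulative $(1+\eps)$-multiplicative oracle errors only inflate the final strong-diameter bound in condition~(1) by a constant factor. The recursive procedure takes a pair $(S_{\mathrm{cur}}, D_{\mathrm{cur}})$, invokes $\oDist_{\eps, D_{\mathrm{cur}}/2}$ to compute a set $\Sbig \supseteq S_{\mathrm{cur}}$ together with a forest from $S_{\mathrm{cur}}$ of depth at most $D_{\mathrm{cur}}/2$, and then recurses either with $(S_{\mathrm{cur}}, \alpha D_{\mathrm{cur}})$ or with $(\Sbig, \alpha D_{\mathrm{cur}})$; gluing the $O(\log D)$ returned forests yields condition~(1).

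For the randomized variant the binary decision is a fair coin, and I would prove by induction on the recursion depth an amplified bound of the form $\Pr[v \in V^{bad}] \le C\bigl(1 + \mathbbm{1}_{\mathrm{middle}}(v)\bigr)\, r(v)/D_{\mathrm{cur}}$, where $\mathbbm{1}_{\mathrm{middle}}(v)$ marks the event that the ball $B(v,r(v))$ has points on both sides of the boundary of $\Sbig$ at the current level. The three-case split (ball entirely inside $\Sbig$, entirely outside, or straddling) from the intuition generalizes directly: outside the middle case one of the two children preserves the induction, while in the middle case one uses the extra factor of two to absorb the single level it takes for the ball to become decided. The key fact needed to close the induction is that a ball which is straddling at level $i$ is with certainty not straddling at level $i+1$, because the width $2r(v)$ of the ball is much smaller than $\alpha D_{\mathrm{cur}}$ even after the $(1+\eps)$ oracle slack is included.

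For the deterministic variant I would derandomize via the method of conditional expectations using the potential $\Phi := \sum_v \mu(v)\cdot C\bigl(1 + \mathbbm{1}_{\mathrm{middle}}(v)\bigr) r(v)/D_{\mathrm{cur}}$: the randomized analysis gives $\E[\Phi_{\text{next}}] \le \Phi_{\mathrm{cur}}$ over a fair coin, so picking the child with smaller resulting $\Phi$ preserves the bound $\Phi \le \Phi_0 = O\bigl(\sum_v \mu(v) r(v)/D\bigr)$, which at the terminal level is exactly $\mu(V^{bad})$ up to the constant. Each such evaluation boils down to detecting, for every $v$, whether its ball straddles $\Sbig$ at the candidate level, which can be answered with one $\oForestAgg$ call along the forest already returned by $\oDist$ together with one $\oGlobalAgg$ call to sum the $\mu$-weighted contributions. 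The step I expect to be the main obstacle is keeping the error budget under control across all $O(\log D)$ levels, so that the fuzzy approximate notion of straddling still satisfies the invariant that the middle flag cannot be set in two consecutive levels; with $\eps = 1/\log n$ the per-level slack is an $o(1/\log D)$ fraction of the current radius, which is enough, but formalizing this requires a careful error-accounting lemma about how the $(1+\eps)$-approximate forests of successive levels chain together.
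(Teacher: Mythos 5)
Your algorithmic skeleton is essentially the paper's: the same recursive binary $\Blur$ procedure with shrinkage factor $(1-\eps)/2$, the same amplified induction $\Pr[v\in V^{bad}] \le C\,(1+\mathbb{1}_{middle}(v))\,r(v)/D_{\mathrm{cur}}$, and the same derandomization by picking the child that minimizes a pessimistic-estimator potential. The gap is in the key fact you use to close the induction. It is not enough that a ball straddling the boundary of $\Sbig$ at level $i$ does not straddle at level $i+1$: the factor-of-two credit attached to the middle indicator may be spent only once over the \emph{entire} remaining recursion, so you need that once a ball has been in the middle case it never re-enters it at any later level; otherwise the child bound you invoke at level $i$ (with indicator $0$) would itself need indicator $1$ one level deeper, and the induction unravels. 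Moreover, the justification you give (the width $2r(v)$ being much smaller than $\alpha D_{\mathrm{cur}}$) is not the operative mechanism. The paper instead maintains a set $\Vmiddle$ and the ``valid input'' invariant of \cref{def:blurry_valid_input}: every node outside $\Vmiddle$ is either very close ($d_G(S,v)\le(1+\eps)r(v)$) or very far ($\max_{u\in B_G(v,(1+\eps)r(v))}d_G(S,u)\ge D$), and \cref{blurry:valid_input} shows both properties survive both branches precisely because the current set grows by at most $(1+\eps)D/2$ while $D$ shrinks to $(1-\eps)D/2$; combined with \cref{lem:blurry_unsafe_implies_middle} (a node unsafe for both branches with $r(v)\le D/10$ must lie in $\Vmiddle$, and is then removed from $\Vmiddle$ permanently), this is what makes the doubling a one-shot event. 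Your closing remark that a ``careful error-accounting lemma'' is needed points at exactly this, but as stated your invariant is one level too weak, and it is the crux of the proof rather than a routine verification.

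A smaller point: to decide which nodes are ``unsafe'' for each branch (equivalently, to evaluate your potential) you need approximate distances both from $\Sbig$ and from $V(G)\setminus\Sbig$; the forest returned by $\oDist_{\eps,D/2}(S)$ only encodes distances from $S$, so a single $\oForestAgg$ pass over it does not suffice. The paper therefore makes two additional $\oDist_{\eps,D}$ calls per level, one from $\Sbig$ and one from its complement, and carries the resulting unsafe sets as explicit recursion arguments so that $V^{bad}$ can simply be output at the leaf.
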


\paragraph{Proof of \cref{thm:blurry_growing}}
Let us first give some intuition about \cref{alg:blurry}. The set $S^{big}$ corresponds to the set of the same name in \cref{alg:blurry_simple}. The trees $T_{S^{big}}, T_{V\setminus S^{big}}$ give us, informally speaking, the approximate distance from the cut $S^{big} \times (V(G) \setminus S^{big})$. The set $\Vunsafe$ contains all the nodes which can potentially be cut by the set $S^{sup}$ returned at the end. At the beginning we set $\Vunsafe = V(G)$, while in the leaf of the recursion we return $\Vbad = \Vunsafe$. We postpone the intuitive discussion about the set $V^{middle}$. However, note that the randomized version ``ignores'' the set $V^{middle}$. It is only necessary as an input to the deterministic algorithm. However, we still use the set $V^{middle}$ to analyze the randomized version. Finally, the potential $\Phi_1$ ($\Phi_2$)  in the deterministic version of \cref{alg:blurry} can be seen as a pessimistic estimator for the expected number of nodes (according to their weight) which will be labeled as bad at the end of the algorithm, i.e., which are contained in $\Vbad$, if the coin comes up heads (tails).

\begin{algorithm}
\caption{The Blurry Ball Growing Algorithm}
\label{alg:blurry}
\begin{algorithmic}
\State{{\bf Procedure} $\Blur(S, D, \Vunsafe, \Vmiddle)$} \\
Works with an arbitrary precision parameter $\eps \in [0,0.1]$. 
\If{$D \leq 1$} 
    \State \textbf{return}  {$(S,\Vunsafe)$}
\Else
    \State $T_S \gets \oDist_{\eps, D/2}(S)$
    \State $\Sbig = V(T_S)$ 
    \State $T_{\Sbig} \leftarrow \oDist_{\eps, D}(\Sbig)$
    \State $T_{V\setminus \Sbig} \leftarrow \oDist_{\eps,D}(V(G) \setminus \Sbig)$
    \State $\Vunsafe_1 = \Vunsafe \cap \left( \{v \in V(T_{\Sbig}) \colon  d_{T_{\Sbig}}(v) \leq (1+\eps)r(v)\} \cup \{v \in V(G) \colon r(v) > D/10\}\right)$
    \State $\Vunsafe_2 = \Vunsafe \cap \left( \{v \in V(T_{V \setminus \Sbig}) \colon d_{T_{V \setminus \Sbig}}(v) \leq (1+\eps)r(v) \} \cup \{v \in V(G) \colon r(v) > D/10\}\right)$
    \State $\Vpmiddle = \Vmiddle \setminus (\Vunsafe_1 \cap \Vunsafe_2 \cap \{v \in V(G) \colon r(v) \leq D/10\})$
    \State $\forall i \in \{1,2\} \colon \Phi_i = \sum_{v \in \Vunsafe_i \colon r(v) \leq D/10} (1 + \mathbb{1}_{\Vpmiddle}(v))\mu(v)r(v)$ 
    \If{(\textit{randomized} and fair coin comes up heads) or (\textit{deterministic} and $\Phi_1 \leq \Phi_2$)}
        \State \textbf{return} {$\Blur(S, (1- \eps)D/2,\Vunsafe_1, \Vpmiddle)$}
    \Else
        \State \textbf{return} {$\Blur(\Sbig, (1- \eps)D/2, \Vunsafe_2, \Vpmiddle)$}
    \EndIf 
\EndIf
\end{algorithmic}
\end{algorithm}

\begin{proof}
We invoke the deterministic/randomized recursive procedure of \cref{alg:blurry} with precision parameter $\eps = \frac{1}{\log(n)}$.
We let $(S^{sup},V^{bad}) = Blur(S,D,\{v \in V(G) \colon r(v) > 0\},V(G))$. 

We need to prove that the four properties in the theorem statement are satisfied. The proof is structured as follows. 
\cref{lem:blurry_not_too_far} implies that the first property is satisfied. \cref{lem:blur_good_property} implies that the second property is satisfied. In the randomized version, \cref{lem:blur_rand} gives that $Pr[v \in V^{bad}] \leq \frac{20r(v)}{D(1-\eps)^{\max(0,\log(2D))}}$ for every $v \in V(G)$. For $x \in [0,0.5]$, it holds that $1-x \geq e^{-2x}$. Hence, for $n$ being larger than a fixed constant, we have

\[Pr[v \in V^{bad}] \leq \frac{20r(v)}{D(1-\eps)^{\max(0,\log(2D))}} \leq \frac{20 r(v)}{D e^{-2\frac{\max(0,\log(2D))}{\log(n)}}} = O(r(v)/D).\]

In the deterministic version, \cref{lem:blur_det} gives that

\[\mu(\Vbad) \leq \frac{10}{D(1-\eps)^{\max(0,\log_2(2D))}}\sum_{v \in V(G)} \mu(v) r(v) = O \left(\sum_{v \in V(G)} \mu(v) r(v)/D \right),\]

where we again assume that $n$ is a large enough constant.
The recursion depth of the Blur-procedure is $O(\log D)$. Hence, it is easy to see that running the procedure takes $O(\log (D) + 1)$ steps and all oracle calls are performed with precision parameter $\eps = \frac{1}{\log (n)}$ and distance parameter no larger than $D$. This finishes the proof, modulo proving \cref{lem:blurry_not_too_far,lem:blur_good_property,lem:blur_rand,lem:blur_det}, which we will do next. 
\end{proof}

\begin{claim}
\label{lem:blurry_not_too_far}
Let $(S^{sup},.) = \Blur(S,D,.,.)$. For every $v \in S^{sup}$, we have $d_{G[S^{sup}]}(S,v) \leq D$.
\end{claim}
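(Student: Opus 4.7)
The plan is to prove the claim by induction on the recursion depth of the \textbf{Blur}-procedure. The base case is immediate: when $D \leq 1$, we return $S^{sup} = S$, so every $v \in S^{sup}$ satisfies $d_{G[S^{sup}]}(S,v) = 0 \leq D$.

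For the inductive step, suppose $D > 1$ and fix one of the two recursive branches. In the first branch, we recurse as $\Blur(S, (1-\eps)D/2, \ldots)$, so the input set is unchanged. By the inductive hypothesis applied to the recursive call, every $v \in S^{sup}$ satisfies $d_{G[S^{sup}]}(S,v) \leq (1-\eps)D/2 \leq D$, and we are done.

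The main content is the second branch, where we recurse as $\Blur(\Sbig, (1-\eps)D/2, \ldots)$. Here I would use two observations: (i) monotonicity of the returned set, namely $\Sbig \subseteq S^{sup}$, which follows by a trivial side induction showing that the output of \textbf{Blur} always contains its first input; and (ii) the tree $T_S = \oDist_{\eps, D/2}(S)$ satisfies $T_S \subseteq G$ with $V(T_S) = \Sbig$, so $T_S \subseteq G[\Sbig] \subseteq G[S^{sup}]$. Applying the inductive hypothesis to the recursive call gives, for every $v \in S^{sup}$, some $u \in \Sbig$ with $d_{G[S^{sup}]}(u,v) \leq (1-\eps)D/2$. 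Meanwhile, by \cref{def:oracle_dist} (with zero delays), the path from $u$ to $\root_{T_S}(u) \in S$ inside $T_S$ has total length $d_{T_S}(u) \leq (1+\eps)\,d_G(S,u) \leq (1+\eps)D/2$, and this path lies in $G[S^{sup}]$ by (ii). Combining these with the triangle inequality,
\[
d_{G[S^{sup}]}(S,v) \;\leq\; d_{G[S^{sup}]}(S,u) + d_{G[S^{sup}]}(u,v) \;\leq\; (1+\eps)D/2 + (1-\eps)D/2 \;=\; D,
\]
which completes the induction.

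There is no real obstacle here; the only point to be careful about is the exact interplay between the oracle's $(1+\eps)$-overshoot (used to reach $S$ from $\Sbig$) and the recursive parameter's $(1-\eps)$-shrink (used from $\Sbig$ onward), which conveniently sum to $D$. The monotonicity fact $\Sbig \subseteq S^{sup}$, while intuitively clear, needs to be stated explicitly so that the $T_S$-path indeed lies inside $G[S^{sup}]$.
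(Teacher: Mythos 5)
Your proposal is correct and follows essentially the same argument as the paper: induction on the recursion depth, with the second branch handled by the triangle inequality combining the $(1+\eps)D/2$ bound on $d_{G[S^{sup}]}(S,u)$ (via the oracle guarantee, $V(T_S)=\Sbig$, and $\Sbig\subseteq S^{sup}$) with the $(1-\eps)D/2$ bound from the inductive hypothesis. You merely spell out explicitly the monotonicity fact and the location of the $T_S$-path, which the paper states in a single sentence.
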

\begin{proof}
We prove the statement by induction on the recursion depth. For the base case $D \leq 1$, we have $S^{sup} = S$ and therefore the statement trivially holds.
Next, consider the case $D > 1$. We either have $(S^{sup},.) = \Blur(S,(1-\eps)D/2,.,.)$ or $(S^{sup},.) = \Blur(\Sbig,(1-\eps)D/2,.,.)$.
In the first case, the induction hypothesis gives that for any $v \in S^{sup}$ we have  $d_{G[S^{sup}]}(S,v) \leq (1-\eps)D/2 \leq D$, as desired. In the second case, the induction hypothesis states that there exists a vertex $u \in \Sbig$ with $d_{G[S^{sup}]}(u,v) \leq (1-\eps)D/2$.
However, from the way $\Sbig$ is defined, properties of $\oDist$, and the fact that $\Sbig \subseteq S^{sup}$, it follows that $d_{G[S^{sup}]}(S,u) \leq (1+\eps)D/2$. Hence, by using the triangle inequality, we obtain

\[d_{G[S^{sup}]}(S,v) \leq d_{G[S^{sup}]}(S,u) + d_{G[S^{sup}]}(u,v) \leq (1+\eps)D/2 + (1-\eps)D/2 \leq D,\]
which finishes the proof.
\end{proof}

\begin{claim}
\label{lem:blur_good_property}
Let $(S^{sup},\Vbad) = \Blur(S,D,\Vunsafe,.)$ for some $\Vunsafe \subseteq V(G)$. For every $v \in \Vunsafe$, if $B(v,r(v)) \cap S^{sup} \neq \emptyset$ and $B(v,r(v)) \setminus S^{sup} \neq \emptyset$, then $v \in \Vbad$.
\end{claim}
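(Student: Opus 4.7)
The plan is to prove the claim by induction on the recursion depth of $\Blur$. The base case $D \leq 1$ is immediate: $\Blur$ returns $(S^{sup}, V^{bad}) = (S, V^{unsafe})$, so every $v \in V^{unsafe}$ is automatically in $V^{bad}$.

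For the inductive step, suppose $v \in V^{unsafe}$ satisfies $B(v,r(v)) \cap S^{sup} \neq \emptyset$ and $B(v,r(v)) \setminus S^{sup} \neq \emptyset$. The strategy is to show that $v$ also belongs to the set $V^{unsafe}_{\text{chosen}}$ passed into the recursive call, so that the induction hypothesis applies in the child call and places $v$ in $V^{bad}$. If $r(v) > D/10$, the definitions of $V^{unsafe}_1$ and $V^{unsafe}_2$ immediately give $v \in V^{unsafe}_1 \cap V^{unsafe}_2$, regardless of the branch taken, so induction concludes. Otherwise $r(v) \leq D/10$, and we case-split on the branch.

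Suppose we recurse into $\Blur(S,(1-\eps)D/2,V^{unsafe}_1,\cdot)$. By \cref{lem:blurry_not_too_far} applied to the recursive call, every node of the resulting $S^{sup}$ has distance at most $(1-\eps)D/2 \leq D/2$ from $S$ in $G$, and hence by property~2 of $\oDist_{\eps, D/2}$ lies in $V(T_S) = S^{big}$; in particular $S^{sup} \subseteq S^{big}$. Thus $B(v,r(v)) \cap S^{sup} \neq \emptyset$ forces $B(v,r(v)) \cap S^{big} \neq \emptyset$, whence $d_G(S^{big},v) \leq r(v) \leq D/10 \leq D$. Property~2 of $\oDist_{\eps,D}$ then puts $v \in V(T_{S^{big}})$ and property~1 gives $d_{T_{S^{big}}}(v) \leq (1+\eps)d_G(S^{big},v) \leq (1+\eps)r(v)$, so $v \in V^{unsafe}_1$. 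The symmetric branch is analogous: if we recurse into $\Blur(S^{big},(1-\eps)D/2,V^{unsafe}_2,\cdot)$ then $S^{big} \subseteq S^{sup}$, so $B(v,r(v)) \setminus S^{sup} \neq \emptyset$ forces $B(v,r(v)) \setminus S^{big} \neq \emptyset$, which by the same kind of argument applied to $\oDist_{\eps,D}$ on $V(G) \setminus S^{big}$ yields $v \in V(T_{V\setminus S^{big}})$ with $d_{T_{V\setminus S^{big}}}(v) \leq (1+\eps)r(v)$, i.e.\ $v \in V^{unsafe}_2$.

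In all cases, $v$ survives into the $V^{unsafe}$ argument of the recursive call, so the induction hypothesis gives $v \in V^{bad}$. The main thing to be careful about is the direction of the inclusion between $S^{sup}$ and $S^{big}$ in the two branches, together with the quantitative gap $D/2 \to (1-\eps)D/2$ that ensures \cref{lem:blurry_not_too_far} applied to the child call pins $S^{sup}$ strictly inside $S^{big}$; the rest of the argument is a straightforward invocation of the two properties of $\oDist$ using $r(v) \leq D/10 \leq D$.
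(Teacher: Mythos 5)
Your proof is correct and follows essentially the same route as the paper's: induction on the recursion depth, using $S^{sup}\subseteq \Sbig$ (via \cref{lem:blurry_not_too_far} and property~2 of $\oDist_{\eps,D/2}$) in the first branch and $\Sbig\subseteq S^{sup}$ in the second, together with the two oracle properties to place $v$ in $\Vunsafe_1$ or $\Vunsafe_2$ and invoke the induction hypothesis. The only cosmetic difference is that you dispose of the $r(v)>D/10$ case up front rather than inside each branch, which changes nothing substantive.
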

\begin{proof}
We prove the statement by induction on the recursion depth.
For the base case $D \leq 1$, we have $\Vbad = \Vunsafe$ and therefore the statement trivially holds.
Next, consider the case $D > 1$.
Let $v \in \Vunsafe$ and assume that $B(v,r(v)) \cap S^{sup} \neq \emptyset$ and $B(v,r(v)) \setminus S^{sup} \neq \emptyset$. 
We have to show that this implies $v \in \Vbad$.

We either have $(S^{sup},.) = \Blur(S,(1-\eps)D/2,\Vunsafe_1,.)$ or $(S^{sup},.) = \Blur(\Sbig,(1-\eps)D/2,\Vunsafe_2,.)$.

We first consider the case $(S^{sup},.) = \Blur(S,(1-\eps)D/2,\Vunsafe_1,.)$.
By assumption, $B(v,r(v)) \cap S^{sup} \neq \emptyset$. Let $u \in S^{sup}$. By \cref{lem:blurry_not_too_far}, $d_G(S,u) \leq d_{G[S^{sup}]}(S,u) \leq (1-\eps)D/2 \leq D/2$. Thus, $u \in \Sbig$ according to the second property of the distance oracle $\oDist$. Hence, $S^{sup} \subseteq \Sbig$ and therefore there exists a vertex $u \in \Sbig$ with $d_G(v,u) \leq r(v)$.
If $r(v) > D/10$, then $v \in \Vunsafe_1$ and it follows by induction that $v \in V^{bad}$. 
If $r(v) \leq D/10$, then $d_G(\Sbig,v) \leq r(v) \leq D/10 \leq D$. Hence, the second property of $\oDist$ implies that $v \in V(T_{\Sbig})$ and the first property of $\oDist$ implies that $d_{T_{\Sbig}}(v) \leq  (1+\eps) d_G(\Sbig,v) \leq (1+\eps)r(v)$ which together with $v \in \Vunsafe$ implies that $v \in \Vunsafe_1$. It follows by induction that $v \in \Vbad$. 

It remains to consider the case $(S^{sup},.) = \Blur(\Sbig,(1-\eps)D/2,\Vunsafe_2,.)$.
By assumption, $B(v,r(v)) \setminus S^{sup} \not= \emptyset$ and as $\Sbig \subseteq S^{sup}$ therefore also $B(v,r(v)) \setminus \Sbig \neq \emptyset$. Hence, $d_G(V(G) \setminus \Sbig,v) \leq r(v)$. If $r(v) > D/10$, then $v \in \Vunsafe_2$ and it follows by induction that $v \in V^{bad}$. If $r(v) \leq D/10$, then $d_G(V(G) \setminus \Sbig,v) \leq r(v) \leq D/10 \leq D$. 
Hence, the second property of $\oDist$ implies that $v \in V(T_{V\setminus \Sbig})$ and the first property of $\oDist$ implies that $d_{T_{V \setminus \Sbig}}(v) \leq (1+\eps)d_G(V(G) \setminus \Sbig,v) \leq (1+\eps)r(v)$ which together with $v \in \Vunsafe$ implies that $v \in \Vunsafe_2$. It follows by induction that $v \in \Vbad$. 
\end{proof}

\begin{definition}
\label{def:blurry_valid_input}
We refer to the tuple $(S,D,\Vmiddle)$ as a valid input if every $v \in V(G) \setminus \Vmiddle$ is either very close or very far (or both), defined as follows.
\begin{enumerate}
    \item very close: $d_G(S,v) \leq (1+\eps) r(v)$
    \item very far: $\max_{u \in B_G(v,(1+\eps)r(v))} d_G(S,u) \geq D$
\end{enumerate}
\end{definition}

\begin{claim}
\label{blurry:valid_input}
Assume $(S,D,\Vmiddle)$ is a valid input. Then, both $(S,(1-\eps)D/2,\Vpmiddle)$ and $(\Sbig,(1-\eps)D/2,\Vpmiddle)$ are valid inputs.
\end{claim}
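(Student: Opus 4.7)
The plan is to do a case analysis on $v \in V(G) \setminus \Vpmiddle$. Since $\Vpmiddle = \Vmiddle \setminus \left(\Vunsafe_1 \cap \Vunsafe_2 \cap \{v : r(v) \le D/10\}\right)$, either $v \in V(G) \setminus \Vmiddle$ (already out before) or $v \in \Vunsafe_1 \cap \Vunsafe_2$ with $r(v) \le D/10$ (newly removed from the middle set). For each case I need to verify the very-close / very-far dichotomy with respect to both new sources $S$ and $\Sbig$ at the new threshold $(1-\eps)D/2$.

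For $v \in V(G) \setminus \Vmiddle$, I will use that the hypothesis gives the dichotomy with respect to $(S,D)$. The new input with source $S$ is easy: ``very close to $S$'' is the same condition and is preserved, while ``very far from $S$'' at threshold $D$ is stronger than at threshold $(1-\eps)D/2$. For the new input with source $\Sbig$ I use two facts about the oracle $\oDist_{\eps, D/2}$ applied to $S$: first, $S \subseteq \Sbig$, so $d_G(\Sbig, v) \le d_G(S, v)$, which transports the ``very close'' condition; second, every $w \in \Sbig$ satisfies $d_G(S,w) \le (1+\eps)D/2$ by property~1 of $\oDist$, so if $u \in B_G(v,(1+\eps)r(v))$ witnesses ``very far from $S$'' with $d_G(S,u) \ge D$, then $d_G(\Sbig, u) \ge D - (1+\eps)D/2 = (1-\eps)D/2$, witnessing ``very far from $\Sbig$'' at the new threshold.

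For $v \in \Vunsafe_1 \cap \Vunsafe_2$ with $r(v) \le D/10$, the defining clauses of $\Vunsafe_1$ and $\Vunsafe_2$ reduce (because of $r(v) \le D/10$) to $d_{T_{\Sbig}}(v) \le (1+\eps)r(v)$ and $d_{T_{V \setminus \Sbig}}(v) \le (1+\eps)r(v)$ respectively. The first of these, combined with $d_G(\Sbig, v) \le d_{T_{\Sbig}}(v)$, gives ``very close to $\Sbig$'', handling the new input with source $\Sbig$. For the new input with source $S$, the second clause produces a witness $u \in B_G(v, (1+\eps)r(v))$ with $u \in V(G) \setminus \Sbig$. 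By property~2 of $\oDist_{\eps, D/2}$ applied to $S$, $u \notin \Sbig$ forces $d_G(S,u) > D/2 \ge (1-\eps)D/2$, giving ``very far from $S$'' at the new threshold.

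The only place where I expect any subtlety is keeping the two oracle properties straight: property~1 (bounding distances in $T_S$ from above by $(1+\eps) d_{G,\del}$) is what gives the outer approximate containment $\Sbig \subseteq \{w : d_G(S,w) \le (1+\eps)D/2\}$ needed in the first case, while property~2 (containment of close vertices in the output forest) is what gives the strict lower bound $d_G(S,u) > D/2$ for $u \notin \Sbig$ in the second case. Once both are deployed correctly, the new threshold $(1-\eps)D/2$ is exactly calibrated to match both conditions, and the claim follows for both recursive calls simultaneously.
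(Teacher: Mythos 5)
Your proposal is correct and follows essentially the same route as the paper's proof: the same case split on $v \in V(G)\setminus\Vmiddle$ versus $v \in \Vunsafe_1 \cap \Vunsafe_2$ with $r(v) \le D/10$, the same use of $S \subseteq \Sbig$ and property~1 of $\oDist_{\eps,D/2}$ (via the triangle inequality yielding the bound $D - (1+\eps)D/2 = (1-\eps)D/2$), and the same contrapositive of property~2 to get $d_G(S,u) > D/2$ for the witness $u \notin \Sbig$. No gaps.
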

\begin{proof}
Let $v \in V(G) \setminus \Vpmiddle$.
To show that $(S,(1-\eps)D/2,\Vpmiddle)$ is a valid input, it suffices to show that one of the following holds:

\begin{enumerate}
    \item $d_G(S,v) \leq (1+\eps)r(v)$
    \item $\max_{u \in B_G(v,(1+\eps)r(v))} d_G(S,u) \geq (1-\eps)D/2$
\end{enumerate}

To show that $(\Sbig,(1-\eps)D/2,\Vpmiddle)$ is a valid input, it suffices to show that one of the following holds:

\begin{enumerate}
    \item $d_G(\Sbig,v) \leq (1+\eps)r(v)$
    \item $\max_{u \in B_G(v,(1+\eps)r(v))} d_G(\Sbig,u) \geq (1-\eps)D/2$
\end{enumerate}

First, consider the case that $v \in V(G) \setminus \Vmiddle$. As $(S,D,\Vmiddle)$ is a valid input
, $d_G(S,v) \leq (1+\eps)r(v)$ or $\max_{u \in B_G(v,(1+\eps)r(v))} d_G(S,u) \geq D$. If $d_G(S,v) \leq (1+\eps)r(v)$, then also $d_G(\Sbig,v) \leq (1+\eps)r(v)$ as $S \subseteq \Sbig$. 
Now, assume $\max_{u \in B_G(v,(1+\eps)r(v))} d_G(S,u) \geq D \geq (1-\eps)D/2$. Hence, there exists $w \in B_G(v,(1+\eps)r(v))$ with $d_G(S,w) \geq D$ and thus

\begin{align*}
    \max_{u \in B_G(v,(1+\eps)r(v))} d_G(\Sbig,u) &\geq d_G(\Sbig,w) \\
    &= \min_{s^{big} \in \Sbig} d_G(s^{big},w) \\
    &\geq \min_{s^{big}\in \Sbig} d_G(S,w) - d_G(S,s^{big}) \\
    &= d_G(S,w) - \max_{s^{big} \in \Sbig} d_G(S,s^{big}) \\
    &\geq D - (1+\eps)D/2 \\
    &= (1-\eps)D/2,
\end{align*}

as desired.
It remains to consider the case $v \in \Vmiddle$. Hence, $v \in \Vmiddle \setminus \Vpmiddle$ and therefore $v \in \Vunsafe_1 \cap \Vunsafe_2$ and $r(v) \leq D/10$.
As $v \in \Vunsafe_1$ and $r(v) \leq D/10$, $d_{T_{\Sbig}}(v) \leq (1+\eps)r(v)$ and therefore $d_G(\Sbig,v) \leq (1+\eps)r(v)$, which already finishes the proof that $(\Sbig,(1-\eps)D/2,\Vpmiddle)$ is a valid input.
As $v \in \Vunsafe_2$ and $r(v) \leq D/10$, $d_{T_{V \setminus \Sbig}}(v) \leq (1+\eps)r(v)$. Hence, $d_G(V \setminus \Sbig,v) \leq (1+\eps)r(v)$ and therefore $B_G(v,(1+\eps)r(v)) \cap V \setminus \Sbig \neq \emptyset$. As every $w \in V \setminus \Sbig$ satisfies $d_G(S,w) \geq D/2 \geq (1-\eps)D/2$, we have $\max_{u \in B_G(v,(1+\eps)r(v))} d_G(S,u) \geq (1-\eps)D/2$, which finishes the proof that $(S,(1-\eps)D/2,\Vpmiddle)$ is a valid input.
\end{proof}

\begin{claim}
\label{lem:blurry_unsafe_implies_middle}
Assume $(S,D,\Vmiddle)$ is a valid input. For every $v \in \Vunsafe_1 \cap \Vunsafe_2$ with $r(v) \leq D/10$, we have $v \in \Vmiddle$.
\end{claim}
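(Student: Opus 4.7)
I will prove the contrapositive: assume $v \notin \Vmiddle$ and $r(v) \leq D/10$; then I show that $v \notin \Vunsafe_1 \cap \Vunsafe_2$. The assumption that $(S,D,\Vmiddle)$ is a valid input (\cref{def:blurry_valid_input}) means that every such $v$ falls into one of two cases---\emph{very close} ($d_G(S,v) \leq (1+\eps) r(v)$) or \emph{very far} ($\max_{u \in B_G(v,(1+\eps) r(v))} d_G(S,u) \geq D$). The plan is to show that being very close excludes $v$ from $\Vunsafe_2$, while being very far excludes $v$ from $\Vunsafe_1$; either conclusion alone suffices.

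The key preliminary observation is a clean characterization of $\Sbig$. The specification of $\oDist_{\eps, D/2}$ in \cref{def:oracle_dist}, applied to $T_S$, forces $\Sbig = V(T_S) = \{u \in V(G) : d_G(S,u) \leq D/2\}$: the inclusion "$\supseteq$" is property~2, and the inclusion "$\subseteq$" follows from the final inequality $(1+\eps) d_G(S,u) \leq (1+\eps)(D/2)$ in property~1. Consequently every $w \in \Sbig$ satisfies $d_G(S,w) \leq D/2$ and every $w \in V(G) \setminus \Sbig$ satisfies $d_G(S,w) > D/2$. With this in hand, each case is a short triangle-inequality calculation.

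\textbf{Very close case.} Suppose $d_G(S,v) \leq (1+\eps) r(v)$ and, for contradiction, $v \in \Vunsafe_2$. Since $r(v) \leq D/10$, membership in $\Vunsafe_2$ must come from the first clause of its definition, so there exists $w \in V(G) \setminus \Sbig$ with $d_G(w,v) \leq d_{T_{V \setminus \Sbig}}(v) \leq (1+\eps) r(v)$. Then $d_G(S,w) \leq d_G(S,v) + d_G(v,w) \leq 2(1+\eps) r(v) \leq (1+\eps)(D/5) < D/2$ for $\eps \leq 0.1$, contradicting $w \notin \Sbig$. Hence $v \notin \Vunsafe_2$.

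\textbf{Very far case.} Pick $u \in B_G(v,(1+\eps) r(v))$ with $d_G(S,u) \geq D$, so $d_G(S,v) \geq D - (1+\eps) r(v)$. If $v \in \Vunsafe_1$, then (again using $r(v) \leq D/10$) there is $w \in \Sbig$ with $d_G(w,v) \leq (1+\eps) r(v)$, hence $d_G(S,v) \leq D/2 + (1+\eps) r(v)$. Combining the two bounds gives $D/2 \leq 2(1+\eps) r(v) \leq (1+\eps) D/5$, a contradiction for $\eps \leq 0.1$, so $v \notin \Vunsafe_1$. The only genuinely delicate point---and really the main obstacle---is checking that the threshold $D/10$ in the definitions of $\Vunsafe_1, \Vunsafe_2$ and the slack $\eps \leq 0.1$ leave enough room in both triangle-inequality chains; these constants are tuned precisely so that both calculations close with strict inequality.
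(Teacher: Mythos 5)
Your proof is correct and is essentially the paper's argument in contrapositive form: the paper directly shows that any $v \in \Vunsafe_1 \cap \Vunsafe_2$ with $r(v) \le D/10$ is neither very close nor very far (pairing $\Vunsafe_1$ with the far condition and $\Vunsafe_2$ with the close condition via the same oracle guarantees and triangle-inequality estimates), so validity forces $v \in \Vmiddle$. Your additional observation that $\Sbig$ equals the ball of radius $D/2$ around $S$ is a harmless (and justified) strengthening of the two facts the paper actually uses, namely $d_G(S,u) \le (1+\eps)D/2$ for $u \in \Sbig$ and $d_G(S,w) \ge D/2$ for $w \notin \Sbig$.
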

\begin{proof}
Let $v \in \Vunsafe_1 \cap \Vunsafe_2$ with $r(v) \leq D/10$. We have to show that $v \in \Vmiddle$. As $(S,D,\Vmiddle)$ is a valid input, it suffices by \cref{def:blurry_valid_input} to show that $d_G(S,v) > (1+\eps)r(v)$ and $\max_{u \in B_G(v,(1+\eps)r(v))} d_G(S,u) < D$.
As $v \in \Vunsafe_1$ and $r(v) \leq D/10$, $d_G(\Sbig,v) \leq d_{T_{\Sbig}}(v) \leq (1+\eps)r(v)$.
Therefore,

\begin{align*}
    \max_{u \in B_G(v,(1+\eps)r(v))} d_G(S,u) &\leq d_G(S,v) + (1+\eps)r(v) \\
    &\leq \max_{u \in \Sbig} d_G(S,u) + d_G(\Sbig,v) + (1+\eps)r(v) \\
    &\leq (1+\eps)(D/2) + 2(1+\eps)r(v) \\
    &< D, 
\end{align*}

as needed.
As $v \in \Vunsafe_2$ and $r(v) \leq D/10$, $d_G(V(G) \setminus \Sbig,v) \leq d_{T_{V \setminus \Sbig}}(v) \leq (1+\eps)r(v)$. Hence,

\[d_G(S,v) \geq d_G(S,V \setminus \Sbig) - d_G(V \setminus \Sbig,v) \geq D/2 - (1+\eps)r(v) > (1+\eps)r(v),\]

which finishes the proof.
\end{proof}

\begin{claim}[Randomized Lemma]
\label{lem:blur_rand}
Let $(.,\Vbad) = Blur_{rand}(S,D,\{v \in V(G) \colon r(v) > 0\},V(G))$ for $D > 0$. For every $v \in V(G)$, $Pr[v \in \Vbad] \leq \frac{20r(v)}{D(1-\eps)^{\max(0,\log_2(2D))}}$.
\end{claim}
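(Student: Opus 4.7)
The plan is to prove by induction on the recursion depth a strengthened inductive claim: for every valid input $(S, D, \Vunsafe, \Vmiddle)$ (in the sense of \Cref{def:blurry_valid_input}) and every $v \in V(G)$, the output $(\cdot, \Vbad)$ of $\Blur(S,D,\Vunsafe,\Vmiddle)$ satisfies
\[
\Pr[v \in \Vbad] \le \frac{10\,(1 + \mathbb{1}_{\Vmiddle}(v))\,r(v)}{D\,(1-\eps)^{\max(0,\log_2(2D))}},
\]
with $\Pr[v \in \Vbad] = 0$ whenever $v \notin \Vunsafe$ at the call's start; denote this upper bound by $f(D,v,\Vmiddle)$. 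The statement of the claim is the special case $\Vmiddle = V(G)$ at the outermost call, where $\mathbb{1}_{\Vmiddle}(v) = 1$ contributes the factor of $20$. The validity of each recursive input is preserved by \Cref{blurry:valid_input}.

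For the inductive step, decompose $\Pr[v \in \Vbad] = \tfrac12 \Pr_H + \tfrac12 \Pr_T$, where the heads (resp.\ tails) summand vanishes if $v \notin \Vunsafe_1$ (resp.\ $v \notin \Vunsafe_2$). Setting $D' = (1-\eps)D/2$, a short computation using $D/D' = 2/(1-\eps)$ and $\log_2(2D) - \log_2(2D') = 1 - \log_2(1-\eps)$ gives
\[
\frac{f(D',v,\Vpmiddle)}{f(D,v,\Vmiddle)} \;=\; \frac{1+\mathbb{1}_{\Vpmiddle}(v)}{1+\mathbb{1}_{\Vmiddle}(v)} \cdot 2 \cdot 2^{-(\log_2(1-\eps))^2},
\]
with the last factor lying in $(0,1)$. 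I would then split into three cases matching the intuition of \Cref{sec:blurry_intuition}. \emph{Case A}, $r(v) > D/10$: the right-hand side $f(D,v,\Vmiddle)$ already exceeds $1$, so the bound is trivial. \emph{Case B}, $r(v) \le D/10$ and $v$ lies in exactly one of $\Vunsafe_1, \Vunsafe_2$: only one branch contributes, so $\Pr[v \in \Vbad] \le \tfrac12 f(D',v,\Vpmiddle)$; since $\mathbb{1}_{\Vpmiddle}(v) \le \mathbb{1}_{\Vmiddle}(v)$, the identity above yields this is $\le f(D,v,\Vmiddle)$. \emph{Case C}, $r(v) \le D/10$ and $v \in \Vunsafe_1 \cap \Vunsafe_2$: both branches contribute, but \Cref{lem:blurry_unsafe_implies_middle} forces $v \in \Vmiddle$ so $\mathbb{1}_{\Vmiddle}(v) = 1$, while the definition of $\Vpmiddle = \Vmiddle \setminus (\Vunsafe_1 \cap \Vunsafe_2 \cap \{r \le D/10\})$ forces $v \notin \Vpmiddle$ so $\mathbb{1}_{\Vpmiddle}(v) = 0$; the indicator ratio $\tfrac12$ cancels the factor of $2$ in the identity, closing $f(D',v,\Vpmiddle) \le f(D,v,\Vmiddle)$.

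The base case $D \le 1$ returns $\Vbad = \Vunsafe$, so the inductive bound must be verified directly as $f(D,v,\Vmiddle) \ge 1$ whenever $v \in \Vunsafe$. If $r(v) > D/10$ this is immediate from the coefficient $10$ in the numerator combined with $(1-\eps)^{\max(0,\log_2(2D))} \le 1$. The main obstacle I anticipate is the remaining subcase, in which one must use the chain of membership conditions $d_{T_{\Sbig_k}}(v) \le (1+\eps)r(v)$ (or its tails analogue) that kept $v$ in some $\Vunsafe_i$ at every prior recursion level, and combine them with the telescoping $(1-\eps)$ factors in the denominator. A clean sufficient hypothesis under which the base case is automatic is that the input radii are normalized so that $r(v) \ge 1$ whenever $r(v) > 0$ (attainable by rescaling the edge lengths and $D$ of the original instance), since then $r(v) \ge 1 \ge D$ at the base case forces $f(D,v,\Vmiddle) \ge 10$.
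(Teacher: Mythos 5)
Your proposal takes essentially the same route as the paper's proof: the identical strengthened inductive statement with the $(1+\mathbb{1}_{\Vmiddle}(v))$ factor and denominator $D(1-\eps)^{\max(0,\log_2(2D))}$, induction on the recursion depth relying on \cref{blurry:valid_input} and \cref{lem:blurry_unsafe_implies_middle}, and the same three-case analysis (trivial for $r(v)>D/10$, one contributing branch, and the $\Vunsafe_1\cap\Vunsafe_2$ case where $\mathbb{1}_{\Vmiddle}(v)=1$ but $\mathbb{1}_{\Vpmiddle}(v)=0$), with your ratio identity being just an explicit rewriting of the paper's bound $p_i \le \frac{2(1+\mathbb{1}_{\Vpmiddle}(v))10r(v)}{D(1-\eps)^{\max(0,\log_2(2D))}}$. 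The base case you flag as a possible obstacle is handled in the paper exactly by your ``clean sufficient hypothesis'': positive radii are (implicitly) at least $1$, so $10r(v)/D\ge 1$ when $D\le 1$ and $v\in\Vunsafe$, and no further argument via the membership chain is needed.
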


\begin{proof}
Consider the following more general claim:
Let $(.,\Vbad) = Blur_{rand}(S,D,\Vunsafe, \Vmiddle)$ with $(S,D,\Vmiddle)$ being a valid input and $\{v \in \Vunsafe \colon r(v) = 0\} = \emptyset$. For a vertex $v \in V(G)$, we define $p_{v,S,D,\Vunsafe,\Vmiddle} = Pr[v \in \Vbad]$. Then,

\[
p_{v,S,D,\Vunsafe,\Vmiddle} \leq
\begin{cases}
\frac{(1+ \mathbb{1}_{\Vmiddle}(v)) 10r(v)}{D(1-\eps)^{\max(0,\log_2(2D))}} &\text{ if $v \in \Vunsafe $}\\
0 &\text{ if $v \notin \Vunsafe$}
\end{cases}
\]
We prove the more general claim by induction on the recursion depth.
The base case $D \in (0,1]$ directly follows as $\frac{(1+ \mathbb{1}_{\Vmiddle}(v))10r(v)}{D(1-\eps)^{\max(0,\log_2(2D))})} \geq 10r(v)/D \geq 1$ as long as $r(v) \geq 1$. 
Next, consider the case $D > 1$. Let $p := p_{v,S,D,\Vunsafe,\Vmiddle}$, $p_1 := p_{v,S,(1-\eps)D/2,\Vunsafe_1,\Vpmiddle}$ and $p_2 := p_{v,\Sbig,(1-\eps)D/2,\Vunsafe_2,\Vpmiddle}$.
From the algorithm definition, we have

\[p = p_1/2 + p_2/2.\]

By \cref{blurry:valid_input}, both $(S,(1-\eps)D/2,\Vpmiddle)$ and $(\Sbig,(1-\eps)D/2,\Vpmiddle)$ are valid inputs. Hence, by induction we obtain for $i \in \{1,2\}$ that

\[
p_i\leq
\begin{cases}
\frac{2 \cdot (1+ \mathbb{1}_{\Vpmiddle}(v)) 10r(v)}{D (1-\eps)(1-\eps)^{\max(0,\log_2((1-\eps)D))}}  \leq \frac{2 \cdot (1+ \mathbb{1}_{\Vpmiddle}(v)) 10r(v)}{D(1-\eps)^{\max(0,\log_2(2D))}} &\text{ if $v \in \Vunsafe_i$}\\
0 &\text{ if $v \notin \Vunsafe_i$}
\end{cases}
\]

First, if $v \notin \Vunsafe$, then $v \notin \Vunsafe_1 \cup \Vunsafe_2$ and therefore $p = 0.5p_1 + 0.5p_2 = 0.5 \cdot 0 + 0.5 \cdot 0 = 0$, as desired.

From now on, assume that $v \in \Vunsafe$. Note that we can furthermore assume that $r(v) \leq D/10$ as otherwise we claim $p \leq 1$ which trivially holds.
First, consider the case that $v \in \Vunsafe_1 \cap \Vunsafe_2$. As $r(v) \leq D/10$, \cref{lem:blurry_unsafe_implies_middle} implies that $v \in \Vmiddle$ and together with the algorithm description it follows that $v \in \Vmiddle \setminus \Vpmiddle$. 

Hence,

\[
p 
\leq 0.5p_1 + 0.5p_2 
\leq 2 \cdot 0.5 \cdot \frac{2 \cdot (1+ \mathbb{1}_{\Vpmiddle}(v)) 10r(v)}{D(1-\eps)^{\max(0,\log_2(2D))}} 
= \frac{(1+ \mathbb{1}_{\Vmiddle}(v)) 10r(v)}{D(1-\eps)^{\max(0,\log_2(2D))}} ,
\]

as desired.

It remains to consider the case $v \notin \Vunsafe_1 \cap \Vunsafe_2$.
By induction, this implies $p_1 = 0$ or $p_2 = 0$ and therefore \[p \leq 0.5p_1 + 0.5p_2 \leq 0.5 \frac{2 \cdot (1+ \mathbb{1}_{\Vpmiddle}(v)) 10r(v)}{D(1-\eps)^{\max(0,\log_2(2D))}} = \frac{(1+ \mathbb{1}_{\Vmiddle}(v)) 10r(v)}{D(1-\eps)^{\max(0,\log_2(2D))}},\]
which finishes the proof.
\end{proof}

\begin{claim}[Deterministic Lemma]
\label{lem:blur_det}
Let $(.,\Vbad) = Blur_{det}(S,D,\{v \in V(G) \colon r(v) > 0\},V(G))$ for $D > 0$. Then,
$\mu(\Vbad) \leq \frac{10}{D(1-\eps)^{\max(0,\log_2(2D))}}\sum_{v \in V(G)} \mu(v) r(v)$.
\end{claim}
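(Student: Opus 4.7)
I will prove a strengthened inductive statement that is entirely parallel to \cref{lem:blur_rand}, replacing the per-vertex probability bound with a bound on $\mu(\Vbad)$ via the pessimistic estimator that the deterministic algorithm greedily minimises. Concretely, by induction on the recursion depth I will establish that for every valid input $(S,D,\Vmiddle)$ and every $\Vunsafe$ with $\{v \in \Vunsafe : r(v) = 0\} = \emptyset$,
\[
\mu(\Vbad) \;\le\; \frac{10}{D(1-\eps)^{\max(0,\log_2(2D))}} \sum_{v \in \Vunsafe} (1+\mathbb{1}_{\Vmiddle}(v))\,\mu(v)\,r(v).
\]
The lemma then follows by instantiating $\Vunsafe = \{v : r(v)>0\}$ and $\Vmiddle = V(G)$ (which is trivially valid since \cref{def:blurry_valid_input} is vacuous when $V(G)\setminus \Vmiddle = \emptyset$), at the cost of a constant factor absorbed into the bound.

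The base case $D \in (0,1]$ goes exactly as in \cref{lem:blur_rand}: $\Vbad = \Vunsafe$, and $r(v)\ge 1$ makes each per-vertex summand on the right at least $\mu(v)$. For the inductive step, set $D' := (1-\eps)D/2$. By \cref{blurry:valid_input} both recursive inputs are valid, so the inductive hypothesis applies to the selected branch $i^*$, giving
\[
\mu(\Vbad) \;\le\; \frac{10\,\tilde\Phi_{i^*}}{D'(1-\eps)^{\max(0,\log_2(2D'))}} \;\le\; \frac{20\,\tilde\Phi_{i^*}}{D(1-\eps)^{\max(0,\log_2(2D))}},
\]
where $\tilde\Phi_i := \sum_{v \in \Vunsafe_i}(1+\mathbb{1}_{\Vpmiddle}(v))\mu(v)r(v)$ and the second inequality uses $\max(0,\log_2(2D)) \ge \max(0,\log_2(2D'))+1$. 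Crucially, the algorithm picks $i^*$ to minimise $\Phi_i$, and $\tilde\Phi_i - \Phi_i = \sum_{v \in \Vunsafe,\,r(v)>D/10}(1+\mathbb{1}_{\Vpmiddle}(v))\mu(v)r(v)$ does not depend on $i$ (vertices with $r(v)>D/10$ lie in both $\Vunsafe_1$ and $\Vunsafe_2$ by the algorithm's definition), so $i^* = \arg\min_i \tilde\Phi_i$ and thus $\tilde\Phi_{i^*} \le (\tilde\Phi_1+\tilde\Phi_2)/2$.

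The crux of the inductive step is then the combinatorial inequality $\tilde\Phi_1 + \tilde\Phi_2 \le \Phi := \sum_{v \in \Vunsafe}(1+\mathbb{1}_{\Vmiddle}(v))\mu(v)r(v)$, which I will verify vertex-by-vertex in direct analogy with \cref{lem:blur_rand}: a vertex $v \in \Vunsafe_1 \cap \Vunsafe_2$ with $r(v)\le D/10$ lies in $\Vmiddle \setminus \Vpmiddle$ by \cref{lem:blurry_unsafe_implies_middle} combined with the algorithm's definition of $\Vpmiddle$, so its doubled contribution $2\mu(v)r(v)$ on the left is exactly absorbed by the coefficient $(1+\mathbb{1}_{\Vmiddle}(v)) = 2$ on the right; vertices lying in only one of the two $\Vunsafe_i$'s are handled using $\Vpmiddle \subseteq \Vmiddle$. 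The main obstacle, and the only case not covered by this direct matching, is vertices with $r(v)>D/10$: they sit in both $\Vunsafe_1$ and $\Vunsafe_2$ without any $\Vmiddle$-based compensation. I expect to resolve this by further strengthening the inductive hypothesis to use the trivial per-vertex bound $\mu(v)$ for large-$r(v)$ vertices in place of the potential bound (mirroring the ``$p \le 1$ trivially'' step in the randomized proof); this is consistent with the stated claim because $\tfrac{10\mu(v)r(v)}{D(1-\eps)^{\max(0,\log_2(2D))}} \ge \mu(v)$ whenever $r(v)>D/10$, so such vertices are already within the allotted bound. Combining the per-vertex matching for small-$r(v)$ vertices with the trivial handling of large-$r(v)$ vertices closes the induction, up to a universal constant that can be absorbed into the constant $10$ by slightly tightening the base case.
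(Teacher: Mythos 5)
Your route is essentially the paper's: the paper proves exactly the ``further strengthened'' hypothesis you anticipate needing, namely an induction on the recursion depth showing $\mu(\Vbad) \le \frac{1}{(1-\eps)^{\max(0,\log_2(2D))}}\bigl(\sum_{v\in\Vunsafe,\,r(v)\le D/10}(1+\mathbb{1}_{\Vmiddle}(v))\mu(v)r(v)/D + \sum_{v\in\Vunsafe,\,r(v)> D/10}\mu(v)\bigr)$ for every valid input, closed with the same ingredients you list (validity of both recursive inputs via \cref{blurry:valid_input}, the fact that $v\in\Vunsafe_1\cap\Vunsafe_2$ with $r(v)\le D/10$ forces $v\in\Vmiddle\setminus\Vpmiddle$ via \cref{lem:blurry_unsafe_implies_middle}, and $2\Phi_{i^*}\le\Phi_1+\Phi_2$), after which the stated bound follows since $\mu(v)\le 10\mu(v)r(v)/D$ whenever $r(v)>D/10$. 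The only cosmetic difference is that you fold the large-radius vertices into the tracked potential and observe the argmin is unaffected because their contribution is branch-independent, whereas the paper simply carries them as a separate (likewise branch-independent) additive term in the induction hypothesis.
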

\begin{proof}

Consider the following more general claim: Let $(.,\Vbad) = Blur_{det}(S,D,\Vunsafe,\Vmiddle)$ with $(S,D,\Vmiddle)$ being a valid input and $\{v \in \Vunsafe \colon r(v) = 0\} = \emptyset$. Then, 

\begin{align*}
   \mu(\Vbad) &\leq \frac{1}{(1-\eps)^{\max(0,\log_2(2D))}} \left( \sum_{v \in \Vunsafe, r(v) \leq D/10} (1 + \mathbb{1}_{\Vmiddle}(v))\frac{\mu(v)r(v)}{D}  + \sum_{v \in \Vunsafe \colon r(v) > D/10} \mu(v) \right). 
\end{align*}

We prove the more general claim by induction on the recursion depth. 
The base case $D \in(0,1]$ trivially holds as for every $v \in V(G)$, $r(v) = 0$ or $r(v) \geq 1 > D/10$.
Next, consider the case $D > 1$.
Assume that $\Phi_1 \leq \Phi_2$. 
In particular, $2\Phi_1 \leq \Phi_1 + \Phi_2$ and therefore

\begin{align*}
    &2 \sum_{v \in \Vunsafe_1 \colon r(v) \leq D/10} (1 + \mathbb{1}_{\Vpmiddle}(v))\mu(v)r(v) \\
    &\leq \sum_{i=1}^2 \sum_{v \in \Vunsafe_i \colon r(v) \leq D/10} (1 + \mathbb{1}_{\Vpmiddle}(v))\mu(v)r(v) \\
    &=  \sum_{v \in \Vunsafe_1 \cup \Vunsafe_2 \colon r(v) \leq D/10} (1 + \mathbb{1}_{\Vpmiddle}(v))\mu(v)r(v) + \sum_{v \in \Vunsafe_1 \cap \Vunsafe_2 \colon r(v) \leq D/10} (1 + \mathbb{1}_{\Vpmiddle}(v))\mu(v)r(v) \\
    &\leq \sum_{v \in \Vunsafe \colon r(v) \leq D/10} (1 + \mathbb{1}_{\Vpmiddle}(v))\mu(v)r(v) + \sum_{v \in \Vunsafe_1 \cap \Vunsafe_2 \colon r(v) \leq D/10} \mathbb{1}_{\Vmiddle \setminus \Vpmiddle}(v) \mu(v)r(v) \\
    &= \sum_{v \in \Vunsafe \colon r(v) \leq D/10} (1 + \mathbb{1}_{\Vmiddle}(v))\mu(v)r(v) ,
\end{align*}
where the second inequality follows from the following three facts: First, $\Vunsafe_1 \cup \Vunsafe_2 \subseteq \Vunsafe$. Second, as $(S,D,\Vmiddle)$ is a valid input, \cref{lem:blurry_unsafe_implies_middle} states that for every $v \in \Vunsafe_1 \cap \Vunsafe_2$ with $r(v) \leq D/10$, we have $v \in \Vmiddle$. Third, it directly follows from the algorithm description that there exists no $v \in (\Vunsafe_1 \cap \Vunsafe_2) \cap \Vpmiddle = \emptyset$ with $r(v) \leq D/10$. 

Now, let $D' := (1-\eps)D/2$. From the induction hypothesis, it follows that

\begin{align*}
    \mu(\Vbad) 
    &\leq \frac{1}{(1-\eps)^{\max(0,\log_2(2D'))}} \left( \sum_{v \in \Vunsafe_1, r(v) \leq D'/10} (1 + \mathbb{1}_{\Vpmiddle}(v))\frac{\mu(v)r(v)}{D'}  + \sum_{v \in \Vunsafe_1 \colon r(v) > D'/10} \mu(v) \right) \\
    &\leq \frac{1}{(1-\eps)^{\max(0,\log_2(2D'))}} \left( \sum_{v \in \Vunsafe_1, r(v) \leq D/10} (1 + \mathbb{1}_{\Vpmiddle}(v))\frac{\mu(v)r(v)}{D'}  + \sum_{v \in \Vunsafe_1 \colon r(v) > D/10} \mu(v) \right) \\
    &\leq \frac{1}{(1-\eps)^{\max(0,\log_2(2D))}} \left( 2 \sum_{v \in \Vunsafe, r(v) \leq D/10} (1 + \mathbb{1}_{\Vpmiddle}(v))\frac{\mu(v)r(v)}{D}  + \sum_{v \in \Vunsafe_1 \colon r(v) > D/10} \mu(v) \right) \\
    &\leq \frac{1}{(1-\eps)^{\max(0,\log_2(2D))}} \left( \sum_{v \in \Vunsafe, r(v) \leq D/10} (1 + \mathbb{1}_{\Vmiddle}(v))\frac{\mu(v)r(v)}{D}  + \sum_{v \in \Vunsafe \colon r(v) > D/10} \mu(v) \right),
\end{align*}
as desired. The case $\Phi_2 < \Phi_1$ follows in the exact same manner and is therefore omitted.

\end{proof}

\subsection{Corollaries}
\label{sec:blurry_corollaries}

In this section we show how the rather general \cref{thm:blurry_growing} implies the solution to the edge variant of the blurry growing problem from \cref{thm:blurry_informal}.
In particular, we prove here \cref{thm:blurry_edge} that solves both the randomized and deterministic version of the problem. 

\begin{definition}[Subdivided Graph]
Let $G$ be a weighted graph. The subdivided graph $G_{sub}$ of $G$ is defined as the weighted graph that one obtains from $G$ by replacing each edge $e = \{u,v\} \in E(G)$ with one new vertex $v_e$ and two new edges $\{u,v_e\}$ and $\{v_e,v\}$. Moreover, we define $\ell_{G_{sub}}(u,v_e) = 0$ and $\ell_{G_{sub}}(v_e,v) = \ell_G(u,v)$ where we assume that $u$ has a smaller ID than $v$.
\end{definition}
\begin{lemma}[Simulation of the Subdivided Graph]
\label{lem:simulation_subdivided}
Assume that some problem $\mathcal{P}$ defined on a weighted graph $H$ can be solved in $T$ steps and with performing all oracle calls with precision parameter $\eps$ and distance parameter no larger than $D$ for arbitrary $T > 0$, $\eps$ and $D$.
Now, assume that the weighted input (and communication) graph is $G$. Then, we can solve the problem $\mathcal{P}$ on the weighted graph $G_{sub}$ in $O(T)$ steps performing all oracle calls with precision parameter $\eps$ and distance parameter no larger than $D$. 
\end{lemma}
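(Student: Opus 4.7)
\textbf{Proof proposal for \cref{lem:simulation_subdivided}.} The plan is a straightforward simulation: each node $u \in V(G)$ simulates itself together with all subdivision vertices $v_e$ for edges $e = \{u,v\} \in E(G)$ such that $u$ has the smaller ID. Since each such $v_e$ has only two neighbors in $G_{sub}$, namely $u$ (with edge of length $0$) and $v$ (with edge of length $\ell_G(u,v)$), and $u$ already holds a real edge to $v$ in $G$, all local information required by $v_e$ is accessible to $u$ at no communication cost. Throughout, IDs of subdivision vertices can be formed by concatenating the two endpoint IDs, so uniqueness and the $O(\log n)$-bit size are preserved.

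First, I would show how a single \congest round of the simulated algorithm on $G_{sub}$ is carried out in $O(1)$ rounds on $G$. Messages on edges of the form $\{u,v_e\}$ are local to $u$ and cost nothing; messages on edges $\{v_e,v\}$ are forwarded along the real edge $\{u,v\}$ of $G$ in a single round (since $u$ has only one such subdivision neighbor per edge of $G$ incident to it, there is no bandwidth conflict). Therefore $T$ \congest rounds on $G_{sub}$ are simulated in $O(T)$ rounds on $G$.

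Next, I would simulate each oracle call on a weighted subgraph $H \subseteq G_{sub}$ by a single call to the corresponding oracle on a carefully defined subgraph $H' \subseteq G$, preserving the precision and distance parameters. The key structural observation is that shortest paths in $G_{sub}$ and shortest paths in $G$ are in natural correspondence: any path in $G_{sub}$ alternates between ``collapse'' edges of length $0$ and the original-weight edges, so for any pair of original vertices $x,y \in V(G) \cap V(H)$ we have $d_H(x,y) = d_{H'}(x,y)$ where $H'$ is the subgraph of $G$ consisting of the original edges $\{u,v\}$ such that both $\{u,v_e\}$ and $\{v_e,v\}$ lie in $H$. Subdivision vertices $v_e \in V(H)$ can be replaced by their lower-ID endpoint $u$ (with the same delay, since they are at distance $0$ in $H$), and any request or output concerning $v_e$ is simulated locally at $u$. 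Under this translation, $\oDist_{\eps,D}$ on $H$ is served by one call to $\oDist_{\eps,D}$ on $H'$; $\oForestAgg_D$ on a forest $F \subseteq G_{sub}$ is served by one call to $\oForestAgg_D$ on the contracted forest $F' \subseteq G$ (subdivision vertices carry their $x_{v_e}$ values, which the simulating endpoint $u$ adds to its own contribution or to the contribution assigned to the edge in $F'$); and $\oGlobalAgg$ is served by one call to $\oGlobalAgg$ on $G$ after each $u$ locally sums the $x_{v_e}$ values of the subdivision vertices it simulates.

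The main obstacle to be careful about is the oracle translation, particularly for $\oDist_{\eps,D}$: one must verify that the output forest on $H'$ faithfully encodes the desired forest on $H$ once subdivision vertices are reinserted along the tree edges. This follows because distances from the source set (after translating subdivision sources to their lower-ID endpoints and keeping their delays) are preserved exactly, and each tree edge of the output on $H'$ can be locally expanded into two edges of $G_{sub}$ incident to the appropriate subdivision vertex, with the $(1+\eps)$ guarantee and the $D$ cutoff inherited verbatim. Since the simulation uses $O(1)$ rounds on $G$ per step of the algorithm on $G_{sub}$ and one oracle call on $G$ per oracle call on $G_{sub}$, we obtain the claimed $O(T)$-step simulation with the same precision $\eps$ and distance parameter $D$.
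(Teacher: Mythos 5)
Your proposal is correct and follows essentially the same route as the paper's proof: the lower-ID endpoint simulates each subdivision vertex and relays its messages over the corresponding real edge of $G$, and each oracle call on (a subgraph of) $G_{sub}$ is translated into a single call on $G$ by contracting the subdivision vertices -- locally summing their values for the aggregation oracles, contracting the rooted forest for $\oForestAgg$, and expanding/contracting tree edges for $\oDist$. You actually supply somewhat more detail on the distance-oracle translation than the paper, which simply asserts that $\oDist_{\eps,D}$ on $G_{sub}$ can be simulated in $O(1)$ steps with one oracle call on $G$.
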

\begin{proof}
For each new vertex $w$ that subdivides the edge $\{u,v\} \in E(G)$, the node $w$ is simulated by the node $u$ (where we assume that $u$ has a smaller ID than $v$). That is, if $w$ wants to send a message to $v$, then $u$ sends that message to $v$. Similarly, if $v$ wants to send a message to $w$, then $v$ sends the message to $u$ instead.
The node $u$ also performs all the local computation that $w$ would do. 
As $w$ has exactly two neighbors, our computational model only allows it to perform $\tilde{O}(1)$ (P)RAM operations in each step and therefore each node in the original graph only needs to do additional work proportional to its degree and which can be performed with depth $\tilde{O}(1)$. Hence, each node can efficiently simulate all the new nodes that it has to simulate. It remains to discuss how to simulate the oracles in the graph $G_{sub}$ with the oracles for the original graph $G$.
The global aggregation oracle in $G_{sub}$ can be simulated in $G$ as follows: first, each node computes the sum of the values of all the nodes it simulates (including its own value), which can efficiently be done in PRAM. Then, one can use the aggregation oracle in $G$ to sum up all those sums, which is equal to the total sum of all the node values in $G_{sub}$.
For the forest aggregation oracle, let $F_{sub}$ be a rooted forest in $G_{sub}$. Now, let $F$ be the rooted forest with $V(F) = V(F_{sub}) \cap V(G)$ and which contains each edge $\{u,v\} \in E(G)$ if both $\{u,w\}$ and $\{w,v\}$ are contained in the forest $F_{sub}$. Moreover, the set of roots of $F$ is given by all the roots in $F_{sub}$ that are vertices in $G$ together with all vertices in $G$ whose parent in $F_{sub}$ is a root. Note that for every node $v \in V(F)$, $d_F(v) \leq d_{F_{sub}}(v) \leq D$. Moreover, it is easy to see that the aggregation on $F_{sub}$ can be performed in $O(1)$ steps using only aggregations on $F$ as an oracle. 
Finally, one can also simulate the distance oracle $\fO^{Dist}_{\eps,D}$ in $G_{sub}$ in $O(1)$ steps and only performing one oracle call to $\fO^{Dist}_{\eps,D}$ in $G$.
\end{proof}

It is easy to deduce the specific version for edges from the above \cref{thm:blurry_growing} and the fact that we can efficiently simulate an algorithm on the subdivided graph. 

\begin{corollary}
\label{thm:blurry_edge}
Consider the following problem on a weighted input graph $G$. The input consists of the following.

\begin{enumerate}
    \item A set $S \subseteq V(G)$.
    \item In the deterministic version, each edge $e \in E(G)$ has a weight $\mu(e)$.
    \item A distance parameter $D > 0$.
\end{enumerate}

The output is a set $S^{sup}$ with $S \subseteq S^{sup} \subseteq V(G)$. Let $E^{bad}$ denote the set consisting of those edges having exactly one endpoint in $S^{sup}$, then $S^{sup}$ satisfies

\begin{enumerate}
    \item for every $v \in S^{sup}, d_{G[S^{sup}]}(S,v) \leq D$,
    \item in the deterministic version, $\mu(E^{bad}) = O(\sum_{e \in E(G)} \mu(e) \ell(e)/D)$
    \item and in the randomized version, $Pr[e \in E^{bad}] = O(\ell(e)/D)$ for every $e \in E(G)$.
\end{enumerate}
There is an algorithm that solves the problem above in $O(\log (D) + 1)$ steps, performing all oracle calls with precision parameter $\eps = \frac{1}{\log(n)}$ and distance parameter no larger than $D$.
\end{corollary}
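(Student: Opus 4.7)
The plan is to reduce this edge variant to the vertex variant \cref{thm:blurry_growing} by working in the subdivided graph $G_{sub}$ and invoking the simulation result \cref{lem:simulation_subdivided}. The key observation is that in $G_{sub}$, each original edge $e = \{u,v\} \in E(G)$ is represented by a degree-two middle node $v_e$ with $d_{G_{sub}}(v_e, u) = 0$ and $d_{G_{sub}}(v_e, v) = \ell(e)$, so if we endow $v_e$ with preferred radius $\ell(e)$, then $\{u,v\} \subseteq B_{G_{sub}}(v_e, r(v_e))$. Hence declaring $v_e$ to be ``good'' in the vertex-version sense forces $u$ and $v$ to lie on the same side of $S^{sup}$, which is exactly what we need for $e$ not to be cut.

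Concretely, on $G_{sub}$ set $r(v_e) := \ell(e)$ for each middle node, $r(v) := 0$ for each original node $v \in V(G)$, and in the deterministic variant $\mu(v_e) := \mu(e)$ for middle nodes and $\mu(v) := 0$ for original nodes. Invoke \cref{thm:blurry_growing} on $G_{sub}$ with the same set $S$ (recall $S \subseteq V(G) \subseteq V(G_{sub})$) and the same distance parameter $D$ to obtain $(S^{sup}_{sub}, V^{bad}_{sub})$, and output $S^{sup} := S^{sup}_{sub} \cap V(G)$. By \cref{lem:simulation_subdivided}, the entire computation can be simulated on $G$ in $O(\log(D)+1)$ steps using oracle calls of precision $\eps = 1/\log n$ and distance parameter at most $D$, matching the claimed complexity.

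For Property~1, consider any $v \in S^{sup}$. Since $v \in S^{sup}_{sub}$, there is a path $\pi$ of length at most $D$ from $S$ to $v$ inside $G_{sub}[S^{sup}_{sub}]$. Any middle node $v_e$ appearing on $\pi$ has exactly two neighbors in $G_{sub}$, namely the endpoints $u,w$ of $e$, and both therefore must appear as the two neighbors of $v_e$ on $\pi$; in particular $u,w \in S^{sup}_{sub} \cap V(G) = S^{sup}$. Replacing each segment $u\text{--}v_e\text{--}w$ of $\pi$ by the edge $\{u,w\} \in E(G)$ yields a path of the same length inside $G[S^{sup}]$ from $S$ to $v$, so $d_{G[S^{sup}]}(S,v) \leq D$.

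For Properties~2 and~3, the central claim is that $e \in E^{bad}$ implies $v_e \in V^{bad}_{sub}$. Indeed, if $e = \{u,v\}$ has $u \in S^{sup}$ and $v \notin S^{sup}$, then using $u,v \in B_{G_{sub}}(v_e, r(v_e))$ we see that $B_{G_{sub}}(v_e, r(v_e))$ is neither contained in $S^{sup}_{sub}$ nor disjoint from it, which by \cref{thm:blurry_growing} forces $v_e \in V^{bad}_{sub}$. The deterministic bound then reads
\[
\mu(E^{bad}) \leq \mu(V^{bad}_{sub}) = O\!\left(\sum_{w \in V(G_{sub})} \mu(w)\, r(w)/D\right) = O\!\left(\sum_{e \in E(G)} \mu(e)\, \ell(e)/D\right),
\]
and analogously for the randomized version $\Pr[e \in E^{bad}] \leq \Pr[v_e \in V^{bad}_{sub}] = O(\ell(e)/D)$. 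The only mildly delicate step is Property~1, where one must recognize that middle nodes having degree exactly two makes the path-translation argument go through cleanly; length-zero edges only give trivial contributions to the bounds and can be excluded from the set of ``unsafe'' middle nodes without affecting the argument.
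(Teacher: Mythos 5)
Your proposal is correct and is essentially the paper's own proof: the paper likewise invokes \cref{thm:blurry_growing} on the subdivided graph with $r(v_e)=\ell(e)$, $\mu(v_e)=\mu(e)$ for middle nodes and $r=\mu=0$ on original nodes, relying on \cref{lem:simulation_subdivided} for the complexity. You merely spell out the path-translation for Property~1 and the implication $e\in E^{bad}\Rightarrow v_e\in V^{bad}_{sub}$, which the paper leaves implicit.
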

\begin{proof}
Run the algorithm of \cref{thm:blurry_growing} on the subdivided graph with input $r(v_e) = \ell(e)$ and $\mu(v_e) = \mu(e)$ for every edge $e \in E$. For all other vertices in the subdivided graph set $r(v) = 0$ and $\mu(v) = 0$.
\end{proof}

\section{A General Clustering Result}
\label{sec:main_clustering}

In this section we prove our main clustering result \cref{thm:steroids}. 
For the application to low stretch spanning trees in \cref{thm:low_stretch_spanning_tree}, it is important that the result works by essentially only having access to an approximate distance oracles and that it works even if we start with an input set of \emph{terminals} and require that each final cluster contains at least one such terminal. 
We note that the algorithm of our main result \cref{thm:steroids} uses the blurry ball growing algorithm of \cref{sec:blurry} as a subroutine.
Since our main result \cref{thm:steroids} is rather general, we start by sketching a simpler version of our result in \cref{sec:main_clustering_intro}. Afterwards, we prove \cref{thm:steroids} in \cref{sec:main_clustering_main}. Finally, we derive useful corollaries of \cref{thm:steroids} in \cref{sec:main_clustering_corollaries}. 

\subsection{Intuition and Proof Sketch}
\label{sec:main_clustering_intro}

We now sketch the proof of \cref{thm:steroids_simple} -- a corollary of the general result \cref{thm:steroids}. \cref{thm:steroids_simple} was discussed in \cref{sec:intro}, we restate it here for convenience. 

\steroidsSimple*

Our approach to prove \cref{thm:steroids_simple} is somewhat similar to the one taken in \cref{sec:unweighted_strong_clustering} to derive our strong-diameter clustering result. 
As in the proof of \cref{thm:strong_diameter_clustering}, we solve it by repeatedly solving the following problem $O(\log n)$ times: in the $i$-th iteration, we split the still active terminals $Q_i \subseteq Q$ based on the $i$-th bit in their identifier into a set of blue terminals $Q_i^\fB$ and a set of red terminals $Q_i^\fR$. 
Then, we solve the following ``separation'' problem: 

In that problem, we are given a set $Q_i$ that is $R_i$-ruling in $G_i$ for $R_i = (1+\eps/\poly\log n)^i R$. 
We want to select a subset $Q_{i+1} = Q_{i+1}^\fR \sqcup Q_{i+1}^\fB$ of $Q_i = Q_i^\fR \sqcup Q_i^\fB$ and cut a small fraction of edges in $G_i$ to get a new graph $G_{i+1}$ such that the following three properties hold:

\begin{enumerate}
    \item Separation Property: The sets $Q_{i+1}^\fR$ and $Q_{i+1}^\fB$ are disconnected in $G_{i+1}$. 
    \item Ruling Property: The set $Q_{i+1}$ is $(1+\eps/\poly\log n)R_i$-ruling in $G_{i+1}$. 
    \item Cut Property: The number of edges cut is at most $\tO\left( \frac{1}{\eps R}   \right)\cdot \sum_{e \in E(G)} \ell(e)$.
\end{enumerate}

If we can solve this partial problem, then we simply repeat it $O(\log n)$ times going bit by bit and obtain an algorithm proving \cref{thm:steroids_simple} (cf. the reduction of \cref{thm:steroids} to \cref{lem:clustering_RBsplit} in the general proof in \cref{sec:main_clustering_main}). 

Our solution that achieves the three properties above is more complicated than the proof of \cref{thm:strong_diameter_clustering} in \cref{sec:unweighted_strong_clustering}: we need to be more careful because we only have access to approximate distances and also because we want to cluster all of the vertices. The different steps of our algorithm are illustrated in \cref{fig:main_clustering}.

\begin{figure}
    \centering
    \includegraphics[width = \textwidth]{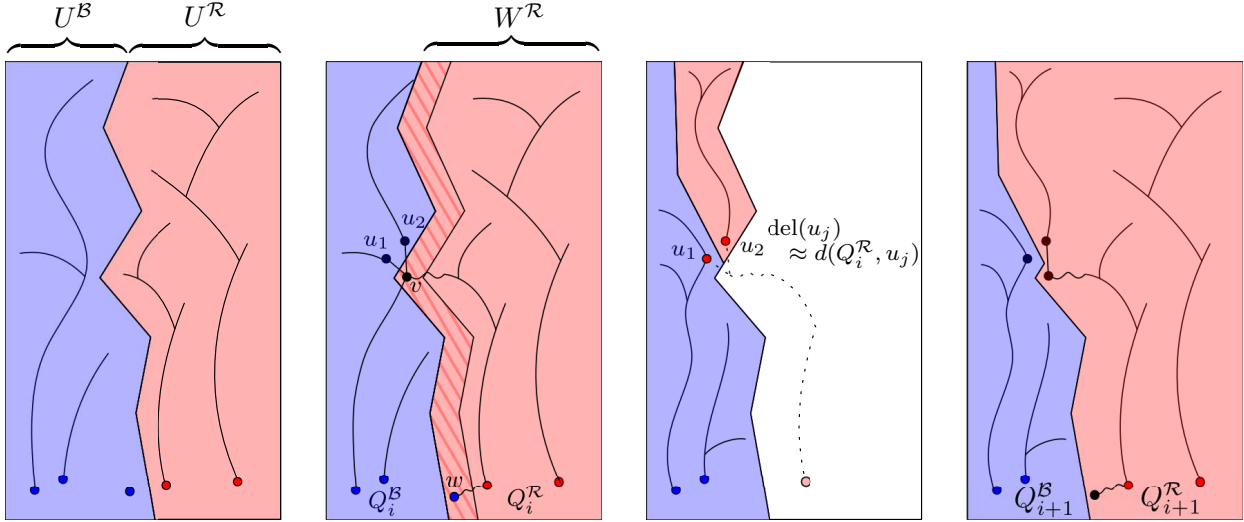}
    \caption{The four pictures illustrate the $i$-th phase of the algorithm. }
    \label{fig:main_clustering}
\end{figure}

We start by computing an approximate shortest path forest $F$ with the active terminals $Q_i$ being the set of roots.
We define the set of blue nodes $U^\fB$ and red nodes $U^\fR$ as the set of the nodes such that the root of their tree in $F$ is in $Q_i^\fB$ and $Q_i^\fR$, respectively. This step is illustrated in the first picture of \cref{fig:main_clustering}.

What happens if we cut all edges between $U^\fB$ and $U^\fR$ and define $Q_{i+1}^\fR = Q_i^\fR$ and $Q_{i+1}^\fB = Q_i^\fB$? The separation and the ruling property will be clearly satisfied. However, we do not have any guarantees on the number of edges cut, hence the cut property is not necessarily satisfied. 

To remedy this problem, we use the tool of blurry ball growing developed in \cref{sec:blurry}. 
In particular, we choose one of the two colors (which one we discuss later). Let us name it $\fA$ and define $W^\fA$ as the set of nodes returned by the blurry ball growing procedure from \cref{thm:blurry_edge} starting from $U^\fA$ with distance parameter $D = \eps/\poly\log n \,\cdot R$. This step is illustrated in the second picture of \cref{fig:main_clustering}.

Let us note that by the properties of \cref{thm:blurry_edge}, if we now delete all the edges between $W^\fA$ and $U^{{\overline{\fA}}} \setminus W^\fA$ (here, ${\overline{\fA}} \in \{\fR, \fB \} \setminus \fA$), the cut property would be satisfied. For $Q_{i+1}^\fA = Q_i^\fA$ and $Q_{i+1}^{\overline{\fA}} = Q_i^{\overline{\fA}} \setminus W^\fA$, we also get the separation property. The problem is the ruling property: for example, in the second picture of \cref{fig:main_clustering} the path from the node $u_1$ to its root in $F$ contains a node $v$ in  $W^\fA$ and is therefore destroyed. Hence, the set $Q_{i+1}^{\overline{\fA}}$ may fail to be $(1+\eps/\poly\log n)R$-ruling in the respective component of $G_{i+1}$. 

The final trick that we need is to realize that, although we are not done yet, we still made some progress, which allows us to set up a recursion: if we choose $\fA$ to be the color class such that $|U^\fA| \ge |V(G)|/2$, for at least half of the nodes, in particular those in $W^\fA$, we can now safely say that they will belong to a connected component containing a node from $Q^\fA$ in the final partition. 
For the nodes in $U^{\overline{\fA}} \setminus W^\fA$ we do not know yet, however, we can simply solve the problem there recursively. This step is illustrated in the third picture of \cref{fig:main_clustering}.

This recursion works as follows. We will recurse on the graph $G_{rec} = G[U^{\overline{\fA}} \setminus W^\fA]$. The set of terminals $Q_{rec}^{\overline{\fA}}$ in the recursive problem is simply $Q^{\overline{\fA}}_i \cap V(G_{rec})$. 
The set of terminals $Q_{rec}^\fA$ contains every node $u \in V(G_{rec})$ such that the parent of $u$ in the forest $F$ is contained in $W^\fA$. 
To reflect the fact that $u$ is not a terminal in the original problem, we introduce the notion of delays -- see \cref{sec:preliminaries} for their definition. The delay of $u$ is (roughly) set to the computed approximate distance to $Q^\fA$. This means that in the recursive call, the shortest path forest starting from $Q_{rec}^\fA$ behaves as if it started from $Q^\fA$, modulo small errors in distances of order $\eps R/\poly\log n$ that we inflicted by using approximate distances and by using blurry ball growing to obtain the set $W^\fA$. The recursive problem together with a solution is depicted in the third picture of \cref{fig:main_clustering}.

When we return from the recursion, the nodes of $G_{rec}$ are split into those belonging to terminals in $Q_{rec}^{\prime \fR}$ and $Q_{rec}^{\prime \fB}$. Note that in the example given in \cref{fig:main_clustering} we have $Q_{rec}^{\prime \fR} = \{u_2\}$.
We define $Q^{\prime \overline{\fA}} = Q_{rec}^{\prime \overline{\fA}}$ and $Q^{\prime \fA} = Q^\fA$. We also mark the nodes of $W^\fA$ as belonging to terminals in $\fA$. 
The final solution for the ``separation" problem is depicted in the fourth picture of \cref{fig:main_clustering}.

To finish the $i$-th iteration, we define $Q_{i+1}^\fR = Q^{\prime \fR}$ and $Q_{i+1}^\fB = Q^{\prime \fB}$. We cut all edges between the nodes belonging to terminals in $Q^{\prime \fR}$ and $Q^{\prime \fB}$. 
This definition of $Q_{i+1}$ preserves the separation property. 

Moreover, one can check that in every recursive step we distort the distances multiplicatively by $1+\eps/\poly\log(n)$ and additively by $\eps/\polylog(n) \cdot R$. This implies that the ruling property is satisfied.  
Similarly, the cut property is satisfied since each recursive step contributes only $\tO\left( \frac{1}{\eps R}   \right)\cdot \sum_{e \in E(G)} \ell(e)$ to the final number of edges cut.

\subsection{Main Proof}
\label{sec:main_clustering_main}

We are now ready to state and prove our main clustering result. As before in \cref{sec:blurry}, we first consider a version where the goal is to minimize the number of vertices $v$ whose ball of radius $r(v)$ is not fully contained in one of the clusters. Later, the edge cutting version follows as a simple corollary. 

Moreover, as written above, the theorem allows each terminal to be assigned a delay. Allowing these delays helps us with solving the clustering problem recursively and the delays are also convenient when we apply our clustering result to efficiently compute low-stretch spanning trees.

The final algorithm invokes the blurry ball growing procedure a total of $O(\log^2 n)$ times, each time with parameter $D$.
That's the reason why the ruling guarantee at the end contains an additive $O(\log^2 n)D$ term.

In the simplified version presented above, we set $D$ equal to $\eps R/\poly(\log n)$.

\begin{restatable}{theorem}{steroids}
\label{thm:steroids}
Consider the following problem on a weighted input graph $G$.
The input consists of the following.
\begin{enumerate}
    \item A weighted subgraph $H \subseteq G$.
    \item Each node $v \in V(H)$ has a preferred radius $r(v)$. 
    \item In the deterministic version, each node $v \in V(H)$ additionally has a weight $\mu(v)$.
    \item There is a set of center nodes $Q \subseteq V(H)$, with each center node $q \in Q$ having a delay $\del(q) \geq 0$.
    \item There is a parameter $R$ such that for every $v \in V(H)$ we have $d_{H,\del}(Q,v) \leq R$.
    \item There are two global variables $D > 0$ and $\eps \in \left[ 0,\frac{1}{\log^2(n)}\right]$. 
\end{enumerate}

The output consists of a partition $\fC$ of $H$, a set $Q' \subseteq Q$ and two sets $\Vgood \sqcup \Vbad = V(H)$ such that
\begin{enumerate}
        \item each cluster $C \in \fC$ contains exactly one node $q'_C$ in $Q'$,
        \item  for each $C \in \fC$ and $v \in C$, we have $d_{H[C],\del}(q'_C,v) \le  (1+\eps)^{O(\log^2 n)} d_{H,\del}(Q,v) + O(\log^2 n) D $,
        \item for every $v \in V^{good}$, $B_H(v,r(v)) \subseteq C$ for some $C \in \fC$ and
        \item in the deterministic version, $\mu(\Vbad) = O(\log(n) \cdot \sum_{v \in V(H)} \mu(v) r(v) / D)$ \\
        and in the randomized version, for every $v \in V(H) \colon  \P[v \in \Vbad] = O(\log(n)r(v)/D)$.
\end{enumerate}

 There is an algorithm that solves the problem above in $O(\log^3 n)$ steps, performing all oracle calls with precision parameter $\eps$ and distance parameter no larger than $(1+\eps)^{O(\log^2 n)} R + O(\log^2 n)D$.

\end{restatable}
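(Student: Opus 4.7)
The plan is to prove the theorem by a nested reduction that mirrors the informal sketch in \cref{sec:main_clustering_intro}. The outer layer reduces the full clustering problem to $O(\log n)$ invocations of a \emph{separation lemma}, one per bit position of the terminals' $O(\log n)$-bit identifiers. Labeling bits $1,\ldots,\log n$, iteration $i$ splits the currently surviving terminals $Q_i$ into $Q_i^{\fR}$ and $Q_i^{\fB}$ according to the $i$-th bit, invokes the separation lemma to cut a sparse set of edges and to prune $Q_i$ down to $Q_{i+1}\subseteq Q_i$ so that the surviving red and blue terminals lie in distinct connected components of the resulting subgraph $G_{i+1}$, while $Q_{i+1}$ remains ruling with only a small multiplicative $(1+\eps)^{O(\log n)}$ and additive $O(\log n)D$ loss. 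After $O(\log n)$ iterations, every surviving component contains at most one terminal of $Q$, which yields $Q'$ and the partition $\fC$.

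To prove the separation lemma I will recurse on $V(H)$ with depth $O(\log n)$. At each level I (a) call $\oDist_{\eps,R'}$ from the currently active $Q=Q^{\fR}\sqcup Q^{\fB}$ with its delay function $\del$ to obtain a rooted shortest-path forest $F$; (b) color each $v\in V(F)$ by the color of $\root_F(v)$, producing $U^{\fR}\sqcup U^{\fB}$; (c) choose the \textbf{majority} color $\fA$ so that $|U^{\fA}|\ge |V(H)|/2$; (d) invoke the deterministic (or randomized) blurry-ball-growing routine of \cref{thm:blurry_growing} on $U^{\fA}$ with distance parameter $D$, obtaining a superset $W^{\fA}\supseteq U^{\fA}$ that grows by at most $D$ and a set of bad vertices whose total weight is $O(\sum_v \mu(v)r(v)/D)$; (e) freeze every vertex of $W^{\fA}$ as assigned to an $\fA$-terminal and declare the cut between $W^{\fA}$ and $V(H)\setminus W^{\fA}$; and (f) recurse on $H[V(H)\setminus W^{\fA}]$ with terminal set $Q_{\text{rec}}^{\overline{\fA}}=Q^{\overline{\fA}}\setminus W^{\fA}$, inheriting the original delays, together with virtual $\fA$-terminals $Q_{\text{rec}}^{\fA}$ consisting of every $v\in V(H)\setminus W^{\fA}$ whose $F$-parent lies in $W^{\fA}$, each assigned delay equal to $\del(\root_F(v))+d_F(v)$. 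Because $|V(H)\setminus W^{\fA}|\le |V(H)\setminus U^{\fA}|\le |V(H)|/2$, the recursion halts after $O(\log n)$ levels with a trivially solvable base case.

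Accounting. Across the $O(\log n)$ outer iterations and $O(\log n)$ recursive levels, each level performs one call to $\oDist$ and one call to \cref{thm:blurry_growing} (the latter costing $O(\log D)=O(\log n)$ steps), giving the claimed $O(\log^3 n)$ total. Distance bounds compound as follows: each $\oDist$ call stretches distances by $(1+\eps)$ and each blurry-ball-growing call inflates padded radii by an additive $D$; iterated over the $O(\log^2 n)$ combined levels this yields the $(1+\eps)^{O(\log^2 n)}R+O(\log^2 n)D$ guarantee, which is also the maximal distance parameter handed to any oracle call. For property (3)/(4), I will observe that a vertex is declared bad only when some invocation of \cref{thm:blurry_growing} declares it bad, so summing over the $O(\log^2 n)$ invocations (each contributing $O(\sum_v \mu(v)r(v)/D)$ deterministically, or $O(r(v)/D)$ per vertex in expectation) gives the claimed bound after absorbing a $\log n$ factor as in the theorem statement.

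The main obstacle will be the bookkeeping around the virtual terminals and delays across recursion levels: I must show that the delay $\del(\root_F(v))+d_F(v)$ assigned to a virtual terminal $v$ faithfully reproduces the $(1+\eps)$-approximate distance from the original $\fA$-terminal to $v$ through $W^{\fA}$, so that when the recursive call returns a cluster $C\ni v$ rooted at some terminal $q'$, the reconstructed tree through $W^{\fA}$ to the true $\fA$-terminal satisfies the distance guarantee with only the per-level $(1+\eps)$ multiplicative and $D$ additive losses. I also need to verify that the cut performed at the current level, together with the recursive cut, globally separates all surviving red and blue terminals of the original problem, which follows because once a vertex is frozen into $W^{\fA}$ it is never revisited and its color is determined by $\root_F(\cdot)$.
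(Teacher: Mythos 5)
Your overall architecture matches the paper's: an outer loop over the $O(\log n)$ identifier bits, each invoking a separation subroutine that recursively freezes one color class via blurry ball growing and recurses on the remainder with virtual delayed terminals. However, there is a genuine gap in how you choose the frozen color class and, consequently, in your accounting for property (4). You pick $\fA$ as the \emph{vertex-count} majority and then bound the bad mass by summing the blurry-ball contribution $O(\sum_v \mu(v)r(v)/D)$ over all $O(\log n)$ recursion levels of each separation call and over all $O(\log n)$ outer iterations, i.e.\ over $O(\log^2 n)$ invocations. That yields only $\mu(\Vbad)=O(\log^2 n\cdot\sum_v\mu(v)r(v)/D)$ (and similarly $O(\log^2 n\cdot r(v)/D)$ per vertex in the randomized case), a full $\log n$ factor worse than the stated bound; ``absorbing a $\log n$ factor'' does not close this. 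The paper avoids the extra factor by making the recursion contract the \emph{relevant mass}: in the deterministic version $\fA$ is chosen to maximize $\sum_{u\in U^{\fA}}\mu(u)r(u)$, so the recursive instance carries at most half of $\sum_v\mu(v)r(v)$ and the per-call bad mass telescopes as a geometric series to a single $O(\sum_v\mu(v)r(v)/D)$; in the randomized version the color is chosen by a fair coin, so each vertex survives into the recursion only with probability $1/2$ and the expected bad probability telescopes to $O(r(v)/D)$ per call. Only then does the outer loop over $O(\log n)$ bits give the claimed $O(\log n)$ factor. With your vertex-count majority rule there is no control on how the weighted radius mass (or the per-vertex survival probability) decays, so the stated theorem is not established; note also that the weaker bound would propagate a lost $\log$ factor into the corollaries used for low-stretch spanning trees and $\ell_1$-embeddings.

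A secondary, more repairable issue is the delay you assign to virtual terminals, $\del(\root_F(v))+d_F(v)$, which you correctly flag as the delicate point but do not resolve. The paper uses $(1+\eps)\bigl(\del(\root_F(v))+d_F(v)\bigr)+3D$: the multiplicative and additive slack are what make the recursive input valid (the ruling bound must absorb the up-to-$D$ blurry expansion and the $(1+\eps)$ oracle error) and, more importantly, what lets one show that the virtual delay \emph{upper-bounds} the true distance in $H'$ from a real terminal through $W^{\fA}$ and the bridge edge, so that the recursive ruling guarantee can be converted back into a ruling guarantee for the output graph. Without such slack that conversion fails; since the theorem budgets an additive $O(\log^2 n)D$ and a multiplicative $(1+\eps)^{O(\log^2 n)}$ overall, inserting per-level slack of $(1+\eps)$ and $O(D)$ is affordable, but it must be built into the definition of the delays rather than left as bookkeeping to be verified later.
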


\begin{proof}

Recall that $b$ denotes the number of bits in the node-IDs of the input graph.
The algorithm computes a sequence of weighted graphs $H = H_0 \supseteq H_1 \supseteq \ldots \supseteq H_b$ with $V(H_i) = V(H)$ for $i \in \{0,1,\ldots,b\}$, a sequence of centers $Q = Q_0 \supseteq Q_1 \supseteq Q_2 \supseteq \ldots \supseteq Q_b$ and a sequence of good nodes $V = V^{good}_0 \supseteq V^{good}_1 \supseteq V^{good}_2 \supseteq \ldots \supseteq V^{good}_b$. Moreover, $V^{good} := V^{good}_b$, $V^{bad} := V(G) \setminus V^{good}$ and $V^{bad}_i := V \setminus V^{good}_i$.
The connected components of $H_b$ will be the clusters of the output partition $\fC$ and there will be exactly one node in $Q' := Q_b$ contained in each cluster of $\fC$.

The following invariants will be satisfied for every $i \in \{0,1,\ldots,b\}$:

\begin{enumerate}
    \item Separation Invariant: Let $u,v \in Q_i$ be two nodes contained in the same connected component of $H_i$. Then, the first $i$ bits of the IDs of $u$ and $v$ coincide. 
    \item Ruling Invariant: For every $v \in V(H)$, we have
    \[
    d_{H_i,\del}(Q_i,v) \le   (1+\eps)^{i \cdot O(\log n)} d_{H,\del}(Q,v) + \left(\sum_{j=1}^{i \cdot O(\log n)}(1+\eps)^j\right)3D.
    \]
    \item Good Invariant: For every $v \in V^{good}_i$, $B_{H_i}(v,r(v)) = B_{H}(v,r(v))$.
    \item Bad Invariant (Deterministic): $\mu(V^{bad}_i) = i \cdot O(\sum_{v \in V(H)} \mu(v) r(v) /D)$. \\
    \phantom{aaaaaaaaaaaaa}(Randomized): For every $v \in V(H) \colon  \P[v \in V^{bad}_i] = i \cdot O(r(v)/D)$.
\end{enumerate}

It is easy to verify that setting $H_0 = H$, $Q_0 = Q$ and $V^{good}_0 = V(H)$ results in all of the invariants being satisfied for $i = 0$. For $i = b$, the separation invariant implies that each cluster of $\fC$ (that is, a connected component of $H_b$) contains at most one node in $Q' = Q_b$. Together with the ruling invariant, this implies that every cluster $C \in \fC$ contains exactly one node $q'_C$ in $Q'$ such that for each node $v \in C$, it holds that

\begin{align*}
    d_{H[C],\del}(q'_C,v) &= d_{H_b,\del}(Q_b,v) \\
    &\leq (1+\eps)^{b \cdot O(\log n)}d_{H,\del}(Q,v) +  \left(\sum_{j=1}^{b \cdot O(\log n)}(1+\eps)^j\right)3D \\
    &= (1+\eps)^{O(\log^2 n)}d_{H,\del}(Q,v) + \left(\sum_{j=1}^{O(\log^2 n)} O(1)\right)3D \\
    &= (1+\eps)^{O(\log^2 n)}d_{H,\del}(Q,v) + O(\log^2 n)D,
\end{align*}

where we used that $\eps \leq \frac{1}{\log^2(n)}$.

Furthermore, it follows from the good invariant that for every $v \in V^{good} = V^{good}_b$, $B_H(v,r(v)) = B_{H_b}(v,r(v))$. In particular, all vertices in $B_H(v,r(v))$ are contained in the same connected component in $H_b$ and therefore $B_H(v,r(v)) \subseteq C$ for some cluster $C \in \fC$. For the deterministic version, the bad invariant implies

\[\mu(V^{bad}) = \mu(V^{bad}_b) = b \cdot O(\sum_{v \in V(H)} \mu(v)r(v)/D) 
= O\left( \log(n) \sum_{v \in V(H)} \mu(v)r(v)/D\right).
\]

For the randomized version, the bad invariant implies that for every $v \in V(H)$,

\[ \P[v \in V^{bad}] =  \P[v \in v^{bad}_b] = b \cdot O(r(v)/D) = O(\log(n) r(v)/D).\]

Hence, we output a solution satisfying all the criteria.

Let $i \in \{0,1,\ldots,b-1\}$. It remains to describe how to compute $(H_{i+1},Q_{i+1},V^{good}_{i+1})$ given 
$(H_i,Q_i,V^{good}_i)$ while preserving the invariants.

In each phase, we split $Q_i = Q_i^\fR \sqcup Q_i^\fB$ according to the $i$-th bit in the unique identifier of each node. 
Then, we apply \cref{lem:clustering_RBsplit} with $H_\Lref{lem:clustering_RBsplit} = H_i, Q_\Lref{lem:clustering_RBsplit}^\fR = Q_i^\fR, Q_\Lref{lem:clustering_RBsplit}^\fB = Q_i^\fB$ and $ R_\Lref{lem:clustering_RBsplit} = (1+\eps)^{O(\log^2 n)} R + O(\log^2 n)D$. In the randomized version we set the recursion depth parameter $i_\Lref{lem:clustering_RBsplit} = \lceil \log_2(D)\rceil$.

Note that the input is valid as for every $v \in V(H_\Lref{lem:clustering_RBsplit})$ we have

\[d_{H_\Lref{lem:clustering_RBsplit},\del}(Q_\Lref{lem:clustering_RBsplit},v) = d_{H_i,\del}(Q_i,v) \leq (1+\eps)^{i \cdot O(\log n)} d_{H,\del}(Q,v) + \left(\sum_{j=1}^{i \cdot O(\log n)}(1+\eps)^j\right)3D \leq R_\Lref{lem:clustering_RBsplit},\]

where the first inequality follows from the ruling invariant.
Finally, we set
$H_{i+1} = H'_\Lref{lem:clustering_RBsplit}, Q_{i+1} = Q'^\fR_\Lref{lem:clustering_RBsplit} \sqcup Q'^\fB_\Lref{lem:clustering_RBsplit}$ and $V^{good}_{i+1} = V^{good}_i \cap (V^{good})_\Lref{lem:clustering_RBsplit}$.

\begin{claim}
Computing $(H_{i+1},Q_{i+1},V_{good,i+1})$ from $(H_{i},Q_{i},V_{good,i})$ as written above preserves the four invariants.
\end{claim}
\begin{proof}
We start with the separation invariant. Let $u,v \in Q_{i+1}$. Assume that $u$ and $v$ are in the same connected component of $H_{i+1}$. In particular, this implies that $u$ and $v$ are also in the same connected component of $H_i$ and therefore the separation invariant for $i$ implies that the first $i$ bits of the IDs of $u$ and $v$ coincide. Moreover, the separation property of \cref{lem:clustering_RBsplit} together with our assumption that $u$ and $v$ are in the same connected component of $H_{i+1} := H'_\Lref{lem:clustering_RBsplit}$ implies that the first $i+1$ bits of the IDs of $u$ and $v$ coincide, as desired.
Next, we check that the ruling invariant is satisfied.
The ruling property of \cref{lem:clustering_RBsplit} implies that for every $v \in V(H_{i+1})$ we have

\begin{align*}
    d_{H_{i+1},\del}(Q_{i+1},v) &= d_{H'_\Lref{lem:clustering_RBsplit},\del}(Q'^\fR_\Lref{lem:clustering_RBsplit} \cup Q'^\fB_\Lref{lem:clustering_RBsplit},v) \\
    &\leq  (1+\eps)^{2(i_\Lref{lem:clustering_RBsplit}+1)} d_{H_\Lref{lem:clustering_RBsplit},\del}(Q^\fR_\Lref{lem:clustering_RBsplit} \cup Q^\fB_\Lref{lem:clustering_RBsplit},v)  + \left( \sum_{j=1}^{i_\Lref{lem:clustering_RBsplit}} (1+\eps)^{2j} \right) 3D \\
    &\leq (1+\eps)^{O(\log n)}d_{H_i,\del}(Q_i,v) + \left( \sum_{j=1}^{O(\log n)} (1+\eps)^{j} \right) 3D \\
        &\leq (1+\eps)^{O(\log n)} \left((1 + \eps)^{i \cdot O(\log n)} d_{H,\del}(Q,v) + \left(\sum_{j=1}^{i \cdot O(\log n)}(1+\eps)^j\right)3D\right) \\
        &+ \left( \sum_{j=1}^{O(\log n)} (1+\eps)^{j} \right) 3D \\
    &= (1 + \eps)^{(i+1) \cdot O(\log n)} d_{H,\del}(Q,v) + \left(\sum_{j=1}^{(i+1) \cdot O(\log n)}(1+\eps)^j\right)3D.
\end{align*}

Hence, the ruling property is preserved.
To check that the good invariant is preserved, consider an arbitrary node $v \in V^{good}_{i+1}$. We have to show that $B_H(v,r(v))$ is fully contained in one of the connected components of $H_{i+1}$.

First, $v \in V^{good}_{i+1}$ directly implies $v \in V^{good}_i$ and therefore the good invariant implies that $B_{H_i}(v,r(v)) = B_{H}(v,r(v))$. Second, $v \in (V^{good})_\Lref{lem:clustering_RBsplit}$ together with the good property of \cref{lem:clustering_RBsplit} implies that
\[B_{H_{i+1}}(v,r(v)) = B_{H'_\Lref{lem:clustering_RBsplit}}(v,r(v)) = B_{H_\Lref{lem:clustering_RBsplit}}(v,r(v)) = B_{H_i}(v,r(v)) .\]
Hence, $B_{H_{i+1}}(v,r(v)) = B_{H}(v,r(v))$, as needed.

It remains to check the bad property. For the deterministic version, we have

\begin{align*}
   \mu(V^{bad}_{i+1}) &\leq \mu(V^{bad}_i) + \mu(V^{bad}_{\Lref{lem:clustering_RBsplit}}) \\
   &\leq i \cdot O \left( \sum_{v \in V(H)} \mu(v) r(v) /D \right) + O \left(\sum_{v \in V(H)} \mu(v)r(v)/D \right) \\
   &\leq (i+1)\cdot O \left( \sum_{v \in V(H)} \mu(v) r(v) /D \right).
\end{align*}

For the randomized version, we have for every $v \in V(H)$

\begin{align*}
     \P[v \in V^{bad}_{i+1}] &\leq  \P[v \in V^{bad}_i] +  \P[v \in V^{bad}_{\Lref{lem:clustering_RBsplit}}] \\
    &\leq i \cdot O(r(v)/D) + \left(\frac{1}{2^{\lceil \log(D) \rceil}} + \frac{1}{D}\right) O(r(v)) \\
    &\leq (i+1) \cdot O(r(v)/D),
\end{align*}

as needed.

\end{proof}
Each of the $O(\log n)$ invocations of the algorithm of \cref{lem:clustering_RBsplit} takes $O(\log^2 n)$ steps, and all oracle calls are performed with precision parameter $\eps$ and distance parameter no larger than $(1+\eps)^{O(\log n)} R_{\Lref{lem:clustering_RBsplit}} + \left( \sum_{j=1}^{O(\log n)} (1+\eps)^{2j} \right) 3D = (1+\eps)^{O(\log^2 n)} R + O(\log^2 n)D$. This finishes the proof of \cref{thm:steroids}.
\end{proof}

\begin{lemma}
\label{lem:clustering_RBsplit}
Consider the following problem on a weighted input graph $G$. The input consists of the following.

\begin{enumerate}
    \item A weighted subgraph $H \subseteq G$.
    \item Each node $v \in V(H)$ has a preferred radius $r(v)$.
    \item In the deterministic version, each node $v \in V(H)$ additionally has a weight $\mu(v)$.
    \item There is a recursion depth parameter $i \in \mathbb{N}_0$. In the randomized version $i$ is part of the input. In the deterministic version we define $i =  1 + \lfloor \log(\sum_{v \in V(H)} \mu(v)r(v))\rfloor$ if $\sum_{v \in V(H)} \mu(v)r(v) > 0$ and $i = 0$ otherwise.
    \item There are sets $Q = Q^\fR \sqcup Q^\fB \subseteq V(H)$ with each center $q \in Q$ having a delay $\del(q) \geq 0$. 
    \item There is a parameter $R$ such that for every $v \in V(H)$ we have $d_{H,\del}(Q,v) \leq R$.
    \item There are two global variables $D > 0$ and $\eps \in \left[ 0,\frac{1}{\log(n)} \right]$.
\end{enumerate}

The output consists of two sets $Q'^\fR \subseteq Q^\fR$ and $Q'^\fB \subseteq Q^\fB$, a weighted graph $H' \subseteq H$ with $V(H') = V(H)$ together with a partition $V(H) = V^{good} \sqcup V^{bad}$ such that 

\begin{enumerate}
    \item Separation Property: For each connected component $C$ of $H'$, $Q'^\fR \cap C = \emptyset$ or $Q'^\fB \cap C = \emptyset$.

    \item Ruling Property: For every $v \in V(H')$, $d_{H',\del}(Q'^\fR \cup Q'^\fB,v) \le   (1+\eps)^{2(i+1)} d_{H,\del}(Q,v) + \left( \sum_{j=1}^i (1+\eps)^{2j} \right) 3D$. 
    \item Good Property: For every $v \in V^{good}$, $B_{H'}(v,r(v)) = B_H(v,r(v))$.
    \item Bad Property, Deterministic Version: $\mu(V^{bad}) \leq \left(1 - \frac{1}{2^i}\right) O(\sum_{v \in V(H)} \mu(v)r(v)/D)$. \\
    \phantom{aaaaaaaaaaaa} Randomized Version: For every $v \in V(H)$, $ \P[v \in V^{bad}] = (\frac{1}{2^i} + \frac{1-2^{-i}}{D})O(r(v))$. 
\end{enumerate}
 
 There is an algorithm that solves the problem above in $O((i+1) (\log (D) + 1))$ steps, performing all oracle calls with precision parameter $\eps$ and distance parameter no larger than $(1+\eps)^{2i} R + \left( \sum_{j=1}^i (1+\eps)^{2j} \right) 3D$.

 \end{lemma}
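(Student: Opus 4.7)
The plan is to prove \cref{lem:clustering_RBsplit} by induction on the depth parameter $i$, with each inductive step performing one approximate shortest-path forest computation via $\oDist$, one blurry-ball-growing call via \cref{thm:blurry_growing}, and one recursive invocation with depth $i-1$. In the base case $i = 0$ (covering $\sum_v \mu(v) r(v) = 0$ deterministically, and $i = 0$ supplied by the caller randomly) I would compute $F \gets \oDist_{\eps, R}(Q)$, color every vertex $v$ by the color of its root $\root_F(v)$, delete every edge of $H$ crossing the color partition to form $H'$, set $Q'^{\fR} := Q^{\fR}$, $Q'^{\fB} := Q^{\fB}$, and declare $V^{bad} := \{v : r(v) > 0\}$. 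Separation holds since we cut all cross-color edges; the ruling bound $(1+\eps) d_{H,\del}(Q,v) \le (1+\eps)^2 d_{H,\del}(Q,v)$ is given by the oracle (the forest, a subgraph of $H'$, connects each $v$ to its same-color root); good is vacuous because every $v \in V^{good}$ has $r(v) = 0$; deterministically $\mu(V^{bad}) = 0$ because every $v \in V^{bad}$ has $\mu(v) = 0$; and randomly $\P[v \in V^{bad}] = \mathbbm{1}[r(v) > 0] \le O(r(v))$ when positive radii are at least $1$.

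For the inductive step $i \ge 1$, I compute the same forest $F \gets \oDist_{\eps, R}(Q)$, partition $V(H) = U^{\fR} \sqcup U^{\fB}$ by the color of each vertex's root, and pick a ``winning'' color $\fA$: deterministically the one maximizing $\sum_{v \in U^{\fA}} \mu(v) r(v)$, so that $\sum_{v \in U^{\bar{\fA}}} \mu(v) r(v) \le \tfrac12 \sum_v \mu(v) r(v)$; randomly by a fair coin. Then I invoke \cref{thm:blurry_growing} with seed $U^{\fA}$, distance parameter $D$, radii $r$, and (deterministically) weights $\mu$, obtaining $W^{\fA} \supseteq U^{\fA}$ with strong-diameter bound $d_{H[W^{\fA}]}(U^{\fA}, v) \le D$ for every $v \in W^{\fA}$, along with a bad set $V^{bad}_{blur}$ of controlled size. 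The recursion runs on $H_{rec} := H[V(H) \setminus W^{\fA}]$ with depth $i-1$, with $Q_{rec}^{\bar{\fA}} := Q^{\bar{\fA}} \cap V(H_{rec})$ inheriting original delays, and $Q_{rec}^{\fA}$ consisting of the boundary vertices of $V(H_{rec})$ adjacent (in $H$) to $W^{\fA}$, each equipped with a delay $\del_{rec}(u) := \min_{w \in W^{\fA} \cap N_H(u)} \bigl(\del(\root_F(w)) + d_F(w) + \ell_H(\{u,w\})\bigr)$ approximating the $\del$-distance from $Q^{\fA}$ to $u$ through $W^{\fA}$. The final output is $Q'^{\fA} := Q^{\fA}$, $Q'^{\bar{\fA}} := Q_{rec}'^{\bar{\fA}}$, $H' := H'_{rec} \cup H[W^{\fA}]$ with every edge of $H$ crossing $\partial W^{\fA}$ removed, and $V^{bad} := V^{bad}_{blur} \cup V^{bad}_{rec}$.

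Three of the four output properties verify quickly. Separation holds because $H'$ has no edges crossing $\partial W^{\fA}$, so $Q'^{\fA} \subseteq W^{\fA}$ and $Q'^{\bar{\fA}} \subseteq V(H_{rec})$ lie in disjoint components, and inside $H'_{rec}$ the inductive hypothesis takes over. The good property follows from \cref{thm:blurry_growing}: for every $v \notin V^{bad}_{blur}$ the ball $B_H(v, r(v))$ is either fully inside $W^{\fA}$ (preserved in $H[W^{\fA}] \subseteq H'$) or fully outside $W^{\fA}$ (any $H$-shortest path between vertices of the ball cannot touch $W^{\fA}$, so $B_H(v, r(v)) = B_{H_{rec}}(v, r(v))$, and induction for $v \in V^{good}_{rec}$ preserves it). The bad bounds telescope: deterministically, combining $O(\sum_v \mu(v) r(v)/D)$ from blurry with the inductive $(1 - 2^{-(i-1)}) O(\sum_{v \in V(H_{rec})} \mu(v) r(v)/D) \le (1 - 2^{-(i-1)}) O(\tfrac12 \sum_v \mu(v) r(v)/D)$ yields at most $(1 - 2^{-i}) O(\sum_v \mu(v) r(v)/D)$; randomly, $\P[v \in V^{bad}]$ is at most $O(r(v)/D) + \tfrac12 (\tfrac{1}{2^{i-1}} + \tfrac{1 - 2^{-(i-1)}}{D}) O(r(v))$ because with probability $\tfrac12$ the color of $v$ is $\fA$ so $v$ skips recursion, and this simplifies to $(\tfrac{1}{2^i} + \tfrac{1 - 2^{-i}}{D}) O(r(v))$.

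The main obstacle, and the technical heart of the proof, is the ruling property. The delay $\del_{rec}(u)$ assigned to each boundary vertex $u \in Q_{rec}^{\fA}$ overestimates the true $\del$-distance from $Q^{\fA}$ to $u$ by a multiplicative $(1+\eps)$ (from the oracle-computed $d_F$) and by an additive contribution bounded by $3D$ (blurry slack $D$, one hop $\ell_H(u, w)$, and the distance $d_F(w)$ from the blurry-added vertex $w$ back to its $Q^{\fA}$-root), yielding $d_{H_{rec}, \del_{rec}}(Q_{rec}, v) \le (1+\eps) d_{H, \del}(Q, v) + 3D$ for every $v \in V(H_{rec})$. Feeding this into the inductive ruling bound $(1+\eps)^{2i} d_{H_{rec}, \del_{rec}}(Q_{rec}, v) + \sum_{j=1}^{i-1}(1+\eps)^{2j} 3D$ inside $H'_{rec}$ produces exactly $(1+\eps)^{2(i+1)} d_{H, \del}(Q, v) + \sum_{j=1}^{i}(1+\eps)^{2j} 3D$. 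The symmetric case $v \in W^{\fA}$ is handled by concatenating a blurry path of length $\le D$ inside $H[W^{\fA}]$ with an $F$-path of length $\le (1+\eps)(d_{H, \del}(Q, v) + D) - \del(q)$ to reach some $q \in Q^{\fA} = Q'^{\fA}$. Carrying out this telescoping triangle-inequality bookkeeping cleanly — tracking how the two $(1+\eps)$ factors per recursion level (one from $F$, one from the oracle internally to the recursion) and the additive $3D$ slack per level aggregate precisely into the stated formulas — is the subtle calculation I expect to be the main obstacle.
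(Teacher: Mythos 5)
Your high-level plan is the same as the paper's (color the $\oDist$-forest by root color, blurry-grow the heavier color class $U^{\fA}$, recurse on the complement with ``virtual'' $\fA$-centers carrying delays, induct on $i$), but three concrete choices in your construction break the stated guarantees. First, setting $Q'^{\fA} := Q^{\fA}$ (and, in the base case, $Q'^{\fR} := Q^{\fR}$, $Q'^{\fB} := Q^{\fB}$) violates the separation property: because centers carry delays, a center $q \in Q^{\fA}$ can be rooted in $F$ under a $\overline{\fA}$-center and hence lie in $U^{\overline{\fA}}$, outside $W^{\fA}$; it then survives in $H'_{rec}$ in a component that may also contain a selected $\overline{\fA}$-center. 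This is also internally inconsistent with your separation argument, which asserts $Q'^{\fA} \subseteq W^{\fA}$. The paper avoids this by taking $Q'^{\fA} := Q^{\fA} \cap U^{\fA}$ (and analogously in the base case).

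Second, and more seriously, you delete \emph{every} edge of $H$ crossing $\partial W^{\fA}$. The recursion's ruling guarantee only says that each $v \in V(H_{rec})$ is close, in $H'_{rec}$ with respect to $\del_{rec}$, to some recursive center, which may be a \emph{virtual} center in $Q_{rec}'^{\fA}$ — a boundary vertex, not an element of $Q$. To turn that into the claimed bound on $d_{H',\del}(Q'^{\fR} \cup Q'^{\fB}, v)$ you must be able to continue the path from that virtual center across the boundary into $W^{\fA}$ and on to a real selected $\fA$-center; the paper therefore keeps the bridge edges $\{v,p(v)\}$ for $v \in Q_{rec}'^{\fA}$ in $E(H')$. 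With all crossing edges removed, such components contain no node of $Q'$ at all, so the ruling distance is infinite and the property fails outright. Third, even after restoring bridge edges, your delay $\del_{rec}(u) = \min_{w \in W^{\fA} \cap N_H(u)}\bigl(\del(\root_F(w)) + d_F(w) + \ell_H(u,w)\bigr)$ handles the ``validity'' direction ($d_{H_{rec},\del_{rec}}(Q_{rec},v) \lesssim (1+\eps)\,d_{H,\del}(Q,v)$) but not the converse direction needed for the conversion, namely $\del_{rec}(q'_{rec}) \ge d_{H',\del}(Q'^{\fR} \cup Q'^{\fB}, q'_{rec})$: a neighbor $w \in W^{\fA} \setminus U^{\fA}$ has its $F$-root in $Q^{\overline{\fA}}$ (possibly deselected), and the realizable $H'$-path from a selected $\fA$-center to $w$ must take the blurry detour of length up to $D$ plus the $(1+\eps)$ oracle slack, which your delay does not include. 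This is why the paper inflates the recursive delays uniformly to $(1+\eps)(\del(\root_F(\cdot)) + d_F(\cdot)) + 3D$ (also for the $\overline{\fA}$-side recursive centers); the extra $(1+\eps)$ per level and the $+3D$ per level are exactly what produce the exponents $(1+\eps)^{2(i+1)}$ and the sum $\sum_{j=1}^{i}(1+\eps)^{2j}\,3D$ in the statement. The step you defer as ``telescoping bookkeeping'' is precisely this conversion, and with your construction it is false rather than merely unproven; the rest of your argument (good and bad properties, color choice, depth reduction) matches the paper and is fine.
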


\begin{proof}

We will first consider the special, base case with $i = 0$. 
Then, we analyse the general case. 

\paragraph{Base Case: }
We start with the base case $i = 0$.

Let $F$ be the weighted and rooted forest returned by $\oDist_{\eps,R}(H,Q,\del)$. Note that we are allowed to perform this oracle call as the distance parameter $R$ satisfies $R \leq (1+\eps)^0 R + \left( \sum_{j=1}^0 (1+\eps)^{2j} \right) 3D$. Moreover, as for every $v \in V(H)$, $d_{H,\del}(Q,v) \leq R$, the second property of the distance oracle ensures that $V(F) = V(H)$.

For $\fA \in \{\fR,\fB\}$, we define 

\[U^\fA = \{v \in V(F) = V(H) \colon \root_F(v) \in Q^\fA\}.\]
 
Note that $V(H) = U^\fR \sqcup U^\fB$ as for every $v \in V(F)$, $\root_F(v) \in Q = Q^\fR \sqcup Q^\fB$. 
The output now looks as follows:
We set $Q'^\fR = Q^\fR \cap U^\fR$ and $Q'^\fB = Q^\fB \cap U^\fB$. We obtain the graph $H'$ from $H$ by deleting every edge with one endpoint in $U^\fB$ and the other endpoint in $U^\fR$. We set $V^{bad} = \{v \in V(H) \colon r(v) > 0\}$ and $V^{good} = V(H) \setminus V^{bad}$.

We now verify that all the four properties from the theorem statement are satisfied.

\paragraph{Separation property} Let $C$ be a connected component of $H'$. 
As we obtained $H'$ from $H$ by deleting every edge with one endpoint in $U^\fB$ and one endpoint in $U^\fR$, we directly get that $C \subseteq U^\fB$ or $C \subseteq U^\fR$. If $C \subseteq U^\fB$, then $C \cap Q'^\fR = \emptyset$ and if $C \subseteq U^\fR$, then $C \cap Q'^\fB = \emptyset$. 

\paragraph{Ruling property} Let $v \in V(H')$. From the way we defined $H'$, it directly follows that every edge in the forest $F$ is also contained in $H'$ i.e., $E(F) \subseteq E(H')$. As $\root_F(v) \in Q'^\fR \cup Q'^\fB$, it therefore follows that 
\[d_{H',\del}(Q'^\fR \cup Q'^\fB,v) \leq d_{H',\del}(\root_F(v),v) \leq \del(\root_F(v)) + d_F(v).\]

The first property of the distance oracle directly states that $\del(\root_F(v)) + d_F(v) \leq (1+\eps) d_{H,\del}(v)$ and therefore combining the inequalities implies

\[
d_{H',\del}(Q'^\fR \cup Q'^\fB,v) \leq (1+\eps) d_{H,\del}(v) \leq (1+\eps)^{2(0+1)} d_{H,\del}(Q,v) + \left( \sum_{j=1}^0 (1+\eps)^{2j} \right) 3D,
\]

as needed.

\paragraph{Good property} We have $V^{good} = \{v \in V(H) \colon r(v) = 0\}$ and for every node $v$ with $r(v) = 0$ it trivially holds that $B_{H'}(v,r(v)) = B_H(v,r(v))$. 

\paragraph{Bad property} For the deterministic case, note that by definition $i = 0$ implies $\sum_{v \in V(H)} \mu(v) r(v) = 0$. Hence, for every $v \in V(H)$ with $r(v) > 0$, we have $\mu(v) = 0$. As $V^{bad} = \{v \in V(H) \colon r(v) > 0\}$, we therefore have $\mu(V^{bad}) = 0 = (1 - \frac{1}{2^0})O(\sum_{v \in V(H)} \mu(v)r(v)/D)$.
For the randomized case, $v \in V^{bad}$ implies $r(v) > 0$, but then $P[v \in V^{bad}] = 1 \leq (\frac{1}{2^0} + \frac{1-2^{-0}}{D})O(r(v))$.

\paragraph{Recursive Step}
Now, assume $i > 0$.
We compute the sets $U^\fR$ and $U^\fB$ in the same way as in the base case.
The recursive step is either a \emph{red step} or a \emph{blue step}. In the randomized version, we flip a fair coin to decide whether the recursive step is a red step or a blue step.
In the deterministic version, the recursive step is a red step if $\sum_{u \in U^\fR} \mu(u)r(u) \geq \sum_{u \in U^\fB} \mu(u)r(u)$ and otherwise it is a blue step.
Set $\fA = \fR$ and $\bar{\fA} = \fB$ if the step is a red step and otherwise set $\fA = \fB$ and $\bar{\fA} = \fR$.  

We invoke the deterministic/randomized version of \cref{thm:blurry_growing} with input $G_\Tref{thm:blurry_growing} = H, S_\Tref{thm:blurry_growing} = U^\fA$ and $D_\Tref{thm:blurry_growing} = D$. After the invocation, we set  $W^\fA = S^{sup}_\Tref{thm:blurry_growing}$ and $\Vbad_i = (V^{bad})_\Tref{thm:blurry_growing}$.

We next perform a recursive call with inputs $H_{rec}, i_{rec}, Q_{rec} = Q^\fR_{rec} \sqcup Q^\fB_{rec}$, $\del_{rec}$ and $R_{rec}$, which are defined below. (The preferred radii, weights, and parameters $D$ and $\eps$ will be the same)  

We set $H_{rec} = H[V(H) \setminus W^\fA]$.
For the randomized version, we set $i_{rec} = i - 1$. For the deterministic version, it follows from the way we decide whether it is a red/blue step that
\[\sum_{v \in V(H_{rec})} \mu(v) r(v) = \sum_{v \in V(H)} \mu(v)r(v) - \sum_{v \in W^\fA} \mu(v)r(v)  \leq \sum_{v \in V(H)} \mu(v)r(v) - \sum_{v \in U^\fA} \mu(v)r(v) \leq \frac{1}{2}\sum_{v \in V(H)} \mu(v) r(v).\]

As $i_{rec} = 1 + \lceil \log(\sum_{v \in V(H_{rec})} \mu(v)r(v))\rceil$ if $\sum_{v \in V(H_{rec})} \mu(v) r(v) > 0$ and $i_{rec} = 0$ otherwise, it therefore follows by a simple case distinction that $i_{rec} = i-1$ in the deterministic version as well. 
For $\fZ \in \{\fR,\fB\}$, we define
\[
Q_{rec}^{\fA} = \{v \in V(H_{rec}) \colon \text{$v$ has a parent $p(v)$ in $F$, and $p(v) \in W^\fA$}\}
\]
and 
\[Q_{rec}^{\bar{\fA}} = \{v \in V(H_{rec}) \colon \text{$v$ is a root in $F$}\}.\]
For every $v \in Q_{rec} := Q_{rec}^\fR \sqcup Q_{rec}^\fB$, we define 
\[
\del_{rec}(v) = (1+\eps)(\del(\root_F(v)) + d_F(v)) + 3D.
\]
Finally, we set \[R_{rec} = (1+\eps)^2 R + 3D.\]

We have to verify that for every $v \in V(H_{rec})$, $d_{H_{rec},\del_{rec}}(Q_{rec},v) \leq R_{rec}$ , as otherwise the input is invalid. We actually show a stronger property, namely that for every $v \in V(H_{rec})$, 
\[d_{H_{rec},\del_{rec}}(Q_{rec},v) \leq (1+\eps)^2 d_{H,\del}(Q,v) + 3D \leq (1+\eps)^2 R + 3D.\]

To that end, we consider a simple case distinction based on whether the entire path from $v$ to $\root_F(v)$ in $F$ is contained in $H_{rec}$ or not. If yes, then $d_{H_{rec}}(\root_F(v),v) \leq d_F(v)$ and as $\del_{rec}(\root_F(v)) = (1+\eps)\del(\root_F(v)) + 3D$, we get

\begin{align*}
  d_{H_{rec},\del_{rec}}(Q_{rec},v) &\leq \del_{rec}(\root_F(v)) + \d_F(v) \\
  &= (1+\eps)\del(\root_F(v)) + 3D + d_F(v) \\
  &\leq (1+\eps)(\del(\root_F(v)) + d_F(v)) + 3D  \\
  &\leq (1+\eps)^2 d_{H,\del}(Q,v) + 3D.
\end{align*}
It remains to consider the case that the path from $v$ to $\root_F(v)$ in $F$ is not entirely contained in $H_{rec}$. Starting from $v$, let $y$ be the first node such that $y$ is contained in $H_{rec}$ but $y$'s parent in $F$ is not.
By definition, $y \in Q^\fA_{rec}$ and $\del_{rec}(y) = (1+\eps)(\del(\root_F(y)) + d_F(y)) + 3D$. We get

\begin{align}
\label{eq:bla}
\nonumber d_{H_{rec},\del_{rec}}(Q_{rec},v) &\leq d_{H_{rec}}(v,y) + \del_{rec}(y) \\
\nonumber &= d_{H_{rec}}(v,y) + (1+\eps)(\del(\root_F(y)) + d_F(y)) + 3D \\
\nonumber &\leq  (1+\eps)(\del(\root_F(v)) + d_F(y) + d_{H_{rec}}(v,y)) + 3D \\
\nonumber &\leq (1+\eps)(\del(\root_F(v)) + d_F(v)) + 3D \\
 &\leq (1+\eps)^2 d_{H,\del}(Q,v) + 3D.
\end{align}

\end{proof}

We verified that the provided input is correct. We denote with $Q^{'\fR}_{rec}, Q^{'\fB}_{rec}, H'_{rec},V^{good}_{rec}$ and $V^{bad}_{rec}$ the output produced by the recursive call.

We now describe the final output. We set

\[Q'^{\fA} := Q^\fA \cap U^\fA \subseteq Q^{\fA}\]

and

\[Q'^{\bar{\fA}} := Q'^{\bar{\fA}}_{rec} \subseteq Q^{\bar{\fA}}_{rec}  \subseteq Q^{\bar{\fA}}.\]

Next, we define the output graph $H'$. To that end, we first define the edge set

\[E_{bridge} = \{\{v,p(v)\} \in E(H) \colon \text{$v \in Q^{'\fA}_{rec}$, $p(v)$ is the parent of $v$ in $F$}\}.\]

Note that each edge in $E_{bridge}$ has one endpoint in $V(H_{rec}) = V(H) \setminus W^\fA$ and one endpoint in $W^\fA$.

We now define

\[E(H') = E(H'_{rec}) \sqcup E(H[W^\fA]) \sqcup E_{bridge}.\]

Finally, we set $V^{bad} = V^{bad}_{rec} \cup V^{bad}_i$ and $V^{good} = V(H) \setminus V^{bad}$.

We next show that the output satisfies all the required properties.

\paragraph{Separation property} Let $C$ be a connected component of $H'$. We have to show that $Q'^\fR \cap C = \emptyset$ or $Q'^\fB \cap C = \emptyset$. For the sake of contradiction, assume there exists $q'^\fA \in Q'^\fA \cap C$ and $q'^{\bar{\fA}} \in Q'^{\bar{\fA}} \cap C$. 
Consider an arbitrary path $P$ from $q'^\fA$ to $q'^{\bar{\fA}}$ in $C$.

As $q'^{\fA} \in W^\fA$ and $q'^{\bar{\fA}} \in V(H_{rec}) = V(H) \setminus W^\fA$, the path $P$ contains at least one edge in $E_{bridge}$. Let $e = \{v,p(v)\}$ be the first edge in $E_{bridge}$ that one encounters on the path $P$ starting from $q'^{\bar{\fA}}$.

We have, $q'^{\bar{\fA}} \in Q'^{\bar{\fA}} = Q'^{\bar{\fA}}_{rec}$ and $v \in Q'^\fA_{rec}$. Moreover, $q'^{\bar{\fA}}$ and $v$ are in the same connected component in $H'_{rec}$, a contradiction with the separation property of the recursive call.

\paragraph{Ruling property} Let $v \in V(H)$. We have to show that 
\[d_{H',\del}(Q'^\fR \cup Q'^\fB,v)  \leq (1+\eps)^{2(i+1)} d_{H,\del}(Q,v) + \left( \sum_{j=1}^i (1+\eps)^{2j} \right) 3D.\]

First, we consider the case $v \in W^{\fA}$. The guarantees of \cref{thm:blurry_growing} implies the existence of a vertex $u \in U^\fA$ with $d_{H[W^\fA]}(v,u) \leq D$. We have

\begin{align*}
    d_{H',\del}(Q'^\fR \cup Q'^\fB,v) &\leq d_{H',\del}(\root_F(u),v) \\
    &\leq d_{H',\del}(\root_F(u),u) + D \\
    &\leq \del(\root_F(u)) + d_F(u) + D \\
    &\leq (1+\eps)d_{H,\del}(Q,u) + D \\
    &\leq (1+\eps)(d_{H,\del}(Q,v) + D) + D \\
    &\leq (1+\eps)d_{H,\del}(Q,v) + 3D \\
    &\leq (1+\eps)^{2(i+1)}d_{H,\del}(Q,v) + \left( \sum_{j=1}^{i} (1+\eps)^{2j} \right)3D.
\end{align*}

It remains to consider the case $v \in V(H_{rec})$. Recall that we already have shown in \cref{eq:bla} that

\[d_{H_{rec},\del_{rec}}(Q_{rec},v) \leq (1+\eps)^2 d_{H,\del}(Q,v) + 3D.\]

By the ruling property of the recursive call, we therefore get

\begin{align*}
   d_{H'_{rec},\del_{rec}}(Q_{rec}'^\fR \cup Q_{rec}'^\fB,v)  &\leq (1+\eps)^{2(i_{rec}+1)} d_{H_{rec},\del_{rec}}(Q_{rec},v) + \left( \sum_{j=1}^{i_{rec}} (1+\eps)^{2j} \right) 3D \\
   &\leq (1+\eps)^{2(i_{rec}+1)}((1+\eps)^2 d_{H,\del}(Q,v) + 3D) + \left( \sum_{j=1}^{i_{rec} + 1} (1+\eps)^{2j} \right) 3D \\
     &\leq (1+\eps)^{2(i+1)}d_{H,\del}(Q,v) + \left( \sum_{j=1}^{i} (1+\eps)^{2j} \right)3D.
\end{align*}

Let $q'_{rec} \in Q_{rec}'^\fR \cup Q_{rec}'^\fB$ with  $d_{H'_{rec},\del_{rec}}(q'_{rec},v) =   d_{H'_{rec},\del_{rec}}(Q_{rec}'^\fR \cup Q_{rec}'^\fB,v)$.
First, consider the case that $q'_{rec} \in Q_{rec}'^{\bar{\fA}}$ and therefore also $q' \in Q'^{\bar{\fA}}$. We have

\begin{align*}
    d_{H',\del}(Q'^\fR \cup Q'^\fB,v) 
    &\leq d_{H',\del}(q'_{rec},v) \\
    &\leq d_{H'_{rec},\del_{rec}}(q'_{rec},v) \\
    &\leq (1+\eps)^{2(i+1)}d_{H,\del}(Q,v) + \left( \sum_{j=1}^i (1+\eps)^{2j} \right) 3D,
\end{align*} 

as needed. It remains to consider the case that $q'_{rec} \in Q_{rec}'^{\fA}$. In particular, this implies that $q'_{rec}$ has a parent $p$ in $F$ which is contained in $W^\fA$. We have

\begin{align*}
    \del_{rec}(q'_{rec}) &= (1+\eps)(\del(\root_F(q'_{rec})) + d_F(q'_{rec})) + 3D \\
    &= (1+\eps)(\del(\root_F(p)) + d_F(p) + \ell(q'_{rec}, p)) + 3D \\
    &\geq (1+\eps)(\del(\root_F(p)) + d_F(p)) + 3D + \ell(q'_{rec}, p) \\
    &\geq (1+\eps) d_{H,\del}(Q,p) + 3D + \ell(q'_{rec},p) \\
    &\geq d_{H',\del}(Q'^\fR \cup Q'^\fB,p) + \ell(q'_{rec},p) \\
    &\geq d_{H',\del}(Q'^\fR \cup Q'^\fB,q'_{rec}) 
\end{align*}

and therefore

\begin{align*}
d_{H',\del}(Q'^\fR \cup Q'^\fB,v) &\leq d_{H',\del}(Q'^\fR \cup Q'^\fB,q'_{rec}) + d_{H'}(q'_{rec},v) \\
&\leq \del_{rec}(q'_{rec}) + d_{H'_{rec}}(q'_{rec},v) \\ &=  d_{H'_{rec},\del_{rec}}(Q_{rec}'^\fR \cup Q_{rec}'^\fB,v) \\
&\leq (1+\eps)^{2(i+1)}d_{H,\del}(Q,v) + \left( \sum_{j=1}^i (1+\eps)^{2j} \right) 3D,
\end{align*}

as needed.

\paragraph{Good property} Let $v \in V^{good}$. We have to show that $B_{H'}(v,r(v)) = B_H(v,r(v))$.  As $v \in V^{good}$, it holds that $v \notin \Vbad_i \cup V^{bad}_{rec}$. As $v \notin \Vbad_i$, we either have $B_H(v,r(v)) \subseteq W^\fA$ or $B_H(v,r(v)) \subseteq V(H) \setminus W^\fA = V(H_{rec})$. If $B_H(v,r(v)) \subseteq W^\fA$, then it follows from $E(H[W^\fA]) \subseteq E(H')$ that $B_{H'}(v,r(v)) = B_H(v,r(v))$. If $B_H(v,r(v)) \subseteq V(H_{rec})$, then $B_H(v,r(v)) = B_{H_{rec}}(v,r(v))$. As $v \notin V^{bad}_{rec}$, it follows that $B_{H_{rec}}(v,r(v)) = B_{H'_{rec}}(v,r(v))$ from the good property of the recursive call. As $E(H'_{rec}) \subseteq E(H')$, it therefore follows that $B_{H'}(v,r(v)) = B_H(v,r(v))$, as needed.

\paragraph{Bad property} We start with the deterministic version.
We have
\begin{align*}
    \mu(V^{bad}) &\leq \mu(V^{bad}_i) + \mu(V^{bad}_{rec}) \\
                 &\leq \frac12 O \left( \sum_{v \in V(H)} \mu(v)r(v)/D \right) + \left(1 - \frac{1}{2^{i_{rec}}}\right) O \left( \sum_{v \in V(H_{rec})} \mu(v)r(v)/D \right) \\
                 &\leq \frac12 O \left( \sum_{v \in V(H)} \mu(v)r(v)/D \right) + \frac12 \left(1 - \frac{1}{2^{i_{rec}}}\right) O \left( \sum_{v \in V(H)} \mu(v)r(v)/D \right) \\
                 &= \left(1 - \frac{1}{2^{i}}\right) O \left( \sum_{v \in V(H)} \mu(v)r(v)/D \right),
\end{align*}
as needed.
Now, we analyze the randomized version. Let $v \in (H)$. We have

\begin{align*}
     \P[v \in V^{bad}] &\leq  \P[v \in \Vbad_i] +  \P[v \in V^{bad}_{rec}] \\
    &\leq \frac12 O(r(v)/D) +  \P[v \in V^{bad}_{rec}|v \in V(H_{rec})] \cdot  \P[v \in V(H_{rec})] \\
    &\leq \frac12 O(r(v)/D) + \left(\frac{1}{2^{i_{rec}}} + \frac{1-2^{-i_{rec}}}{D} \right)O(r(v)) \cdot \frac12 \\
    &= \frac12 O(r(v)/D) + \left(\frac{1}{2^{i}} + \frac{\frac12 -2^{-i}}{D} \right)O(r(v)) \\
    &= \left(\frac{1}{2^{i}} + \frac{1-2^{-i}}{D} \right)O(r(v)),
\end{align*}

as desired.

\subsection{Corollaries}
\label{sec:main_clustering_corollaries}

Next, we present three simple corollaries of the main clustering result \cref{thm:steroids}. First, \cref{cor:clustering_for_l1} informally states that one can efficiently compute a so-called padded low-diameter partition.

\begin{corollary}
\label{cor:clustering_for_l1}
Consider the following problem on a weighted input graph $G$.
The input consists of a global parameter $D$ and in the deterministic version each node $v \in V(G)$ additionally has a weight $\mu(v)$.

The output consists of a partition $\fC$ of $G$ together with two sets $\Vgood \sqcup \Vbad = V(G)$ such that

\begin{enumerate}
    \item the diameter of $\fC$ is $O(D \log^3(n))$,
    \item for every $v \in \Vgood$, $B_G(v,D) \subseteq C$ for some $C \in \fC$,
    \item in the deterministic version, $\mu(\Vbad) \leq 0.1 \cdot \mu(V(G))$,
    \item and in the randomized version, for every $v \in V(G) \colon  \P[v \in \Vbad] \leq 0.1$.
\end{enumerate}
There is an algorithm that solves the problem above in $O(\log^3(n))$ steps, performing all oracle calls with precision parameter $\eps = \frac{1}{\log^2(n)}$ and distance parameter no larger than $O(\log^3(n) D)$.
\end{corollary}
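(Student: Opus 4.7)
The plan is to obtain \cref{cor:clustering_for_l1} by invoking \cref{thm:steroids} with a careful choice of parameters. Specifically, I will instantiate the theorem with the subgraph $H = G$, the terminal set $Q = V(G)$ with zero delays $\del(q) = 0$ for every $q \in Q$, preferred radii $r(v) = D$ for every $v \in V(G)$, the vertex weights $\mu(v)$ from the corollary (in the deterministic version), precision parameter $\eps = 1/\log^2(n)$, and the theorem's global distance parameter set to $D_\Tref{thm:steroids} = C \log(n) \cdot D$, where $C$ is a sufficiently large absolute constant chosen below. Since $Q = V(G)$, the required bound $d_{H,\del}(Q,v) \leq R$ is satisfied trivially with $R = 0$, so the input is valid.

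Next, I will read off each of the four output properties from the corresponding guarantees of \cref{thm:steroids}. For the diameter bound, property~2 of \cref{thm:steroids} gives, for any cluster $C \in \fC$ with center $q'_C \in Q' \subseteq V(G)$ and any $v \in C$, the estimate $d_{H[C]}(q'_C, v) \leq (1+\eps)^{O(\log^2 n)} \cdot 0 + O(\log^2 n) \cdot D_\Tref{thm:steroids} = O(\log^3(n) \cdot D)$, and doubling yields the required $O(\log^3(n) \cdot D)$ cluster diameter. The good-padding property (property~2 of the corollary) follows verbatim from property~3 of \cref{thm:steroids} with the choice $r(v) = D$.

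For the bad-set bounds, in the deterministic version property~4 of \cref{thm:steroids} gives
\[
\mu(V^{bad}) = O\!\left(\log(n) \cdot \frac{\sum_{v \in V(G)} \mu(v) \cdot D}{C \log(n) \cdot D}\right) = O(\mu(V(G))/C),
\]
which is at most $0.1 \cdot \mu(V(G))$ once $C$ is chosen as a large enough absolute constant. The randomized version is analogous: $\P[v \in V^{bad}] = O(\log(n) \cdot D/(C\log(n) \cdot D)) = O(1/C) \leq 0.1$.

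Finally, the complexity and oracle parameters transfer directly: \cref{thm:steroids} runs in $O(\log^3 n)$ steps with precision parameter $\eps = 1/\log^2(n)$ and distance parameter at most $(1+\eps)^{O(\log^2 n)} R + O(\log^2 n) \cdot D_\Tref{thm:steroids} = O(\log^3(n) \cdot D)$, matching the complexity and oracle bounds claimed by the corollary. I do not anticipate a genuine obstacle here, as this is essentially a parameter-plugging exercise; the only mild subtlety is to observe that the freedom to choose the constant hidden inside $D_\Tref{thm:steroids} = C \log(n) \cdot D$ is exactly what converts the generic $O(\cdot)$ bound on $\mu(V^{bad})$ from \cref{thm:steroids} into the explicit $0.1 \cdot \mu(V(G))$ guarantee demanded by the corollary.
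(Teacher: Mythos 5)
Your proposal is correct and matches the paper's own proof essentially verbatim: the paper also invokes \cref{thm:steroids} with $H = G$, $Q = V(G)$, zero delays, $R = 0$, $r(v) = D$, $\eps = 1/\log^2(n)$, and $D_{\Tref{thm:steroids}} = c \log(n) \cdot D$ for a sufficiently large constant, then reads the four properties and the runtime/oracle bounds off the theorem exactly as you do. No gaps.
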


\begin{proof}
We invoke \cref{thm:steroids} with input $H_{\Tref{thm:steroids}} = G$, $r(v)_{\Tref{thm:steroids}} = D$ and 
$\mu_{\Tref{thm:steroids}}(v) = \mu(v)$ for every $v \in V(G)$, $Q_{\Tref{thm:steroids}} = V(G)$ and $\del_{\Tref{thm:steroids}}(v) = 0$ for
every $v \in V(G)$, $R_{\Tref{thm:steroids}} = 0$, $D_{\Tref{thm:steroids}} = c \cdot \log(n) \cdot D$ for a sufficiently large constant $c$ and 
$\eps_{\Tref{thm:steroids}} = \frac{1}{\log^2(n)}$.

The input is clearly valid. In particular, $d_{G,\del_{\Tref{thm:steroids}}}(Q_{\Tref{thm:steroids}},v) = d_G(V(G),v) = 0$.

Let $\fC$ denote the output partition of $G$ and $\Vgood \sqcup \Vbad = V(G)$ the two sets returned by the algorithm.
According to the first two properties of \cref{thm:steroids}, for every cluster $C \in \fC$ there exists a node $v_C \in \fC$ such that
for every $v \in C$

\[d_{G[C]}(v_C,v) \leq O(\log^2 n)D_{\Tref{thm:steroids}} = O(\log^3 n) D\]

and therefore the diameter of $\fC$ is $O(D \log^3 n)$.
Moreover, the third property of \cref{thm:steroids} together with $r(v)_{\Tref{thm:steroids}} = D$ implies that $B_G(v,D) \subseteq C$ for some $C \in \fC$.
In the deterministic version, the fourth property of \cref{thm:steroids} implies

\[\mu(\Vbad) = O(\log (n) \cdot \sum_{v \in V(G)} \mu_{\Tref{thm:steroids}}(v)r(v)_{\Tref{thm:steroids}}/D_{\Tref{thm:steroids}}) = O(1/c)\mu(V) \leq 0.1\mu(V)\]

for $c$ being sufficiently large.

In the randomized version, the fourth property of \cref{thm:steroids} implies for every $v \in V(G)$ that

\[ \P[v \in \Vbad] = O(\log(n) r_{\Tref{thm:steroids}}(v)/D_\Tref{thm:steroids})= O(1/c) \leq 0.1\]

for $c$ being sufficiently large.

The runtime bound directly follows from the runtime bound of \cref{thm:steroids}.

\end{proof}

Next, \cref{thm:sparse_covers_good_parameters} asserts that we can compute a so-called sparse cover.
The result follows from \cref{cor:clustering_for_l1} together with the well-known multiplicative weights update method. We use \cref{thm:sparse_covers_good_parameters} to efficiently compute an $\ell_1$-embedding in \cref{sec:embedding}. 

\begin{theorem}
\label{thm:sparse_covers_good_parameters}
Consider the following problem on a weighted input graph $G$.
The input consists of a global parameter $D$.

The output consists of a cover $\{\fC_1, \fC_2, \dots, \fC_t\}$ for some $t = O(\log n)$, with the hidden constant independent of $D$, such that each $\fC_i$ is a partition of $G$ with diameter $O(D \log^3 n)$ and for every node $v \in V(G)$, $|\{i \in [t] \colon B_G(v,D) \subseteq C \text{ for some $C \in \fC_i$}\}| \geq 2t/3$.

There is an algorithm that solves the problem above in $O(\log^4(n))$ steps using the oracle $\oDist$, performing all oracle calls with precision parameter $\eps = \frac{1}{\log^2(n)}$ and distance parameter no larger than $O(\log^3(n) D)$.

\end{theorem}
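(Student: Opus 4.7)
The plan is to apply the multiplicative weights update (MWU) framework on top of the deterministic version of \cref{cor:clustering_for_l1}. I will run $t = O(\log n)$ iterations; in the $i$-th iteration I invoke \cref{cor:clustering_for_l1} with a carefully chosen weight function $\mu^{(i)} : V(G) \to \R_{\geq 0}$ to obtain a partition $\fC_i$ together with a set of bad nodes $V^{bad}_i$ with $\mu^{(i)}(V^{bad}_i) \leq 0.1 \cdot \mu^{(i)}(V(G))$. The weight update rule is the standard one: initialize $\mu^{(1)}(v) = 1$ for every $v \in V(G)$ and set $\mu^{(i+1)}(v) = 2 \mu^{(i)}(v)$ if $v \in V^{bad}_i$ and $\mu^{(i+1)}(v) = \mu^{(i)}(v)$ otherwise. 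The output is simply the collection $\{\fC_1,\ldots,\fC_t\}$; the diameter bound and the padding property ``$B_G(v,D)\subseteq C$ for some $C\in\fC_i$ whenever $v\in V(G)\setminus V^{bad}_i$'' are inherited directly from \cref{cor:clustering_for_l1}.

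To show that every vertex is good in at least $2t/3$ of the iterations, I will run the usual two-sided potential argument on the total weight. Globally, $\mu^{(i+1)}(V(G)) \leq \mu^{(i)}(V(G)) + \mu^{(i)}(V^{bad}_i) \leq 1.1 \cdot \mu^{(i)}(V(G))$, so $\mu^{(t+1)}(V(G)) \leq n \cdot 1.1^t$. Locally, if a fixed vertex $v$ is bad in $k$ of the $t$ iterations, then $\mu^{(t+1)}(v) = 2^k$. Combining the two bounds gives $k \leq \log_2 n + t \log_2(1.1)$. Since $1/3 - \log_2(1.1) > 0$, choosing $t = c \log_2 n$ for a sufficiently large absolute constant $c$ forces $k \leq t/3$ for every $v \in V(G)$, which is exactly the required $2t/3$ lower bound on the number of ``good'' appearances of $v$.

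The complexity bound is immediate from the iteration count and the cost of a single call to \cref{cor:clustering_for_l1}: $t \cdot O(\log^3 n) = O(\log^4 n)$ steps, all oracle calls inherited with precision $\eps = 1/\log^2 n$ and distance parameter $O(\log^3(n) D)$. The weights stay bounded by $2^t = \poly(n)$ and hence fit into the $O(\log n)$-bit word assumption, so they can be stored and manipulated locally without any additional oracle machinery.

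The main obstacle is conceptually minor but worth being careful about: because \cref{cor:clustering_for_l1} only guarantees $\mu^{(i)}(V^{bad}_i) \leq 0.1 \cdot \mu^{(i)}(V(G))$ and not a per-vertex bound, the argument must go through the global/local potential comparison rather than any direct averaging. Verifying that the chosen constants ($0.1$ for the bad-mass threshold in \cref{cor:clustering_for_l1} and $2$ for the weight doubling factor) indeed give a positive margin $1/3 - \log_2(1.1) > 0$ is the only place where the constants must be tracked; everything else is bookkeeping.
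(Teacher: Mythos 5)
Your proposal is correct and follows essentially the same route as the paper: the paper also runs $O(\log n)$ rounds of the multiplicative-weights scheme on top of the deterministic version of \cref{cor:clustering_for_l1}, doubling the weights of bad vertices and comparing the global bound $\mu_{t+1}(V(G)) \le n \cdot 1.1^t$ against the per-vertex lower bound $2^{t-|\{i : v \in \Vgood_i\}|}$. The constant tracking you flag (margin $1/3 - \log_2(1.1) > 0$) matches the paper's choice of a sufficiently large $c$ in $t = \lceil c\log n\rceil$, so nothing further is needed.
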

\begin{proof}
We set $t = \lceil c \cdot \log(n) \rceil$ for $c$ being a sufficiently large constant. 
At the beginning, we set $\mu_{1}(v) = 1$ for every $v \in V(G)$. 

In the $i$-th step, for $1 \le  i \le t$, we apply \cref{cor:clustering_for_l1} with input $D_{\ref{cor:clustering_for_l1}} = D$ and $\mu_{\ref{cor:clustering_for_l1}}(v) = \mu_i(v)$ for every node $v \in V(G)$.
Let $\fC_i$ denote the partition and $\Vgood_i$ the set returned by the $i$-th invocation.
For every $v \in V(G)$ and $i \in [t]$, we set $\mu_{i+1}(v) = \mu_i(v)$ if $v \in \Vgood_i$ and otherwise we set $\mu_{i+1}(v) = 2\mu_i(v)$. 
In the end, the algorithm returns the cover $\{\fC_1, \fC_2, \dots, \fC_t\}$. 
This finishes the description of the algorithm. 

It remains to argue its correctness. All the properties from the statement directly follow from \cref{cor:clustering_for_l1}, except for the property that each node $v \in V(G)$ satisfies $|\{i \in [t] \colon B_G(v,D) \subseteq C \text{ for some $C \in \fC_i$}\}| \geq t/2$. To show this, it suffices to show that $|\{i \in [t] \colon v \in \Vgood_i\}| \geq t/2$.
Note that $\mu_1(V(G)) = n$ and $\mu_{t+1}(V(G)) = \mu_i(V(G)) + \mu_i(V(G) \setminus \Vgood_i) \leq 1.1\mu_i(V(G))$. Hence, $\mu_{t+1}(V(G)) \leq n \cdot 1.1^t$.
On the other hand, for every $v \in V(G)$, 

\[\mu_{t+1}(V(G)) \geq \mu_{t+1}(v) \geq 2^{t - |\{i \in [t] \colon v \in \Vgood_i\}|}.\]

Therefore, 

\[2^{t - |\{i \in [t] \colon v \in \Vgood_i\}|} \leq n \cdot 1.1^t,\]

which directly implies $|\{i \in [t] \colon v \in \Vgood_i\}| \geq 2t/3$ for $c$ being a sufficiently large constant. 
\end{proof}

The next corollary is the main building block for efficiently computing low-stretch spanning trees in \cref{sec:low_stretch_spanning_tree}.

\begin{corollary}
\label{cor:edge_cutting}
Consider the following problem on a weighted input graph $G$. The input consists of the following.
\begin{enumerate}
    \item A weighted subgraph $H \subseteq G$.
    \item In the deterministic version, each edge $e \in E(H)$ has a weight $\mu(e)$.
    \item There is a set of center nodes $Q \subseteq V(H)$, with each center node $q \in Q$ having a delay $\del(q) \geq 0$.
    \item There is a parameter $R$ such that for every $v \in V(H)$ we have $d_{H,del}(Q,v) \leq R$.
    \item There is a precision parameter $\eps \in [0,1]$.
\end{enumerate}
The output consists of a partition $\fC$ of $H$ and a set $Q' \subseteq Q$. Let $E^{bad}$ denote the set consisting of those edges in $E(H)$ whose two endpoints are in different clusters in $\fC$. The output satisfies  

\begin{enumerate}
    \item each cluster $C \in \fC$ contains exactly one node $q'_C \in Q'$,
    \item for each $C \in \fC$ and $v \in C$, we have $d_{H[C],\del}(q'_C,v) \leq d_{H,\del}(Q,v) + \eps R$,
    \item in the deterministic version, $\mu(E^{bad}) = O \left( \frac{\log^3(n)}{\eps R} \right) \cdot \sum_{e \in E(H)} \mu(e) \ell(e)$,
    \item and in the randomized version, for every $e \in E(H) \colon Pr[e \in E^{bad}] = O \left( \frac{\log^3(n)}{\eps R} \right) \cdot \ell(e)$.
\end{enumerate}

There is an algorithm that solves the problem above in $O(\log^3 n)$ steps, performing all oracle calls with precision parameter $\eps' = \Omega \left(\frac{\eps}{\log^2(n)}\right)$ and distance parameter no larger than $2R$.
\end{corollary}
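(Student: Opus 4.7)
The plan is to reduce to \cref{thm:steroids} via the subdivision trick already used for \cref{thm:blurry_edge}. We invoke \cref{thm:steroids} on the subdivided graph $H_\mathrm{sub}$ (viewed as a subgraph of $G_\mathrm{sub}$ and simulated using \cref{lem:simulation_subdivided}), setting $r(v_e) = \ell(e)$ and $\mu(v_e) = \mu(e)$ for each edge-subdivision vertex $v_e$, while $r(v) = 0$ and $\mu(v) = 0$ for each original $v \in V(H)$. The centers $Q$, delays $\del$, and radius $R$ are inherited unchanged: subdivision preserves $d_{\cdot,\del}(Q,v)$ for $v \in V(H)$, and for each subdivision vertex $v_e$ one of its two incident edges has length zero, so $d_{H_\mathrm{sub},\del}(Q,v_e)$ is at most the delay-distance of the corresponding endpoint, and hence at most $R$.

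To convert the multiplicative $(1+\eps_\star)^{O(\log^2 n)}$ and additive $O(\log^2 n)\,D_\star$ slack produced by \cref{thm:steroids} into the single additive $\eps R$ slack required here, we set $\eps_\star = \Theta(\eps/\log^2 n)$ and $D_\star = \Theta(\eps R/\log^2 n)$: combining $(1+\eps_\star)^{O(\log^2 n)} \le 1 + \eps/2$ with $d_{H,\del}(Q,v) \le R$ turns the multiplicative factor into additive $\eps R/2$, and $O(\log^2 n)\,D_\star \le \eps R/2$ handles the second term. The partition $\fC_\mathrm{sub}$ of $V(H_\mathrm{sub})$ returned by \cref{thm:steroids} is restricted to $V(H)$ by $\fC = \{C \cap V(H) : C \in \fC_\mathrm{sub}\}$; since $Q' \subseteq Q \subseteq V(H)$, each new cluster still contains exactly one representative of $Q'$. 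The distance bound transfers from $H_\mathrm{sub}[C]$ to $H[C \cap V(H)]$ thanks to the key structural observation that any subdivision vertex $v_e \in C$ used internally along a shortest path in $H_\mathrm{sub}[C]$ must be traversed using both of its incident edges (it has degree two in $H_\mathrm{sub}$), which forces the two endpoints of $e$ to lie in $C \cap V(H)$ and allows the $u$--$v_e$--$w$ leg to be replaced by the edge $uw$ of the same length in $H$.

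For the cut bound, if $e = \{u,w\}$ is cut by $\fC$ then $u$ and $w$ lie in different clusters of $\fC_\mathrm{sub}$, so the ball $B_{H_\mathrm{sub}}(v_e,\ell(e))$, which contains both $u$ and $w$, is not contained in a single cluster, forcing $v_e \in \Vbad$. Plugging in the \cref{thm:steroids} bound yields $\mu(E^{bad}) \le \mu(\Vbad) = O\!\left(\log(n) \sum_v \mu(v)r(v)/D_\star\right) = O\!\left(\frac{\log^3(n)}{\eps R}\right) \sum_{e \in E(H)} \mu(e)\ell(e)$ in the deterministic version, and the analogous per-edge bound $\P[v_e \in \Vbad] = O(\log(n)\ell(e)/D_\star) = O(\log^3(n)\ell(e)/(\eps R))$ in the randomized version. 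The remaining complexity claims ($O(\log^3 n)$ steps, precision $\eps' = \Omega(\eps/\log^2 n)$, and distance parameter at most $(1+\eps_\star)^{O(\log^2 n)}R + O(\log^2 n)D_\star \le 2R$) follow by substituting $\eps_\star$ and $D_\star$ into the bounds of \cref{thm:steroids}. The only delicate part of the whole argument is the distance-translation observation above; everything else is bookkeeping with the chosen parameters.
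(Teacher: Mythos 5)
Your proposal is correct and follows essentially the same route as the paper's proof: subdivide, invoke \cref{thm:steroids} on $H_{sub}$ with $r(v_e)=\ell(e)$, $\mu(v_e)=\mu(e)$, $D_\star=\Theta(\eps R/\log^2 n)$, $\eps_\star=\Theta(\eps/\log^2 n)$, restrict the resulting partition to $V(H)$, use the degree-two subdivision vertices to transfer the in-cluster distance bound, argue $e\in E^{bad}\Rightarrow v_e\in\Vbad$ via the ball $B_{H_{sub}}(v_e,\ell(e))$, and appeal to \cref{lem:simulation_subdivided} for the complexity. If anything, you make explicit two steps the paper leaves implicit (the validity check $d_{H_{sub},\del}(Q,v_e)\le R$ and the path-translation argument), so no gap to report.
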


\begin{proof}

We pretend for a moment that the actual weighted input graph is not $G$ but instead the subdivided graph $G_{sub}$.
We now invoke \cref{thm:steroids} with the following input:

\begin{enumerate}
    \item $H_{\Tref{thm:steroids}} = H_{sub}$
    \item  For every $e \in E(H)$ and the respective $v_e \in V(H_{sub})$, we set $r_{\Tref{thm:steroids}}(v_e) = \ell(e)$ and in the deterministic version $\mu_{\Tref{thm:steroids}}(v_e) = \ell(e)$.
    \item For every $v \in V(H)$, we set $r_{\Tref{thm:steroids}}(v) = 0$ and in the deterministic version $\mu_{\Tref{thm:steroids}}(v) = 0$.
    \item $R_{\Tref{thm:steroids}} = R, Q_{\Tref{thm:steroids}} = Q$ and $\del_{\Tref{thm:steroids}} = \del$
    \item $D_{\Tref{thm:steroids}} = \frac{c_1}{ \log^2(n)} \eps R$ and $\eps_{\Tref{thm:steroids}} = \frac{c_2 }{\log^2(n)} \eps$ for some small enough constants $c_1, c_2 > 0$. \\

\end{enumerate}
We have to verify that the input is valid, namely that for every $v \in V(H_{sub}), d_{H_{sub},\del}(Q,v) \leq R$. This directly follows from the way $H_{sub}$ is constructed from $H$ together with the fact that for every $v \in V(H)$, $d_{H,\del}(Q,v) \leq R$.
Let $\fC$ be the partition of $H$ that one obtains from the partition $(\fC)_{\Tref{thm:steroids}}$ of $H_{sub}$ by removing from each cluster $C \in \fC$ all the nodes that are not contained in $V(H)$.
The final output is $\fC$ and $Q' = (Q')_{\Tref{thm:steroids}}$.

We have to verify that the output satisfies all the properties. The property that each cluster $C \in \fC$ contains exactly one node $q'_C \in Q'$ follows from the way we obtained $\fC$ from $(\fC)_{\Tref{thm:steroids}}$, the fact that $Q'  \subseteq V(H)$ and the fact that each cluster in $(\fC)_{\Tref{thm:steroids}}$ contains exactly one node in $Q'$.
Next, consider any $C \in \fC$ and $v \in C$. Let $C_{sub} \in (\fC)_{\Tref{thm:steroids}}$ with $v \in C_{sub}$. We have

\begin{align*}
    d_{H[C],\del}(q'_C,v) &\leq d_{H_{sub}[C_{sub}], \del}(q'_C,v) \\
    &\leq (1+\eps_{\Tref{thm:steroids}})^{O(\log^2 n)}d_{H_{sub},\del}(Q,v) + O(\log^2 n)D_{\Tref{thm:steroids}} \\
    &\leq \textrm{e}^{c_2 \eps O(1)} d_{H_{sub},\del}(Q,v)    + O(1) c_1 \eps R \\
    &\leq d_{H_{sub},\del}(Q,v) + \eps R,
\end{align*}

where the last inequality follows from the fact that $c_1$ and $c_2$ are small enough.
Next, we verify the third property. It follows from the third property of \cref{thm:steroids} that for every $e \in E^{bad}$, $v_e \in V^{bad}_\Tref{thm:steroids}$. Hence, we obtain from the fourth property of \cref{thm:steroids} that

\begin{align*}
\mu(E^{bad}) 
&\leq \mu_{\Tref{thm:steroids}}( V^{bad}_\Tref{thm:steroids}) \\
&= O(\log(n) \cdot \sum_{v \in V(H_{sub})} \mu_{\Tref{thm:steroids}}(v) \;\cdot\; (r(v))_{\Tref{thm:steroids}}/D_{\Tref{thm:steroids}})\\ 
&= O \left( \frac{\log^3(n)}{\eps R} \right) \cdot \sum_{e \in E(H)} \mu(e) \ell(e).    
\end{align*}

For the randomized version, we have for every $e \in E(H)$ that

\[ \P[e \in E^{bad}] \leq  \P[v_e \in V^{bad}_\Tref{thm:steroids}] = O(\log (n) \cdot r_\Tref{thm:steroids}(v_e) / D_{\Tref{thm:steroids}}) = O \left( \frac{\log^3(n)}{\eps R} \right) \cdot \ell(e).\]

Running the algorithm of \cref{thm:steroids} takes $O(\log^3 n)$ steps, with all oracle calls having precision parameter $\eps_{\Tref{thm:steroids}} = \Omega \left( \frac{\eps}{\log^2 (n)}\right)$ and distance parameter no larger than $(1+\eps_{\Tref{thm:steroids}})^{O(\log^2 n)}R + O(\log^2 n)D_{\Tref{thm:steroids}} \leq 2R$, assuming that the actual weighted input graph is $G_{sub}$ instead of $G$. However, due to \cref{lem:simulation_subdivided}, the same holds true, up to a constant factor in the number of steps, with the actual weighted input graph being $G$. This finishes the proof.

\end{proof}

\bibliographystyle{alpha}
\bibliography{ref}

\appendix

\section{Deterministic $\ell_1$ Embedding}
\label{sec:embedding}

In this section, we show how to deterministically compute an $\ell_1$-embedding of a weighted input graph. Note that in this section we assume that the input graph is connected.
The embedding problem is defined as follows.  

\begin{definition}[$\ell_1$ embedding]
Let $G$ be a weighted and connected graph. An $\alpha$-approximate $\ell_1$-embedding of $G$ into $k$ dimensions is a function that maps each node $v$ to a $k$-dimensional vector $x_v$ such that for any two nodes $u,v \in V(G)$ we have
\[||x_u - x_v||_1  \leq d_G(u,v) \leq \alpha ||x_u - x_v||_1.\]
\end{definition}

For our result, we also need an oracle for computing potentials; the dual problem of transshipment which is a generalization of shortest path. Fortunately, the work of \cite{RGHZL2022sssp} that we use to implement the distance oracle for our parallel and distributed results also constructs the following potential oracle as a byproduct with the same parallel/distributed complexity. 

\begin{restatable}[Potential Oracle $\oPot_\eps$]{definition}{potentialoracle}
\label{def:oracle_potential}

The input is a non-empty set $S \subseteq V(G)$. 
The output is a function $\phi$ assigning each node $v \in V(G)$ a value $\phi(v) \geq 0$ such that

\begin{enumerate}
    \item $\phi(s) = 0$ for every $s \in S$,
    \item $|\phi(u) - \phi(v)| \leq d_G(u,v)$ for every $u,v \in V(G)$, and
    \item $(1+\eps) \phi(v) \geq d_G(S,v)$ for every $v \in V(G)$.
\end{enumerate}

\end{restatable}

As an application of our strong-clustering results for weighted graphs, we can now show a distributed deterministic algorithm that embeds a connected weighted graph in $\ell_1$ space with polylogarithmic stretch. 
The arguments in this section are similar to those in \cite[Section 7]{bartal2021advances}. 
We use our clustering result \cref{cor:clustering_for_l1} with $O(\log n)$ different distance scales $D_i = 2^i$ to get $O(\log n)$ partitions of $G$ for all different scales. Next, we use these partitions to define the final embedding.

\begin{theorem}
\label{thm:l1_embedding}
Let $G$ be a weighted and connected graph.
There is an algorithm that computes an $O(\log^4 n)$-approximate $\ell_1$-embedding of $G$ into $O(\log^3 n)$-dimensional $\ell_1$ space.
The algorithm takes $O(\log^5 n)$ steps with access to the distance oracle $\oDist$, performing all oracle calls with precision parameter $\eps = 1/\log^2 n$.
\end{theorem}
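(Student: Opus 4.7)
The plan is to combine the sparse covers of \cref{thm:sparse_covers_good_parameters} across $O(\log n)$ distance scales with a deterministic Bourgain-style embedding. Since edge weights are polynomially bounded, I invoke \cref{thm:sparse_covers_good_parameters} at each scale $D_i = 2^i$, producing $O(\log^2 n)$ partitions $\{\fC_{i,k}\}$ in total; each cluster in $\fC_{i,k}$ has diameter $O(D_i \log^3 n)$, and for every vertex $v$ at least $2t/3$ of the partitions at scale $D_i$ satisfy $B_G(v, D_i) \subseteq C_{i,k}(v)$, where $C_{i,k}(v)$ denotes the cluster of $\fC_{i,k}$ containing $v$. For each partition I compute the cluster-depth function $\phi_{i,k}(v) \approx d_G(v, V(G) \setminus C_{i,k}(v))$ using the potential oracle $\oPot_\eps$ once per cluster (source $V(G) \setminus C$, depth truncated at $O(D_i \log^3 n)$), parallelizing across the vertex-disjoint clusters of a single partition. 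By the oracle guarantees, $\phi_{i,k}$ is 1-Lipschitz on $G$ and satisfies $\phi_{i,k}(v) \ge D_i/(1+\eps)$ whenever $B_G(v, D_i) \subseteq C_{i,k}(v)$.

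To obtain $O(\log^3 n)$ well-behaved coordinates, I fix a binary error-correcting code $\mathcal{E}\colon [n] \to \{0,1\}^L$ with $L = O(\log n)$ and constant relative distance, labeling each cluster by the minimum vertex ID in it. For each partition $\fC_{i,k}$ and bit position $b \in [L]$, set
\[
f_{i,k,b}(v) \;=\; \mathcal{E}(\mathrm{id}(C_{i,k}(v)))_b \cdot \min\bigl(\phi_{i,k}(v),\, D_i\bigr).
\]
Each $f_{i,k,b}$ is $1$-Lipschitz on $G$: within a cluster this inherits directly from $\phi_{i,k}$; across a cross-cluster edge $\{u,v\}$ both $\phi_{i,k}(u)$ and $\phi_{i,k}(v)$ are bounded by $\ell(u,v)$ (each endpoint lies in the other's external neighborhood), so the bit-indicator multiplier cannot enlarge the jump beyond $\ell(u,v)$. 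Summing these $O(\log^3 n)$ coordinates yields $\|x_u - x_v\|_1 \le O(\log^3 n) \cdot d_G(u,v)$; I scale the embedding by $\Theta(1/\log^3 n)$ so that $\|x_u - x_v\|_1 \le d_G(u,v)$.

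For the lower bound, fix $u,v$ with $d := d_G(u,v)$ and choose $D^* = \Theta(d/\log^3 n)$ small enough that every cluster at scale $D^*$ has diameter strictly below $d$, forcing $C_{D^*,k}(u) \ne C_{D^*,k}(v)$ in every partition. By sparse-cover padding and a union bound, in $\Omega(\log n)$ partitions at scale $D^*$ both $B_G(u,D^*)$ and $B_G(v,D^*)$ sit inside their (different) clusters, so the truncated $\phi$-values of both $u$ and $v$ essentially equal $D^*$. The ECC property guarantees the two codewords differ in $\Omega(\log n)$ bit positions; at each such bit exactly one of $f_{D^*,k,b}(u), f_{D^*,k,b}(v)$ equals $D^*$ while the other equals $0$, contributing $D^*$ to the $\ell_1$ norm. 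Summing over $\Omega(\log n)$ differing bits and $\Omega(\log n)$ good partitions yields unscaled contribution $\Omega(\log^2 n \cdot D^*) = \Omega(d/\log n)$; after the $\Theta(1/\log^3 n)$ scaling this becomes $\Omega(d/\log^4 n)$, giving distortion $O(\log^4 n)$. The total cost is $O(\log n)$ sparse-cover invocations at $O(\log^4 n)$ steps each plus $O(\log^2 n)$ potential-oracle calls, summing to $O(\log^5 n)$ steps with precision $\eps = 1/\log^2 n$. The main obstacle I anticipate is implementing the per-cluster potential computations so that clusters of a single partition can be handled in parallel within the single-invocation budget; this relies on the vertex-disjointness of clusters within a partition together with truncating each per-cluster SSSP at depth $O(D_i \log^3 n)$.
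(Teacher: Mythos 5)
Your coordinate construction is genuinely different from the paper's: you build, for each partition, per-cluster truncated boundary-distance functions $\min(\phi_{i,k}(v),D_i)$ and gate them by a code bit of the vertex's \emph{own} cluster, whereas the paper applies the error-correcting code first, forms for each bit position $i$ the single global set $S_i$ (the union of all clusters whose codeword has a $0$ at position $i$), and takes one coordinate per bit position as $\oPot_1(S_i)$. The distortion arithmetic in your proposal is essentially sound: Lipschitzness within clusters plus the boundedness of $\phi$ across cross-cluster edges gives expansion $O(\log^3 n)$, and padding at scale $D^*=\Theta(d/\log^3 n)$ together with the code distance gives contraction $\Omega(d/\log n)$, matching the paper's $O(\log^4 n)$ after scaling.

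The genuine gap is the implementation step you flagged yourself. Your coordinates require, for every cluster $C$ of every partition, a Lipschitz approximation of $d_G(\cdot,V(G)\setminus C)$, i.e.\ one $\oPot$ call with source $V(G)\setminus C$ per cluster -- potentially $\Theta(n)$ calls per partition. The paper's accounting model allows at most one oracle call per step, and the potential oracle $\oPot_\eps$ (\cref{def:oracle_potential}) takes only a source set in $G$: it has no subgraph input, no distance/truncation parameter, and no batching mechanism, so ``vertex-disjointness of the clusters plus depth truncation'' does not let you fold these calls into a single invocation. The natural repair -- one batched $\oDist$ call on the disjoint union $\bigsqcup_C G[C]$ with delayed boundary sources -- does fit the interface, but then the returned values are only $(1+\eps)$-approximate and \emph{not} $1$-Lipschitz: with $\eps=1/\log^2 n$ and cluster radii $O(D_i\log^3 n)$, two adjacent vertices deep inside a cluster can receive values differing by $\Theta(\eps D_i\log^3 n)=\Theta(D_i\log n)$, which destroys the expansion bound for pairs at distance $\ll D_i$ (this is exactly the short-edge pitfall of approximate distances the paper emphasizes). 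So as written, the claimed $O(\log^5 n)$ step bound does not follow. The paper sidesteps this precisely by aggregating clusters via the code \emph{before} calling the oracle: only $O(\log n)$ global source sets $S_i$ per partition, hence $O(\log^3 n)$ potential calls in total, each globally Lipschitz by the oracle's second property; if you want to keep your per-cluster construction you need an additional idea for computing all boundary distances Lipschitz-ly in $\poly\log n$ oracle calls.
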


The proof below uses error-correcting codes. We note that the proof works also without them, the only difference then is that the $\ell_1$ embedding is only $O(\log^5 n)$-approximate. 

\begin{proof}
The algorithm computes $O(\log^2 n)$ partitions. For each partition, an embedding into $O(\log n)$ dimensions is computed. Combining these $O(\log^2 n)$ embeddings then results in an embedding into $O(\log^3 n)$ dimensions. The final embedding is then obtained from this embedding by scaling each entry by a factor determined later.

We first describe how the $O(\log^2 n)$ partitions are computed.

For every $D$ in the set $\{2^i \colon i \in \mathbb{Z}, |i| \leq \lceil \log (n \cdot \max_{e \in E(G)} \ell(e)) \rceil\}$ the algorithm invokes \cref{thm:sparse_covers_good_parameters} with input $D_{\Tref{thm:sparse_covers_good_parameters}} = D$. Note that there are $O(\log n)$ such choices for $D$.
The output is a cover $\{\fC^D_1,\fC^D_2,\ldots,\fC_t^D\}$ with $t = O(\log n)$.
Hence, we have $O(\log n)$ covers with each cover consisting of $O(\log n)$ partitions for a total of $O(\log^2 n)$ partitions. 

For each partition $\fC$, we compute an embedding of $V(G)$ into $O(\log n)$ dimensional $\ell_1$-space as follows.

First, each cluster $C \in \fC$ computes a bit string $s^C \in \{0,1\}^{10b}$ for $b = O(\log n)$ such that for every other cluster $C' \neq C$ in $\fC$, $|\{i \in [10b] \colon s^C_i = 1, s^{C'}_i = 0\}| \geq b$. 
Each cluster $C$ can compute such a string internally by applying an error-correcting function $f$ to a $b$-bit identifier of an arbitrary node of $C$. 
It is well-known that such an efficiently computable function $f$ exists. 

\begin{theorem}[cf. \cite{macwilliams1977theory}]
There exists a function $f$ from the set of $b$-bit strings to a set of $10b$-bit strings with the following properties. 
For every two $b$-bit strings $s_1$, $s_2$, the strings $f(s_1), f(s_2)$ differ on at least $b$ positions. 
Moreover, the value of $f(\cdot)$ can be computed in $\poly(b)$ time. 
\end{theorem}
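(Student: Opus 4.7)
The statement is the standard claim that there exist efficiently constructible binary error-correcting codes with constant rate and constant minimum relative distance. I propose to prove it via the concatenated-code paradigm of Forney, giving a polynomial-time encoder. Concretely, I would set $m = \lceil \log_2 b \rceil + O(1)$ and $q := 2^m$, and use a Reed-Solomon outer code over $\mathbb{F}_q$ with codeword length $n_{out} \le q$ and dimension $k_{out}$ chosen so that the message length $k_{out} \cdot m$ equals $b$ and the outer minimum distance $n_{out} - k_{out} + 1$ is a constant fraction of $n_{out}$. Each symbol of $\mathbb{F}_q$ (which is $m$ bits) is then encoded by an inner binary code of dimension $m$, length $n_{in} = O(m)$, and relative minimum distance $\delta_{in}$ bounded below by a positive constant. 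The concatenated binary code then has message length $b$, codeword length $n_{out} \cdot n_{in}$, and minimum Hamming distance at least $(n_{out} - k_{out} + 1) \cdot \delta_{in} \cdot n_{in}$. Tuning the constants and, if necessary, padding the output with zeros to exactly $10b$ bits yields a code of the claimed parameters.

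Reed-Solomon encoding runs in $\poly(n_{out}, m) = \poly(b)$ time, so the only nontrivial step is producing the inner code in polynomial time. A Gilbert--Varshamov argument shows that an inner $[n_{in}, m, \delta_{in} n_{in}]$-binary linear code exists, but an exhaustive search over all such codes takes $2^{O(m^2)} = 2^{O(\log^2 b)}$ time, which is only quasi-polynomial. To circumvent this, I would invoke Justesen's construction (1972): rather than fix a single inner code, use a different code from the Wozencraft ensemble for each outer symbol. A Gilbert--Varshamov-type averaging argument shows that a positive fraction of the ensemble is good, and this yields an explicit polynomial-time binary code of constant rate and constant relative distance. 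Alternatively, one can bootstrap by iterated concatenation: start with a tiny inner code found by brute force on $O(1)$-sized alphabets, and concatenate repeatedly to amplify length while preserving rate and distance.

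The main obstacle is matching the precise constants $10$ and $1$ in the theorem statement. Justesen's bound at outer rate $R_{out}$ gives overall relative distance roughly $(1 - 2 R_{out}) \cdot H_2^{-1}(1/2) \approx (1 - 2 R_{out}) \cdot 0.11$, which at the borderline rate $R_{out} = 1/10$ is only about $0.088$, short of the required $0.1$. I would close the gap in one of two ways: either lower the outer rate slightly (say to $R_{out} = 1/20$) and observe that the resulting codeword of length at most $20b$ can be truncated or re-concatenated with a small fixed-length gadget to land at length $10b$ with distance $\ge b$; or use an explicit construction beyond Justesen, such as AG codes over $\mathbb{F}_q$ for $q \ge 49$ (which surpass the Gilbert--Varshamov bound), applied in the outer layer of the concatenation. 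Either path establishes the theorem; the rest of the argument is routine verification of the rate, distance, and encoding-time bounds.
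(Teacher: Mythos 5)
The paper does not actually prove this statement: it is invoked as a known coding-theoretic fact with a pointer to \cite{macwilliams1977theory}, so your task was to supply a proof from scratch, and your concatenated-code route is the standard and essentially correct way to do it. Indeed, with message length $b$, an outer Reed--Solomon code of rate $1/4$ over $\mathbb{F}_{2^m}$ with $m = \Theta(\log b)$, and an inner binary code of rate $2/5$ meeting the Gilbert--Varshamov bound (so relative distance $H^{-1}(3/5) \approx 0.146$), the concatenated code has length exactly $(4b/m)\cdot(5m/2) = 10b$ and minimum distance at least $0.75 \cdot 0.146 \cdot 10b > b$, which is precisely the claim.

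Two concrete corrections to your execution. First, the detour through Justesen is unnecessary: your worry that finding the inner code takes $2^{O(m^2)}$ time only applies to naive enumeration of generator matrices. Since the inner block length is $O(m) = O(\log b)$, one can build a GV-meeting inner code greedily (e.g.\ construct the parity-check matrix column by column, keeping every $d-1$ columns independent, or exhaust the Wozencraft ensemble) in $2^{O(m)} = \poly(b)$ time, so the construction of your first paragraph already closes the proof. Second, the one genuinely flawed branch is the proposed fix of running Justesen at outer rate $1/20$ and then ``truncating'' the length-$20b$ codeword to $10b$ bits: puncturing $10b$ coordinates can decrease the minimum distance by up to $10b$, which swallows the $\approx 1.98b$ guarantee entirely, so no distance bound survives truncation. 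Your alternative fallbacks (re-balancing the inner/outer rates, AG outer codes, or iterated concatenation) are fine, but since you assert ``either path establishes the theorem,'' the truncation path should be dropped or replaced by the rate-rebalancing argument above.
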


For every $i \in [10b]$, we now define 

\[S_i = \{v \in V(G) \colon \text{$v$ is contained in a cluster $C \in \fC$ with $s^C_i = 0$}\}.\]

If $S_i \neq \emptyset$, then let $\phi_i \leftarrow \oPot_1(S_i)$. 
The $i$-th coordinate of the embedding is set equal to $\phi_i$, i.e., for every node $v$ the $i$-th coordinate is set to $\phi_i(v)$.
If $S_i = \emptyset$, then the $i$-th coordinate of each node is set to $0$.

This finishes the description of the embedding into $O(\log^3 n)$ dimensions.
For each node $v \in V$, we denote by $x_v$ the vector assigned to node $v$.

We finish by proving that there are constants $A_1, A_2$ such that for any two nodes $u,v$ and $n$ large we have
\[
\frac{d_G(u,v)}{A_1 \log(n)} \le ||x_u - x_v||_1 \le A_2\log^3 n \cdot d_G(u,v)
\]
in the following two claims. 

\begin{claim}
\label{cl:one_direction}
\[
||x_u - x_v||_1 \le O(\log^3 n) \cdot d_G(u,v)
\]
\end{claim}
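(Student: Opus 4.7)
The plan is to observe that the embedding is assembled from $O(\log^3 n)$ coordinates, each of which is either identically zero or equal to a valid graph potential, and then to apply the non-expansion property of potentials coordinatewise.

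First I would count dimensions carefully. There are $O(\log n)$ distance scales $D$, for each scale the algorithm invokes \cref{thm:sparse_covers_good_parameters} to obtain $t = O(\log n)$ partitions, and each of the resulting $O(\log^2 n)$ partitions $\fC$ contributes $10b = O(\log n)$ coordinates (one per bit position of the error-correcting string $s^C$). This yields $O(\log^3 n)$ coordinates in total.

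Next I would bound the contribution of a single coordinate to $||x_u - x_v||_1$. Fix a coordinate indexed by some partition $\fC$ and some bit position $i \in [10b]$. If $S_i = \emptyset$, then the coordinate is identically $0$ and contributes nothing. Otherwise the coordinate is the function $\phi_i = \oPot_1(S_i)$, and by the second property of the potential oracle $\oPot_\eps$ in \cref{def:oracle_potential} we have
\[
|\phi_i(u) - \phi_i(v)| \le d_G(u,v).
\]
Summing this bound over all $O(\log^3 n)$ coordinates gives
\[
||x_u - x_v||_1 = \sum_i |\phi_i(u) - \phi_i(v)| \le O(\log^3 n) \cdot d_G(u,v),
\]
which is the claim.

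I do not expect any real obstacle here: the bound is an immediate application of the Lipschitz property of potentials combined with the dimension count. The harder direction (the lower bound on $||x_u - x_v||_1$, i.e.\ the distortion estimate) is what \cref{cl:one_direction} is paired against in the subsequent claim, and that is where the sparse-cover guarantee from \cref{thm:sparse_covers_good_parameters} will actually be used; the present claim only needs the trivial upper bound.
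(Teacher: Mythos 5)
Your proposal is correct and follows essentially the same argument as the paper: bound each of the $O(\log^3 n)$ coordinates by $|\phi(u)-\phi(v)| \le d_G(u,v)$ using the second (Lipschitz) property of the potential oracle in \cref{def:oracle_potential}, and sum over the coordinates (the paper phrases this as $O(\log^3 n)$ times the maximum coordinate difference). Your extra remarks on the dimension count and the $S_i = \emptyset$ case are consistent with the paper's construction and add nothing that changes the argument.
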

\begin{proof}
If $\Phi$ stands for the set of all $O(\log^3 n)$ potential functions used in the definition of the embedding, we have
\begin{align*}
    ||x_u - x_v||_1 
    \le O(\log^3 n) \cdot \max_{\phi \in \Phi} | \phi(u) - \phi(v)|
    = O(\log^3 n) \cdot d_G(u,v)
\end{align*}
where the second bound follows from the second property in \cref{def:oracle_potential}. 

\end{proof}

\begin{claim}
\label{cl:other_direction}
\[
||x_u - x_v||_1 \ge d_G(u,v) \cdot \Omega(1/\log n)
\]
\end{claim}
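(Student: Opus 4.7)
Fix $u\ne v$, set $d := d_G(u,v)$, and let $K\log^3 n$ be an absolute upper bound on the diameter of any cluster produced by \cref{thm:sparse_covers_good_parameters} at scale $D$. Choose a power of $2$ $D$ with $d/(8K\log^3 n) \le D \le d/(4K\log^3 n)$, so that every cluster in any partition of the cover at scale $D$ has diameter strictly less than $d$. The cover $\{\fC^D_1,\ldots,\fC^D_t\}$ has $t = \Theta(\log n)$ partitions, and each of $u$ and $v$ is $D$-padded in at least $2t/3$ of them; by a union bound there are $\Omega(\log n)$ partitions $\fC$ that simultaneously have $B_G(u,D)\subseteq C_u$ and $B_G(v,D)\subseteq C_v$ for clusters $C_u,C_v\in\fC$. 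Because each cluster has diameter strictly less than $d = d_G(u,v)$, we must have $C_u\ne C_v$ in every such partition.

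Fix one such good partition $\fC$. By the error-correcting property of the identifier strings $s^C$, there are at least $b = \Omega(\log n)$ coordinates $i\in[10b]$ with $s^{C_u}_i = 1$ and $s^{C_v}_i = 0$. For any such $i$: first, $s^{C_v}_i=0$ implies $v\in S_i$, so $\phi_i(v)=0$. Second, every vertex of $S_i$ lies in a cluster of $\fC$ whose $i$-th bit is $0$, and is therefore disjoint from $C_u$; since $B_G(u,D)\subseteq C_u$, this forces $d_G(S_i,u)\ge D$. Property~3 of $\oPot_1$ then yields $\phi_i(u)\ge d_G(S_i,u)/2\ge D/2$, so coordinate $i$ contributes at least $D/2$ to $\|x_u-x_v\|_1$.

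Summing the $\Omega(\log n)$ such coordinates in each of the $\Omega(\log n)$ good partitions at this scale and discarding every other coordinate,
\[
\|x_u-x_v\|_1 \;\ge\; \Omega(\log^2 n)\cdot \frac{D}{2} \;=\; \Omega\!\left(\frac{d}{\log n}\right),
\]
which is precisely the claim.

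\textbf{Main obstacle.} The only subtle choice is that of the scale $D$: it has to be small enough (of order $d/\log^3 n$) so that the diameter bound of \cref{thm:sparse_covers_good_parameters} forces $u$ and $v$ into distinct clusters in every good partition, yet large enough that each disagreement coordinate produces an $\Omega(D)$ contribution to the $\ell_1$ distance. Once $D$ is chosen, combining the $D$-padding guarantee with the fact that distinct clusters carry distinct identifiers straightforwardly yields $d_G(S_i,u)\ge D$, which is the only non-trivial inequality in the argument.
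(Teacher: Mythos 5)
Your proof is correct and follows essentially the same route as the paper: pick the scale $D \approx d_G(u,v)/\Theta(\log^3 n)$ so the diameter bound forces $u$ and $v$ into distinct clusters, use the padding guarantee of \cref{thm:sparse_covers_good_parameters} in $\Omega(\log n)$ partitions, and use the code distance plus properties 1 and 3 of $\oPot_1$ to extract an $\Omega(D)$ contribution from $\Omega(\log n)$ coordinates per partition. The only (harmless) differences are cosmetic: you make explicit that $C_u \neq C_v$ and that choosing coordinates with $s^{C_v}_i=0$ guarantees $S_i \neq \emptyset$, points the paper handles implicitly via a WLOG.
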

\begin{proof}
Consider any $u, v \in V(G)$. Let $i$ be such that $D_i  < d(u, v)/q \le D_{i+1} $ where $q = O(\log^3 n)$ is such that \cref{thm:sparse_covers_good_parameters} outputs clusters of diameter at most $qD$. Such an $i$ has to exist. 

Consider the computed cover $\{\fC_1^{D_i}, \fC_2^{D_i}, \dots, \fC_t^{D_i}\}$ for the distance scale $D_i$. 
Note that for at least $2t/3$ indices $j$ we have that $B_G(u, D_i) \subseteq C$ for some $C \in \fC_j$. The same holds for $v$. Hence, for at least $t/3$ indices $j$ we have that both $u$ and $v$ have this property. 

Fix any such partition $\fC_j$ with this property. Recall that for at least $b$ out of $10b$ potentials $\phi$ we defined using the partition $\fC_j$ we have that exactly one of the nodes $u,v$ is in the set $S$ that defined $\phi$ via $\phi \leftarrow \oPot_1(S)$. 
Without loss of generality, assume $u \in S$ and $v\not\in S$. We have $\phi(u) = 0$ by the first property in \cref{def:oracle_potential}. On the other hand, we have $\phi(v) \ge d_G(S,v)/2 \ge D_i/2$ by the second property in \cref{def:oracle_potential} and the fact that $B_G(v, D_i) \cap S = \emptyset$. 
We are getting 
\begin{align}
\label{eq:lunch}
|\phi(u)-\phi(v)| \ge D_i/2. 
\end{align}

Note that out of all potential functions we defined, \cref{eq:lunch} holds for at least $2t/3 \,\cdot\, b = \Omega(\log^2 n)$ of them by above discussion. This implies
\begin{align*}
    ||x_u - x_v||_1
    \ge \Omega(\log^2 n) \cdot D_i/2
    = \Omega(\log^2 n) \cdot \frac{d_G(u,v)}{4q}
    = \Omega(1/\log n) \cdot d_G(u,v)
\end{align*}
as needed. 
\end{proof}

\end{proof}

\section{Low Stretch Spanning Trees}
\label{sec:low_stretch_spanning_tree}

This section is dedicated to prove the following theorem.

\begin{theorem}[Main Theorem]
\label{thm:low_stretch_spanning_tree}
We can compute a deterministic/randomized $O(\log^5(n))$-stretch spanning tree of a weighted and connected graph $G$ (with additional edge importance $\mu$) in $\poly(\log n)$ steps, with each oracle call using distance parameter at most $O(\diam(G))$ and precision parameter $\eps = \Omega \left(1 /{\log^3(n)}\right)$.
\end{theorem}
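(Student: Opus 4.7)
My plan is to build $T$ from a hierarchical partition of $V(G)$ produced by iterating \cref{cor:edge_cutting}, following the standard AKPW/Bartal template for low-stretch spanning trees. Let $D_0$ be an upper bound on $\diam(G)$ (polynomially bounded by the weight assumption), set $R_i := D_0/2^i$, and choose $L = O(\log n)$ so that $R_L$ is below the minimum edge weight. I will construct refining partitions $\fP_0, \fP_1, \ldots, \fP_L$ with $\fP_0 = \{V(G)\}$, $\fP_L = \{\{v\}\}_{v \in V(G)}$, and each $\fP_{i+1}$ refining $\fP_i$. To pass from $\fP_i$ to $\fP_{i+1}$, I invoke \cref{cor:edge_cutting} in parallel on each cluster $C \in \fP_i$ with input $H = G[C]$, edge weights $\mu|_{E(G[C])}$, terminal set $Q = V(C)$ (zero delays), scale parameter $R = R_{i+1}$, and precision $\eps = 1/2$. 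The ruling property then produces sub-clusters of diameter at most $\eps R = R_{i+1}/2$ around designated centers, and the cut property, summed over $\fP_i$, bounds the total level-$(i+1)$ cut weight by $O(\log^3 n / R_{i+1})\sum_{e}\mu(e)\ell(e)$.

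\textbf{Tree and stretch.} I assemble $T$ bottom-up: singletons at level $L$ are trivial trees, and for a level-$i$ cluster $C$ with children $C_1, \ldots, C_k \in \fP_{i+1}$ (re-indexed so that $q_{C_1} = q_C$), I set $T_C := \bigcup_j T_{C_j}$ and add, for each $j \ge 2$, a $q_{C_j}$-to-$q_C$ connector inside $G[C]$ routed along the shortest-path forest produced internally by the \cref{cor:edge_cutting} invocation on $C$. A standard LSST-literature argument ensures that this yields an acyclic spanning tree $T_C$ with $d_{T_C}(q_C, v) \le O(R_i)$ for all $v \in C$; taking $T := T_{V(G)}$ produces the final spanning tree. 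For each edge $e = \{u,v\}\in E(G)$, let $i^*(e)$ be the largest $i$ with $u, v$ in a common cluster of $\fP_i$; then both endpoints lie within tree-distance $O(R_{i^*(e)})$ of the common center, so $d_T(u,v) = O(R_{i^*(e)})$. Charging each edge to the level at which it is first cut,
\[
  \sum_{e \in E(G)} \mu(e)\, d_T(u,v) \;\le\; \sum_{i=0}^{L-1} O(R_i) \cdot \frac{O(\log^3 n)}{R_{i+1}}\sum_{e} \mu(e)\ell(e) \;=\; O(L\log^3 n)\sum_{e} \mu(e)\ell(e) \;=\; O(\log^4 n)\sum_{e}\mu(e)\ell(e),
\]
which, after the $\mu(e) \leftarrow \mu(e)/\ell(e)$ substitution recorded in the introduction (and absorbing one extra $\log$ factor in the $\ell$-vs-$d_G$ conversion), yields the claimed $O(\log^5 n)$ stretch bound in both deterministic and randomized versions of \cref{cor:edge_cutting}.

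\textbf{Complexity and main obstacle.} The algorithm runs $L = O(\log n)$ levels, each performing one invocation of \cref{cor:edge_cutting} (on every cluster of the level, in parallel) using $O(\log^3 n)$ steps, distance parameter at most $2R_{i+1} \le O(\diam(G))$, and oracle precision $\Omega(\eps/\log^2 n) = \Omega(1/\log^3 n)$. The total step count is therefore $\poly(\log n)$, matching the theorem. The main obstacle is the explicit tree-assembly step: choosing the $q_{C_j}$-to-$q_C$ connectors inside $G[C]$ so that $T_C$ remains a tree while every $v \in C$ stays within tree-distance $O(R_i)$ of $q_C$. I plan to handle this by lifting these connectors from the shortest-path forest that \cref{cor:edge_cutting} constructs internally on $C$ — this forest already spans $C$ from the sub-cluster centers, so a consistent acyclic choice of connectors is available, in line with the standard star-decomposition approach used throughout the LSST literature.
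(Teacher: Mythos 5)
There is a genuine gap, and it sits exactly at the step you flag as "the main obstacle" and then defer to a "standard LSST-literature argument": converting the hierarchical partition into a spanning tree in which every level-$i$ cluster $C$ satisfies $d_{T_C}(q_C,v)\le O(R_i)$. This is not a standard consequence of a generic low-diameter hierarchy; it is the central difficulty of the low-stretch \emph{spanning} tree problem (as opposed to Bartal/FRT tree embeddings, where one may add virtual center-to-center edges). Concretely, the connectors from $q_{C_j}$ to $q_C$ must be realized by edges of $G[C]$, and any such path crosses other sub-clusters $C_{j'}$ whose vertices are already spanned by $T_{C_{j'}}$; you can therefore only add bridging edges, and the resulting tree path from $q_{C_j}$ to $q_C$ detours through intermediate subtrees. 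Entering and exiting an intermediate sub-cluster can cost up to twice its tree radius (about $R_{i+1}$) even when the corresponding graph detour is a single unit edge (e.g.\ a cycle of circumference $\approx R_{i+1}$ whose shortest-path tree from its center separates two adjacent vertices), and a level-$i$ cluster can contain $\Theta(R_i)$ such sub-clusters along a geodesic. The tree radius of $C$ then blows up to $\Omega(R_i\cdot R_{i+1})$ rather than $O(R_i)$, and the geometric-sum stretch accounting collapses; this accumulation is precisely why the naive hierarchical approach (AKPW) does not give polylogarithmic stretch. Moreover, the object you propose to lift the connectors from does not exist: with $Q=V(C)$ and zero delays, the shortest-path forests computed inside \cref{cor:edge_cutting} are degenerate (every vertex is at distance $0$ from $Q$), and what the invocation does guarantee — a low-radius spanning structure \emph{inside each} sub-cluster $C_j$ rooted at $q_{C_j}$ — never connects $q_{C_j}$ to $q_C$. (Separately, your final conversion step is off: the stretch in \cref{def:lsst_deterministic} is already measured against $\sum_e\mu(e)\ell(e)$, so no "$\mu(e)\leftarrow\mu(e)/\ell(e)$ plus one extra $\log$" absorption is needed or meaningful.)

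The paper's route is designed around exactly this obstruction and is genuinely different from yours: instead of an unstructured partition refined scale by scale, it builds a \emph{star decomposition} (\cref{def:star_decomposition}, \cref{thm:generalized_star}) of each cluster around its current root $r_0$ — a central cluster $V_0$ in which distances to $r_0$ are preserved up to $(1+\eps)$, and satellite clusters $V_j$ each attached to $V_0$ by a single bridge edge $\{y_j,r_j\}$ so that $d_{G[V_0]}(r_0,y_j)+\ell(y_j,r_j)+d_{G[V_j]}(r_j,v)\le(1+\eps)d_G(r_0,v)$, with the radius dropping by a factor $3/4$ per level. The recursion (\cref{thm:low_stretch_spanning_tree_recursion}) then takes the union of the recursively built trees plus the bridge edges, and the per-level $(1+\eps)$ distortion with $\eps=1/\log n$ over $O(\log n)$ levels keeps the tree distance to the root within a constant factor of the graph distance — this is what replaces your unproved $d_{T_C}(q_C,v)\le O(R_i)$ claim. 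Crucially, \cref{cor:edge_cutting} is invoked there with \emph{nontrivial terminal sets and delays} (the terminals of a satellite cluster carry as delay their approximate distance from the old root), which is how radial distances are propagated through the recursion; your invocation with $Q=V(C)$ and zero delays discards exactly this structure. To repair your proposal you would essentially have to reintroduce the star decomposition (or an equivalent radius-preserving attachment scheme), at which point it coincides with the paper's proof.
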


Essentially, it states that we can deterministically compute a $\poly(\log n)$-stretch spanning tree with $\poly(\log n)$ calls to an approximate distance oracle.
We start by recalling the notion of a low-stretch spanning tree.

\begin{definition}[Deterministic Low Stretch Spanning Tree]
\label{def:lsst_deterministic}
A deterministic $\alpha$-stretch spanning tree $T$ of a weighted and connected graph $G$ with additional edge-importance $\mu: E(G) \rightarrow \mathbb{R}_{\geq 0}$ is a spanning tree of $G$ such that
\[
\sum_{e = \{u,v\} \in E(G)} \mu(e) d_T(u,v) \le \alpha \sum_{e = \{u,v\} \in E(G)} \mu(e) \ell(u,v). 
\]
\end{definition}

\begin{definition}[Randomized Low Stretch Spanning Tree]
\label{def:lsst_randomized}
A randomized $\alpha$-stretch spanning tree of a weighted and connected graph $G$ is a spanning tree $T$ coming from a distribution $\mathcal{T}$ such that for every edge $e = \{u,v\} \in E(G)$ we have
\[
\E_{T \sim \mathcal{T}}[d_T(u,v)] \le \alpha \ell(u,v). 
\]
Equivalently, we may require
\[
\E_{T \sim \mathcal{T}}[d_T(u,v)] \le \alpha d_G(u,v) 
\]
for any $u,v \in V(G)$. 
\end{definition}

The section is structured into two subsections. In \cref{sec:lsst_star} we show how to construct a so-called star decomposition that is used in \cref{sec:lsst_main} to derive \cref{thm:low_stretch_spanning_tree}. 
We note that the algorithm here is essentially the same as in \cite{elkin2008lower,becker_emek_ghaffari_lenzen2019low_stretch_spanning_trees}. They start by computing a star decomposition, followed by recursively computing a low-stretch spanning tree in each of the clusters of the star decomposition. As we make use of oracles, it is not clear up-front that one can simultaneously recurse on vertex-disjoint subgraphs. The emphasis on this section is to formally prove that we can nevertheless implement the recursion efficiently. 
We refer the interested reader to the paper of \cite{elkin2008lower} for a more intuitive and readable presentation of the low stretch spanning tree algorithm.

\subsection{Star Decomposition}
\label{sec:lsst_star}

In this section we prove \cref{thm:generalized_star}. It asserts that we can build a so-called star decomposition with $\poly(\log n)$ calls to an approximate distance oracle. We start with the definition of star decomposition. 

\begin{definition}[$(\eps,r_0,R)$-Star Decomposition of $G$]
\label{def:star_decomposition}
    Let $G$ be a weighted graph, $\eps > 0$, $r_0 \in V(G)$ and $R \geq 0$ such that $\max_{v \in V(G)} d_G(r_0,v) \leq R$. \\
    The output consists of a partition $V(G) = V_0 \sqcup V_1 \sqcup \ldots \sqcup V_k$ for some $k \geq 0$ with $r_0 \in V_0$ together with a set of edges $E^{bridge} = \{\{y_j,r_j\} \colon j \in [k]\} \subseteq E(G)$ such that for every $j \in [k]$, $y_j \in V_0$ and $r_j \in V_j$. For $j \in \{0,1,\ldots,k\}$, we refer to $r_j$ as the root of cluster $V_j$.
    The output has to satisfy the following conditions.
    \begin{enumerate}
        \item $\forall j \in \{0,1,\ldots,k\} \colon \max_{v \in V_j} d_{G[V_j]}(r_j,v) \leq \frac{3}{4} R$. 
        \item $\forall v \in V_0 \colon d_{G[V_0]}(r_0,v) \leq (1+\eps)d_G(r_0,v)$.
        \item $\forall j \in [k], v \in V_j \colon d_{G[V_0]}(r_0,y_j) + \ell(y_j,r_j) + d_{G[V_j]}(r_j,v) \leq (1+\eps)d_G(r_0,v)$.
    \end{enumerate}
\end{definition}

\begin{theorem}[Deterministic and Randomized Generalized Star Decomposition]
\label{thm:generalized_star}
Consider the following problem on a weighted (and connected) graph $G$. The input consists of the following.

\begin{enumerate}
    \item A partition $V(G) = V_1 \sqcup V_2 \sqcup \ldots \sqcup V_k$ for some $k$.
    \item A node $r_i \in V_i$ for every $i \in [k]$.
    \item A number $R \geq 0$ such that $\max_{i \in [k],v \in V_i} d_{G[V_i]}(r_i,v) \leq R$.
    \item In the deterministic version a priority $\mu(e)$ for every edge $e \in E(G)$.
    \item A precision parameter $\eps \in [0,0.1]$.
\end{enumerate}

The output consists of the following for each $i \in [k] \colon$ a partition $V_i = V_{i,0} \sqcup V_{i,1} \sqcup \ldots \sqcup V_{i,k_i}$ together with a set of edges $E^{bridge}_i$.
We denote by $E^{in}$ the set consisting of those edges in $E$ that have both endpoints in $V_i$ for some $i \in [k]$.
Moreover, we denote by $E^{out}$ the set consisting of those edges in $E$ that have both endpoints in $V_{i,j}$ for some $i \in [k], j \in \{0,1,\ldots,k_i\}$.
The output satisfies

\begin{enumerate}
    \item for every $i \in [k], (V_{i,0} \sqcup V_{i,1} \sqcup \ldots \sqcup V_{i,k_i},E^{bridge}_i)$ is an $(\eps,r_i,R)$-star decomposition of $G[V_i]$,
    \item in the deterministic version, $\mu(E^{in} \setminus E^{out}) = O \left( \frac{\log^3(n)}{ \eps R}\right)\sum_{e  \in E^{in}} \mu(e)\ell(e)$,
    \item and in the randomized version, for every $e \in E^{in}$, $Pr[e \notin E^{out}] = O \left( \frac{\log^3(n)}{ \eps R}\right)\ell(e)$.
\end{enumerate}

There is an algorithm that solves the problem above in $O(\log^3 n)$ steps, performing all oracle calls with precision parameter $\eps' = \Omega \left( \frac{\eps}{\log^2(n)}\right)$ and distance parameter no larger than $R$.
\end{theorem}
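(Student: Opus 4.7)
Because $V_1,\dots,V_k$ is a partition and $E^{in}$ decomposes disjointly across the $G[V_i]$'s, it suffices to construct, in parallel for each $i$, an $(\eps,r_i,R)$-star decomposition of $G[V_i]$ whose internal cut weight is $O(\log^3 n/(\eps R))\sum_{e\in E(G[V_i])}\mu(e)\ell(e)$; summing over $i$ gives the target bound on $\mu(E^{in}\setminus E^{out})$. For each $V_i$ I follow the classical ball-then-cones template of Elkin--Emek--Spielman--Teng, implemented through our weighted oracles: a single call to $\oDist_{\eps',R}$ with $\eps'=\Theta(\eps/\log^2 n)$ to build an approximate SSSP forest $F_i$ at $r_i$ (yielding $\td_i(v):=d_{F_i}(v)\in[d_G(r_i,v),\,(1+\eps')d_G(r_i,v)]$), a call to \cref{thm:blurry_edge} to carve the central region, and a call to \cref{cor:edge_cutting} to form the cones.

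\textbf{Construction.} In Phase A, invoke \cref{thm:blurry_edge} on $G[V_i]$ with $S=\{r_i\}$ and $D=R/2$, producing $V^{\mathrm{bbg}}_{i,0}$ with $d_{G[V^{\mathrm{bbg}}_{i,0}]}(r_i,v)\le R/2$ and $\mu$-boundary $O(\sum_e\mu(e)\ell(e)/R)$. Take the central ball $V_{i,0}$ to be the largest $F_i$-ancestor-closed subset of $V^{\mathrm{bbg}}_{i,0}$; ancestor-closure forces the $F_i$-path from $r_i$ to any $v\in V_{i,0}$ to lie inside $V_{i,0}$, so $d_{G[V_{i,0}]}(r_i,v)\le\td_i(v)\le(1+\eps')d_G(r_i,v)$ (Property 2) and $d_{G[V_{i,0}]}(r_i,v)\le(1+\eps')R/2<3R/4$ (Property 1 for the central cluster). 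In Phase B, let $Q_B:=\{q\in V_i\setminus V_{i,0}:q\text{ has a neighbor in }V_{i,0}\}$ with delays $\del_B(q):=\min_{y\in V_{i,0},\{y,q\}\in E}(\td_i(y)+\ell(y,q))$. Since the central ball behaves as a $\le R/2$-radius tree-ball, every portal has $\del_B(q)\gtrsim R/2$; shifting delays by that minimum and invoking \cref{cor:edge_cutting} on $G[V_i\setminus V_{i,0}]$ with $Q=Q_B$, the shifted delays, $R_{\mathrm{cor}}\le R/2$, and $\eps_{\mathrm{cor}}=\Theta(\eps)$ keeps the oracle distance parameter $\le 2R_{\mathrm{cor}}\le R$. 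The returned cones $V_{i,j}$ with representatives $r_{i,j}$ and bridges $y_{i,j}\in V_{i,0}$ realising $\del_B(r_{i,j})$ satisfy Properties 1 and 3 by chaining $d_{G[V_{i,0}]}(r_i,y_{i,j})+\ell(y_{i,j},r_{i,j})\le\del_B(r_{i,j})$ with the \cref{cor:edge_cutting} bound and absorbing the additive $O(\eps R)$ slack into the multiplicative $(1+\eps)$ factor using $d_G(r_i,v)\gtrsim R/2$ for $v\notin V_{i,0}$.

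\textbf{Main obstacle.} Phase B contributes the desired $O(\log^3 n/(\eps R))\sum_e\mu(e)\ell(e)$ cut weight directly from \cref{cor:edge_cutting}. The real technical obstacle is bounding the Phase A cut $\mu(\partial V_{i,0})$: edges from $V_{i,0}$ to $V_i\setminus V^{\mathrm{bbg}}_{i,0}$ are exactly $\partial V^{\mathrm{bbg}}_{i,0}$ and are controlled by blurry ball growing, but the ancestor-closure trimming can expose ``shadow'' edges between $V_{i,0}$ and $V^{\mathrm{bbg}}_{i,0}\setminus V_{i,0}$, and a naive charge of each shadow edge to a cut $F_i$-edge of $\partial V^{\mathrm{bbg}}_{i,0}$ blows up with $F_i$-subtree sizes. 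My resolution is to absorb $V^{\mathrm{bbg}}_{i,0}\setminus V_{i,0}$ into the Phase B cone decomposition: these vertices already lie in $V_i\setminus V_{i,0}$ and, where adjacent to $V_{i,0}$, are placed in $Q_B$, so every $(V_{i,0},V^{\mathrm{bbg}}_{i,0}\setminus V_{i,0})$-edge becomes either a bridge edge (not cut) or a cone-boundary edge already charged by \cref{cor:edge_cutting}; the residual Phase A charge collapses to the $F_i$-tree-edges of $\partial V^{\mathrm{bbg}}_{i,0}$, whose $\mu$-weight is bounded by the blurry-ball-growing guarantee since $F_i\subseteq G[V_i]$. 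The overall step count is dominated by the $O(\log^3 n)$ Phase B call, and every oracle invocation respects the $\Omega(\eps/\log^2 n)$ precision and $\le R$ distance budget.
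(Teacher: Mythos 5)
There is a genuine gap, and it sits at the heart of your Phase A. You grow the central region by calling \cref{thm:blurry_edge} with $S=\{r_i\}$ and the \emph{large} parameter $D=R/2$, and then you repeatedly use the assumption that the resulting ball actually reaches out to radius $\approx R/2$: you claim every portal has $\del_B(q)\gtrsim R/2$ (which is what lets you shift the delays, take $R_{\mathrm{cor}}\le R/2$, and keep the oracle distance parameter $\le R$), and you absorb the additive $O(\eps R)$ slack from \cref{cor:edge_cutting} into the multiplicative $(1+\eps)$ factor of Property~3 of \cref{def:star_decomposition} ``using $d_G(r_i,v)\gtrsim R/2$ for $v\notin V_{i,0}$''. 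But blurry ball growing gives only an \emph{upper} bound on how far $S^{sup}$ extends ($S\subseteq S^{sup}\subseteq B(S,D)$ plus the cut bound); it gives no lower bound whatsoever, and $S^{sup}$ may be essentially $\{r_i\}$. Then a cone vertex or a portal can be at distance $1$ from $r_i$, the additive $\Theta(\eps R)$ error cannot be charged to $(1+\eps)d_G(r_i,v)$, and the ruling radius of the Phase B instance can be close to $R$ rather than $R/2$, so the \cref{cor:edge_cutting} call is not even valid with your parameters. The same issue infects Property~2 near the boundary. A second, independent gap is the shadow-edge accounting: after trimming $S^{sup}$ to its largest $F_i$-ancestor-closed subset $V_{i,0}$, the edges between $V_{i,0}$ and $V^{\mathrm{bbg}}_{i,0}\setminus V_{i,0}$ do count toward $\mu(E^{in}\setminus E^{out})$, but they are not controlled by \cref{thm:blurry_edge} (they do not cross $S^{sup}$) and they cannot be ``charged by \cref{cor:edge_cutting}'', since that call is made on $G[V_i\setminus V_{i,0}]$ and never sees edges with an endpoint in $V_{i,0}$; charging them to the $F_i$-edges crossing $\partial V^{\mathrm{bbg}}_{i,0}$ does not bound their $\mu$-weight, which is exactly the blow-up you flagged and did not actually resolve.

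The paper's proof shows how to avoid both problems with one design change: the central region is carved at a \emph{fixed} threshold on the approximate SSSP forest, $S=\{v: d_F(v)\le (2/3)R\}$ (a single $\oDist_{\eps/100,R}$ call on the disjoint union of the parts with all roots as sources), and blurry ball growing is applied to this $S$ only with the \emph{tiny} parameter $D=\eps R/100$ to make the boundary cheap. The threshold gives the missing lower bound: every vertex outside $S^{sup}$ has $d_F(v)>(2/3)R$, hence $d_{G[V_i]}(r_i,v)\ge R/2$, which is what lets the additive $O(\eps R)$ slack be absorbed in Properties~2 and~3 and justifies calling \cref{cor:edge_cutting} with ruling radius $R/2$ (delays $d_F(q)-(2/3)R$ for children of $S^{sup}$-vertices). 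Moreover the threshold set is automatically ancestor-closed in $F$, so no trimming is needed, every cut edge has an endpoint in $S^{sup}$ or lies inside $G[V(H)\setminus S^{sup}]$, and the cut weight is covered by the two subroutines' guarantees. If you rework Phase A along these lines (fixed $d_F$-threshold near $(2/3)R$, blur only by $\Theta(\eps R)$), the rest of your outline goes through; as written, the argument does not.
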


\begin{proof}
We define $H \subseteq G$ as the graph with $V(H) = V(G)$ and $E(H) = E^{in}$. \\
Let $F \leftarrow \oDist_{\eps/100,R}(H,\{r_i \colon i \in [k]\})$. We now invoke \cref{thm:blurry_edge} on the graph $H$ with input $S_{\Tref{thm:blurry_edge}} = \{v \in V(H) \colon d_F(v) \leq (2/3)R\}$, $\mu_{\Tref{thm:blurry_edge}}(e) = \mu(e)$ for every edge $e \in E(H)$ and $D_{\Tref{thm:blurry_edge}} = \frac{\eps R}{100}$ and obtain as an output a set $S^{sup}_{\Tref{thm:blurry_edge}}$.

For $i \in [k]$, we set $V_{i,0} = V_i \cap S^{sup}_{\Tref{thm:blurry_edge}}$.
We now invoke \cref{cor:edge_cutting} with the following input, where we denote with $p(v)$ the parent of $v$ in $F$.

\begin{enumerate}
    \item $H_{\ref{cor:edge_cutting}} = H[V(H) \setminus S^{sup}_{\Tref{thm:blurry_edge}}]$
    \item For every $e \in E(H_{\ref{cor:edge_cutting}})$, $\mu_{\ref{cor:edge_cutting}}(e) = \mu(e)$
    \item $Q_{\ref{cor:edge_cutting}} = \{v \in V(H_{\ref{cor:edge_cutting}}) \colon p(v) \in S^{sup}_{\Tref{thm:blurry_edge}}\}$ and for every $q \in Q_{\ref{cor:edge_cutting}}$, $\del(q) = d_F(q) - (2/3)R \geq 0$. 
    \item $R_{\ref{cor:edge_cutting}} = R/2$
    \item $\eps_{\ref{cor:edge_cutting}} = \eps/10$
\end{enumerate}

We have to verify that the input is valid, namely that for every $v \in V(H_{\ref{cor:edge_cutting}}), d_{H_{\ref{cor:edge_cutting}},\del_{\ref{cor:edge_cutting}}}(Q_{\ref{cor:edge_cutting}},v) \leq R_{\ref{cor:edge_cutting}}$.
Consider an arbitrary $v \in V(H_{\ref{cor:edge_cutting}})$ and let $i$ such that $v \in V_i$. Then,

\[d_{H_{\ref{cor:edge_cutting}},\del_{\ref{cor:edge_cutting}}}(Q_{\ref{cor:edge_cutting}},v) \leq d_F(v) - (2/3)R \leq (1+\eps/100)d_{G[V_i]}(r_i,v) - (2/3)R \leq R/2 = R_{\ref{cor:edge_cutting}},\]
as needed.

The output is a partition $\fC_{\ref{cor:edge_cutting}}$ of $H_{\ref{cor:edge_cutting}}$ and a set $Q'_{\ref{cor:edge_cutting}} \subseteq Q_{\ref{cor:edge_cutting}}$.

For every $i \in [k]$, we now output $V_i = V_{i,0} \sqcup V_{i,1} \sqcup \ldots V_{i,k_i}$ such that for every $j \in [k_i]$, $V_{i,j}$ is one of the clusters in $\fC_{\ref{cor:edge_cutting}}$. Moreover, we define

\[E^{bridge}_i = \{\{v,p(v)\} \colon v \in V_i \cap Q'_{\ref{cor:edge_cutting}}\}.\]

We start with verifying that $(V_{i,0} \sqcup V_{i,1} \sqcup \ldots \sqcup V_{i,k_i},E^{bridge}_i)$ is an $(\eps,r_i,R)$-star decomposition of $G[V_i]$.
First, we have to check that $E^{bridge}_i = \{\{y_{ij},r_{ij}\} \colon j \in [k_i]\}$  for some $y_{ij}$ and $r_{ij}$ with $y_{ij} \in V_{i,0}$ and $r_{ij} \in V_{i,j}$.

Note that each cluster in $\fC_{\ref{cor:edge_cutting}}$ (and therefore each $V_{i,j}$ for $j \in [k_i]$) contains exactly one node in $Q'_{\ref{cor:edge_cutting}}$, which we denote by $r_{ij}$. We have $r_{ij} \in V_{ij}$ and as $r_{ij} \in Q'_{\ref{cor:edge_cutting}} \subseteq Q_{\ref{cor:edge_cutting}}$, we directly get that $y_{ij} := p(r_{ij}) \in V_i \cap S^{sup}_{\Tref{thm:blurry_edge}} =: V_{i,0}$, as desired. 
Next, we verify that for every $j \in \{0,1,\ldots,k_i\}$, $\max_{v \in V_{i,j}} d_{G[V_{i,j}]}(r_{ij},v) \leq \frac{3}{4}R$ where we set $r_{i0} = r_i$. We have

\begin{align*}
\max_{v \in V_{i,0}}d_{G[V_{i,0}]}(r_i,v) &= \max_{v \in V_i \cap S^{sup}_{\Tref{thm:blurry_edge}}} d_{G[ V_i \cap S^{sup}_{\Tref{thm:blurry_edge}}]}(r_i,v) \\
&\leq \max_{v \in V_i \cap S_{\Tref{thm:blurry_edge}}} d_{G[ V_i \cap S_{\Tref{thm:blurry_edge}}]}(r_i,v) + 
\max_{v \in V_i \cap S^{sup}_{\Tref{thm:blurry_edge}}} d_{G[ V_i \cap S^{sup}_{\Tref{thm:blurry_edge}}]}(S_{\Tref{thm:blurry_edge}},v) \\
&\leq \max_{v \in S_{\Tref{thm:blurry_edge}}} d_F(v) + \max_{v \in S^{sup}_{\Tref{thm:blurry_edge}}} d_{H[S^{sup}_{\Tref{thm:blurry_edge}}]}(S_{\Tref{thm:blurry_edge}}, v) \\
&\leq (2/3)R + D_{\Tref{thm:blurry_edge}} \\
&\leq (3/4)R.
\end{align*}

For $j \in [k_i]$ and $v \in V_{i,j}$, we have

\[d_{G[V_{i,j}]}(r_{ij},v) \leq  d_{G[V_{i,j}],\del_{\ref{cor:edge_cutting}}}(r_{ij},v) \leq (1+\eps/10) d_{H_{\ref{cor:edge_cutting}},\del_{\ref{cor:edge_cutting}}}(Q_{\ref{cor:edge_cutting}},v) \leq (1+\eps/10) R_{\ref{cor:edge_cutting}} \leq (3/4)R.\]

Next, we verify the second property. Let $v \in V_{i,0}$. We have shown above that

\[d_{G[V_{i,0}]}(r_{i0},v) \leq (2/3)R + D_{\Tref{thm:blurry_edge}} = (2/3)R + \frac{\eps R}{100} \]

and therefore $d_{G[V_{i,0}]}(r_{i0},v) \leq (1+\eps/10)d_{G[V_i]}(r_{i0},v)$ as long as 

\[d_{G[V_i]}(r_{i0},v) \geq \frac{(2/3)R + \frac{\eps R}{100}}{1+\eps/10}.\]

Therefore, it remains to consider the case

\[d_{G[V_i]}(r_{i0},v) < \frac{(2/3)R + \frac{\eps R}{100}}{1+\eps/10} \leq \frac{(2/3)R}{1 + \eps/100},\]

which in particular implies

\[d_F(v) \leq (1+\eps/100)d_{G[V_i]}(r_{i0},v) \leq (2/3)R.\]

Thus, the entire path from $r_{i0}$ to $v$ in $F$ is contained in $V_{i,0}$ and therefore 

\[d_{G[V_{i,0}]}(r_{i0},v) \leq d_F(v) \leq (1+\eps/100) d_{G[V_i]}(r_{i0},v) \leq  (1+\eps)d_{G[V_i]}(r_{i0},v).\]

It remains to verify the third property. Consider an arbitrary $j \in [k_i]$ and $v \in V_{i,j}$. We have

\begin{align*}
    d_{G[V_{i,j}]}(r_{ij},v) &= d_{G[V_{i,j}],\del_{\ref{cor:edge_cutting}}}(r_{ij},v) - \del_{\ref{cor:edge_cutting}}(r_{ij}) \\
    &\leq (1+\eps)d_{H_{\ref{cor:edge_cutting}},\del_{\ref{cor:edge_cutting}}}(Q_{\ref{cor:edge_cutting}},v) - \del_{\ref{cor:edge_cutting}}(r_{ij}) \\
    &\leq (1+\eps/10)(d_F(v) - (2/3)R) - (d_F(r_{ij}) - (2/3)R) \\
    &\leq d_F(v) - d_F(r_{ij}) + \frac{\eps R}{10} \\
    &= d_F(v) -d_F(y_{ij}) - \ell(y_{ij},r_{ij}) + \frac{\eps R}{10} \\
    &\leq (1+\eps/100)d_{G[V_i]}(r_{i0},v) - d_{G[V_i]}(r_{i0},y_{ij}) - \ell(y_{ij},r_{ij}) + \frac{\eps R}{10} \\
    &\leq d_{G[V_i]}(r_{i0},v) - d_{G[V_i]}(r_{i0},y_{ij}) - \ell(y_{ij},r_{ij}) + \frac{\eps R}{5}.
\end{align*}

In particular,

\[\ell(y_{ij},r_{ij}) + d_{G[V_{i,j}]}(r_{ij},v) \leq d_{G[V_i]}(r_{i0},v) - d_{G[V_i]}(r_{i0},y_{ij}) + \frac{\eps R}{5}.\]

Therefore,

\begin{align*}
    d_{G[V_{i,0}]}(r_{i0},y_{ij}) + \ell(y_{ij},r_{ij}) + d_{G[V_{i,j}]}(r_{ij},v) &\leq 
    (1+\eps/10)d_{G[V_i]}(r_{i0},y_{ij}) + d_{G[V_i]}(r_{i0},v) - d_{G[V_i]}(r_{i0},y_{ij}) + \frac{\eps R}{5} \\
    &\leq d_{G[V_i]}(r_{i0},v) + \frac{\eps R}{2} \\
    &\leq (1+\eps) d_{G[V_i]}(r_{i0},v)
\end{align*}

where the last inequality follows from $d_{G[V_i]}(r_{i0},v) \geq \frac{d_F(v)}{1+\eps/100} \geq \frac{(2/3)R}{1+\eps/100} \geq R/2$.

Hence, $(V_{i,0} \sqcup V_{i,1} \sqcup \ldots \sqcup V_{i,k_i},E^{bridge}_i)$ is indeed an $(\eps,r_i,R)$-star decomposition of $G[V_i]$. 

Each edge in $E^{in} \setminus E^{out}$ either has exactly one endpoint in $S^{sup}_{\Tref{thm:blurry_edge}}$ or the two endpoints are contained in different clusters in $\fC_{\ref{cor:edge_cutting}}$. Therefore, by the guarantees of \cref{thm:blurry_edge} and \cref{cor:edge_cutting}, we get in the deterministic version that

\[\mu(E^{in} \setminus E^{out}) = O \left( \sum_{e \in E(H)} \mu(e)\ell(e)/D_{\Tref{thm:blurry_edge}} \right) + O \left( \frac{\log^3(n)}{\eps R} \right) \cdot \sum_{e \in E(H_{\ref{cor:edge_cutting}})} \mu(e) \ell(e) = O \left( \frac{\log^2(n)}{ \eps R}\right)\sum_{e  \in E^{in}} \mu(e)\ell(e)\]

and in the randomized version for every $e \in E^{in}$ that

\[Pr[e \notin E^{out}] = O \left( \frac{\ell(e)}{D_{\Tref{thm:blurry_edge}}}\right) + O \left( \frac{\log^3(n)}{\eps R} \ell(e)\right) = O \left( \frac{\log^3(n)}{\eps R}\right) \ell(e),\]

as desired.

\end{proof}

\subsection{Proof of \cref{thm:low_stretch_spanning_tree}}
\label{sec:lsst_main}

We are now ready to prove the main theorem of this section, \cref{thm:low_stretch_spanning_tree_recursion}. \cref{thm:low_stretch_spanning_tree} follows as a simple corollary of it by setting $(V_1)_{\Tref{thm:low_stretch_spanning_tree_recursion}} = V(G)$, letting $(r_1)_{\Tref{thm:low_stretch_spanning_tree_recursion}}$ be an arbitrary node, $j_{\Tref{thm:low_stretch_spanning_tree_recursion}} = O(\log \diam(G))$, i.e., such that $(4/3)^{j_{\Tref{thm:low_stretch_spanning_tree_recursion}}} = O(\diam(G))$, and $\eps_{\Tref{thm:low_stretch_spanning_tree_recursion}} = \frac{1}{\log(n)}$.

\begin{theorem}
\label{thm:low_stretch_spanning_tree_recursion}
Consider the following problem on a weighted (and connected) input graph $G$. 

\begin{enumerate}
    \item A partition $V(G) = V_1 \sqcup V_2 \sqcup \ldots \sqcup V_k$ for some $k$.
    \item A node $r_i \in V_i$ for every $i \in [k]$.
    \item A natural number $j \in \mathbb{N}$ such that $\max_{i \in [k],v \in V_i} d_{G[V_i]}(r_i,v) \leq (4/3)^{j-2}$
    \item In the deterministic version a priority $\mu(e)$ for every edge $e \in E(G)$.
    \item A precision parameter $\eps \in [0,0.1]$.
\end{enumerate}
The output is a weighted forest $F \subseteq G$ (in the randomized version coming from a distribution $\mathcal{F}$). We denote by $E^{in}$ the set consisting of those edge in $E$ that have both endpoints in $V_i$ for some $i \in [k]$.
The output satisfies the following.

\begin{enumerate}
    \item $V_1,V_2,\ldots,V_k$ are the connected components of $F$.
    \item $\forall i \in [k], v \in V_i \colon d_F(r_i,v) \leq (1+\eps)^j d_{G[V_i]}(r_i,v)$
    \item Deterministic Version: \\
    $\sum_{e = \{u,v\} \in E^{in}} \mu(e)d_F(u,v) \leq j \cdot (1+\eps)^j O \left( \frac{\log^2(n)}{ \eps}\right) \sum_{e = \{u,v\} \in E^{in}} \mu(e) \ell(u,v)$. 
    \item Randomized Version: $\forall e = \{u,v\} \in E^{in} \colon \E_{F \sim \mathcal{F}}[d_F(u,v)] \leq j  \cdot (1+\eps)^j \cdot O \left( \frac{\log^2(n)}{ \eps}\right)\ell(u,v)$.
          
\end{enumerate}
We can compute the output in $(j+1)\cdot \poly(\log n)$ steps, with each oracle call using precision parameter $\eps' = \Omega \left( \frac{\eps}{\log^2(n)}\right)$.

\end{theorem}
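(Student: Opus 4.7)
My plan is to prove Theorem~\ref{thm:low_stretch_spanning_tree_recursion} by induction on $j$, using one invocation of the generalized star decomposition of Theorem~\ref{thm:generalized_star} per recursion level. The base case is when $j$ is small enough that the radius bound $(4/3)^{j-2}$ forces each $V_i$ to be the singleton $\{r_i\}$ (under the standing assumption of polynomially bounded weights, rescaling so the smallest nonzero weight is at least $1$); then I return the empty forest, and all three properties are trivial.

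For the inductive step with $j \ge 1$, I would invoke Theorem~\ref{thm:generalized_star} on the input partition with parameters $R = (4/3)^{j-2}$, precision $\eps$, and priorities $\mu$. This yields, for each $i \in [k]$, an $(\eps,r_i,R)$-star decomposition $V_i = V_{i,0} \sqcup \cdots \sqcup V_{i,k_i}$ with sub-roots $r_{i,0} := r_i$ and $r_{i,1},\dots,r_{i,k_i}$, plus bridge edges $E^{bridge}_i = \{\{y_{ij},r_{ij}\} : j \in [k_i]\}$. Since each sub-cluster has radius at most $(3/4)R = (4/3)^{(j-1)-2}$, the refined partition $\{V_{i,j}\}_{i,j}$ is a valid input for a recursive call with depth parameter $j-1$. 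Letting $F'$ denote the recursive output, I return $F := F' \cup \bigcup_i E^{bridge}_i$.

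Properties~1 and~2 fall out directly. The bridge edges reconnect each $V_{i,0},\dots,V_{i,k_i}$ into $V_i$, giving Property~1. For Property~2 and $v \in V_{i,j}$ with $j \ge 1$, the inductive $(1+\eps)^{j-1}$ bound on $F'$ combined with the third condition of the star-decomposition definition yields $d_F(r_i,v) \le d_{F'}(r_i,y_{ij}) + \ell(y_{ij},r_{ij}) + d_{F'}(r_{ij},v) \le (1+\eps)^{j-1}\bigl(d_{G[V_{i,0}]}(r_i,y_{ij}) + \ell(y_{ij},r_{ij}) + d_{G[V_{i,j}]}(r_{ij},v)\bigr) \le (1+\eps)^j d_{G[V_i]}(r_i,v)$; the case $v \in V_{i,0}$ is analogous via the second star-decomposition property.

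The main calculation, and the step I expect to require the most care, is Property~3. I would split $E^{in} = E^{out} \sqcup (E^{in} \setminus E^{out})$, where $E^{out}$ are edges internal to some sub-cluster $V_{i,j}$. Edges in $E^{out}$ contribute at most $(j-1)(1+\eps)^{j-1} \cdot \poly(\log n)/\eps \cdot \sum_{e \in E^{in}} \mu(e)\ell(e)$ by the inductive hypothesis applied to $F'$. For each cut edge $e = \{u,v\} \in E^{in} \setminus E^{out}$, I would use the coarse bound $d_F(u,v) \le d_F(r_i,u) + d_F(r_i,v) \le 2(1+\eps)^j R$ coming from the already-verified Property~2, multiplied by the cut-weight guarantee $\mu(E^{in} \setminus E^{out}) \le O(\log^3 n/(\eps R)) \sum_{e \in E^{in}} \mu(e)\ell(e)$ of Theorem~\ref{thm:generalized_star}; the $R$ cancels, leaving an additive $(1+\eps)^j \cdot \poly(\log n)/\eps$ term. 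Summing the two contributions gives the claimed $j(1+\eps)^j \cdot \poly(\log n)/\eps$ bound. The randomized version is identical after conditioning on the star-decomposition randomness before invoking the inductive expectation bound, using the per-edge cutting probability $\le O(\log^3 n/(\eps R)) \ell(e)$. The complexity $(j+1)\poly(\log n)$ and precision $\Omega(\eps/\log^2 n)$ then follow immediately from summing the per-level cost of Theorem~\ref{thm:generalized_star} over the $j+1$ recursion levels.
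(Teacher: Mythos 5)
Your proposal follows essentially the same route as the paper's proof: induction on $j$, one invocation of \cref{thm:generalized_star} with $R=(4/3)^{j-2}$ per level, a recursive call on the refined partition with depth parameter $j-1$ (valid because each sub-cluster has radius at most $(3/4)R=(4/3)^{j-3}$), returning the recursive forest plus the bridge edges, proving Property~2 from the star-decomposition guarantees, and handling Property~3 by splitting $E^{in}$ into $E^{out}$ and the cut edges, bounding each cut edge by $d_F(u,v)\le 2(1+\eps)^jR$ so that $R$ cancels against the $O(\cdot/(\eps R))$ cut-weight guarantee; the randomized conditioning and the complexity accounting also match.

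The one genuine deviation is your base case, and it does not quite work as stated. The paper allows edge lengths $\ell(e)=0$ (and relies on zero-length edges elsewhere, e.g.\ in the subdivided graph), so at $j=1$ the radius bound $(4/3)^{-1}<1$ only forces each $G[V_i]$ to have diameter $0$, not to be a singleton, and rescaling cannot remove zero-length edges. Returning the empty forest then violates Property~1 whenever some $V_i$ contains several vertices joined by zero-length edges. The paper's base case instead makes a single call to $\oDist_{\eps,(4/3)^{j-2}}$ from $\{r_1,\dots,r_k\}$, whose output forest spans each zero-diameter cluster and satisfies all properties; this is the easy fix. Two smaller remarks: for Property~1 in the inductive step you should also verify acyclicity of $F$, which follows either from your observation that each bridge edge joins a distinct sub-cluster $V_{i,j}$, $j\ge 1$, to $V_{i,0}$ (one bridge per sub-cluster), or, as the paper does, from the edge count $|E(F)|\le |V(G)|-k$; and your $\poly(\log n)$ hedge in the cut-edge term is reasonable, though note that \cref{thm:generalized_star} as stated gives $O(\log^3 n/(\eps R))$, so taken literally it yields a $\log^3 n$ rather than $\log^2 n$ factor --- a bookkeeping discrepancy that is already present in the paper itself.
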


\begin{proof}
We prove the statement by induction on $j$. For $j = 1$, the third property of the input together with all edge weights being non-negative integer implies that the diameter of $G[V_i]$ is $0$ for each $i \in [k]$. Hence, it is easy to verify that the forest $F$ one obtains by calling $\oDist_{\eps,(4/3)^{j-2}}( \{r_1,r_2,\ldots,r_k\})$ satisfies all the conditions.

Now, consider an arbitrary $j > 1$ and assume that the statement holds for $j - 1$.

We first invoke \cref{thm:generalized_star} with the same input that we received (Setting $R = (4/3)^{j-2}$).

As an output, we obtain for every $i \in [k]$ a partition $V_{i,0} \sqcup V_{i,1} \sqcup \ldots \sqcup V_{i,k_i}$ and a set of edges $E_i^{bridge}$ such that $(V_{i,0} \sqcup V_{i,1} \sqcup \ldots \sqcup V_{i,k_i},E^{bridge}_i)$ is a $(G[V_i],\eps,r_i,R)$-star decomposition of $G[V_i]$. For $j \in \{0,1,\ldots,k_i\}$, we denote with $r_{ij}$ the root of the cluster $V_{i,j}$.

Now, we perform a recursive call with input partition $V(G) = (V_{1,0} \sqcup V_{1,1} \sqcup \ldots \sqcup V_{1,k_1}) \sqcup \ldots \sqcup (V_{k,0} \sqcup V_{k,1} \sqcup \ldots \sqcup V_{1,k_k})$, nodes $r_{ij} \in V_{i,j}$ for every $i \in [k], j \in \{0,1,\ldots,k_i\}$, setting $j_{rec} = j -1$ and with the same priorities and precision parameter $\eps$.

First, we have to verify that the input to the recursive call is valid. This requires us to show that  $\max_{i \in [k], j \in [k_i],v \in V_{i,j}} d_{G[V_{i,j}]}(r_{ij},v) \leq (4/3)^{j_{rec} - 2} = (4/3)^{j - 3}$.
Consider an arbitrary $i \in [k]$. As $(V_{i,0} \sqcup V_{i,1} \sqcup \ldots \sqcup V_{i,k_i},E^{bridge}_i)$ is a $(G[V_i],\eps,r_i,R)$-star decomposition of $G[V_i]$, the first guarantee of \cref{def:star_decomposition} states that for every $j \in \{0,1,\ldots,k_i\}$,

\[\max_{v \in V_{i,j}} d_{G[V_{i,j}]}(r_{ij}, v) \leq \frac{3}{4}(4/3)^{j-2} = (4/3)^{j-3},\]

as desired.

Now, let $F_{rec}$ denote the forest obtained from the recursive call.
We now return the forest $F$ that one obtains from $F_{rec}$ by adding all the edges in $\bigcup_{i \in [k]} E^{bridge}_i$ to it.
We now have to verify that $F$ satisfies all the conditions.

We start by verifying the second condition.
Consider an arbitrary $i \in [k]$ and $v \in V_i$. We have to show that
$d_F(r_i,v) \leq (1+\eps)^j d_{G[V_i]}(r_i,v)$.
First, consider the cases that $v \in V_{i,0}$.
As $r_i = r_{i0} \in V_{i,0}$, it follows from the guarantee of the recursive call that 

\[d_{F}(r_i,v) \leq d_{F_{rec}}(r_{i0},v) \leq (1+\eps)^{j_{rec}} d_{G[V_{i,0}]}(r_{i0},v) \leq (1+\eps)(1+\eps)^{j_{rec}}d_{G[V_i]}(r_{i0},v) = (1+\eps)^{j}d_{G[V_i]}(r_i,v),\]
where the last inequality follows from the second guarantee of  \cref{def:star_decomposition}.
It remains to consider the case that $v \in V_{i,j}$ for some $j \in [k_i]$.
According to the guarantees of the recursive call and of $\cref{def:star_decomposition}$, there exists an edge $\{y_{ij},r_{ij}\} \in E^{bridge}_i$ with $y_{ij} \in V_{i,0}$ and $r_{ij}$ being the root of $V_{i,j}$ such that

\begin{align*}
    d_F(r_i,v) &\leq d_{F_{rec}}(r_{i0},y_{i0}) + \ell(y_{ij},r_{ij}) + d_{F_{rec}}(r_{ij},v) \\
    &\leq (1+\eps)^{j_{rec}} d_{G[V_{i,0}]}(r_{i0},y_{ij}) + \ell(y_{ij},r_{ij}) + (1+\eps)^{j_{rec}}d_{G[V_{i,j}]}(r_{ij},v)) \\
    &\leq (1+\eps)^{j_{rec}} \left( d_{G[V_{i,0}]}(r_{i0},y_{ij}) + \ell(y_{ij},r_{ij}) + d_{G[V_{i,j}]}(r_{ij},v) \right) \\
    &\leq (1+\eps)^{j_{rec}} \left( (1+\eps) d_{G[V_i]}(r_{i0},v) \right) \\
    &= (1+\eps)^j d_{G[V_i]}(r_i,v),
\end{align*}

which finishes the proof of second property.

Next, we verify that $F$ is indeed a forest and that $V_1,V_2,\ldots,V_k$ are the connected components of $F$. Note that the second property ensures that any two nodes in $V_i$ are in the same connected component of $F$ for every $i \in [k]$.
Hence, to verify that $F$ is a forest and $V_1,V_2,\ldots,V_k$ are the connected components of $F$, it suffices to show that $F$ has at most $|V(G)| - k$ edges.

We have

\[|E(F)| \leq |E(F_{rec})| + \sum_{i=1}^k |E^{bridge}_i| \leq |V(G)| - \sum_{i=1}^k (k_i + 1) + \sum_{i=1}^k k_i = |V(G)| - k,\]

as desired.

We now verify the third property, which only applies to the deterministic version.
we denote by $E^{out}$ the set consisting of those edge in $E$ that have both endpoints in $V_{i,j}$ for some $i \in [k], j \in \{0,1,\ldots,k_i\}$. We have

\begin{align*}
    \sum_{e = \{u,v\} \in E^{in}} \mu(e)d_F(u,v) &= \sum_{e = \{u,v\} \in E^{in} \setminus E^{out}} \mu(e)d_F(u,v) + \sum_{e = \{u,v\} \in E^{out}} \mu(e)d_F(u,v) \\
    &\leq 2 (1+\eps)^j (4/3)^{j-2}  \sum_{e = \{u,v\} \in E^{in} \setminus E^{out}} \mu(e)  + \sum_{e = \{u,v\} \in E^{out}} \mu(e)d_{F_{rec}}(u,v) \\
    &\leq 2  (1+\eps)^j  (4/3)^{j-2} O \left( \frac{\log^2(n)}{ \eps (4/3)^{j-2}}\right) \sum_{e = \{u,v\} \in E^{in}} \mu(e) \ell(u,v) + \sum_{e = \{u,v\} \in E^{out}} \mu(e)d_{F_{rec}}(u,v) \\
    &\leq O \left( \frac{\log^2(n)}{ \eps}\right)(1+\eps)^j \sum_{e = \{u,v\} \in E^{in}} \mu(e) \ell(u,v) \\
    &+ j_{rec}(1 + \eps)^{j_{rec}}O \left( \frac{\log^2(n)}{ \eps}\right) \sum_{e = \{u,v\} \in E^{out}} \mu(e) \ell(u,v) \\
    &\leq j (1+\eps)^j O \left( \frac{\log^2(n)}{ \eps}\right)\sum_{e = \{u,v\} \in E^{in}} \mu(e) \ell(u,v),
\end{align*}

as needed.

Finally we verify the fourth property, which only applies to the randomized version. Let $e = \{u,v\} \in E^{in}$ be an arbitrary edge.

\begin{align*}
    \E_{F \sim \mathcal{F}}[d_F(u,v)] &\leq \E_{F \sim \mathcal{F}}[d_F(u,v)| e \notin E^{out}] Pr[e \notin E^{out}] + \E_{F \sim \mathcal{F}}[d_F(u,v) | e \in E^{out}] \\
    &\leq 2(1+\eps)^j(4/3)^{j-2}O \left( \frac{\log^2(n)}{\eps R}\right) \ell(u,v) + j_{rec}(1 + \eps)^{j_{rec}} \left( \frac{\log^2(n)}{\eps R}\right) \ell(u,v) \\
    &\leq j \cdot (1+\eps)^j \cdot O \left( \frac{\log^2(n)}{\eps} \right) \ell(u,v),
\end{align*}

as desired.

It remains to analyze the running time. Running the algorithm of \cref{thm:generalized_star} takes $\poly(\log n)$ steps and all oracle calls use precision parameter $\eps' = \Omega \left( \frac{\eps}{\log^2(n)}\right)$. Performing the recursive call takes $(j_{rec} + 1) \poly(\log n)$ steps and all oracle calls use precision parameter $\eps' = \Omega \left( \frac{\eps}{\log^2(n)}\right)$. Hence, the algorithm overall performs $(j + 1)\poly(\log n)$ steps and uses precision parameter $\eps' = \Omega \left( \frac{\eps}{\log^2(n)}\right)$, as desired.  
\end{proof}

\section{$D$-separated clustering}
\label{sec:sparse-cover}

In this section we show how to construct a solution to the $D$-separated clustering problem defined in \cref{sec:intro}. Recall that a $D$-separated clustering problem asks us to cluster at least a constant fraction of nodes into $D$-separated clusters of diameter $\tO(D)$. 

Before stating the main result, we note that we make our result stronger by using weaker oracles. 
The reason to do so is that this stronger result is also used in \cite{RGHZL2022sssp}. We will now define the weaker distance oracle that we use. 

\begin{definition}[Weak distance oracle $\oWeak_{\eps,D}$]
\label{def:oracle_dist_weak}

The input of this oracle consists of a subset $S \subseteq V(G)$. 

The output is a weighted forest $F \subseteq G$ rooted at $S$. 
The output has to satisfy the following:
\begin{enumerate}
    \item Every $v \in V(F)$ satisfies $d_F(v) \le (1+\eps)D$. 
    \item Every $v \in V(G)$ with $d_G(S, v) \le D$ has $v \in V(F)$. 
\end{enumerate}
\end{definition}

An important property of \cref{def:oracle_dist_weak} is that the weak distance oracle $\oWeak_{\eps, D}$ can only be used to computing approximate distances in the same weighted graph $G$. The way we defined $\oWeak_{\eps, D}$ does not allow the use of it on a subgraph of the input graph $G$, nor does it allow changing the weights of edges of $G$ between different calls to the oracle during the algorithm. 


\begin{restatable}{theorem}{strongclustering}[Deterministic $D$-separated Strong-Diameter Clustering]
\label{lem:strongdiam}
Let $G$ be a weighted graph, $\Vrem \subseteq V(G)$ and $D > 0$. We assume that each node of $G$ has a unique identifier from $[2^b]$ for $b=O(\log n)$ and set $\eps = \frac{1}{100b\log(n)}$. 

We can compute an $\eps D$-separated $O(D \log^4(n))$-strong-radius clustering $\fC_{out}$ such that $V(\fC_{out}) \subseteq \Vrem$ and $|V(\fC_{out})| \ge |\Vrem|/3$.

The computation consists of $\poly(\log n)$ \congest rounds in $G$ and $\poly(\log n)$ calls to $\oWeak_{\eps, D'}$ and $\oForestAgg_{D'}$ for various $D' \in [D , 10 D]$. 
\end{restatable}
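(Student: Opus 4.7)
The plan is to follow the structure of the unweighted strong-diameter clustering proof of \cref{thm:strong_diameter_clustering} in \cref{sec:unweighted_strong_clustering}, adapting it to weighted graphs while respecting the restriction that $\oWeak$ only provides distances in the full graph $G$. The algorithm runs in $b=O(\log n)$ bit-processing phases, maintaining a nested sequence of alive sets $\Vrem = V^\mathrm{alive}_0 \supseteq V^\mathrm{alive}_1 \supseteq \dots \supseteq V^\mathrm{alive}_b$, a set of potential cluster centers $Q_i \subseteq V^\mathrm{alive}_i$, and a forest $T_i$ in $G$ spanning $V^\mathrm{alive}_i$ rooted at $Q_i$. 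The invariants after phase $i$ are: (a) $d_{T_i}(Q_i,v) \leq R_i = O(iD\,\polylog n)$ for every $v \in V^\mathrm{alive}_i$; (b) any two centers $q_1,q_2 \in Q_i$ at $G$-distance less than $\eps D$ agree on their first $i$ ID bits; (c) $|V^\mathrm{alive}_i| \geq |\Vrem|(1-i/(6b))$. After $b$ phases, invariant (b) forces every connected component of $G[V^\mathrm{alive}_b]$ to contain at most one center, so the components form the clusters $\fC_{out}$, invariant (a) bounds their strong radius by $R_b = O(D\log^4 n)$, and invariant (c) yields $|V(\fC_{out})| \geq (5/6)|\Vrem|$.

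In phase $i+1$ I split $Q_i = Q_i^\fR \sqcup Q_i^\fB$ by bit $i+1$, and let $U^\fA = \{v \in V^\mathrm{alive}_i : \root_{T_i}(v) \in Q_i^\fA\}$ for $\fA \in \{\fR,\fB\}$. Two calls to $\oWeak_{\eps',O(D)}$ with sources $U^\fR$ and $U^\fB$ yield forests $F^\fR,F^\fB$ approximating $d_G(U^\fR,\cdot)$ and $d_G(U^\fB,\cdot)$ on the $O(D)$-ball. Define $\diff(v) := \min(10D, d_{F^\fR}(v)) - \min(10D, d_{F^\fB}(v))$, which by the triangle inequality on $G$ is $2$-Lipschitz in $d_G$. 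Using $\oForestAgg$ on $T_i$ to count how many alive nodes fall into each of the $O(b)$ slabs $L_j = \{v : \diff(v) \in [j\eps D,(j+2)\eps D)\}$, pick $j^*$ minimizing $|L_{j^*}|$; by pigeonhole $|L_{j^*}| \leq |\Vrem|/(6b)$. Set $V^\mathrm{alive}_{i+1} = V^\mathrm{alive}_i \setminus L_{j^*}$, and drop from $Q_i$ any blue center with $\diff < j^*\eps D$ and any red center with $\diff \geq (j^*+2)\eps D$ to obtain $Q_{i+1}$. Separation invariant (b) for $i+1$ then follows from the $2$-Lipschitz property: any $G$-path between the two surviving halves has length at least $\eps D$.

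The main obstacle, and the crux of what needs verification, is maintaining the ruling invariant (a) when the only distance oracle accesses $G$ rather than the alive subgraph. To regrow $T_{i+1}$, I would iteratively invoke $\oWeak$ from $Q_{i+1}$ and use $\oForestAgg$ to prune: for each node $v$, sum-aggregate the indicator $\mathbb{1}[u \notin V^\mathrm{alive}_{i+1}]$ along the tree path from $v$ to its root; if any ancestor is dead, discard $v$'s subtree. This produces a forest lying entirely inside $G[V^\mathrm{alive}_{i+1}]$. The delicate task is showing every alive node is still covered after pruning, i.e., that it is reachable from $Q_{i+1}$ by a path inside $V^\mathrm{alive}_{i+1}$ of length at most $R_{i+1}$. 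I would prove this inductively: a previously covered node $v$ either retains its old center (if surviving) or, if its center was of the wrong color and got removed, reroutes to a nearby same-color center; since the deleted slab is on the opposite side of $\diff$, the rerouting stays on the alive side, and the extra distance is absorbed into the $O(D\,\polylog n)$ radius increase per phase. The global aggregation needed to find $j^*$ and to run the pigeonhole would be realized by first computing a coarse weak-diameter decomposition of $G$ (analogously to \cref{thm:ggr_clustering}) so that $\oForestAgg$ on its trees pipelines the count, as in the proof of \cref{thm:strong_diameter_clustering}.

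Summing the per-phase cost of $\poly\log n$ oracle calls (for regrowing $T_{i+1}$ incrementally by balls of radius $O(D)$) over $b=O(\log n)$ phases yields a total of $\poly\log n$ calls to $\oWeak_{\eps',D'}$ and $\oForestAgg_{D'}$ with $D' \in [D,10D]$, matching the stated complexity.
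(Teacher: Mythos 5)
Your plan transplants the potential-cluster-centers / slab-deletion algorithm of \cref{thm:strong_diameter_clustering} into the weighted setting, but the step you yourself flag as ``the delicate task'' is precisely where the approach breaks, and you do not close it. The whole correctness of the slab argument in \cref{sec:unweighted_strong_clustering} rests on the fact that $\diff_i$ is computed from \emph{exact} distances \emph{inside the current alive subgraph} $G_i$: then a shortest path in $G_i$ from a surviving center to $u$ has monotone $\diff_i$ along it, hence avoids the deleted slab and survives, which is what preserves the ruling invariant. Here the oracle $\oWeak_{\eps,D'}$ of \cref{def:oracle_dist_weak} is explicitly restricted to the fixed graph $G$ and cannot be applied to $G[V^\mathrm{alive}_{i+1}]$, so your $\diff$ is based on $d_G$, the returned forests may route through dead nodes, and after pruning an alive node may have \emph{all} of its short $G$-paths to surviving centers passing through dead regions. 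Re-invoking $\oWeak$ from $Q_{i+1}$ and pruning again just reproduces the same trees; nothing in your slab choice guarantees that a surviving node can ``reroute to a nearby same-color center'' by a path that stays inside $V^\mathrm{alive}_{i+1}$, and the inductive argument you sketch for this is exactly the missing proof, not a routine verification. A second, quantitative problem: your claim that $\diff(v)=\min(10D,d_{F^\fR}(v))-\min(10D,d_{F^\fB}(v))$ is $2$-Lipschitz in $d_G$ is false; only the exact difference $d_G(U^\fR,\cdot)-d_G(U^\fB,\cdot)$ is. The $(1+\eps)$-approximation at scale $10D$ introduces additive error of order $10\eps D$, the same order as your slab width $2\eps D$, so a single edge can jump across the deleted slab in $\diff$-value; both the separation invariant and the ``path stays below the slab'' argument need a careful re-tuning of scales and slab widths that the proposal does not provide. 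Finally, separation of centers at the end does not by itself give $\eps D$-separation \emph{in $G$} of the output clusters (components of the alive subgraph can be close in $G$ through dead nodes); this also needs the buffer argument that hinges on the broken Lipschitz claim.

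For contrast, the paper avoids computing any distances in shrinking subgraphs altogether: it first builds a $\wt\Theta(D)$-\emph{weak}-radius, well-separated clustering of $\Vrem$ by a weighted adaptation of the ball-carving scheme of \cite{rozhon_ghaffari2019decomposition} (\cref{lem:weak_diam}), where clusters grow via full-graph oracle calls and delete proposing nodes when they stop growing, and the Steiner trees are allowed to leave the clusters. It then converts weak to strong via the trick of \cite{chang_ghaffari2021strong_diameter}: iterate the weak clustering at geometrically decreasing scales $D^i=10D(1-\eps)^{2ib}$, and whenever a child cluster contains more than half of its parent, output the entire vertex set of its Steiner tree as a strong cluster; laminarity (property~4 of \cref{lem:weak_diam}) and the scale decay give the $\eps D$-separation, and a heavy-child counting argument gives the $|\Vrem|/3$ bound. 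If you want to pursue your route, you would essentially have to re-derive an analogue of that machinery to handle coverage inside the alive subgraph; as written, the proposal assumes the hard part.
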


We note that \cref{lem:strongdiam} implies \cref{thm:separated_decomposition} that we restate here for convenience. 

\separatedDecomposition*

\begin{proof}
As noted in the fifth item of \cref{thm:congestpa-simulation}, in an undirected graph, a call to a distance oracle can be implemented trivially in $O(D)$ rounds. 

We iterate \cref{lem:strongdiam} $O(\log n)$ times, each time setting $S_\Tref{lem:strongdiam}$ to be the set of yet unclustered nodes.

\end{proof}

In \cite{RGHZL2022sssp}, a similar object called a sparse neighborhood cover is needed. We define that object next. 

\begin{definition}[Sparse Neighborhood Cover]
Finally, a sparse neighborhood cover of a graph $G$ with covering radius $R$ is a collection of $\gamma = O(\log n)$ clusterings $\fC_1, \dots, \fC_\gamma$ such that for each node $v \in V(G)$ there exists some $i \in \{1, \dots, \gamma\}$ and some $C \in \fC_i$ with $B(v, R) \subseteq C$. 
\end{definition}

The following result is another straightforward corollary of \cref{lem:strongdiam}. 

\begin{restatable}{theorem}{sparseCover}
\label{thm:sparse_cover}
Let $G$ be a weighted graph and $D \geq 0$.
Assume we have access to oracle $\oWeak_{\eps,D'}$ for various $D' \in [\frac{D}{\log^7 n},D]$ and $\eps = \frac{1}{\log^3 n}$ and access to an oracle $\oForestAgg_{D'}$ for $D' \le 2D$. 

We can compute a sparse neighborhood cover with covering radius $\frac{D}{\log^7 n}$ such that each cluster $C$ in one of the clusterings comes with a rooted tree  $T_C$ of diameter at most $D$. 
The algorithm runs in $\tO(1)$ \congest rounds and needs to call $\tO(1)$ times oracles $\oWeak,\oForestAgg$.
\end{restatable}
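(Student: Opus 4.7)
The plan is to build the cover $\{\fC_1, \ldots, \fC_\gamma\}$ with $\gamma = O(\log n)$ clusterings in $\gamma$ iterations. At iteration $i$, I maintain a set $U_i$ of nodes that are not yet ``covered'' by the cover built so far (a node $v$ is covered once some previously built cluster $C'$ satisfies $B_G(v,R) \subseteq C'$), and use one call to \cref{lem:strongdiam} plus one call to $\oWeak$ to shrink $U_i$ by a constant factor. Throughout, let $R := D/\log^7 n$ denote the target covering radius. Starting from $U_1 := V(G)$, I show $|U_{i+1}| \le (2/3)|U_i|$, so after $O(\log n)$ iterations every node is covered.

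For iteration $i$, I invoke \cref{lem:strongdiam} with $\Vrem := U_i$ and distance parameter $D' := D/(c \log^4 n)$ for a sufficiently large constant $c$. This yields an $\eps_{L} D'$-separated strong-radius-$O(D' \log^4 n)$ clustering $\fC$ with $V(\fC) \subseteq U_i$ and $|V(\fC)| \ge |U_i|/3$; since $\eps_{L} = \Theta(1/\log^2 n)$ (where $\eps_{L}$ denotes the precision used in \cref{lem:strongdiam}), for $c$ large enough each cluster tree has radius at most $D/8$ and pairwise separation is $\Theta(D/\log^6 n) \ge 4R$. The parameters lie in the admissible ranges: $D' \in [D/\log^7 n, D]$, the cluster trees have diameter at most $2D$, and the precision $\Theta(1/\log^2 n)$ required by \cref{lem:strongdiam} is weaker than the $1/\log^3 n$ our oracle provides (using a stronger oracle for a weaker requirement is fine). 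Next I grow the clusters: call $\oWeak_{\eps,R}$ with $S := V(\fC)$, obtaining a forest $F$ rooted at $S$. For each source $s \in S$, let $T^F_s$ denote the tree of $F$ rooted at $s$, and define $C' := \bigcup_{s \in C} V(T^F_s)$ for each $C \in \fC$. The cluster $C'$ is equipped with the tree obtained by attaching each $T^F_s$ to its root $s$ inside the original tree $T_C$ of \cref{lem:strongdiam}; its radius is at most $D/8 + (1+\eps)R \le D/2$, giving diameter at most $D$. I add $\fC_i := \{C' : C \in \fC\}$ to the cover and set $U_{i+1} := U_i \setminus V(\fC)$.

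The crucial correctness claim is that for every $v \in V(\fC)$ with $v \in C \in \fC$, the ball $B_G(v,R)$ lies entirely in $C'$. Let $u \in B_G(v,R)$. Since $v \in S$, we have $d_G(S,u) \le R$, so by \cref{def:oracle_dist_weak} $u \in V(F)$ and its root $r_u \in S$ satisfies $d_G(r_u,u) \le (1+\eps)R$. If $r_u$ belonged to a different cluster $C_2 \ne C$, the triangle inequality would give $d_G(C, C_2) \le d_G(v,u) + d_G(u,r_u) \le R + (1+\eps)R < 3R$, contradicting the $4R$-separation. Hence $r_u \in C$ and therefore $u \in C'$, proving the claim. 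Consequently every node of $V(\fC)$ is covered in iteration $i$, so $|U_{i+1}| \le (2/3)|U_i|$ and $\gamma = O(\log n)$ iterations suffice. The total cost is $\gamma \cdot \tO(1) = \tO(1)$ \congest rounds and oracle calls to $\oWeak$ and $\oForestAgg$.

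The main obstacle is the separation-versus-growing-radius bookkeeping. Because $\oWeak$ only roots each node at an \emph{approximately}-closest source rather than the exactly closest one, the tree assignment could in principle leak across cluster boundaries; this is what forces the cluster separation to strictly exceed $(2+\eps)R$ and is the reason the theorem allows a polylog gap between the covering radius $D/\log^7 n$ and the cluster-diameter budget $D$. Choosing the parameters so that the separation dominates $R$ by polylogarithmic factors—while the cluster diameter is still kept below $D$—is the only nontrivial balance to strike.
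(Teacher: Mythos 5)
Your proposal is correct and takes essentially the same route as the paper's proof: repeatedly invoke \cref{lem:strongdiam} on the still-unclustered vertices for $O(\log n)$ rounds with distance parameter $\Theta(D/\log^4 n)$, then enlarge each strong-diameter cluster via a single call to $\oWeak$ at roughly the covering radius, using the clustering's separation to show every clustered node's $R$-ball stays inside its enlarged cluster. The only differences are cosmetic parameter choices (the paper grows by $s/10$ where $s$ is the separation, you grow by exactly $R$ after arranging separation at least $4R$).
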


The proofs of \cref{lem:strongdiam,thm:sparse_cover} consist of three parts.
First, in \cref{sec:weak_clustering} we show how to compute a $D$-separated $\tilde{O}(D)$-weak-diameter clustering (see below for the definition) that clusters a large fraction of the vertices (\cref{lem:weak_diam}). 
Second, in \cref{sec:strong_radius} we show how to use the weak-diameter clustering to derive \cref{lem:strongdiam}). 
Finally, we derive \cref{thm:sparse_cover} by $O(\log n)$ invocations of \cref{lem:strongdiam}. 

The weak- and strong- diameter clustering algorithms are mostly generalizations of the \congest algorithms from \cite{rozhon_ghaffari2019decomposition,chang_ghaffari2021strong_diameter} to a weighted setting with access to approximate distances. 

The added difficulty is that due to the fact that we deal with approximate, and not exact, distances, some invariants from the original algorithms are now slowly deteriorating during the algorithm, and hence the analysis requires some more care.  

\paragraph{Definitions}

In this appendix, we slightly change and extend definitions of clusters and clusterings. 
Whenever we talk about a cluster, it is, formally, not just a set of nodes $C \subseteq V(G)$, but a pair $(C, T_C)$, where $C \subseteq V(G)$ and $T_C$ is a rooted tree. 
We have $C \subseteq V(T_C)$ and we think of $T_C$ as the \emph{Steiner tree} collecting the nodes of $C$. 
The \emph{radius} $R$ of a cluster $(C, T_C)$ is the radius of $T_C$. 
If $V(T_C) = C$, we talk about a \emph{strong-radius} cluster, otherwise we talk about a \emph{weak-radius} cluster. 
Sometimes, we still informally talk about ``a cluster $C$'' when the tree $T_C$ is clear from context.

Recall that a clustering $\fC = (C_j, T_{C_j})_{j \in J}$ is a collection of disjoint clusters. We say that $\fC$ is $D$-separated if for every $(C_i, T_{C_i}), (C_j, T_{C_j}) \in \fC, i \not= j,$ we have $\dist_G(C_i, C_j) \ge D$.

\subsection{Weak-Radius Clustering}
\label{sec:weak_clustering}

We start by proving a weighted version of a weak-diameter decomposition result from \cite{rozhon_ghaffari2019decomposition}. 

\begin{lemma}[Deterministic Weak-Radius Clustering]
\label{lem:weak_diam}
Let $G$ be a weighted graph, $\Vrem \subseteq V(G)$ be a subset of its nodes, $D > 0$ and $\delta > 0$. We assume that each node of $G$ has a unique identifier from $[2^b]$ for $b=O(\log n)$. We define $\eps := \frac{1}{100b\log(n)}$. 

We can compute a $(1-\eps)^b D$-separated $O(D \log^3(n) \cdot 1/\delta)$-weak-radius clustering $\fC = (C_j, T_{C_j})_{j \in J}$ such that 
\begin{enumerate}
    \item $V(\fC) \subseteq \Vrem$,
    \item $ | \Vrem \setminus V(\fC) | \le \delta \cdot | \Vrem |$.
    \item We can implement an analogue of the oracle $\oForestAgg$ on $\{T_{C_j}\}_{j \in J}$ in $\poly\log(n)$ \congest rounds and calls to the oracle $\oForestAgg_{\eps, 2D}$. 
    \item  Consider any $W \subseteq \Vrem$ and $(C,T_C) \in \fC$. 
    If $\dist_G(W, \Vrem \setminus W) > (1+\eps) D$, then we have the following. 
    Whenever we have $C \cap W \not = \emptyset$, then $C \subseteq W$ and for every $u \in V(T_C)$ we have $\dist_G(u, W) \le (1+\eps)D/2$. 
\end{enumerate}

The computation consists of $\poly\log(n)$ \congest rounds in $G$ and $\poly(\log n)$ calls to distance oracles $\oWeak_{\eps, D'}$ for various $D' \in [D/2, D]$ and $\oForestAgg_{D'}$ for $D' \le 2D$.
\end{lemma}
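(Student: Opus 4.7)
The plan is to adapt the deterministic weak-diameter decomposition algorithm of Rozho\v{n}--Ghaffari to the weighted, oracle-based setting of this paper. I would begin with the trivial clustering in which every $v \in \Vrem$ is a singleton cluster with Steiner tree consisting of $v$ alone, and then refine this clustering over $b$ \emph{bit-phases}, where bit-phase $i$ is responsible for separating clusters whose leader-IDs differ in bit $i$. Throughout, every cluster $(C,T_C)$ will carry a designated leader $\ell_C \in C$ whose ID is used for the coloring.

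Inside bit-phase $i$, I would color each surviving cluster Blue or Red according to bit $i$ of $\ell_C$. The phase then consists of $\tau = \Theta(\log n / \delta)$ \emph{growth iterations}. In one growth iteration I call $\oWeak_{\eps, D_i}$ with source set equal to the union of current Blue Steiner trees (here $D_i$ is the current separation target $(1-\eps)^{i-1} D$, shrunk by a $(1-\eps)$ factor per bit-phase to absorb oracle error) to obtain a forest $F$ of approximate BFS trees. Using $\oForestAgg_{2D}$ on $F$, each Red cluster $C$ can count $|C \cap V(F)|$. If this is at least $|C|/2$, then $C$ is merged into the unique Blue cluster whose tree in $F$ reaches $\ell_C$, extending the Blue Steiner tree by appending the relevant branch of $F$; otherwise the vertices in $C \cap V(F)$ are \emph{deleted} from the entire clustering. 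Since a Red cluster can be halved only $O(\log n)$ times before vanishing, after $\tau = \Theta(\log n / \delta)$ iterations every surviving Red cluster is at graph distance strictly more than $D_i$ from every Blue cluster, which is what the bit-phase requires. Iterating over all $b$ bit-phases yields $(1-\eps)^b D$-separation among the final clusters.

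For the quantitative guarantees: by picking $\tau$ such that each iteration deletes at most $\delta |\Vrem|/(b\tau)$ vertices (charged via the halving argument to doomed Red clusters), the total deletion across $b$ bit-phases and $\tau$ iterations per phase is at most $\delta |\Vrem|$, giving bullet~2. Each growth iteration extends any Steiner tree by at most $(1+\eps) D$, so over the $b \tau = O(\log^2 n / \delta)$ iterations the weak-radius grows to $O(D \log^2 n /\delta)$; an additional $O(\log n)$ factor comes from the fact that when two Steiner trees merge, the new tree's radius is the sum of the two old radii plus the current growth step, and in the worst case $O(\log n)$ such merges can compound along the ``heavy path'' of a cluster's merge history, yielding bullet~1 with the stated $O(D \log^3 n /\delta)$ bound. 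Bullet~3 follows because every Steiner tree is a concatenation of subforests returned by $\oWeak$, each of which can be aggregated via $\oForestAgg_{2D}$; running $\poly\log n$ such aggregations in sequence simulates one aggregation on $T_C$. Bullet~4 holds because, given the separation hypothesis $\dist_G(W, \Vrem\setminus W) > (1+\eps)D$, every single call to $\oWeak_{\eps, D_i}$ with $D_i \le D$ returns a forest each of whose trees is contained within one side of the cut $W$, and hence any cluster grown by iterating such calls remains on one side, with all of $T_C$ within distance $(1+\eps)D/2$ of $W$ in case $C \cap W \ne \emptyset$.

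\textbf{Main obstacle.} The primary difficulty is handling two weaknesses simultaneously: the oracle $\oWeak$ only supports computation on the global graph $G$ (not on any changing ``residual'' subgraph), and it returns distances with multiplicative error $(1+\eps)$. The original Rozho\v{n}--Ghaffari argument implicitly does BFS inside a dynamic subgraph obtained by contracting merged clusters; here one must reformulate every step so that a single global call to $\oWeak$ on $G$, with carefully chosen source set, suffices to drive the growth, and the halving potential has to be robust to the oracle's slack. The key invariant to maintain is that at the end of bit-phase $i$ the separation is \emph{strictly} larger than $(1-\eps)^{i} D$ even after the $(1+\eps)$-blowup in the oracle's output; budgeting this geometric shrinkage against the $b = O(\log n)$ bit-phases, as well as verifying that the aggregation primitive $\oForestAgg$ of bullet~3 can be implemented only via base-graph oracle calls without ever having to route information through deleted vertices, is where most of the technical work will lie.
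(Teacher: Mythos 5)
There is a genuine gap, and it lies exactly in the growth/merge/delete rule that drives each bit-phase. You measure progress relative to the \emph{Red} cluster being reached (merge it if at least half of it lies in $V(F)$, otherwise delete $C\cap V(F)$), but both quantitative guarantees of the lemma require progress to be measured relative to the \emph{growing} (Blue/active) cluster. Concretely: (i) the deletion bound fails. Each time a Red cluster is reached but not merged you delete up to half of its current vertices, and a cluster can be halved $\Theta(\log n)$ times, so over its lifetime essentially all of its vertices can be deleted; summed over Red clusters the total loss can approach $|\Vrem|$, not $\delta|\Vrem|$. Your remark that you ``pick $\tau$ such that each iteration deletes at most $\delta|\Vrem|/(b\tau)$ vertices'' has no mechanism behind it -- the amount deleted in an iteration is dictated by the instance, not by the choice of $\tau$. (ii) The termination bound fails. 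The halving potential bounds the number of merge/halve \emph{events per Red cluster}, but not the number of \emph{iterations per phase}: under your rule Blue clusters never stop growing, and along a chain of small Red clusters each iteration may absorb or halve only the next link, so a phase can require $\Omega(n)$ iterations. Stopping after $\tau=\Theta(\log n/\delta)$ iterations then leaves Red clusters within $D_i$ of Blue (no separation), while running to completion destroys both the $O(D\log^3(n)/\delta)$ weak-radius bound (the radius grows by $(1+\eps)D_i$ per iteration) and the $\poly\log(n)$ complexity.

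The paper's algorithm avoids both problems by inverting the criterion. In phase $i$ the clusters whose $i$-th bit is $0$ are active; one global call to $\oWeak_{\eps,D(1-\eps)^i}$ from the union of all still-growing active clusters produces a forest $F$, and an active cluster $C$ absorbs the alive passive \emph{nodes} (not whole clusters) reached in its part $F_C$ of the forest only if their number is at least $\frac{\delta}{b}|C|$; otherwise $C$ deletes exactly this small reached set and \emph{stops growing for the rest of the phase}. Multiplicative self-growth bounds the number of steps per phase by $O(b\log n/\delta)$, each stopping cluster deletes at most $\frac{\delta}{b}|C|$ nodes so total deletions are at most $\delta|\Vrem|$, and because a finished cluster never grows again, the deleted ``moat'' stays valid and yields the invariant that clusters disagreeing on the first $i$ bits are at distance at least $(1-\eps)^iD$ in $G$. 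Two further points you would still need: the Steiner trees must be \emph{pruned} (keep only forest nodes whose subtree contains a newly absorbed $\Vrem$-node) for bullet 4 -- without pruning a BFS branch can extend $(1+\eps)D$ away from $W$ with no $\Vrem$-descendant, so the claimed bound $\dist_G(u,W)\le(1+\eps)D/2$ for all of $V(T_C)$ is false -- and bullet 3 needs an explicit scheme (the paper aggregates by sweeping over the sequence of forests returned by the oracle in reverse order, one $\oForestAgg_{2D}$ call per forest), since the trees $T_C$ are built from overlapping forests and are not themselves valid inputs to $\oForestAgg$.
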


The lemma is proven by a routine adaptation of the algorithm from \cite{rozhon_ghaffari2019decomposition}.

\begin{proof}
The algorithm consists of $b$ phases and each phase consists of $O(\frac{1}{\delta} \cdot b \cdot \log( n )) = O(\log^2 (n) / \delta)$ steps. 
At the beginning, all nodes in $\Vrem$ are alive, that is, $\Valive = \Vrem$, but some nodes in $\Vrem$ stop being alive during the algorithm and then we call them dead. 
Throughout the course of the algorithm we maintain a partition of alive nodes into clusters. 
The Steiner tree $T_C$ of any such cluster $C$ can also contain any nodes from $V(G)$. 
At the very beginning, each node $u$ is a trivial cluster $(\{u\}, T_{\{u\}})$. 

Each cluster is assigned a unique identifier. In the beginning, the identifier of $\{u\}$ is simply defined as the identifier of $u$ and $C$ keeps that identifier, although during the algorithm $C$ can lose some nodes, including $u$. On the other hand, $T_C$ never loses nodes. 
We will maintain as an invariant that the radius of each $T_C$ grows only by $O(D)$ in each of the $\poly\log(n)$ many steps, and that for the current clustering $\fC$ we can implement the oracle $\oForestAgg$ on $\{T_{C}\}_{C \in \fC}$ although the trees $T_C$ may not be edge-disjoint. 

\paragraph{One Phase}
We now discuss the algorithm in greater detail. 
At the beginning of phase $i$, we mark all clusters such that the $i$-th bit in their identifier is $0$ as active and the rest of the clusters are marked as passive.

We now describe one step of the algorithm, in which active clusters potentially grow. Each active cluster is in one of two states. Either it is \emph{growing} or \emph{finished}. At the beginning of the phase, every active cluster is growing. 
Let $\Vactive \subseteq \Valive$ be the set of all nodes in active clusters. 

In each step of the $i$-th phase, we define $S$ as the set consisting of all the nodes in growing clusters and use the oracle $\oWeak_{\eps,D\cdot (1-\eps)^i}$ with input $S$.
Note that we are allowed to use the oracle as $D \cdot (1-\eps)^b \ge D / 2$.  
The oracle outputs a forest $F$ with $S$ being the set of roots.

\paragraph{Growth of a cluster}
For each growing cluster $C$ consider all trees in $F$ such that their root is in $C$. We denote this forest as $F_C$ and think about is as a Steiner tree that collects all nodes ``proposing to join'' the cluster $C$ and which enables us to aggregate information about the proposing nodes. 
Every cluster $C$ computes $|F_C \cap (\Vrem \setminus \Vactive)|$. 
If $|F_C \cap (\Vrem \setminus \Vactive)| \ge \frac{\delta}{b} |C|$, the cluster $C$ grows. 
That is, all nodes in $F_C \cap (\Vrem \setminus \Vactive)$ join $C$.  These nodes leave their respective passive clusters.  This means that 
\begin{align*}
C_{new} := C \cup (F_C \cap (\Vrem \setminus \Vactive)).
\end{align*}

We now discuss how the Steiner tree $T_C$ is extended. 
Let $F'_C$ denote the rooted forest one obtains from $F_C$ by deleting all the nodes in $F_C$ whose subtree does not contain nodes of $C_{new}$. 
That is, $F'_C$ is a ``pruned'' version of the Steiner tree $F_C$ that does not contain nodes from $\Vrem$ that are not needed for aggregation. 

The vertex set of the new Steiner tree $T_{C_{new}}$ is $V(T_{C_{new}}) = V(T_C) \cup F'_C$. The edge set is $E(T_{C_{new}}) = E(T_C) \cup \{(u,v) \in E(F'_C) | u \not\in T_C \}$, where we use that each edge $e \in E(F'_C)$ knows its orientation from $u$ to $v$ in the direction of the root. That is, when combining the edge sets of the tree $T_C$ and the forest $F'_C$, we drop each edge of $F'_C$ that would result in some node having two parents. 

Otherwise, when $C$ decides not to grow, all nodes in $F_C \cap (\Vrem \setminus \Vactive)$ are removed from $\Valive$. 
In that case, $C$ finishes and stops growing in the current phase. 

This finishes the description of the algorithm, up to implementation details considered later. 

\begin{claim}
The total number of deleted nodes is at most $\delta \cdot   |\Vrem|$. 
\end{claim}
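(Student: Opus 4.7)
The plan is to bound the number of deletions phase-by-phase and then sum over the $b$ phases. The key observation I will exploit is that throughout any single phase $i$, the collection of clusters that are active at the start of the phase remains a collection of pairwise disjoint subsets of $\Vrem$: active clusters can only absorb nodes from passive or currently-unclustered-in-$\Vrem$ regions, so they never shrink, split, or merge during a phase, and every node they ever hold originates in $\Vrem$ (the algorithm starts with singleton clusters seeded at the nodes of $\Vrem$ and only ever adds nodes from $\Vrem \setminus \Vactive$). Once a cluster ``finishes'' in phase $i$, it is frozen for the rest of that phase; in particular its size at its finishing step equals its size at the end of the phase.

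Next, I will bound the deletions attributable to a single finishing event. When a cluster $C$ decides to finish at some step of phase $i$, the algorithm removes from $\Valive$ exactly the set $F_C \cap (\Vrem \setminus \Vactive)$, and the finishing condition is precisely $|F_C \cap (\Vrem \setminus \Vactive)| < \frac{\delta}{b}\,|C|$, so at most $\frac{\delta}{b}|C|$ nodes are deleted on account of $C$. Crucially, each active cluster finishes at most once per phase, so each active cluster contributes to the deletion count of phase $i$ at most once.

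Summing this charge over the set $\mathcal{F}_i$ of clusters that finish during phase $i$, the total number of nodes deleted in phase $i$ is at most
\[
\sum_{C \in \mathcal{F}_i} \frac{\delta}{b}\,|C| \;=\; \frac{\delta}{b}\sum_{C \in \mathcal{F}_i} |C| \;\le\; \frac{\delta}{b}\,|\Vrem|,
\]
where the last inequality uses the disjointness and $\Vrem$-containment noted in the first paragraph (the clusters in $\mathcal{F}_i$ are disjoint subsets of $\Vrem$ at the moment phase $i$ ends, each with its final size equal to its size at its finishing step). Summing over the $b$ phases yields at most $b \cdot \frac{\delta}{b} |\Vrem| = \delta\, |\Vrem|$ total deletions, as claimed.

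The main subtlety I expect is simply to be careful that the $|C|$ appearing in the finishing rule is the \emph{current} size of $C$, which may be much larger than the initial size at the start of the phase; the argument above handles this cleanly because a finished cluster stays the same size for the rest of its phase, so its ``finishing-time size'' can be charged against its ``end-of-phase size,'' and those end-of-phase sizes telescope to a disjoint sum bounded by $|\Vrem|$. No other step should require more than routine bookkeeping.
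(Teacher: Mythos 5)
Your proof is correct and follows essentially the same argument as the paper: charge each deletion event to the finishing cluster, which by the growth threshold deletes at most $\frac{\delta}{b}|C|$ nodes, sum over the pairwise disjoint clusters (using their end-of-phase sizes, bounded by $|\Vrem|$) to get at most $\frac{\delta}{b}|\Vrem|$ deletions per phase, and then multiply by the $b$ phases. Your extra care about finishing-time versus end-of-phase sizes is exactly the implicit bookkeeping in the paper's one-line charging argument.
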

\begin{proof}
  We prove that in each phase we delete at most $\delta |\Vrem|/b$ nodes. The claim then follows. 
  Consider one phase of the algorithm. Let $C$ be some cluster and $n_C$ the total number of nodes contained in $C$ at the end of the phase. 
The cluster $C$ is responsible for deleting nodes of total weight at most $\delta n_C / b$. Hence, the total number of deleted nodes is upper bounded by $\sum_{C \text{ active in phase }i} \delta n_C /b \le \delta |\Vrem|/b$, as needed.  
\end{proof}

\begin{claim}
\label{cl:stop_growing}
In each phase, each cluster stops growing after $O(\log n \cdot b/\delta)$ steps. 
\end{claim}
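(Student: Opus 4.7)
\textbf{Proof plan for Claim \ref{cl:stop_growing}.} The plan is to use the standard multiplicative-growth argument that underlies the analysis in \cite{rozhon_ghaffari2019decomposition}: each time a cluster successfully grows, its size increases by a factor of at least $1 + \delta/b$, and cluster sizes are bounded above by $|\Vrem| \le n$, so the number of growth events in any single phase is at most $\log_{1+\delta/b}(n) = O(\log n \cdot b / \delta)$. Once a cluster chooses not to grow in a step, it is declared finished and does not grow again in the current phase, so bounding the number of growth events suffices.

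More concretely, I would fix a phase $i$ and an active cluster $C$ that is still growing at the start of some step $t$ of that phase. Let $|C^{(t)}|$ denote the number of alive nodes in $C$ at the start of step $t$ (i.e., the size of the part of $C$ in $\Vrem$, not counting Steiner nodes of $T_C$). By the growth rule, if $C$ grows in step $t$ then
\[
|C^{(t+1)}| \;=\; |C^{(t)}| \;+\; |F_C \cap (\Vrem \setminus \Vactive)| \;\ge\; |C^{(t)}|\left(1 + \tfrac{\delta}{b}\right).
\]
Iterating this over all growth steps starting from the initial size $|C^{(0)}| \ge 1$ gives $|C^{(t)}| \ge (1+\delta/b)^{g(t)}$, where $g(t)$ is the number of growth events up to step $t$. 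Since $|C^{(t)}| \le |\Vrem| \le n$ at all times, we conclude $g(t) \le \log n / \log(1+\delta/b) = O(\log n \cdot b/\delta)$, using $\log(1+x) = \Theta(x)$ for $x \in (0,1)$.

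Finally, I need to observe that every step in which $C$ does \emph{not} grow is a step in which $C$ finishes, so the total number of steps during which $C$ is still in the growing state is at most one more than the number of growth events, i.e., still $O(\log n \cdot b / \delta)$. I do not foresee a significant obstacle here; the only thing to be mildly careful about is that ``size'' is measured by the alive nodes of $C$ in $\Vrem$ rather than by the (possibly larger) vertex set of the Steiner tree $T_C$, so that a growth event really multiplies the quantity being tracked by $1+\delta/b$, and that the initial size is at least $1$ because each node of $\Vrem$ starts as a singleton cluster.
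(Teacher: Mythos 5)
Your proposal is correct and matches the paper's argument: the paper likewise observes that a cluster still growing after $t$ steps of a phase must satisfy $|C| \ge (1+\delta/b)^t$, which exceeds $n$ for $t = O(\log n \cdot b/\delta)$. Your version just spells out the same multiplicative-growth bound as a direct count of growth events (plus the one terminating step) instead of phrasing it as a contradiction, with the correct observation that size is measured by alive nodes of $C$ rather than by the Steiner tree.
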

\begin{proof}
For the sake of contradiction, assume there exists a cluster $C$ that was growing for all $t = O(\log  n  \cdot b/\delta)$ steps of a given phase. Then, $|C| \ge (1 + \delta/b)^t >  n  $, a contradiction. 
\end{proof}

\begin{claim}
After the algorithm terminates, for any two clusters $C_1 \neq C_2$ we have $\dist_G(C_1, C_2) \ge (1-\eps)^b D$. 
\end{claim}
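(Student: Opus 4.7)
My plan is to prove the claim by induction on the phase number $i$, establishing the invariant: \emph{at the end of phase $i$, for any two currently-existing distinct clusters $C_1, C_2$ whose identifiers differ at some bit position $j \le i$, we have $d_G(C_1, C_2) \ge D(1-\eps)^i$.} Setting $i = b$ immediately yields the claim, since any two distinct final clusters must differ at some bit. The base case $i = 0$ is vacuous.

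For the inductive step from $i-1$ to $i$, I would distinguish two cases depending on whether the two clusters differ at bit $i$ itself. In Case A (they differ at bit $i$), one cluster is active and the other passive in phase $i$. Here I would exploit the growth/deletion mechanism: the oracle's second property guarantees that any passive node within graph-distance $D(1-\eps)^i$ of an active growing cluster $C$ is contained in the forest $F$ returned by $\oWeak_{\eps, D(1-\eps)^i}$, and as such lies in some $F_{C'}$ with $C'$ active. It is therefore either absorbed by $C'$ when $C'$ grows in that step or deleted when $C'$ later decides to finish, so at the end of phase $i$ no passive node can remain within $D(1-\eps)^i$ of any active cluster's final state; this gives the required separation between the passive $C_2$ and the active $C_1$.

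Case B covers the situation where $C_1, C_2$ agree at bit $i$, so both are simultaneously active or simultaneously passive. The passive subcase is trivial: no growth occurs, so the inductive hypothesis yields $d_G(C_1, C_2) \ge D(1-\eps)^{i-1} \ge D(1-\eps)^i$. In the both-active subcase, I plan to exploit the oracle's approximate shortest-path guarantee: a node $v$ absorbed by $C_1$ has its forest root in $C_1$ rather than in $C_2$, which implies $d_G(v, C_1) \le (1+\eps) d_G(v, C_2)$, i.e., $v$ lies on $C_1$'s side of the forest-induced midline, and symmetric reasoning applies to $C_2$. Combining this constraint with the inductive hypothesis $d_G(C_1, C_2) \ge D(1-\eps)^{i-1}$ should yield $d_G(C_1, C_2) \ge D(1-\eps)^i$ at the end of phase $i$, the multiplicative $(1-\eps)$ factor arising precisely from the oracle's $(1+\eps)$ precision; summed telescopically over all phases, the total loss amounts to $D - D(1-\eps)^b$, leaving the required $D(1-\eps)^b$ final separation.

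The hardest part will be the both-active subcase of Case B, because one single step of phase $i$ can in principle extend a cluster's $G$-distance reach by up to $(1+\eps) D(1-\eps)^i$, and there can be as many as $O(\log n \cdot b/\delta)$ such steps per phase, so a direct additive bound on per-step growth is far too weak. The proof must instead exploit the symmetric forest-partitioning of passive nodes between $C_1$ and $C_2$: even as both clusters grow rapidly into the intervening gap, the forest-induced midline between them shifts by only a multiplicative $(1+\eps)/(1-\eps)$ factor across the entire phase, not by an additive amount that scales with the number of steps. Making this precise, likely via a potential argument that tracks the closest pair of nodes across the $C_1$-$C_2$ forest boundary through the steps of a phase, will be the technical crux of the proof.
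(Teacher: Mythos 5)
Your induction invariant and your bit-$i$ (active versus passive) case essentially match the paper's proof: there too, the point is that when an active cluster $C$ finally decides not to grow, every alive node within distance $(1-\eps)^iD$ of its (now frozen) vertex set lies in the returned forest, and is therefore either deleted or absorbed into an active cluster, while passive clusters never gain nodes afterwards. You should still make the timing explicit — the argument must be run at the step at which $C$ itself stops growing, and you need the separate fact that every active cluster does stop within the phase (a cluster growing for all $O(\log n\cdot b/\delta)$ steps would exceed $n$ nodes) — but this part is sound in spirit.

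The genuine gap is your both-active subcase of Case B. First, its premise is unavailable: the oracle used in this algorithm is the weak oracle $\oWeak_{\eps,D'}$ of \cref{def:oracle_dist_weak}, which only guarantees $d_F(v)\le(1+\eps)D'$ and says nothing about which root a node is attached to; a node absorbed by $C_1$ need not satisfy $d_G(v,C_1)\le(1+\eps)d_G(v,C_2)$, so there is no approximate Voronoi ``midline'' to track, and the potential argument you defer to (your admitted ``technical crux'') is precisely the unproven step. Second, no such machinery is needed, because there is no per-step erosion to control. The paper maintains, step by step within phase $i$, that clusters disagreeing on the first $i-1$ bits stay at distance at least $(1-\eps)^{i-1}D$: if a node $v$ joins a growing cluster $C_2$, then $d_G(C_2,v)\le(1+\eps)(1-\eps)^iD<(1-\eps)^{i-1}D$, so by the maintained separation the cluster $v$ came from agrees with $C_2$ on the first $i-1$ bits, and hence $v$ itself was already at distance at least $(1-\eps)^{i-1}D$ from every cluster disagreeing with $C_2$ on those bits; thus absorbing $v$ cannot push any relevant pair below $(1-\eps)^{i-1}D$. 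In other words, previously established separations are preserved exactly within a phase — the multiplicative $(1-\eps)$ is paid only once per phase, for the newly separated bit (your Case A) — so the telescoping loss you try to engineer, and the difficulty you identify as the crux, do not arise in the correct argument.
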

\begin{proof}
We prove the following invariant by induction: After the $i$-th phase, if two clusters do not agree on their first $i$ bits, then their distance in $G$ is at least $(1-\eps)^i D$. The claim then follow by setting $i = b$. 
Suppose the invariant holds after the $(i-1)$-th phase (case $i=0$ is an easy special case). We show that this implies that the invariant also holds after the $i$-th phase.

To do so, we first show the following property. After each step during the $i$-th phase, if two clusters do not agree on their first $i-1$ bits, then their distance in $G$ is at least $(1-\eps)^{i-1}D$.

Note that our induction hypothesis states that the property above holds at the beginning of the $i$-th phase. Hence, it remains to show that each step $j$ during phase $i$ preserves the property.

For this, it suffices to show that a node can only switch from one cluster to another cluster during the $j$-th step if the two clusters agree on their first $i-1$ bits.

To that end, consider some node $v$ and two clusters $C_1$ and $C_2$ before the $j$-th step. Assume that $v$ is contained in cluster $C_1$ before the $j$-th step and during the $j$-th step $v$ decides to join the cluster $C_2$.
It directly follows from the algorithm description that this can only happen if $\dist_G(C_2,v) \leq (1+\eps)(1-\eps)^i D < (1-\eps)^{i-1}D$.
In particular, $\dist_G(C_1,C_2) < (1-\eps)^{i-1}D$ and therefore the two clusters $C_1$ and $C_2$ agree on their first $i-1$ bits, as desired.

It remains to consider two clusters that agree on their first $i-1$ bits but disagree on their $i$-th bit.
Exactly one of those clusters, say $C$, is active during the $i$-th phase and by \cref{cl:stop_growing} we know that $C$ decides to stop growing at some point during the phase, by deleting all the nodes that proposed to join it in that step.
In particular, each alive node outside $C$ with a distance of at most $(1- \eps)^i D$ to $C$ either gets killed or decides to join an active cluster different from $C$. Since nodes in active clusters remain in the same active cluster until the end of the $i$-th phase, the distance of $C$ to any other cluster with a different $i$-th bit is at least $(1-\eps)^iD$ at the end of the $i$-th phase. 
This finishes the proof of the induction statement.

\end{proof}

\begin{claim}
For any cluster $C$, the radius of $T_C$ is $O(D \log^3(n)/\delta)$. 
\end{claim}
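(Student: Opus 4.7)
The plan is to bound the radius of $T_C$ by the product of two quantities: the total number of steps of the algorithm across all phases, and the maximum amount by which the Steiner tree can grow in any single step.

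First, I will bound the total number of steps. The algorithm consists of $b = O(\log n)$ phases, and by Claim~\ref{cl:stop_growing} each phase consists of at most $O(\log n \cdot b / \delta) = O(\log^2(n)/\delta)$ steps. Hence the total number of steps throughout the entire execution is at most $O(b \cdot \log^2(n)/\delta) = O(\log^3(n)/\delta)$.

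Next, I will bound the radius increase in a single growth step of a cluster $C$. When $C$ grows in a step of phase $i$, the new nodes joining $C$ are obtained from the forest $F_C$ produced by the oracle $\oWeak_{\eps, D(1-\eps)^i}$; by the first property of $\oWeak$ (see \cref{def:oracle_dist_weak}), every node in $F_C$ is at depth at most $(1+\eps) D (1-\eps)^i \leq (1+\eps) D \leq 2D$ from its root in $C$. The new Steiner tree $T_{C_{new}}$ is obtained from $T_C$ by attaching (a subforest of) $F_C$ at the roots of $F_C$, which all lie in $C \subseteq V(T_C)$. Therefore the radius of $T_{C_{new}}$ exceeds the radius of $T_C$ by at most $2D$.

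Combining these two bounds, the radius of $T_C$ at the end of the algorithm is at most $(\text{number of steps}) \cdot 2D = O(D \log^3(n)/\delta)$, which is the desired bound. No step is expected to be a real obstacle here; the only thing to be careful about is that steps in which $C$ decides \emph{not} to grow (including all steps after $C$ has finished in its phase and all steps of phases where $C$ is passive) do not change $T_C$ at all, so only the at most $O(\log^3(n)/\delta)$ growing steps contribute to the radius.
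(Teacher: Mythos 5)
Your proposal is correct and follows essentially the same argument as the paper: bound the per-step radius increase by the oracle's depth guarantee $O((1+\eps)D) = O(D)$, multiply by the $O(b^2\log(n)/\delta) = O(\log^3(n)/\delta)$ total number of steps over all $b$ phases, and conclude the $O(D\log^3(n)/\delta)$ bound. The extra care you take about non-growing steps and about attaching $F'_C$ at roots lying in $C \subseteq V(T_C)$ is consistent with (and slightly more explicit than) the paper's one-line version.
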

\begin{proof}
The radius of $T_C$ grows by $O((1+\eps)D)$ in each step. 
Each of the $b$ phases consists of $O(b /\delta \cdot \log n)$ steps. Hence, the radius of $T_C$ can be bounded by $O(b^2\log( n )D/\delta) = O(D \log^3(n) / \delta)$. 
\end{proof}

\begin{claim}
Consider any $W \subseteq \Vrem$ and $(C,T_C) \in \fC$. 
    If $\dist_G(W, \Vrem \setminus W) > (1+\eps) D$, then we have the following. 
    Whenever we have $C \cap W \not = \emptyset$, then $C \subseteq W$ and for any $u \in T_C$ we have $\dist_G(u, W) \le (1+\eps)D/2$. 
\end{claim}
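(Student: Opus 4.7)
The plan is to prove both assertions by a single induction on the execution of the algorithm, tracking how the cluster $C$ (and its Steiner tree $T_C$) is built up step by step. Throughout, the key lever is the ``gap hypothesis'' $\dist_G(W,\Vrem\setminus W)>(1+\eps)D$: because each growth step absorbs nodes within distance at most $(1+\eps)D(1-\eps)^{i}\leq(1+\eps)D$ of the current cluster, the gap prevents any single cluster from ever straddling the cut.

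For the first assertion ($C\cap W\neq\emptyset\Rightarrow C\subseteq W$), I would maintain the following invariant at every step of the algorithm: for every cluster $\tilde C$ currently in play (active or passive), either $\tilde C\subseteq W$ or $\tilde C\cap W=\emptyset$. The base case is immediate because the initial singleton clusters $\{u\}$ trivially satisfy it. Passive clusters only shed nodes, so they preserve the invariant for free. The only nontrivial case is when an active cluster $\tilde C$ grows in phase $i$: every newly absorbed node $x\in\Vrem\setminus\Vactive$ lies in the forest $F_{\tilde C}$ returned by $\oWeak_{\eps,D(1-\eps)^{i}}$, so $\dist_G(x,\tilde C)\le d_F(x)\le(1+\eps)D(1-\eps)^{i}\le(1+\eps)D$. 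If $\tilde C\subseteq W$ and $x\in\Vrem\setminus W$, the gap hypothesis would be violated, so $x\in W$; symmetrically for $\tilde C\cap W=\emptyset$. Applying the invariant to the final cluster $C$ then gives the implication.

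For the second assertion, consider any $u\in V(T_C)$ and locate the step at which $u$ first entered the Steiner tree $T_C$, say step $j$ of some phase $i^{*}$ in which $C$ was growing. At that moment, $u$ belongs to the pruned forest $F'_{C_t}$, where $C_t$ denotes the snapshot of $C$ just before the step. By construction, $u$ lies on a simple path in $F_{C_t}$ from its root $r\in C_t$ down to some leaf $x\in C_{\mathrm{new}}\setminus C_t$ (because pruned subtrees are exactly those lacking a descendant that actually joins $C$). The total length of the $r$-to-$x$ path in $F_{C_t}$ is at most $d_F(x)\leq(1+\eps)D(1-\eps)^{i^{*}}\leq(1+\eps)D$, and since $u$ sits on this path we get
\[
\min\!\bigl(\dist_G(u,r),\dist_G(u,x)\bigr)\ \leq\ \tfrac{1}{2}\,d_F(x)\ \leq\ \tfrac{(1+\eps)D}{2}.
\]
By the first assertion, $C_t\subseteq C\subseteq W$, so $r\in W$; and by the gap argument already used above, the newly absorbed node $x\in\Vrem$ must also lie in $W$. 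Hence both endpoints of the path sit in $W$, giving $\dist_G(u,W)\le(1+\eps)D/2$. Degenerate cases are automatic: if $u\in C$ then $u\in W$ and the bound is trivial; if $u=r$ or $u=x$ likewise.

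The main obstacle is the factor $1/2$: a naive bound via $\dist_G(u,C)\leq d_F(u)\leq(1+\eps)D$ only yields $(1+\eps)D$. The improvement hinges on recognizing that every node entering $T_C$ does so sandwiched between two nodes that both land in the same side $W$, so the midpoint argument shaves off a factor of two. Carefully justifying that both endpoints are in $W$ requires invoking the invariant proved in the first part, which is why the two assertions are most cleanly proved together.
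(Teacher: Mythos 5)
Your proof is correct and follows essentially the same route as the paper: the first assertion is the ``simple induction'' the paper alludes to (each cluster stays entirely on one side of the cut because every growth step only absorbs $\Vrem$-nodes within distance $(1+\eps)D(1-\eps)^i \le (1+\eps)D$), and the second assertion is the same midpoint argument, namely that any node of the pruned forest lies on a tree path of length at most $(1+\eps)D$ between its root in $C\subseteq W$ and a $\Vrem$-witness in its subtree that the gap hypothesis forces into $W$. Your phrasing that the witness must be a \emph{newly} absorbed node is slightly imprecise (the pruning criterion only guarantees a node of $C_{new}$ in the subtree), but this is harmless since such a node is in $\Vrem$ and hence in $W$ either way.
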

\begin{proof}
Let $W \subseteq \Vrem$ be any set such that $\dist_G(W, \Vrem \setminus W) > (1+\eps) D$. A simple induction argument shows that any cluster $C$ that started as a node $u \in W$ will satisfy $C \subseteq W$ during the course of the algorithm and an analogous statement holds if we replace $W$ by $\Vrem \setminus W$. To show the second part, let $C' \subseteq W$ be some arbitrary cluster during the course of the algorithm. It suffices to show that for any node $u \in V(F'_{C'})$ in the pruned Steiner Tree it holds that $\dist_G(u,W) \leq (1+\eps)D/2$. As $u \in V(F'_{C'})$, there exists some node $v \in \Vrem$ that is contained in the subtree of $u$ in $F_{C'}$. Each node in $F_{C'}$ has a distance of at most $(1+\eps)D$ to its root.
As the root is contained in $W$, we have $\dist_G(W,v) \leq (1+\eps)D$. In particular, $v \notin \Vrem \setminus W$ but $v \in \Vrem$ and therefore $v \in W$.
Now, let $r$ be the root of $u$ in $F_{C'}$. As $v$ is an ancestor of $u$ in $F_{C'}$, we have

\begin{align*}
    \dist_{F_{C'}}(r,v)  = \dist_{F_{C'}}(r,u) + \dist_{F_{C'}}(u,v) \leq (1+\eps)D.
\end{align*}

In particular,

\begin{align*}
    \dist_G(u,W) \leq \dist_{F_{C'}}(u,\{r,v\}) \leq (1+\eps)D/2,
\end{align*}

as desired.
\end{proof}

\begin{claim}
\label{cl:weak_diam_aggregation}
During the construction, for the current clustering $\fC$ we have that we can implement the oracle  $\oForestAgg$ on $\{T_{C}\}_{C \in \fC}$ in $O(1/\delta \cdot \log^3(n))$ \congest rounds and calls to oracle $\oForestAgg_{2D}$. 
\end{claim}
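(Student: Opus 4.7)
My plan would be to exploit the layered structure of the construction. During the clustering algorithm, I would store, for each of the $O(b^2 \log(n)/\delta) = O(\log^3(n)/\delta)$ steps $t$, the forest $F^{(t)}$ returned by the distance oracle, and for each $v \in V(F^{(t)})$, the unique cluster $C(v,t)$ (if any) with $v \in F'^{(t)}_{C(v,t)}$. Uniqueness at a fixed step $t$ is the key structural observation: distinct pruned pieces $\{F'^{(t)}_C\}_{C}$ are contained in the $F^{(t)}$-subtrees rooted at the different roots in $S^{(t)}$ (which are the cores of distinct clusters), so they are vertex-disjoint as sub-forests of $F^{(t)}$. A given node $v$ may appear in up to $O(\log^3(n)/\delta)$ different $T_C$'s, but at most one per step.

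To answer a subtree-sum query for given $\{x_v\}$ on $\{T_C\}_{C\in\fC}$, I would maintain per-pair accumulators $Z_{v,C}$ initialized to $x_v$ for every $(v,C)$ with $v \in V(T_C)$, and process stored steps in \emph{reverse} chronological order $t = t_{\max}, t_{\max}-1, \ldots, 1$. At each step $t$ I would assign $y^{(t)}_v := Z_{v,C(v,t)}$ if $v$ lies in some $F'^{(t)}_{C(v,t)}$ and $y^{(t)}_v := 0$ otherwise, then invoke $\oForestAgg_{2D}$ exactly once on $F^{(t)}$ with these values (permitted because $F^{(t)}$ has depth at most $(1+\eps)D \le 2D$), and finally update $Z_{v,C(v,t)} \mathrel{+}= \sigma'_v$ for every $v$ that lies in some $F'^{(t)}_{C(v,t)}$, where $\sigma'_v$ is the returned strict-subtree sum.

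The invariant I would carry is: after processing all steps down through $t$, $Z_{v,C}$ equals $x_v$ plus $\sum_w x_w$ over those strict $T_C$-descendants $w$ of $v$ reachable from $v$ via edges of $T_C$ added at steps $\ge t$. Preserving this requires two observations: (i) for $v \in F'^{(t)}_C$, the strict $F^{(t)}$-descendants of $v$ with nonzero $y$-value are exactly the strict $F'^{(t)}_C$-descendants (since different clusters' pieces live in disjoint $F^{(t)}$-trees), so $\sigma'_v$ correctly sums the current $Z$-values along the new layer, and unrolling those $Z$-values via the inductive hypothesis yields precisely the contributions of nodes newly reachable by admitting layer-$t$ edges; (ii) for $v \notin F'^{(t)}_C$ no update is needed because any node reachable from $v$ via layers $\ge t+1$ has $\tau > t$ and hence cannot serve as the root of a layer-$t$ piece, so no layer-$t$ edge becomes usable from $v$'s current reachable set.

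The main obstacle — and the reason the naive ``propagate $x$-values with one oracle call per step'' approach fails — is precisely this Case-C scenario, where $v$ has a strict $T_C$-descendant $r$ that is the root of a layer-$t$ piece even though $v$ itself is outside $F'^{(t)}_C$. The fix is to use current $Z$-values (not raw $x$-values) as the oracle's input; layer-$t$ contributions accumulate in $Z_r$ and later, when $r$ appears as a non-root during the processing of its own birth layer $\tau(r,C) < t$, these accumulated contributions propagate upward. The total cost is $O(\log^3(n)/\delta)$ calls to $\oForestAgg_{2D}$ together with $O(\log^3(n)/\delta)$ CONGEST rounds for the local bookkeeping of per-cluster $Z$-values. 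The ancestor-sum variant uses the ancestor output of $\oForestAgg_{2D}$ in an entirely symmetric way, and minimum queries follow by replacing sums with minima throughout.
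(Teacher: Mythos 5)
Your layered, reverse-chronological processing with one $\oForestAgg_{2D}$ call per stored forest follows the same skeleton as the paper's proof, but you are aiming at a strictly stronger statement (per-node descendant/ancestor sums on the trees $T_C$, i.e.\ the literal semantics of $\oForestAgg$), and your key step (i) has a genuine gap: it implicitly identifies the parent relation of $T_C$ with the union of the parent relations of the layer pieces $F'^{(t)}_C$. The construction does not do this. When cluster $C$ grows, the new edge set is $E(T_{C_{new}}) = E(T_C) \cup \{(u,w) \in E(F'_C) : u \notin T_C\}$, i.e.\ every edge of $F'^{(t)}_C$ whose child endpoint already lies in $V(T_C)$ is \emph{dropped}. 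Such edges really occur: a node added to $V(T_C)$ at an earlier step (e.g.\ a Steiner/connector node, which the weak oracle $\oWeak$ is free to return again) can reappear as a non-root internal vertex of a layer-$t$ tree of $F^{(t)}_C$, with brand-new nodes hanging below it. For a concrete failure, take a layer-$t$ tree $c \to v \to w \to z'$ with $c \in C$, $v,z'$ new and $w \in V(T_C)$ from step $t-1$ with parent $p \neq v$: in $T_C$ the edge $(w,v)$ is dropped, so $z'$ hangs under $w$ under $p$, and neither $w$ nor $z'$ is a $T_C$-descendant of $v$ or (through $v$) of $c$. Your update nevertheless adds $Z_{w,C}+Z_{z',C}$ to $Z_{v,C}$ (and, via $\sigma'_c$, to $Z_{c,C}$), while the same contributions are later correctly routed to $p$ when layer $t-1$ is processed; the descendant sums at $v$ and $c$ come out inflated, so the invariant you state is not preserved. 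Observation (ii) is fine (nodes joining $V(T_C)$ after step $t$ cannot lie in $F'^{(t)}_C$), but (i) is exactly where the dropped edges bite, and nothing in the write-up addresses them.

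Note also that the paper's own proof deliberately avoids this issue by proving (and later using) only a weaker functionality: it funnels, layer by layer in reverse chronological order, the values of the nodes of $C$ to the roots of the layer pieces (which are nodes of $C$ that joined strictly earlier), so that after all layers the cluster root holds $\bigoplus_{u \in C} v_u$, and then broadcasts back in forward order. Correctness of that cluster-wide aggregate depends only on the $F^{(t)}$-structure used by the oracle calls and on each value being absorbed exactly once, not on the internal shape of $T_C$, so the dropped edges are irrelevant there. If you only need what the paper actually uses, your accumulator machinery can be simplified to that scheme; if you insist on genuine per-node subtree sums over $T_C$, you need an additional idea for re-appearing nodes (e.g.\ suppressing the contribution of any layer-$t$ vertex whose incoming layer-$t$ edge is dropped and deferring it to the layer at which that vertex was actually attached), which is precisely what is missing.
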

\begin{proof}
We will show how to implement an associative operation $\oplus$, where $\oplus$ is either $+$ or $\min$, on the clusters of some partial clustering $\fC$ that we have during the algorithm.  

Assume that each node $u$ is given some value $v_u$ and we wish to compute $\bigoplus_{u \in C} v_u$ for each $C \in \fC$. 
Let $F^1, F^2, \dots, F^k$ be the forests found by calls to $\oWeak_{\eps, D'}$ so far and let $F'^1_C, F'^2_C, \dots, F'^k_C$, with $F'^i_C \subseteq F^i$, be the corresponding subforests that we used to grow $C$. 
We will run the following $k$ round algorithm. 
In the $i$-th round, for $0 \le i < k$, each tree $T \subseteq F'^{k-1}_C \subseteq F^{k-i}$ computes $\bigoplus_{u \in T\cap C} v_u$. 
This can be done in one call to the oracle $\oForestAgg_{2D}$ for all clusters $C$ in parallel, as each $T \subseteq F'^{k-i}_C$ is connected and the trees are mutually disjoint.  
Then, we remove the values of all nodes in $T \cap C$, except of the root $r$ of $T$ that takes the new value $v_r^{new} = \bigoplus_{u \in T\cap C} v_u$, that is, the result of the calculation. 
A straightforward induction argument shows that after $k$ rounds, the root $r_C$ of each cluster $C$ (that is, the node that defined $C$ at the beginning of the algorithm) knows the value of $\bigoplus_{u \in C} v_u$. 
Once the value of each cluster is computed, it can be broadcasted from $r_C$ back to all nodes in $C$. 
This is done analogously to the computation step, but the forests $F_1, F_2, \dots, F_k$ are now iterated over in the ascending order. 
\end{proof}

\begin{claim}
The algorithm can be implemented in $O(1/\delta^2 \cdot \polylog n)$ \congest rounds and calls to $\oForestAgg_{2D}$. It needs $O(1/\delta \cdot \log^3(n))$ calls to the oracle $\oWeak_{\eps, D'}$ for $D' \in [D/2, D]$. 
\end{claim}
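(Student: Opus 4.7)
My plan is a straightforward counting argument: the total cost is the number of steps of the algorithm times the cost per step. The nontrivial piece is the per-step cost, which has two ingredients: aggregations over the forest $F$ returned by the most recent call to $\oWeak$, which can be done directly; and aggregations over the current Steiner trees $\{T_C\}_{C\in\fC}$, which is less trivial because trees belonging to different clusters may share vertices and thus cannot be aggregated in a single call to $\oForestAgg$.

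First I would observe that the algorithm consists of $b = O(\log n)$ phases, and Claim~\ref{cl:stop_growing} ensures that each phase completes in at most $O(b\log(n)/\delta) = O(\log^2(n)/\delta)$ steps. Hence the total number of steps is $O(\log^3(n)/\delta)$. In every step the algorithm issues exactly one call to $\oWeak_{\eps, D \cdot (1-\eps)^i}$, and the distance parameter lies in $[D/2, D]$ because $(1-\eps)^b \geq 1 - b\eps \geq 1 - 1/(100\log n) > 1/2$, so these calls conform to the requirement on $\oWeak$ in the theorem statement, giving a total of $O(\log^3(n)/\delta)$ calls to $\oWeak_{\eps,D'}$ with $D' \in [D/2,D]$.

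Next I would account for the work done within each step. Each growing cluster $C$ needs two pieces of aggregated information: $|F_C \cap (\Vrem \setminus \Vactive)|$ and $|C|$. The former is computed by a single call to $\oForestAgg_{2D}$ on $F$, since $F$ has depth at most $(1+\eps)D \leq 2D$ and the subtrees $\{F_C\}_C$ are vertex-disjoint. The latter requires aggregation over $T_C$, for which I would appeal to Claim~\ref{cl:weak_diam_aggregation}: it implements an $\oForestAgg$-like primitive on $\{T_C\}$ in $O(\log^3(n)/\delta)$ \congest rounds and calls to $\oForestAgg_{2D}$, the bound scaling with the number of $\oWeak$ calls performed so far, which throughout the run of the algorithm is at most $O(\log^3(n)/\delta)$. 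An additional aggregation over $F$ broadcasts the grow/do-not-grow decision back to the proposing nodes; the update of $T_C$ when $C$ grows is a purely local operation on parent pointers.

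Multiplying $O(\log^3(n)/\delta)$ steps by the $O(\log^3(n)/\delta)$ per-step cost yields a total of $O(\log^6(n)/\delta^2) = O(\polylog(n)/\delta^2)$ \congest rounds and calls to $\oForestAgg_{2D}$, matching the claimed bound; the $O(\log^3(n)/\delta) = O(\polylog(n)/\delta)$ calls to $\oWeak_{\eps, D'}$ also match. The only conceptual hurdle is the aggregation over the overlapping Steiner trees $\{T_C\}$, but that is already resolved by Claim~\ref{cl:weak_diam_aggregation}, so no further work is required.
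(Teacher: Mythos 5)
Your proposal follows essentially the same route as the paper: count $O(\log n)$ phases times $O(\log^2(n)/\delta)$ steps per phase, charge one $\oWeak_{\eps,D'}$ call per step with $D' = D(1-\eps)^i \in [D/2,D]$, and bound the per-step cost by an aggregation on $F$ via $\oForestAgg_{2D}$ plus an aggregation on the current clusters via \cref{cl:weak_diam_aggregation}, giving $O(\log^6(n)/\delta^2)$ overall. The one imprecision is your remark that updating $T_C$ is ``purely local'': deciding which nodes of $F_C$ survive into the pruned forest $F'_C$ requires each node of $F_C$ to learn whether its subtree contains a node of $C_{new}$, which is a subtree-sum computation rather than a local pointer update; the paper handles this explicitly, and since it is again just an aggregation on $F$ realizable with $\oForestAgg_{2D}$ (the descendant-sum output), it fits within the same per-step budget and does not affect the claimed bounds.
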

\begin{proof}
The algorithm has $O(\log n)$ phases and each consists of $O(1/\delta \cdot \log^2 n) $ steps. Let us now discuss the complexity of each step. 
In each step, the decision of each cluster to grow or not can be implemented with one call of the distance oracle $\oWeak_{\eps, D'}$ and $O(1/\delta \cdot \log^3(n))$ \congest rounds and calls to $\oForestAgg_{2D}$, since the decision of each cluster to grow or not amounts to an aggregation on it and on $F_C$. 
After the decision of each cluster, we need to do additional bookkeeping. The only nontrivial part of it is pruning each forest $F_C$ into $F'_C$. To prune $F_C$, each node in it needs to know whether there are $\Vrem$ nodes in the subtree of it. 
This is a special case of the subtree sum problem that can be solved in $\poly\log(n)$ \congest rounds and calls to $\oForestAgg_{2D}$ (see \cite[Appendix B]{RGHZL2022sssp}). 
Hence, the complexity of the algorithm is $1/\delta^2 \poly\log(n)$. 
The total number of invocations of $\oWeak_{\eps, D'}$ is bounded by $O(1/\delta \cdot \log^3(n))$. 
\end{proof}

\end{proof}

\subsection{Strong-Radius Clustering}
\label{sec:strong_radius}

We now prove \cref{lem:strongdiam}, the argument is an adaptation of an argument from \cite{chang_ghaffari2021strong_diameter}. 

\begin{proof}

  We compute the strong-radius clustering by the following algorithm. The algorithm has $k = O(\log n)$ phases and in the $i$-th phase we invoke the weak-radius clustering from \cref{lem:weak_diam} with parameters $\Vrem^i, D^i$ and $\delta^i = \delta = 1/(3k)$ to obtain an output clustering $\fC^i$.
  In each phase, we add new strong-radius clusters to an initially empty set $\fC_{out}$. 
    
    At the beginning, we set $\Vrem^1 = \Vrem$ and define $\fC^0 = \{ \Vrem \}$. 
    In the $i$-th phase of the algorithm we do the following. 
  \begin{enumerate}
      \item We invoke \cref{lem:weak_diam} with parameters $\Vrem^i, D^i = 10D \cdot (1-\eps)^{2ib}$ and $\delta^i$. The output is a $(1-\eps)^bD^i$-separated weak-radius clustering $\fC^i$. 
      \item For each cluster $C \in \fC^i$, there exists a (unique) cluster $\parent(C) \in \fC^{i-1}$ with $C \subseteq \parent(C)$ (\cref{cl:strong_laminar}). 
      If $|C| > |\parent(C))|/2$, then we add $(V(T_{C}), T_{C})$ to $\fC_{out}$. 
      \item 
      Let $C^i_1, C^i_2, \dots, C^i_\ell$ denote the clusters added to $\fC^{out}$ during the $i$-th phase. \\
      We set $\Vrem^{i+1} = \left( \bigcup_{C \in \fC^i} C \right)  \setminus (\bigcup_{j=1}^{\ell} \parent(C^i_j))$. 
      
     \end{enumerate} 
     At the very end we output the output set $\fC_{out}$. This finishes the description of the algorithm.

\begin{claim}
\label{cl:strong_laminar}
For each cluster $C \in \fC^i$, there exists a (unique) cluster $\parent(C) \in \fC^{i-1}$ with $C \subseteq \parent(C)$ (\cref{cl:strong_laminar}). 
\end{claim}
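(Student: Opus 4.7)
The plan is to induct on $i$, with the key input being property 4 of \cref{lem:weak_diam} (a cluster that touches a ``well-separated'' set must be fully contained in it) applied to each cluster $W' \in \fC^{i-1}$. Uniqueness is immediate because clusterings are collections of disjoint sets; the real content is existence.

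For the base case $i=1$, since $\fC^0 = \{\Vrem\}$ is a single cluster and property 1 of \cref{lem:weak_diam} guarantees $C \subseteq V(\fC^1) \subseteq \Vrem^1 = \Vrem$ for every $C \in \fC^1$, we can set $\parent(C) = \Vrem$. For the inductive step with $i \geq 2$, I would fix $C \in \fC^i$, pick any vertex $v \in C$, and let $W' \in \fC^{i-1}$ be the (unique) cluster that contains $v$; such a $W'$ exists because, by construction, $\Vrem^i$ is a subset of the union of the clusters of $\fC^{i-1}$, and $C \subseteq \Vrem^i$ by property 1 of \cref{lem:weak_diam}. The goal is then to show $C \subseteq W'$.

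To deduce $C \subseteq W'$ I would invoke property 4 of \cref{lem:weak_diam} with the set $W := W' \cap \Vrem^i$. Note $C \cap W \supseteq \{v\} \neq \emptyset$, so the hypothesis of property 4 will be satisfied as soon as we verify that
\[
\dist_G\!\left(W,\ \Vrem^i \setminus W\right) \;>\; (1+\eps) D^i.
\]
Since $W = W' \cap \Vrem^i$, we have $\Vrem^i \setminus W = \Vrem^i \setminus W'$, and this is contained in the union of clusters of $\fC^{i-1}$ different from $W'$. By the separation guarantee of \cref{lem:weak_diam} applied at phase $i-1$, $\fC^{i-1}$ is $(1-\eps)^b D^{i-1}$-separated, so the above distance is at least $(1-\eps)^b D^{i-1}$. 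Using $D^{i-1}/D^i = (1-\eps)^{-2b}$ (from the definition $D^i = 10 D (1-\eps)^{2ib}$), this becomes $(1-\eps)^{-b} D^i$, which by Bernoulli's inequality is at least $(1 + b\eps) D^i > (1+\eps) D^i$ since $b \geq 2$. Thus property 4 yields $C \subseteq W \subseteq W'$, and we set $\parent(C) := W'$.

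The one step that deserves genuine care, rather than being routine, is the arithmetic comparison of $(1-\eps)^b D^{i-1}$ with $(1+\eps) D^i$: this is where the factor of $10$ in the definition $D^i = 10 D (1-\eps)^{2ib}$ is not what matters, but rather the exponent $2ib$ (as opposed to $ib$), which precisely buys back the slack lost to the approximate distance oracle across two consecutive phases. If that geometric decay had been chosen any slower, the inductive containment would fail. Everything else (the base case, the invocation of property 4, and uniqueness) is immediate from the setup.
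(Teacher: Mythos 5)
Your proof is correct and takes essentially the same route as the paper: invoke property 4 of \cref{lem:weak_diam} with (the restriction of) the parent cluster as $W$, and verify the separation hypothesis from the $(1-\eps)^b D^{i-1}$-separation of $\fC^{i-1}$ together with $D^i/D^{i-1} = (1-\eps)^{2b}$, giving $(1-\eps)^b D^{i-1} = (1-\eps)^{-b} D^i > (1+\eps) D^i$. The only (harmless) deviation is your choice $W = W' \cap \Vrem^i$, which automatically lies in $\Vrem^i$ and renders the induction on $i$ unnecessary, whereas the paper sets $W = \parent(C)$ and instead observes that each cluster of $\fC^{i-1}$ is entirely contained in or disjoint from $\Vrem^i$.
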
  
\begin{proof}
Consider an arbitrary cluster $C \in \fC^i$ and let $\parent(C)$ be any cluster in $\fC^{i-1}$ with $C \cap \parent(C) \neq \emptyset$. It suffices to show that $C \subseteq \parent(C)$.
Let $W = \parent(C)$. 
Note that $W \subseteq \Vrem^i$, since for each cluster $C' \in \fC^{i-1}$ either all nodes of $C'$ are in $\Vrem^i$ or no nodes are in $\Vrem^i$. 
Next, we have $\dist_G(W, \Vrem^i \setminus W) \ge (1-\eps)^b D^{i-1} > (1+\eps)D^i$, as the distance between any two clusters in $\fC^{i-1}$ is at least $(1-\eps)^bD^{i-1}$ and $D^i / D^{i-1} = (1-\eps)^{2b}$. 
Hence, the fourth property of \cref{lem:weak_diam} gives $C \subseteq \parent(C)$, as desired.
\end{proof}
  
\begin{claim}
\label{cl:strong_finish}
$\Vrem^k = \emptyset$. 
\end{claim}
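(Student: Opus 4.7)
The plan is to argue that survival to $\Vrem^{i+1}$ forces the cluster containing the survivor to have halved in size compared to the previous phase, which gives the desired geometric decrease.

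First I would show the key per-node halving property: for any $v \in \Vrem^{i+1}$ with $i \geq 1$, letting $C_v^i$ denote the unique cluster of $\fC^i$ containing $v$, we have $|C_v^i| \leq |\parent(C_v^i)|/2$. Indeed, if $|C_v^i| > |\parent(C_v^i)|/2$, then $C_v^i$ would be added to $\fC_{out}$ in step 2 and its parent $\parent(C_v^i)$ would be subtracted off when forming $\Vrem^{i+1}$ in step 3; but since $v \in C_v^i \subseteq \parent(C_v^i)$, this would give $v \notin \Vrem^{i+1}$, a contradiction. Note also that the argument uses $v \in V(\fC^i)$, which follows from $\Vrem^{i+1} \subseteq \bigcup_{C \in \fC^i} C$.

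Next I would identify $\parent(C_v^i)$ with $C_v^{i-1}$. By \cref{cl:strong_laminar}, $C_v^i \subseteq \parent(C_v^i) \in \fC^{i-1}$; since $v \in C_v^i$ and clusters of $\fC^{i-1}$ are disjoint, $\parent(C_v^i)$ must equal the unique cluster $C_v^{i-1}$ of $\fC^{i-1}$ containing $v$ (using that $v \in \Vrem^{i+1} \subseteq V(\fC^i) \subseteq \Vrem^i \subseteq V(\fC^{i-1})$, which holds by unrolling the definition of $\Vrem^i$ for $i \geq 2$, and with $\fC^0 = \{\Vrem\}$ handling $i = 1$). Combining the two bullets yields $|C_v^i| \leq |C_v^{i-1}|/2$ for every $v \in \Vrem^{i+1}$.

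Iterating this bound backwards gives $|C_v^i| \leq |C_v^0|/2^i \leq n/2^i$ for every $v$ that survives to $\Vrem^{i+1}$. Since a nonempty cluster has at least one node, the existence of any $v \in \Vrem^{k}$ would require $|C_v^{k-1}| \geq 1$, i.e.\ $k - 1 \leq \log_2 n$. Choosing the number of phases $k$ to be, say, $\lceil \log_2 n \rceil + 2 = O(\log n)$ then forces $\Vrem^k = \emptyset$, which is exactly what is needed. I do not foresee a serious obstacle here: the only place one has to be careful is verifying that the chain $v \in V(\fC^i) \subseteq \Vrem^i \subseteq V(\fC^{i-1})$ is valid all the way down to $\fC^0$, which is handled by the convention $\fC^0 = \{\Vrem\}$ and $\Vrem^1 = \Vrem$.
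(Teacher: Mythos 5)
Your proposal is correct and follows essentially the same route as the paper: the halving inequality $|C^i_v|\le|C^{i-1}_v|/2$ along the parent chain from \cref{cl:strong_laminar}, iterated to get a geometric decay $|C^i_v|\le n/2^i$, which rules out survivors after $O(\log n)$ phases. Your version is in fact slightly more explicit than the paper's, since you justify the halving step via the survival of $v$ to $\Vrem^{i+1}$ (the paper asserts it for an arbitrary parent chain and leaves this justification implicit), and the exact choice of $k$ ($\lceil\log_2 n\rceil+2$ versus $3+\lceil\log_2 n\rceil$) is immaterial.
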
  
\begin{proof}
Let $C^0, C^1,\ldots, C^i$ be clusters such that $C^i \in \fC^i$ and $\parent(C^{j}) = C{j -1}$ for $1 \le j \le i$. We have $|C^0| \leq n$ and $|C^{j}| \leq |C^{j -1 }|/2$. Hence, $|C^i| \leq n/2^{i-1}$.
Let $k = 3 + \lceil \log_2 n \rceil$. Then, $\fC^{k-1}$ cannot contain any clusters and therefore $\Vrem^k = \emptyset$. 
\end{proof}

\begin{claim}
\label{cl:strong_deleted}
$|V(\fC_{out})| \ge |\Vrem| / 3$. 
\end{claim}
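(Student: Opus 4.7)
The plan is to track, for each phase $i$, three disjoint groups of nodes that leave $\Vrem^i$: the set $A_i = \bigcup_j C^i_j$ of nodes promoted to $\fC_{out}$; the ``collateral'' set $W_i$ of nodes that lie in some cluster of $\fC^i$ and in $\bigcup_j \parent(C^i_j)$ but not in any $C^i_j$ (these get discarded when forming $\Vrem^{i+1}$); and the set $D_i = \Vrem^i \setminus \bigcup_{C \in \fC^i} C$ of nodes lost by the underlying weak-radius call. By the definition of $\Vrem^{i+1}$ these three sets partition $\Vrem^i \setminus \Vrem^{i+1}$, so $|\Vrem^i| - |\Vrem^{i+1}| = |A_i| + |W_i| + |D_i|$.

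The main inequality I would establish is $|W_i| \le |A_i|$, and this is where I expect the only real content of the proof to lie. Fix a parent $P \in \fC^{i-1}$ that has a promoted child $C^* \in \fC^i$ (there can be at most one, since $|C^*| > |P|/2$). By \cref{cl:strong_laminar}, every child of $P$ in $\fC^i$ is a subset of $P$, and distinct children are pairwise disjoint. Thus the total size of the children of $P$ is at most $|P|$, and the contribution of such children other than $C^*$ to $W_i$ is at most $|P| - |C^*| < |P|/2 < |C^*|$, i.e., strictly less than the contribution of $P$ to $A_i$. Summing over all parents whose child was promoted yields $|W_i| \le |A_i|$.

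To finish, I would telescope: by \cref{cl:strong_finish} the process terminates with $\Vrem^{K+1} = \emptyset$ for some $K \le k$, hence
\[
|\Vrem| \;=\; \sum_{i=1}^{K}\bigl(|A_i| + |W_i| + |D_i|\bigr) \;\le\; 2\sum_{i=1}^{K} |A_i| + \sum_{i=1}^{K} |D_i|.
\]
The bound $|D_i| \le \delta^i |\Vrem^i| \le \delta |\Vrem|$ from \cref{lem:weak_diam}, combined with $\delta = 1/(3k)$ and $K \le k$, gives $\sum_i |D_i| \le |\Vrem|/3$. Observe that the sets $C^i_j$ are pairwise disjoint across all $i,j$: each is contained in $\parent(C^i_j)$, which is subtracted when passing from $\fC^i$ to $\Vrem^{i+1}$, so it cannot reappear in any later $\fC^{i'}$. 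Since $C^i_j \subseteq V(T_{C^i_j}) \subseteq V(\fC_{out})$, we conclude $|V(\fC_{out})| \ge \sum_{i,j}|C^i_j| = \sum_i |A_i| \ge |\Vrem|/3$, as required.
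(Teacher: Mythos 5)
Your proof is correct and follows essentially the same approach as the paper: it relies on the same two facts, namely that the nodes dropped by the weak-radius calls total at most $k\cdot\delta\,|\Vrem| = |\Vrem|/3$, and that a promoted cluster with $|C| > |\parent(C)|/2$ outweighs its discarded siblings inside the same parent. The only difference is presentational — you organize the paper's three-way case distinction over $\Vrem$ as an explicit per-phase partition $A_i \sqcup W_i \sqcup D_i$ with a telescoping sum, which makes the bookkeeping (disjointness of promoted clusters across phases) slightly more explicit.
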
  
\begin{proof}
    \cref{cl:strong_finish} together with the algorithm description implies that for every node $u \in \Vrem$ at least one of the following holds.

    \begin{enumerate}
        \item $u \in  V(\fC_{out})$
        \item $u \in C$ and $C$ has a sibling $C'$ ($\parent(C) = \parent(C')$, $C \neq C'$) with $V(T_{C'}) \in \fC_{out}$ 
        \item $u \in \Vrem^i$ but the algorithm of \cref{lem:weak_diam} invoked in the $i$-th phase left $u$ unclustered 
    \end{enumerate}    
    
    There are at most $k \cdot \delta |\Vrem| = (1/3)|\Vrem|$ nodes that satisfy the third condition. 
    Moreover, as every cluster $C$ with $V(T_C) \in \fC^{out}$ satisfies $|C| \geq |\parent(C)|/2$, the total number of nodes that satisfy the first condition is at least as large as the total number of nodes that satisfy the second condition. Hence, $ |V(\fC_{out})| \geq (1/2)(2/3)|\Vrem|$, as desired.
\end{proof}

\begin{claim}
\label{cl:strong_padded}
For any two distinct clusters $C^{out}_1,C^{out}_2 \in \fC_{out}$, we have $\dist_G(C^{out}_1, C^{out}_2) \ge \eps D$. 
\end{claim}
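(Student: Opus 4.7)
The plan is to exploit the laminar structure given by \cref{cl:strong_laminar} together with the fourth property of \cref{lem:weak_diam}. Each output cluster in $\fC_{out}$ has the form $V(T_A)$ for some weak-radius cluster $A \in \fC^i$, where $i$ is the phase in which it was added. Suppose $C^{out}_1, C^{out}_2 \in \fC_{out}$ come from $A_1 \in \fC^{i_1}$ and $A_2 \in \fC^{i_2}$ with $i_1 \le i_2$. The key observation I will establish is that there exist two \emph{distinct} clusters $W_1, W_2 \in \fC^{i_1-1}$ with $A_1 \subseteq W_1$, $A_2 \subseteq W_2$, and such that $T_{A_1}$ stays within distance $(1+\eps)D^{i_1}/2$ of $W_1$, while $T_{A_2}$ stays within distance $(1+\eps)D^{i_2}/2$ of $W_2$.

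To identify $W_1, W_2$: set $W_1 = \parent(A_1)$. If $i_1 = i_2$, set $W_2 = \parent(A_2)$; these must be distinct, since if $\parent(A_1) = \parent(A_2)$ then both $A_1, A_2$ are disjoint subsets of the same parent each with $|A_j| > |\parent(A_j)|/2$, which is impossible. If $i_1 < i_2$, the algorithm guarantees $A_2 \subseteq \Vrem^{i_2} \subseteq \Vrem^{i_1+1}$, and by construction $\Vrem^{i_1+1} \cap \parent(A_1) = \emptyset$; iterating \cref{cl:strong_laminar} gives an ancestor $W_2 := \parent^{i_2 - i_1 + 1}(A_2) \in \fC^{i_1-1}$ with $A_2 \subseteq W_2$, and $W_2 \neq W_1$ since otherwise $A_2 \subseteq W_1 = \parent(A_1)$. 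To apply property~4 of \cref{lem:weak_diam} in phase $i_1$ with $W = \parent(A_1)$, I need $\dist_G(\parent(A_1), \Vrem^{i_1} \setminus \parent(A_1)) > (1+\eps)D^{i_1}$; since all of $\Vrem^{i_1}$ is covered by $\fC^{i_1-1}$ which is $(1-\eps)^b D^{i_1-1}$-separated, this reduces to checking $(1-\eps)^b D^{i_1-1} > (1+\eps) D^{i_1}$, which follows from $D^{i_1-1} = D^{i_1}/(1-\eps)^{2b}$ and $(1+\eps)(1-\eps)^b < 1$. An analogous bound holds in phase $i_2$, and since $\parent(A_2) \subseteq W_2$ by laminarity, the distance from any $v \in V(T_{A_2})$ to $W_2$ is at most $(1+\eps)D^{i_2}/2$.

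Combining these ingredients via the triangle inequality, for arbitrary $u \in V(T_{A_1})$ and $v \in V(T_{A_2})$:
\[
\dist_G(u,v) \;\ge\; \dist_G(W_1, W_2) - \dist_G(u,W_1) - \dist_G(v,W_2) \;\ge\; (1-\eps)^b D^{i_1-1} - (1+\eps) D^{i_1},
\]
where I used $D^{i_2} \le D^{i_1}$. Substituting $D^{i_1-1} = D^{i_1}/(1-\eps)^{2b}$ yields $D^{i_1}\bigl[(1-\eps)^{-b} - (1+\eps)\bigr]$. A short Taylor/Bernoulli estimate with $b\eps \le 1/2$ gives $(1-\eps)^{-b} \ge 1 + b\eps/2$, so this is at least $D^{i_1} \cdot (b/2-1)\eps \ge \Omega(b\eps D^{i_1})$ for $b$ at least a small constant. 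Finally, $D^{i_1} \ge 10 D (1-\eps)^{2kb}$ with $k = O(\log n)$ and $\eps = 1/(100 b \log n)$ makes $(1-\eps)^{2kb}$ a positive constant, so $\dist_G(u,v) = \Omega(b\eps D) \ge \eps D$ for large enough $n$, completing the proof.

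The main obstacle will be the case $i_1 < i_2$: convincing oneself that the correct ancestor $W_2 \in \fC^{i_1-1}$ of $A_2$ is distinct from $\parent(A_1)$, and that the tree $T_{A_2}$ (constructed in a later phase on the residual graph) can still be bounded in terms of $W_2$ via property~4 iteratively---this is really where the laminar invariant of \cref{cl:strong_laminar} is doing the work.
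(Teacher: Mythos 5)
Your proposal is correct and follows essentially the same route as the paper's proof: identify distinct ancestor clusters in $\fC^{i_1-1}$ (distinct by the size argument when $i_1=i_2$ and by $\Vrem^{i_1+1}\cap\parent(A_1)=\emptyset$ when $i_1<i_2$), bound $\dist_G$ from the Steiner trees to these ancestors via property~4 of \cref{lem:weak_diam}, and finish with the triangle inequality and the arithmetic in $\eps$, $b$, $D^{i}$. Your explicit handling of phase $i_2$ (applying property~4 to $\parent(A_2)$ and passing to the ancestor $W_2$ by laminarity) is exactly what the paper's ``similar argument'' leaves implicit, so there is no substantive difference.
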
  
\begin{proof}
As $C^{out}_1,C^{out}_2 \in \fC_{out}$, there exists $i_1, i_2 \in [k]$ and two clusters $(C_1,T_{C_1}) \in \fC^{i_1}$ and $(C_2,T_{C_2}) \in \fC^{i_2}$ such that $C^{out}_1 = V(T_{C_1})$ and $C^{out}_2 = V(T_{C_2})$.
Without loss of generality $i_1 \le i_2$. 
We first compute a lower bound on the distance between $C_1 $ and $C_2$. 
Let $\hat{C}_1, \hat{C}_2 \in \fC^{i_1 - 1}$ such that $C_\ell \subseteq \hat{C}_\ell$ for $\ell \in [2]$. In particular, $\hat{C}_1 = \parent(C_1)$. 
We have $\hat{C}_1 \neq \hat{C}_2$. If $i_1 = i_2$, then $\hat{C}_1 \neq \hat{C}_2$ follows from the fact that $|C_\ell| > (1/2)| \hat{C}_\ell|$ for $\ell \in [2]$. If $i_1 < i_2$, then $\hat{C}_1 \neq \hat{C}_2$ follows from the fact that $\Vrem^{i_1 + 1} \cap \hat{C}_1 = \emptyset$ and $\Vrem^{i_2} \subseteq \Vrem^{i_1 + 1}$.
Hence,  

\begin{align*}
\dist_G(C_1,C_2)\geq \dist_G(\hat{C}_1, \hat{C}_2) \ge (1-\eps)^b D^{i_1 - 1}.    
\end{align*}

Moreover, with $W = \hat{C}_1$ we have $\dist_G(W,\Vrem^{i_1} \setminus W) \geq (1-\eps)^b D^{i_1 - 1} \geq (1+\eps)D^{i_1}$. This implies, as $C_1 \cap W \neq \emptyset$, that $\dist_G(u,W) \le (1+\eps)D^{i_1}/2$ for every $u \in V(T_{C_1}) = C^{out}_1$.
and, similarly, $\dist(T_{C^{i_2}}, \hat{C}^2) \le (1+\eps)D^{i_2}/2 \le (1+\eps)D^{i_1}/2$.
A similar argument shows that $\dist_G(u, \hat{C}_2) \leq (1+\eps)D^{i_2}/2$ for every $u \in V(T_{C_2})= C^{out}_2$.

Now, consider an arbitrary $v_1 \in C^{out}_1$ and $v_2 \in C^{out}_2$. We have 
\begin{align*}
\dist_G(v_1,v_2) &\geq 
\dist_G(\hat{C}_1, \hat{C}_2) - \dist_G(\hat{C}_1,v_1) - \dist_G(\hat{C}_2, v_2) \\
&\ge (1-\eps)^b D^{i_1 - 1} - 2(1+\eps)D^{i_1}/2\\
&= \left( (1-\eps)^b / (1-\eps)^{2b} - (1+\eps)\right) D^{i_1}
\ge 2\eps D^k \ge \eps D, 
\end{align*}
as needed. 
\end{proof}

To conclude the proof, note that each cluster $C \in \fC_{out}$ has strong-radius $O(D \log^3(n) \cdot 1/\delta) = O(D \log^4(n))$, clearly $V(\fC_{out}) \subseteq \Vrem$, the clustering is $\eps D$-separated by \cref{cl:strong_padded} and $|V(\fC_{out})| \ge |\Vrem|/3$ by \cref{cl:strong_deleted}. In each phase we call \cref{lem:weak_diam} once. 
In addition, each cluster needs to check whether it gets added to the output clustering by comparing its size to the size of its parent. If a cluster indeed gets added, it also needs to inform all the nodes in the parent cluster. 
This can be implemented in $O(1)$ calls to the oracle $\oForestAgg$ on $\{T_{C}\}_{C \in \fC^i}$ that acts on clusters that \cref{lem:weak_diam} outputted. By the aggregation property in \cref{lem:weak_diam} the oracle can be implemented in $\poly\log(n)$ \congest rounds and $\poly\log(n)$ calls to distance oracles $\oWeak_{\eps, D'}$ for various $D' \in [D, 10D]$ and the oracle and calls to the oracle $\oForestAgg_{2D}$, as desired. 
\end{proof}

\subsection{Sparse Neighborhood Cover}
\label{sec:sparse_cover}

We finish by proving \cref{thm:sparse_cover}. 

\begin{proof}

Let $\Delta = \Theta(D / \log^4 n)$. 
Our algorithm consists of $t = 100 \lceil \log(n) \rceil$ rounds. Let $\Vrem^1 := V(G)$.
In the $i$-th round, we use the algorithm of \cref{lem:strongdiam} with $\Vrem^\Lref{lem:strongdiam} = \Vrem^i$ and ${D}^\Lref{lem:strongdiam} = \Delta$ to compute a $s = \Omega(\Delta/\log^2 n) = \Omega(D / \log^6 n)$-separated $\Theta(D)$-strong-radius clustering $\fC^i$ with $V(\fC^i) \subseteq \Vrem^i$ and $|V(\fC^i)| \geq |\Vrem^i|/3$. Moreover, we can choose the constant in the definition of $\Delta$ such that $\fC_i$ is $D/4$-strong-radius clustering. 
We then set $\Vrem^{i+1} = \Vrem^i \setminus V(\fC^i)$.

We now extend each cluster $(C,T_C) \in \fC^i$ to some new cluster $(\hat{C},T_{\hat{C}})$. This defines a new clustering $\hat{\fC}^i$. 
To do so, we first invoke the distance oracle $\oWeak_{\eps,s/10}$ with input $V(\fC^i)$.
Let $F$ denote the rooted forest returned by the oracle.
For each $(C, T_C) \in \fC^i$, let $F_C \subseteq F$ be the forest which contains all trees in $F$ whose root is contained in $C$. We set $\hat{C} = C \cup V(F_C)$ and  define $T_{\hat{C}}$ as the rooted tree one obtains by adding the trees in $F_C$ to $T_C$, i.e, $V(T_{\hat{C}}) = V(F_C) \cup V(T_C)$ and $T_{\hat{C}}$ and $T_C$ have the same root. 
Clearly, any node in $T_{\hat{C}}$ has a distance of at most $D/4 + (1+\eps)(s/10) = D/2$ to the root. 
Moreover, as the clustering $\fC^i$ is $s$-separated, it follows that $\hat{\fC^i}$ is $\hat{s}$-separated with $\hat{s} := s/10 \leq s - 2(1+\eps)(s/10)$ and for any node $u$ with $\dist_G(u,C) \leq \hat{s}$ we have $u \in \hat{C}$.  
In particular, as $\Vrem^t = \emptyset$, for every node $u \in V(G)$ there exists some $i$ such that the ball of radius $\hat{s}$ around $u$ is fully contained in one of the clusters of $\hat{C}^i$.
Hence, $(\hat{\fC}^i)_{i \in [t]}$ is a sparse neighborhood cover with covering radius $\hat{s} \ge D/\log^7 n$ and each cluster $C$ of it comes with a rooted tree $T_C$ of diameter at most $2 \cdot (D/2) = D$. 

It directly follows from the guarantees of \cref{lem:strongdiam} that the computation consists of $\poly(\log n)$ \congest rounds in $G$ and $\poly(\log n)$ calls to distance oracles $\oWeak_{\eps, D'}$ for $\eps := \frac{1}{100b \log(n)} \ge 1/\log^3 n$ and various $D' \in [D/\log^7 n, D]$ and $\poly\log n$ calls to the oracle $\oForestAgg_{2D}$, as needed.
\end{proof}


\end{document}